\newif\ifIEEE
\newif\ifPAGELIMIT
    \newcommand{\bibauthor}[1]{#1}
    \newcommand{\bibpaper}[1]{``#1''}
    \newcommand{\Footnotetext}[2]
    {
        \begin{figure}[!b]
        \footnotesize\vspace{-3ex}\hrulefill\hfill
        \makebox[0em]{}\hfill\makebox[0em]{}%
                                          \par${}^{#1}$ #2\vspace{-.6ex}
        \end{figure}
        \addtocounter{figure}{0}
     }
    \newcommand{\bibauthor}[1]{\textsc{#1}}
    \newcommand{\bibpaper}[1]{\textsl{#1}}
    \newenvironment{IEEEkeywords}{\begin{small}%
                                  \textbf{Index Terms} ---}{\end{small}}
\newcommand{\bibbook}[1]{\textit{#1}}
\newcommand{\bibperiodical}[1]{\textit{#1}}
\newtheorem{theorem}{\indent Theorem}
\newtheorem{proposition}[theorem]{\indent Proposition}
\newtheorem{lemma}[theorem]{\indent Lemma}
\theoremstyle{remark}
\newtheorem{remark}{\indent Remark}
\theoremstyle{definition}
\newtheorem{example}{\indent Example}
\renewcommand{\mathbf}[1]{{\bm{#1}}}     
\newlength{\figunit}
\definecolor{gray}{rgb}{0.5,0.5,0.5}
\definecolor{brown}{rgb}{0.7,0.4,0.4}
\newcommand{\bldone}{{\mathbf{1}}}
\newcommand{\bldc}{{\mathbf{c}}}
\newcommand{\bldy}{{\mathbf{y}}}
\newcommand{\bldz}{{\mathbf{z}}}
\newcommand{\bldd}{{\mathbf{d}}}
\newcommand{\bldw}{{\mathbf{w}}}
\newcommand{\bldpi}{{\mathbf{\pi}}}
\newcommand{\bldvartheta}{{\mathbf{\vartheta}}}
\newcommand{\bldeta}{{\mathbf{\eta}}}
\newcommand{\bldzeta}{{\mathbf{\zeta}}}
\newcommand{\code}{{\mathcal{C}}}
\newcommand{\varcode}{{\mathsf{C}}}
\newcommand{\Code}{{\mathbb{C}}}
\newcommand{\entropy}{{\mathsf{H}}}
\newcommand{\Local}{{\mathcal{R}}}
\newcommand{\Ambient}{{\mathsf{A}}}
\newcommand{\Set}{{\mathcal{X}}}
\newcommand{\Subset}{{\mathsf{B}}}
\newcommand{\AltSet}{{\mathcal{S}}}
\newcommand{\Typical}{{\mathcal{T}}}
\newcommand{\YY}{{\mathcal{Y}}}
\newcommand{\InfSet}{{\mathcal{L}}}
\newcommand{\SupSet}{{\mathcal{U}}}
\newcommand{\Support}{{\mathsf{Supp}}}
\newcommand{\Integers}{{\mathbb{Z}}}
\newcommand{\Finitefield}{{\mathbb{F}}}
\newcommand{\Realfield}{{\mathbb{R}}}
\newcommand{\Rate}{{\mathsf{\textsl{R}}}}
\newcommand{\x}{s}
\newcommand{\LP}{{\mathrm{LP}}}
\newcommand{\CM}{{\mathrm{CM}}}
\newcommand{\SP}{{\mathrm{SP}}}
\newcommand{\LRC}{{\mathrm{LRC}}}
\newcommand{\opt}{{\mathrm{opt}}}
\newcommand{\Prob}{{\mathsf{Prob}}}
\newcommand{\Expected}{{\mathbb{E}}}
\newcommand{\weight}{{\mathsf{w}}}
\newcommand{\distance}{{\mathsf{d}}}
\newcommand{\bigcupdot}%
                 {{\textstyle{\bigcup\!\!\!\!\!\hspace{0.25ex}\cdot\;}}}
    \newcommand{\Theorem}{Thm.}
    \newcommand{\Theorems}{Thms.}
    \newcommand{\Proposition}{Prop.}
    \newcommand{\Theorem}{Theorem}
    \newcommand{\Theorems}{Theorems}
    \newcommand{\Proposition}{Proposition}
\newcommand{\Title}{Asymptotic Bounds on the Rate of
                                               Locally Repairable Codes}
\newcommand{\Namea}{Ron M. Roth}
\newcommand{\Addressa}{Computer Science Department}
\newcommand{\Addressatwo}{Technion, Haifa 3200003, Israel}
\newcommand{\Emaila}{ronny@cs.technion.ac.il}
\newcommand{\Grant}{This work was supported by 
                    Grants~1396/16 and 1713/20 from
                    the Israel Science Foundation.}
\newcommand{\Thnxa}{\Namea\ is with the \Addressa, \Addressatwo.
                    Email: \Emaila}
\begin{document}
\ifIEEE
    \title{\Title}
       \ifPAGELIMIT
           \author{\IEEEauthorblockN{\Namea\vspace{-1ex}}\\
                   \IEEEauthorblockA{\Addressa,
                                     \Addressatwo\\
                                     \Emaila\vspace{-2ex}}
           }
       \else
           \author{\Namea
                   \thanks{\Grant}
                   \thanks{\Thnxa}
           }
       \fi
\else
    \title{\textbf{\Title}\thanks{\Grant}}
    \author{\textsc{\Namea}\thanks{\Thnxa}}
    \date{}
\fi
\maketitle

\begin{abstract}
New asymptotic upper bounds are presented on the rate
of sequences of locally repairable codes (LRCs)
with a prescribed relative minimum distance and locality
over a finite field $F$.
The bounds apply to LRCs in which
the recovery functions are linear;
in particular, the bounds apply to linear LRCs over $F$.
The new bounds are shown to improve on previously published
results, especially when the repair groups are
\ifPAGELIMIT
    disjoint.
\else
disjoint,
namely, they form a partition of the set of coordinates.
\fi
\end{abstract}

\ifPAGELIMIT
\else
\begin{IEEEkeywords}
Large deviations,
Linear-programming bound,
Locally repairable codes,
Sphere-packing bound.
\end{IEEEkeywords}
\fi

\ifPAGELIMIT
    \Footnotetext{\quad}{\Grant}
\fi

\section{Introduction}
\label{sec:introduction}

Hereafter, we fix $F$ to be a finite field
\ifPAGELIMIT
    $\Finitefield_q$.
\else
$\Finitefield_q$ (of size $q$).
\fi
For a positive integer $a$, we denote by $[a]$ the integer set
$\{ 1, 2, \ldots, a \}$.
For a word (vector) $\bldy \in F^N$
and a nonempty subset $\Local \subseteq [N]$, we let $(\bldy)_\Local$
denote the sub-word of $\bldy$ that is indexed by $\Local$.

An $(N,M,d)$ code of length over $F$ is
a nonempty subset $\varcode \subseteq F^N$ of size $M$
and minimum (Hamming) distance $d$.
The rate of the code is $R = (\log_q M)/N$
and its relative minimum distance (r.m.d.) is $d/N$.
When $\varcode$ is linear over $F$ we will use the standard notation
$[N,k,d]$ where $k = \log_q M$ is the dimension of $\varcode$.
For a nonempty subset $\Local \subseteq [N]$,
we let $(\varcode)_\Local$ be the punctured code
$\{ (\bldc)_\Local \,:\, \bldc \in \varcode \}$.

We say that an $(N,M,d)$ code $\varcode$ over $F$ is
a \emph{locally repairable code} (in short, LRC) with locality $r$
if for every coordinate $j \in [N]$
there is a subset
\[
\Local_j = \Local_j^* \; \bigcupdot \; \{ j \} \subseteq [N]
\]
of size at most $n = r+1$ (that contains $j$)
such that for every codeword $\bldc = (c_t)_{t \in [N]}$,
the value $c_j$ is uniquely determined from
$(\bldc)_{\Local_j^*}$. In other words,
there is a \emph{recovery function}
$\varphi_j : F^{|\Local_j^*|} \rightarrow F$ such that
\ifPAGELIMIT
    $c_j = \varphi_j \bigl( (\bldc)_{\Local_j^*} \bigr)$.
\else
\begin{equation}
\label{eq:recoveryfunction}
c_j = \varphi_j \left( (\bldc)_{\Local_j^*} \right) .
\end{equation}
\fi
The set $\Local_j$ is called the \emph{repair group}
of $j$ and $(\varcode)_{\Local_j}$ is the respective constituent code.
\ifPAGELIMIT
\else
Clearly, the constituent code $(\varcode)_{\Local_j}$ cannot
contain two codewords that differ only on position $j$.
\fi
Repair groups will usually be represented as
a list $(\Local_j)_{j=1}^N$, where each repair group is indexed by
the coordinate $j \in [N]$ that it corresponds to.\footnote{%
\label{footnote:locality}%
Occasionally, however, we will refer to the set of \emph{distinct}
repair groups within this list; we will then use
the notation $\{ \Local_j \}_{j \in [\ell]}$,
where $\ell$ is the number of such repair
\ifPAGELIMIT
    groups.
\else
groups (this notation implicitly assumes that the first $\ell$
coordinates of $\varcode$ are associated with distinct repair groups).
Note that a list of repair groups of an LRC $\varcode$---and, therefore,
the set of distinct repair groups in such a list---may not be
unique. Yet such a list (or set) always covers
all the coordinates of $\varcode$.%
\fi}

When the recovery functions $\varphi_j$ are
\ifPAGELIMIT
    all linear over $F$,
\else
linear over $F$ for all $j \in [N]$,
\fi
we say that the LRC $\varcode$ is \emph{linearly recoverable}.
In this case, each constituent code $(\varcode)_{\Local_j}$ is
a subcode of
a linear $[|\Local_j|,|\Local_j|{-}1]$ code over $F$.
If $\varcode$ is a linear code over $F$ then
it is also linearly recoverable~\cite[Lemma~10]{ABHMT}.
The LRC $\varcode$ is said to be \emph{all-disjoint}
if it has a list of repair groups $(\Local_j)_{j=1}^N$
\ifPAGELIMIT
    such that the set of distinct repair groups forms
    a partition of $[N]$.
\else
that satisfies
$\Local_j \cap \Local_{j'} \in \{ \emptyset, \Local_j, \Local_{j'} \}$
for all $j, j' \in [N]$. By possibly adding dummy elements
to repair groups, this definition is equivalent to requiring
that no two distinct repair groups intersect; in this case,
the set of distinct repair groups $\{ \Local_j \}_j$
forms a partition of $[N]$.
In the all-disjoint case, each repair group $\Local_j$ is also
a repair group for all $j' \in \Local_j$ and, so,
each constituent code has minimum distance${} \ge 2$.
\fi

The general study of LRCs was initiated in~\cite{GHSY} and~\cite{PD},
and has generated an extensive body of literature since,
including constructions of LRCs, bounds on their parameters,
and studies of additional attributes, such as availability
and local minimum distance of the constituent codes; see
\cite{ABHMT},
\cite{BK},
\cite{Blaum},
\cite{CM},
\cite{GFY},
\cite{GFWH},
\cite{GJX},
\cite{HYS},
\cite{HYUS},
\cite{MG},
\cite{PHO},
\cite{PKLK},
\cite{TB},
\cite{WZ},
\cite{WZL}.

In this work, we will be interested in asymptotic upper bounds on
the rate of linearly recoverable LRCs with prescribed r.m.d.\ and
locality, as the code length tends to
\ifPAGELIMIT
    infinity.
\else
infinity.\footnote{%
While our results are stated for linearly recoverable LRCs,
we shall only use the fact
that each recovery function $\varphi_j$ in~(\ref{eq:recoveryfunction})
preserves the addition of the field $F$.
Hence, our results apply more generally to the case where $F$
is a finite Abelian group and each $\varphi_j$
is a homomorphism from $F^{|\Local_j^*|}$ to $F$.}
\fi

Hereafter, by an infinite sequence of codes over $F$
we mean a sequence $(\varcode_i)_{i=1}^\infty$,
where each $\varcode_i$ is an $(N_i,M_i,d_i)$ code over $F$
and the length sequence $(N_i)_{i=1}^\infty$ is strictly increasing.
The rate and the r.m.d.\ of the sequence are defined, respectively,
as $R = \varlimsup_{i \rightarrow \infty} R_i$
and $\delta = \varliminf_{i \rightarrow \infty} d_i/N_i$.
The supremum over all the rates of sequences of codes
with r.m.d.${} \ge \delta$ will be denoted by $\Rate_\opt(\delta)$.
Given $\omega \in [0,1]$, we let $\Rate_\opt(\delta,\omega)$
be such a supremum for sequences $(\varcode_i)_{i=1}^\infty$
of constant-weight codes, with the codewords of each $\varcode_i$
all having the same weight $w_i$, such that
$\lim_{i \rightarrow \infty} w_i/N_i = \omega$.
There are several known upper bounds on
$\Rate_\opt(\delta)$ and $\Rate_\opt(\delta,\omega)$,
and the best known are obtained using
the linear-programming method
\cite{Aaltonen},
\cite{BHL1},
\cite{MRRW}.
In particular, it is known that
$\Rate_\opt(\delta,\omega) = \Rate_\opt(\delta) = 0$
when $\delta \ge (q{-}1)/q$;
hence, we will implicitly assume hereafter that $\delta < (q{-}1)/q$.
\ifPAGELIMIT
\else
It is also easy to see that $\Rate_\opt(\delta,\omega) = 0$
when $\omega < \delta/2$.
\fi
The upper bounds on the rate of LRCs which are considered
in this work will often be expressions that involve
$\Rate_\opt(\delta)$ or $\Rate_\opt(\delta,\omega)$;
concrete upper bounds for LRCs can then be obtained
by replacing these terms with any upper bound on them.

A \emph{$(\delta,n)$-LRC sequence} over $F$
is an infinite sequence of LRCs $(\varcode_i)_{i=1}^\infty$ over $F$
with r.m.d.${} \ge \delta$
where each LRC $\varcode_i$ has locality $r = n-1$.
The sequence is said to be all-disjoint
if each $\varcode_i$ is all-disjoint,
and it is linearly recoverable (respectively, linear)
if so is each $\varcode_i$.
It follows from the results of~\cite{GHSY} and~\cite{PD}
that the rate $R$ of any $(\delta,n)$-LRC sequence
is bounded from above by
\begin{equation}
\label{eq:Singleton}
R \le \frac{n{-}1}{n} \cdot (1 - \delta) .
\end{equation}
This bound, which is oblivious to the field size,
\ifPAGELIMIT
\else
can be seen as the LRC counterpart of the Singleton bound, and it
\fi
applies generally to $(\delta,n)$-LRC sequences
(which are not necessarily all-disjoint or linearly recoverable).
The bound~(\ref{eq:Singleton}) was improved
in~\cite[\Theorem~1]{CM} to
$R \le \Rate_\CM(\delta,n)$, where
\begin{equation}
\label{eq:CM}
\Rate_\CM(\delta,n)
= \min_{\tau \in [0,1{-}\delta]}
\left\{ 
\tau \cdot \frac{n{-}1}{n}
+ (1{-}\tau) \cdot \Rate_\opt \left( \frac{\delta}{1{-}\tau} \right)
\right\} .
\end{equation}
This bound, which depends on the field size,
coincides with~(\ref{eq:Singleton})
when $\delta \mapsto \Rate_\opt(\delta)$ is taken to be
the Singleton bound $\delta \mapsto 1 - \delta$.
Small improvements to~(\ref{eq:CM})
were obtained
\ifPAGELIMIT
    in~\cite{ABHMT}.
\else
in~\cite{ABHMT}; the latter paper considered
the more general setting where the constituent codes
$(\varcode)_{\Local_j}$ have a prescribed minimum distance $\rho \ge 2$ 
(see also \cite{GFWH}).
\fi
For non-asymptotic improvements, see~\cite{WZL}.

For reference, we also mention the Gilbert--Varshamov-type
\emph{lower} bound on the largest attainable rate of LRC sequences:
it was shown in~\cite[\Theorem~2]{CM} and~\cite[\Theorem~B]{TBF}
that there exists
an all-disjoint linear $(\delta,R)$-LRC sequence of rate
\begin{equation}
\label{eq:R0}
R \ge \Rate_0(\delta,n) = \frac{n{-}1}{n}
- \lambda \left( \delta,n \right) ,
\end{equation}
where $\lambda(\omega,n)$ is defined
for every $n \in \Integers^+$ and $\omega \in \Realfield_{\ge 0}$ by
\begin{eqnarray}
\lefteqn{
\lambda(\omega,n)
= \lambda_q(\omega,n)
= \inf_{z \in (0,1]} 
\Bigl\{
-\omega \log_q z - \frac{1}{n}
} \makebox[0ex]{} \nonumber \\
\label{eq:lambda}
&& {} + \frac{1}{n}
\log_q \Bigl( (1 + (q{-}1)z)^n + (q{-}1)(1{-}z)^n \Bigr)
\Bigr\} .
\end{eqnarray}
The function $\omega \mapsto \Rate_0(\omega,n)$ will play a role in
our results as well. The expression $\lambda(\omega,n)$
is the growth rate of the volume of a ball
of radius $\omega N$ in the subspace of $F^N$
formed by the Cartesian product of copies of the $[n,n{-}1,2]$
parity code over $F$, as
\ifPAGELIMIT
    $N \rightarrow \infty$.
\else
$N \rightarrow \infty$.\footnote{%
The proof of the lower bound~(\ref{eq:R0}) only requires
$\lambda(\omega,n)$ to be an upper bound
on this growth rate---a relation which follows from
the Chernoff bound as in Eq.~(\ref{eq:chernoff}) below. 
On the other hand, for our results, we will need
a lower bound on this growth rate; to this end, we will use
a stronger result from large deviations theory (Eq.~(\ref{eq:cramer})).}
The function $\omega \mapsto \lambda_q(\omega,n)$ is drawn
in Figure~\ref{fig:lambda} for $(q,n) = (2,4), (2,5), (4,5)$;
it will follow from our analysis
that it is continuous, increasing, and concave
on $[0^+,\infty]$ and its values range from $0$ at $\omega = 0$
to $(n{-}1)/n$ at $\omega \ge (q{-}1)/q$.
As we show in Lemma~\ref{lem:unique} in
Appendix~\ref{sec:R0properties},
finding the infimum in~(\ref{eq:lambda}) is computationally easy:
it amounts to finding a root of a certain real polynomial
of degree${} \le n$.
\newcommand{\PlotLambda}[1]{%
    \put(000,000){#1}
    \put(-10,000){\vector(1,0){120}}
    \put(114,000){\makebox(0,0)[l]{$\omega$}}
    \put(000,-10){\vector(0,1){120}}
    \put(-05,-04){\makebox(0,0)[t]{$0$}}
    \put(100,000){\line(0,-1){2}}
    \put(100,-04){\makebox(0,0)[t]{$1.0$}}
    \put(000,100){\line(-1,0){2}}
    \put(-04,100){\makebox(0,0)[r]{$1.0$}}
}%
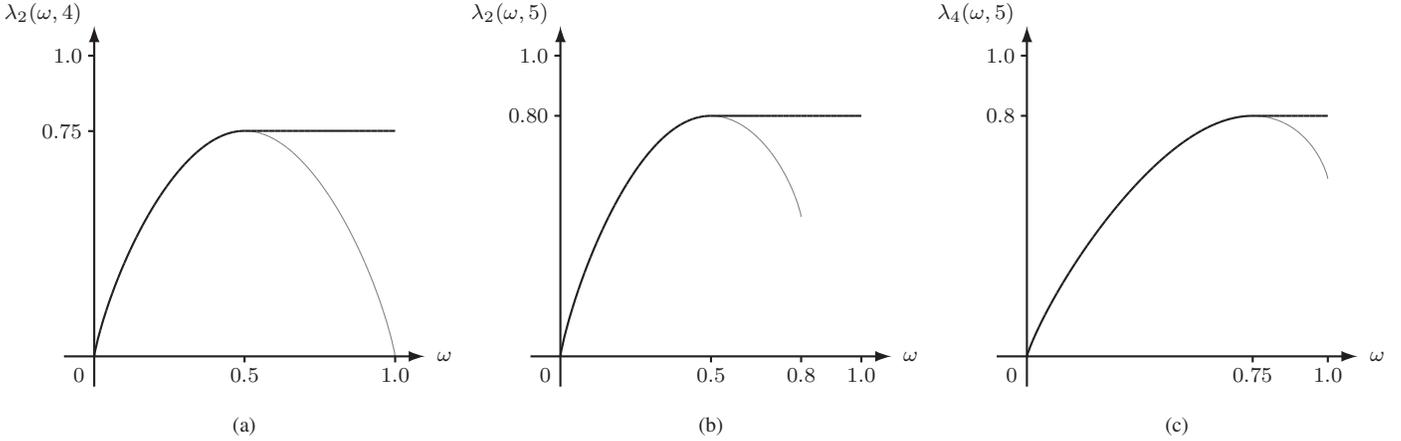
\begin{figure*}[hbt]
\begin{center}
\footnotesize
\thicklines
\ifIEEE
    \setlength{\unitlength}{0.4mm}
\else
    \setlength{\unitlength}{0.35mm}
\fi
\begin{picture}(440,150)(-10,-25)
\put(000,000){\PlotLambda{%
    \put(050,000){\line(0,-1){2}}
    \put(050,-04){\makebox(0,0)[t]{$0.5$}}
    \put(000,075){\line(-1,0){2}}
    \put(-04,075){\makebox(0,0)[r]{$0.75$}}
    \put(-04,114){\makebox(0,0)[r]{$\lambda_2(\omega,4)$}}
    \put(050,-20){\makebox(0,0)[t]{(a)}}
    {\thinlines\color{gray}
    	\qbezier(50.000,75.000)(51.002,75.000)(52.000,74.885)
    	\qbezier(52.000,74.885)(52.998,74.770)(54.000,74.539)
    	\qbezier(54.000,74.539)(54.998,74.310)(56.000,73.966)
    	\qbezier(56.000,73.966)(56.997,73.624)(58.000,73.168)
    	\qbezier(58.000,73.168)(58.997,72.715)(60.000,72.149)
    	\qbezier(60.000,72.149)(60.997,71.586)(62.000,70.912)
    	\qbezier(62.000,70.912)(62.997,70.242)(64.000,69.463)
    	\qbezier(64.000,69.463)(64.998,68.687)(66.000,67.803)
    	\qbezier(66.000,67.803)(66.999,66.923)(68.000,65.937)
    	\qbezier(68.000,65.937)(69.000,64.952)(70.000,63.864)
    	\qbezier(70.000,63.864)(71.001,62.776)(72.000,61.586)
    	\qbezier(72.000,61.586)(73.002,60.393)(74.000,59.102)
    	\qbezier(74.000,59.102)(75.003,57.804)(76.000,56.409)
    	\qbezier(76.000,56.409)(77.005,55.004)(78.000,53.504)
    	\qbezier(78.000,53.504)(79.007,51.987)(80.000,50.380)
    	\qbezier(80.000,50.380)(81.009,48.748)(82.000,47.028)
    	\qbezier(82.000,47.028)(83.011,45.274)(84.000,43.436)
    	\qbezier(84.000,43.436)(85.014,41.551)(86.000,39.588)
    	\qbezier(86.000,39.588)(87.018,37.561)(88.000,35.460)
    	\qbezier(88.000,35.460)(89.023,33.274)(90.000,31.022)
    	\qbezier(90.000,31.022)(91.030,28.650)(92.000,26.228)
    	\qbezier(92.000,26.228)(93.041,23.629)(94.000,21.005)
    	\qbezier(94.000,21.005)(95.061,18.104)(96.000,15.231)
	\qbezier(96.000,15.231)(97.108,11.843)(98.000,8.644)
    	\qbezier(98.000,8.644)(99.678,2.625)(100.000,0.000)
    }
    \qbezier(0.000,0.000)(0.322,2.625)(2.000,8.644)
    \qbezier(2.000,8.644)(2.892,11.843)(4.000,15.231)
    \qbezier(4.000,15.231)(4.939,18.104)(6.000,21.005)
    \qbezier(6.000,21.005)(6.959,23.629)(8.000,26.228)
    \qbezier(8.000,26.228)(8.970,28.650)(10.000,31.022)
    \qbezier(10.000,31.022)(10.977,33.274)(12.000,35.460)
    \qbezier(12.000,35.460)(12.982,37.561)(14.000,39.588)
    \qbezier(14.000,39.588)(14.986,41.551)(16.000,43.436)
    \qbezier(16.000,43.436)(16.989,45.274)(18.000,47.028)
    \qbezier(18.000,47.028)(18.991,48.748)(20.000,50.380)
    \qbezier(20.000,50.380)(20.993,51.987)(22.000,53.504)
    \qbezier(22.000,53.504)(22.995,55.004)(24.000,56.409)
    \qbezier(24.000,56.409)(24.997,57.804)(26.000,59.102)
    \qbezier(26.000,59.102)(26.998,60.393)(28.000,61.586)
    \qbezier(28.000,61.586)(28.999,62.776)(30.000,63.864)
    \qbezier(30.000,63.864)(31.000,64.952)(32.000,65.937)
    \qbezier(32.000,65.937)(33.001,66.923)(34.000,67.803)
    \qbezier(34.000,67.803)(35.002,68.687)(36.000,69.463)
    \qbezier(36.000,69.463)(37.003,70.242)(38.000,70.912)
    \qbezier(38.000,70.912)(39.003,71.586)(40.000,72.149)
    \qbezier(40.000,72.149)(41.003,72.715)(42.000,73.168)
    \qbezier(42.000,73.168)(43.003,73.624)(44.000,73.966)
    \qbezier(44.000,73.966)(45.002,74.310)(46.000,74.539)
    \qbezier(46.000,74.539)(47.002,74.770)(48.000,74.885)
    \qbezier(48.000,74.885)(48.999,75.000)(50.000,75.000)
    \qbezier(50.000,75.000)(50.000,75.000)(52.000,75.000)
    \qbezier(52.000,75.000)(52.000,75.000)(54.000,75.000)
    \qbezier(54.000,75.000)(54.000,75.000)(56.000,75.000)
    \qbezier(56.000,75.000)(56.000,75.000)(58.000,75.000)
    \qbezier(58.000,75.000)(58.000,75.000)(60.000,75.000)
    \qbezier(60.000,75.000)(60.000,75.000)(62.000,75.000)
    \qbezier(62.000,75.000)(62.000,75.000)(64.000,75.000)
    \qbezier(64.000,75.000)(64.000,75.000)(66.000,75.000)
    \qbezier(66.000,75.000)(66.000,75.000)(68.000,75.000)
    \qbezier(68.000,75.000)(68.000,75.000)(70.000,75.000)
    \qbezier(70.000,75.000)(70.000,75.000)(72.000,75.000)
    \qbezier(72.000,75.000)(72.000,75.000)(74.000,75.000)
    \qbezier(74.000,75.000)(74.000,75.000)(76.000,75.000)
    \qbezier(76.000,75.000)(76.000,75.000)(78.000,75.000)
    \qbezier(78.000,75.000)(78.000,75.000)(80.000,75.000)
    \qbezier(80.000,75.000)(80.000,75.000)(82.000,75.000)
    \qbezier(82.000,75.000)(82.000,75.000)(84.000,75.000)
    \qbezier(84.000,75.000)(84.000,75.000)(86.000,75.000)
    \qbezier(86.000,75.000)(86.000,75.000)(88.000,75.000)
    \qbezier(88.000,75.000)(88.000,75.000)(90.000,75.000)
    \qbezier(90.000,75.000)(90.000,75.000)(92.000,75.000)
    \qbezier(92.000,75.000)(92.000,75.000)(94.000,75.000)
    \qbezier(94.000,75.000)(94.000,75.000)(96.000,75.000)
    \qbezier(96.000,75.000)(96.000,75.000)(98.000,75.000)
    \qbezier(98.000,75.000)(100.000,75.000)(100.000,75.000)
}}
\put(155,000){\PlotLambda{%
    \put(050,000){\line(0,-1){2}}
    \put(050,-04){\makebox(0,0)[t]{$0.5$}}
    \put(080,000){\line(0,-1){2}}
    \put(080,-04){\makebox(0,0)[t]{$0.8$}}
    \put(000,080){\line(-1,0){2}}
    \put(-04,080){\makebox(0,0)[r]{$0.80$}}
    \put(-04,114){\makebox(0,0)[r]{$\lambda_2(\omega,5)$}}
    \put(050,-20){\makebox(0,0)[t]{(b)}}
    {\thinlines\color{gray}
    	\qbezier(50.000,80.000)(51.003,80.000)(52.000,79.885)
    	\qbezier(52.000,79.885)(53.001,79.769)(54.000,79.538)
    	\qbezier(54.000,79.538)(55.002,79.306)(56.000,78.958)
    	\qbezier(56.000,78.958)(57.003,78.608)(58.000,78.142)
    	\qbezier(58.000,78.142)(59.005,77.672)(60.000,77.086)
    	\qbezier(60.000,77.086)(61.007,76.492)(62.000,75.780)
    	\qbezier(62.000,75.780)(63.009,75.057)(64.000,74.216)
    	\qbezier(64.000,74.216)(65.012,73.356)(66.000,72.377)
    	\qbezier(66.000,72.377)(67.016,71.369)(68.000,70.242)
    	\qbezier(68.000,70.242)(69.021,69.071)(70.000,67.780)
    	\qbezier(70.000,67.780)(71.028,66.424)(72.000,64.947)
    	\qbezier(72.000,64.947)(73.039,63.368)(74.000,61.673)
    	\qbezier(74.000,61.673)(75.059,59.804)(76.000,57.834)
    	\qbezier(76.000,57.834)(77.107,55.518)(78.000,53.170)
    	\qbezier(78.000,53.170)(79.677,48.762)(80.000,46.439)
    }
    \qbezier(0.000,0.000)(0.323,2.698)(2.000,9.054)
    \qbezier(2.000,9.054)(2.893,12.438)(4.000,16.040)
    \qbezier(4.000,16.040)(4.941,19.101)(6.000,22.200)
    \qbezier(6.000,22.200)(6.961,25.011)(8.000,27.797)
    \qbezier(8.000,27.797)(8.972,30.401)(10.000,32.951)
    \qbezier(10.000,32.951)(10.979,35.379)(12.000,37.735)
    \qbezier(12.000,37.735)(12.984,40.005)(14.000,42.192)
    \qbezier(14.000,42.192)(14.988,44.317)(16.000,46.353)
    \qbezier(16.000,46.353)(16.991,48.344)(18.000,50.239)
    \qbezier(18.000,50.239)(18.993,52.103)(20.000,53.866)
    \qbezier(20.000,53.866)(20.995,55.608)(22.000,57.245)
    \qbezier(22.000,57.245)(22.997,58.868)(24.000,60.383)
    \qbezier(24.000,60.383)(24.998,61.889)(26.000,63.284)
    \qbezier(26.000,63.284)(26.999,64.675)(28.000,65.953)
    \qbezier(28.000,65.953)(29.000,67.229)(30.000,68.390)
    \qbezier(30.000,68.390)(31.000,69.552)(32.000,70.597)
    \qbezier(32.000,70.597)(33.001,71.643)(34.000,72.572)
    \qbezier(34.000,72.572)(35.001,73.502)(36.000,74.316)
    \qbezier(36.000,74.316)(37.000,75.130)(38.000,75.826)
    \qbezier(38.000,75.826)(39.000,76.524)(40.000,77.104)
    \qbezier(40.000,77.104)(41.000,77.685)(42.000,78.148)
    \qbezier(42.000,78.148)(43.000,78.612)(44.000,78.959)
    \qbezier(44.000,78.959)(44.999,79.307)(46.000,79.538)
    \qbezier(46.000,79.538)(46.999,79.769)(48.000,79.885)
    \qbezier(48.000,79.885)(48.999,80.000)(50.000,80.000)
    \qbezier(50.000,80.000)(50.000,80.000)(52.000,80.000)
    \qbezier(52.000,80.000)(52.000,80.000)(54.000,80.000)
    \qbezier(54.000,80.000)(54.000,80.000)(56.000,80.000)
    \qbezier(56.000,80.000)(56.000,80.000)(58.000,80.000)
    \qbezier(58.000,80.000)(58.000,80.000)(60.000,80.000)
    \qbezier(60.000,80.000)(60.000,80.000)(62.000,80.000)
    \qbezier(62.000,80.000)(62.000,80.000)(64.000,80.000)
    \qbezier(64.000,80.000)(64.000,80.000)(66.000,80.000)
    \qbezier(66.000,80.000)(66.000,80.000)(68.000,80.000)
    \qbezier(68.000,80.000)(68.000,80.000)(70.000,80.000)
    \qbezier(70.000,80.000)(70.000,80.000)(72.000,80.000)
    \qbezier(72.000,80.000)(72.000,80.000)(74.000,80.000)
    \qbezier(74.000,80.000)(74.000,80.000)(76.000,80.000)
    \qbezier(76.000,80.000)(76.000,80.000)(78.000,80.000)
    \qbezier(78.000,80.000)(78.000,80.000)(80.000,80.000)
    \qbezier(80.000,80.000)(80.000,80.000)(82.000,80.000)
    \qbezier(82.000,80.000)(82.000,80.000)(84.000,80.000)
    \qbezier(84.000,80.000)(84.000,80.000)(86.000,80.000)
    \qbezier(86.000,80.000)(86.000,80.000)(88.000,80.000)
    \qbezier(88.000,80.000)(88.000,80.000)(90.000,80.000)
    \qbezier(90.000,80.000)(90.000,80.000)(92.000,80.000)
    \qbezier(92.000,80.000)(92.000,80.000)(94.000,80.000)
    \qbezier(94.000,80.000)(94.000,80.000)(96.000,80.000)
    \qbezier(96.000,80.000)(96.000,80.000)(98.000,80.000)
    \qbezier(98.000,80.000)(98.000,80.000)(100.000,80.000)
}}
\put(310,000){\PlotLambda{%
    \put(075,000){\line(0,-1){2}}
    \put(075,-04){\makebox(0,0)[t]{$0.75$}}
    \put(000,080){\line(-1,0){2}}
    \put(-04,080){\makebox(0,0)[r]{$0.8$}}
    \put(-04,114){\makebox(0,0)[r]{$\lambda_4(\omega,5)$}}
    \put(050,-20){\makebox(0,0)[t]{(c)}}
    {\thinlines\color{gray}
    	\qbezier(75.000,80.000)(76.054,80.000)(77.083,79.915)
    	\qbezier(77.083,79.915)(78.137,79.828)(79.167,79.652)
    	\qbezier(79.167,79.652)(80.222,79.472)(81.250,79.200)
    	\qbezier(81.250,79.200)(82.308,78.919)(83.333,78.540)
    	\qbezier(83.333,78.540)(84.394,78.149)(85.417,77.655)
    	\qbezier(85.417,77.655)(86.482,77.139)(87.500,76.515)
    	\qbezier(87.500,76.515)(88.570,75.858)(89.583,75.084)
    	\qbezier(89.583,75.084)(90.660,74.262)(91.667,73.311)
    	\qbezier(91.667,73.311)(92.755,72.284)(93.750,71.117)
    	\qbezier(93.750,71.117)(94.858,69.818)(95.833,68.368)
    	\qbezier(95.833,68.368)(96.990,66.649)(97.917,64.785)
    	\qbezier(97.917,64.785)(99.667,61.266)(100.000,59.069)
    }
    \qbezier(0.000,0.000)(0.316,1.447)(2.000,5.376)
    \qbezier(2.000,5.376)(2.889,7.451)(4.000,9.766)
    \qbezier(4.000,9.766)(4.937,11.720)(6.000,13.775)
    \qbezier(6.000,13.775)(6.958,15.627)(8.000,17.529)
    \qbezier(8.000,17.529)(8.969,19.297)(10.000,21.088)
    \qbezier(10.000,21.088)(10.976,22.783)(12.000,24.485)
    \qbezier(12.000,24.485)(12.981,26.115)(14.000,27.742)
    \qbezier(14.000,27.742)(14.985,29.315)(16.000,30.876)
    \qbezier(16.000,30.876)(16.988,32.394)(18.000,33.896)
    \qbezier(18.000,33.896)(18.990,35.364)(20.000,36.811)
    \qbezier(20.000,36.811)(20.992,38.231)(22.000,39.626)
    \qbezier(22.000,39.626)(22.994,41.002)(24.000,42.348)
    \qbezier(24.000,42.348)(24.995,43.679)(26.000,44.979)
    \qbezier(26.000,44.979)(26.997,46.267)(28.000,47.521)
    \qbezier(28.000,47.521)(28.998,48.767)(30.000,49.977)
    \qbezier(30.000,49.977)(30.999,51.182)(32.000,52.347)
    \qbezier(32.000,52.347)(33.000,53.511)(34.000,54.633)
    \qbezier(34.000,54.633)(35.001,55.755)(36.000,56.834)
    \qbezier(36.000,56.834)(37.002,57.915)(38.000,58.949)
    \qbezier(38.000,58.949)(39.002,59.988)(40.000,60.979)
    \qbezier(40.000,60.979)(41.003,61.976)(42.000,62.922)
    \qbezier(42.000,62.922)(43.004,63.875)(44.000,64.776)
    \qbezier(44.000,64.776)(45.004,65.684)(46.000,66.539)
    \qbezier(46.000,66.539)(47.005,67.401)(48.000,68.209)
    \qbezier(48.000,68.209)(49.005,69.024)(50.000,69.783)
    \qbezier(50.000,69.783)(51.005,70.550)(52.000,71.259)
    \qbezier(52.000,71.259)(53.005,71.975)(54.000,72.632)
    \qbezier(54.000,72.632)(55.005,73.297)(56.000,73.901)
    \qbezier(56.000,73.901)(57.005,74.512)(58.000,75.062)
    \qbezier(58.000,75.062)(59.005,75.617)(60.000,76.110)
    \qbezier(60.000,76.110)(61.005,76.608)(62.000,77.043)
    \qbezier(62.000,77.043)(63.005,77.482)(64.000,77.856)
    \qbezier(64.000,77.856)(65.006,78.234)(66.000,78.546)
    \qbezier(66.000,78.546)(67.006,78.861)(68.000,79.108)
    \qbezier(68.000,79.108)(69.006,79.358)(70.000,79.538)
    \qbezier(70.000,79.538)(71.007,79.721)(72.000,79.831)
    \qbezier(72.000,79.831)(73.008,79.943)(74.000,79.981)
    \qbezier(74.000,79.981)(74.502,80.000)(76.000,80.000)
    \qbezier(76.000,80.000)(76.000,80.000)(78.000,80.000)
    \qbezier(78.000,80.000)(78.000,80.000)(80.000,80.000)
    \qbezier(80.000,80.000)(80.000,80.000)(82.000,80.000)
    \qbezier(82.000,80.000)(82.000,80.000)(84.000,80.000)
    \qbezier(84.000,80.000)(84.000,80.000)(86.000,80.000)
    \qbezier(86.000,80.000)(86.000,80.000)(88.000,80.000)
    \qbezier(88.000,80.000)(88.000,80.000)(90.000,80.000)
    \qbezier(90.000,80.000)(90.000,80.000)(92.000,80.000)
    \qbezier(92.000,80.000)(92.000,80.000)(94.000,80.000)
    \qbezier(94.000,80.000)(94.000,80.000)(96.000,80.000)
    \qbezier(96.000,80.000)(96.000,80.000)(98.000,80.000)
    \qbezier(98.000,80.000)(100.000,80.000)(100.000,80.000)
}}
\end{picture}
\thinlines
\setlength{\unitlength}{1pt}
\end{center}
\caption{Function $\omega \mapsto \lambda_q(\omega,n)$
for (a) $(q,n) = (2,4)$, (b) $(q,n) = (2,5)$,
and (c) $(q,n) = (4,5)$. The lighter curves
depict the function $\omega \mapsto \lambda^*_q(\omega,n)$.}
\label{fig:lambda}
\end{figure*}
\fi

Our first set of results pertain to linearly recoverable LRC sequences
that are all-disjoint.
We prove the next bound-enhancement theorem by using, \emph{inter alia},
the generalization of~\cite{LL} to the shortening method for
improving upper bounds on the rate of code sequences.

\begin{theorem}
\label{thm:bound1}
Let $\delta \mapsto \Rate_\LRC(\delta, n)$ be an upper bound on
the rate of any
all-disjoint linearly recoverable $(\delta,n)$-LRC sequence over $F$.
Then the rate of such a sequence is bounded from above also by
\ifPAGELIMIT
    \begin{eqnarray}
    \Rate_1(\delta,n)
    & = & \inf_{\tau \in (0,1)} \min_{(\theta,\theta')}
    \Bigl\{ 
    \tau \cdot \Rate_0(\theta/2,n)
    \nonumber \\
    \label{eq:bound1}
    &&
    \quad \quad \quad
    {}
    + 
    (1-\tau) \cdot \Rate_\LRC(\theta',n)
    \Bigr\} ,
\end{eqnarray}
\else
\begin{eqnarray}
\Rate_1(\delta,n)
& = & \inf_{\tau \in (0,1)} \min_{(\theta,\theta')}
\Biggl\{ 
\tau \cdot \Rate_0 \left( \frac{\theta}{2}, n \right)
\nonumber \\
\label{eq:bound1}
&&
\quad \quad \quad
{}
+ 
(1-\tau) \cdot \Rate_\LRC(\theta',n)
\Biggr\} ,
\end{eqnarray}
\fi
where $\Rate_0(\cdot,n)$ is defined in~(\ref{eq:R0})--(\ref{eq:lambda})
and the (inner) minimum is taken over
all pairs $(\theta,\theta')$ in $[0,(q{-}1)/q]^2$ such that
\begin{equation}
\label{eq:theta}
\tau \cdot \theta + (1-\tau) \cdot \theta' = \delta .
\end{equation}
\end{theorem}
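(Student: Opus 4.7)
The plan is to implement an Elias--Bassalygo-type shortening argument in which the Hamming ball is drawn inside the \emph{product-parity-code} space of the ``long'' half of the coordinates, so that the parity-product structure of the LRC is exactly what produces the factor $\Rate_0(\theta/2,n)$. Fix an all-disjoint linearly recoverable $(\delta,n)$-LRC sequence $(\varcode_i)_i$ of rate $R$, a parameter $\tau\in(0,1)$, and a pair $(\theta,\theta')\in[0,(q{-}1)/q]^2$ satisfying~(\ref{eq:theta}). For each $i$, since the repair groups of $\varcode_i$ have size at most $n$ and partition $[N_i]$, one can choose a union of repair groups $S_i\subseteq[N_i]$ with $|S_i|/N_i\to\tau$. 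Let $\Pi_i\subseteq F^{S_i}$ be the Cartesian product of the ambient $[n,n{-}1,2]$ parity codes on the repair groups inside $S_i$; linear recoverability ensures that $(\bldc)_{S_i}\in\Pi_i$ for every $\bldc\in\varcode_i$.

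Next, set $r_i=\lfloor\theta|S_i|/2\rfloor$ and average, over $\bldy_0\in\Pi_i$, the count $|\{\bldc\in\varcode_i:(\bldc)_{S_i}\in B(\bldy_0,r_i)\}|$. Because $\Pi_i$ is a subgroup of $F^{S_i}$, the volume $|B_i|:=|B(\bldy_0,r_i)\cap\Pi_i|$ does not depend on $\bldy_0\in\Pi_i$, so some center $\bldy_0^{(i)}$ realises a set $\Set_i:=\{\bldc\in\varcode_i:(\bldc)_{S_i}\in B(\bldy_0^{(i)},r_i)\}$ of size at least $M_i|B_i|/|\Pi_i|$. Any two distinct $\bldc,\bldc'\in\Set_i$ differ in at most $2r_i\le\theta\tau N_i+o(N_i)$ positions on $S_i$; combined with $d(\bldc,\bldc')\ge\delta N_i$, their $S_i^c$-parts differ in at least $(\delta-\tau\theta)N_i-o(N_i)=(1-\tau)\theta'N_i-o(N_i)$ positions. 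Hence the projection map $\Set_i\to F^{S_i^c}$ is injective for large $i$, and its image $\Gamma_i$ is a code of length $\approx(1-\tau)N_i$ with relative minimum distance $\ge\theta'-o(1)$. Because $S_i^c$ is itself a union of repair groups, every recovery function of a coordinate $j\in S_i^c$ stays inside $S_i^c$, so $(\Gamma_i)_i$ is an all-disjoint linearly recoverable $(\theta',n)$-LRC sequence; the hypothesis on $\Rate_\LRC$ then gives
\[
\varlimsup_i \frac{\log_q|\Gamma_i|}{(1-\tau)N_i}\le\Rate_\LRC(\theta',n).
\]

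To finish, I invoke a Cram\'er-type large-deviations \emph{lower} bound on the ball volume inside the product-parity code, which yields $|B_i|\ge q^{|S_i|(\lambda(\theta/2,n)-o(1))}$; together with $|\Pi_i|=q^{(n{-}1)|S_i|/n}$ this gives $|\Pi_i|/|B_i|\le q^{\tau N_i(\Rate_0(\theta/2,n)+o(1))}$. Chaining $M_i\le|\Set_i|\cdot|\Pi_i|/|B_i|\le|\Gamma_i|\cdot|\Pi_i|/|B_i|$, taking $\log_q$'s, dividing by $N_i$, and passing to $\varlimsup$ produces
\[
R \le \tau\,\Rate_0(\theta/2,n)+(1-\tau)\,\Rate_\LRC(\theta',n).
\]
Minimising over admissible $(\theta,\theta')$ and then taking the infimum over $\tau\in(0,1)$ yields~(\ref{eq:bound1}); the boundary cases $\theta=0$ or $\theta'=0$ are handled by continuity of $\Rate_0$ and of the bound.

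The principal obstacles are twofold. First, one must verify that $\Gamma_i$ is genuinely an all-disjoint linearly recoverable $(\theta',n)$-LRC of the claimed relative distance; this depends on $S_i^c$ being a union of repair groups (so the recovery functions restrict) and on the strict inequality $\theta<\delta/\tau$ (forced by~(\ref{eq:theta}) when $\theta'>0$), without which the shortening map fails to be injective. Second, the Chernoff computation behind~(\ref{eq:lambda}) produces $\lambda(\omega,n)$ only as an \emph{upper} bound on the product-parity ball volume, whereas here we need the matching lower bound; this is provided by Cram\'er's large-deviations theorem applied to sums of i.i.d.\ samples from the uniform distribution on the $[n,n{-}1,2]$ parity code.
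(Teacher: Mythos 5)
Your argument is, in outline, the same as the paper's: shorten on a union of repair groups covering a $\tau$-fraction of the coordinates, run an Elias--Bassalygo averaging over ball centers in the ambient product code on the prefix, observe that the suffix projection of the surviving codewords is again an all-disjoint linearly recoverable LRC of relative distance about $\theta'$, apply the hypothesis $\Rate_\LRC(\theta',n)$ there, and use a Cram\'{e}r-type \emph{lower} bound on the prefix ball volume to pay only $\tau\cdot\Rate_0(\theta/2,n)$ for the prefix. Two steps, however, are not justified as written. First, your volume computation assumes that every repair group in $S_i$ has size exactly $n$ and that its ambient code is the $[n,n{-}1,2]$ parity code, so that $|\Pi_i|=q^{(n-1)|S_i|/n}$ and $|B_i|\ge q^{|S_i|(\lambda(\theta/2,n)-o(1))}$. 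In the all-disjoint case the distinct repair groups may have any sizes $\le n$, and the ambient constituent codes are arbitrary $[s,s{-}1]$ codes; then both displayed formulas fail (e.g.\ for groups of size $s<n$ the ambient rate is $(s{-}1)/s$ and the ball growth rate is not $\lambda(\theta/2,n)$). What is true, and what the proof needs, is only the bound on the \emph{ratio}, namely that $\frac{1}{|S_i|}\log_q\bigl(|\Pi_i|/|B_i|\bigr)\le\Rate_0(\theta/2,n)+o(1)$; this is exactly the content of Proposition~\ref{prop:R0} (via Lemmas~\ref{lem:lambda} and~\ref{lem:mu}, i.e.\ the fact that the worst-case ratio is attained asymptotically by full-length parity blocks), and that reduction is a nontrivial part of the paper's argument that your proposal skips.

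Second, the boundary pair $\theta'=0$ (equivalently $\tau\theta=\delta$), which is admissible in the minimum defining $\Rate_1(\delta,n)$, cannot be dispatched ``by continuity of the bound'': $\Rate_\LRC(\cdot,n)$ is an \emph{arbitrary} valid upper bound, not assumed continuous or monotone, so knowing $R\le\tau\Rate_0(\theta/2,n)+(1-\tau)\Rate_\LRC(\theta',n)$ for all $\theta'>0$ does not yield the inequality at $\theta'=0$. The paper's remedy is to take the ball radius $w_i=\min\bigl\{\lfloor\theta t_i/2\rfloor,\lfloor(d_i{-}1)/2\rfloor\bigr\}$: the cap keeps the suffix minimum distance strictly positive (so the projection is injective and the suffix code is a legitimate LRC sequence with r.m.d.${}\ge\theta'=0$), while still giving $w_i/t_i\rightarrow\theta/2$, so the same one-line computation covers the boundary. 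With these two repairs---invoking Proposition~\ref{prop:R0} for the prefix ratio instead of the parity-product formulas, and capping the radius at $\lfloor(d_i{-}1)/2\rfloor$---your proof coincides with the paper's.
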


In particular, \Theorem~\ref{thm:bound1} holds for
\ifPAGELIMIT
    $\Rate_\LRC(\delta,n) = \Rate_\opt(\delta)$.
\else
$\Rate_\LRC(\delta,n) = \Rate_\opt(\delta)$
(i.e., ignoring locality or linear recoverability).
\fi
This, in turn, yields a concrete upper bound
for all-disjoint linearly recoverable LRC sequences.
When we do so and substitute $\theta = 0$
in the objective function in~(\ref{eq:bound1}),
we get the expression $\Rate_\CM(\delta,n)$ in~(\ref{eq:CM}).
Yet, generally, the minimum in~(\ref{eq:bound1}) is obtained
at some nonzero $\theta$, thereby yielding an improvement.
For $\tau \rightarrow 1$ (which forces $\theta = \delta$),
the objective function in~(\ref{eq:bound1}) becomes
the asymptotic version of the sphere-packing bound of~\cite{WZL}
(see \Theorem~\ref{thm:spherepacking} below):
\ifPAGELIMIT
    \[
    R \le \Rate_\SP(\delta,n) = \Rate_0(\delta/2,n) .
    \]
\else
\begin{equation}
\label{eq:spherepacking}
R \le \Rate_\SP(\delta,n) = \Rate_0 \left( \frac{\delta}{2},n \right) .
\end{equation}
Thus, the objective function in~(\ref{eq:bound1}) can also be written
as:
\[
\tau \cdot \Rate_\SP(\theta,n)
+ (1-\tau) \cdot \Rate_\LRC(\theta',n) .
\]
\fi

\ifPAGELIMIT
\else
\begin{remark}
\label{rem:bound1}
\Theorem~\ref{thm:bound1} can be given the following
geometric interpretation (see~\cite[pp.~77]{LL}):
for any distinct $\theta_1, \theta_2 \in [0,(q{-}1)/q]$,
the line in the $(\delta,R)$-plane
through the points $(\theta_1,\Rate_\SP(\theta_1))$
and $(\theta_2,\Rate_\LRC(\theta_2,n))$
is an upper bound on the rate for any
$\delta \in [\min\{\theta_1,\theta_2\},\max\{\theta_1,\theta_2\}]$.
In particular, if 
$\delta \mapsto \Rate_\LRC(\delta,n)$ is convex,
then, from the convexity of
$\delta \mapsto \Rate_\SP(\delta,n)$ we get that
the lower convex envelope of
$\min \left\{ \Rate_\SP(\delta,n), \Rate_\LRC(\delta,n) \right\}$
is also an upper bound.\qed
\end{remark}
\fi

Our second main result is the following bound.

\begin{theorem}
\label{thm:bound2}
The rate of any
all-disjoint linearly recoverable $(\delta,n)$-LRC sequence over $F$
is bounded from above by
\[
\Rate_2(\delta,n) = \min_{\omega \in [\delta/2,(q{-}1)/q]}
\Bigl\{ \Rate_0(\omega,n) + \Rate_\opt(\delta,\omega) \Bigr\} .
\]
\end{theorem}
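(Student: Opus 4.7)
The plan is to carry out an Elias--Bassalygo-style averaging argument inside the linear ``ambient'' code determined by the repair groups. Let $\varcode$ be an all-disjoint linearly recoverable $(\delta,n)$-LRC of length $N$ whose distinct repair groups $\{\Local_j\}_{j \in [\ell]}$ partition $[N]$, and for simplicity assume $|\Local_i| = n$ for all $i$ (so $N = n\ell$). Each constituent code $(\varcode)_{\Local_i}$ is, by linear recoverability, a subcode of some linear $[n,n{-}1]$ code over $F$, whose parity check may be assumed to have full support; after a weight-preserving coordinate rescaling within each block, one may therefore assume that every $(\varcode)_{\Local_i}$ lies in the $[n,n{-}1,2]$ parity code. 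Consequently $\varcode \subseteq \Ambient$, where $\Ambient$ is the Cartesian product of $\ell$ copies of the parity code, so $\Ambient$ is linear with $|\Ambient| = q^{N(n-1)/n}$.

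Fix any $\omega \in [\delta/2,(q{-}1)/q]$, set $w = \lfloor \omega N \rfloor$, and write $S_w = \{\bldz \in \Ambient : \weight(\bldz) = w\}$. Since $\varcode \subseteq \Ambient$ and $\Ambient$ is an additive group, for each $\bldc \in \varcode$ there are exactly $|S_w|$ choices of $\bldy \in \Ambient$ with $\bldc + \bldy \in S_w$; double counting then gives
\[
\sum_{\bldy \in \Ambient} \bigl|(\varcode + \bldy) \cap S_w\bigr| = |\varcode|\cdot|S_w|,
\]
so some $\bldy_0 \in \Ambient$ yields $\varcode' := (\varcode + \bldy_0) \cap S_w$ with $|\varcode'| \ge |\varcode|\cdot|S_w|/|\Ambient|$. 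The set $\varcode'$ is a constant-weight code of length $N$, Hamming weight $w$, and minimum distance at least $d$ (translation preserves pairwise distances).

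Passing to the limit along the LRC sequence, $(1/N)\log_q|\varcode'| \le \Rate_\opt(\delta,\omega) + o(1)$ by definition of $\Rate_\opt$ (since $w/N \to \omega$ and the r.m.d.\ of $\varcode'$ is at least $\delta$), while $(1/N)\log_q|S_w| \ge \lambda(\omega,n) - o(1)$ by Cram\'er's theorem applied to the weight of a uniformly random codeword of $\Ambient$ (this is precisely the large-deviations estimate flagged in the paper's footnote on~(\ref{eq:R0})). Combining with $|\Ambient| = q^{N(n-1)/n}$ gives
\[
R \le \Rate_\opt(\delta,\omega) + \frac{n{-}1}{n} - \lambda(\omega,n) = \Rate_\opt(\delta,\omega) + \Rate_0(\omega,n),
\]
and minimizing over $\omega \in [\delta/2,(q{-}1)/q]$ yields $R \le \Rate_2(\delta,n)$.

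The main obstacle is the sphere-size \emph{lower} bound $|S_w| \ge q^{N\lambda(\omega,n) - o(N)}$: the Chernoff-style calculation behind~(\ref{eq:R0}) only controls $|S_w|$ from above, so one must invoke Cram\'er's theorem (or an equivalent local-limit estimate), which is available precisely on $\omega \in [0,(q{-}1)/q]$ and accounts for the upper endpoint of the minimization range. A secondary subtlety is the coordinate-scaling reduction in the first paragraph: it relies on the fact that in every non-degenerate repair group, at least one recovery equation can be taken with full support, so that the constituent code embeds into the parity code via a weight-preserving scaling; degenerate cases only tighten the bound further.
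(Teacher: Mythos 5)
Your overall strategy is the one the paper uses: embed $\varcode$ in the product-of-parity-codes ambient space, run the Bassalygo--Elias averaging argument there, lower-bound the ambient volume term via large deviations, and bound the translated intersection by a constant-weight rate. The genuine gap is your use of the exact sphere $S_w=\{\bldz\in\Ambient:\weight(\bldz)=w\}$. First, under your own reduction (every block is the $[n,n{-}1,2]$ parity code), for $q=2$ every word of $\Ambient$ has even weight, so $S_w=\emptyset$ whenever $w=\lfloor\omega N\rfloor$ is odd; the averaging step then yields nothing, and the asserted bound $\frac{1}{N}\log_q|S_w|\ge\lambda(\omega,n)-o(1)$ is false for such $N$. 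Second, even when $S_w\neq\emptyset$, Cram\'er's theorem as you invoke it (and as stated in Theorem~\ref{thm:cramer}) controls the cumulative event, i.e.\ the ball $\Ambient(\omega)$, not the probability of hitting weight exactly $w$; to get a sphere exponent you need a window/local-limit argument (cf.\ Remark~\ref{rem:variant}), and a window only produces codewords of some weight $w^*$ with $w^*/N$ near---not equal to---$\omega$.

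Once you repair this by working with the ball (or a weight window), as the paper does, the intersection $(\bldy_0+\varcode)\cap\Ambient(\omega)$ is no longer constant-weight but only of weight at most (approximately) $\omega N$, and the step ``$\varlimsup\frac{1}{N}\log_q|\varcode'|\le\Rate_\opt(\delta,\omega)$ by definition of $\Rate_\opt$'' no longer applies. What is missing is precisely the fact that relaxing ``weight exactly $\omega N$'' to ``weight at most $\omega N$'' does not increase the asymptotic rate; this is Lemma~\ref{lem:CW} in the paper, which rests on the monotonicity of $\omega\mapsto\Rate_\opt(\delta,\omega)$ on $[0,(q{-}1)/q]$ (Theorem~\ref{thm:BCCST}, the $q$-ary extension of~\cite{BCCST}). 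Extracting the largest constant-weight slice of the ball (losing only a polynomial factor) still lands you at some $\omega^*\le\omega$, so the same monotonicity (or a continuity statement you have not established) is unavoidable; your proposal omits this ingredient entirely. A secondary, lesser issue: you assume all repair groups have size exactly $n$ ``for simplicity'', but in the all-disjoint case groups of size $s<n$ genuinely occur; the bound survives because $s\mapsto\Rate_0(\omega,s)$ is non-decreasing, which is what Proposition~\ref{prop:R0} (via Lemma~\ref{lem:mu}) supplies in the paper, and it needs to be argued rather than assumed.
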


The bound of \Theorem~\ref{thm:bound2} can be further improved by
substituting $\Rate_\LRC(\delta,n) = \Rate_2(\delta,n)$
in \Theorem~\ref{thm:bound1}.

\ifPAGELIMIT
    The various bounds are summarized in Table~\ref{tab:q=2,n=4}
    for $q = 2$, $n = 4$, and
    $\delta \in \{ 0.07, 0.10, 0.15, 0.30 \}$
    (full plots can be found in~\cite{RothFull}).
    The bound $\Rate_\CM(\delta,n)$ is computed by substituting
    the linear-programming bound $\Rate_\LP(\delta)$ of~\cite{MRRW}
    for $\Rate_\opt(\delta)$ and, similarly,
    $\Rate_1(\delta,n)$ is computed taking
    $\Rate_\LRC(\delta,n) = \Rate_\LP(\delta)$ in~(\ref{eq:bound1}).
    The last column corresponds to the bound $\Rate_1(\delta,n)$,
    now taking $\Rate_\LRC(\delta,n) = \Rate_2(\delta,n)$.
    The bounds of~\cite{ABHMT} slightly improve on $\Rate_\CM(\delta,n)$
    yet not for the parameters shown in the table.
\else
The various bounds are plotted in
Figures~\ref{fig:q=2,n=3} and~\ref{fig:q=2,n=4}
for $q = 2$ and $n = 3, 4$:
\begin{itemize}
\itemsep0ex
\item
curve~(a) presents $\delta \mapsto \Rate_\SP(\delta,n)$;
\item
curve~(b) presents $\delta \mapsto \Rate_\CM(\delta,n)$,
where we have substituted
the linear-programming bound $\Rate_\LP(\delta)$ of~\cite{MRRW}
for $\Rate_\opt(\delta)$;
\item
curve~(c) presents $\delta \mapsto \Rate_1(\delta,n)$ taking
$\Rate_\LRC(\delta,n) = \Rate_\LP(\delta)$ in~(\ref{eq:bound1})
(the same curve is obtained also for
$\Rate_\LRC(\delta,n) = \Rate_\CM(\delta,n)$);
\item
curve~(d) presents $\delta \mapsto \Rate_2(\delta,n)$;
\item
curve~(e) presents $\delta \mapsto \Rate_1(\delta,n)$ taking
$\Rate_\LRC(\delta,n) = \Rate_2(\delta,n)$ in~(\ref{eq:bound1});
\item
and curve~(f) presents the lower bound
$\delta \mapsto \Rate_0(\delta,n)$.
\end{itemize}
In the range where any of the curves~(b)--(d)
is not seen it coincides with curve~(e).
The values of the upper bounds for $q = 2$, $n = 4$, and
$\delta \in \{ 0.07, 0.10, 0.15, 0.30 \}$
are summarized in Table~\ref{tab:q=2,n=4}.
Notice that curve~(d) is not convex and that there is
a (small) range where it is worse than curve~(b),
but curve~(e) yields the best results.
The bounds of~\cite{ABHMT} slightly improve on $\Rate_\CM(\delta,n)$
but are too close to it to be seen at the scale of the figures.
\fi
\begin{table}[hbt]
\caption{Values of the bounds for $q = 2$ and $n = 4$.}
\label{tab:q=2,n=4}
\ifPAGELIMIT
    \vspace{-3ex}
\fi
\[
\renewcommand{\arraystretch}{1.1}
\begin{array}{cccccc}
\hline\hline
\quad\quad \delta \quad\quad &
\ifPAGELIMIT
    \Rate_\SP & \Rate_\CM & \Rate_1 & \Rate_2 & \Rate_{1,2} \\
\else
\mathrm{(a)}&\mathrm{(b)}&\mathrm{(c)}&\mathrm{(d)}&\mathrm{(e)} \\
\fi
\hline
0.07 & 0.6133 & 0.6317 & 0.6131 & 0.6079 & 0.6079 \\
0.10 & 0.5681 & 0.5809 & 0.5643 & 0.5576 & 0.5576 \\
0.15 & 0.5004 & 0.4964 & 0.4830 & 0.4781 & 0.4781 \\
0.30 & 0.3346 & 0.2427 & 0.2391 & 0.2470 & 0.2391 \\
\hline\hline
\end{array}
\]
\end{table}

Our second set of results includes (weaker) counterparts
of \Theorems~\ref{thm:bound1} and~\ref{thm:bound2} that apply
generally to linearly recoverable LRC sequences
(that are not necessarily all-disjoint). 

\ifPAGELIMIT
    Due to space limitations,
    some proofs are omitted from this abstract;
    they can be found in~\cite{RothFull}.
\else
The rest of this work is organized as follows.
In Section~\ref{sec:tools}, we present some basic tools from
large deviations theory which are tailored to our needs.
Additional tools will be presented in Section~\ref{sec:spherepacking},
where we also prove
the asymptotic sphere-packing bound~(\ref{eq:spherepacking}).
Section~\ref{sec:mainresults} is devoted to proving
\Theorems~\ref{thm:bound1} and~\ref{thm:bound2}.
Then, in Section~\ref{sec:q=2,n=3},
we present improved results for the special case $q = 2$ and $n = 3$.
Finally, in Section~\ref{sec:nondisjoint},
we present our bounds for
$(\delta,n)$-LRC sequences that are not necessarily
all-disjoint (but are still linearly recoverable).

\section{Large deviation tools}
\label{sec:tools}

We summarize here several basic notions from
large deviations theory (see~\cite[Sections~2.1.2 and 2.2]{DZ}).
Let $X$ be a random variable
which takes values in a finite subset $\Set \subseteq \Realfield$,
with $\Prob \{ X = x \} = p(x) > 0$ for every $x \in \Set$.
For every $u \in \Realfield$, let the function
$g_u : (0,1] \rightarrow \Realfield$ be defined by
\[
g_u(z) = z^{-u} \cdot \Expected \left\{ z^X \right\}
= \sum_{x \in \Set} p(x) \cdot z^{x - u}
\]
and let
\begin{equation}
\label{eq:gamma}
\gamma(u) = \gamma_X(u)
= \inf_{z \in (0,1]} g_u(z) .
\end{equation}

The following theorem follows from
the Chernoff bound and Cram\'{e}r's theorem.

\begin{theorem}
\label{thm:cramer}
Let $(X_i)_{i=1}^\infty$ be a sequence of i.i.d.\ random variables
which take values in a finite subset $\Set \subseteq \Realfield$.
Then for every real $u \ge x_{\min} = \min_{x \in \Set} x$
and $\ell \in \Integers^+$,
\begin{equation}
\label{eq:chernoff}
\frac{1}{\ell} \log
\Prob \left\{ \frac{1}{\ell} \sum_{i=1}^\ell X_i \le u \right\}
\le \log \gamma(u) .
\end{equation}
Moreover,
\begin{equation}
\label{eq:cramer}
\lim_{\ell \rightarrow \infty}
\frac{1}{\ell} \log
\Prob \left\{ \frac{1}{\ell} \sum_{i=1}^\ell X_i \le u \right\}
= \log \gamma(u) .
\end{equation}
\end{theorem}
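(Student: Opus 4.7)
The plan is to split the argument into the (routine) Chernoff upper bound~(\ref{eq:chernoff}) and the (more substantial) matching lower bound that yields the limit in~(\ref{eq:cramer}); the latter is the content of Cram\'{e}r's theorem, which I would prove by an exponential change of measure.

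For~(\ref{eq:chernoff}), fix $z \in (0,1]$. Because $x \mapsto z^x$ is non-increasing, the event $\bigl\{ \frac{1}{\ell} \sum_{i=1}^{\ell} X_i \le u \bigr\}$ coincides with $\bigl\{ z^{\sum_i X_i} \ge z^{\ell u} \bigr\}$, so Markov's inequality, together with independence to factor $\Expected\{z^{\sum_i X_i}\} = (\Expected\{z^X\})^{\ell}$, yields
\[
\Prob\Bigl\{ \tfrac{1}{\ell}\sum_{i=1}^{\ell} X_i \le u \Bigr\}
\le z^{-\ell u}(\Expected\{z^X\})^{\ell} = g_u(z)^{\ell}.
\]
Taking logarithms, dividing by $\ell$, and minimizing over $z \in (0,1]$ produces~(\ref{eq:chernoff}).

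Given~(\ref{eq:chernoff}), the limit~(\ref{eq:cramer}) reduces to showing $\varliminf_{\ell}\frac{1}{\ell}\log\Prob\{\cdots\} \ge \log\gamma(u)$. I would first dispose of the boundary cases: when $u \ge \Expected\{X\}$, the probability tends to $1$ by the weak law of large numbers, while inspection of $g_u'(1) = \Expected\{X\} - u$ shows that the infimum in~(\ref{eq:gamma}) is attained at $z = 1$ with value $1$; when $u = x_{\min}$, a direct calculation gives $\Prob\{\cdots\} = p(x_{\min})^{\ell}$ while $\gamma(u) = \lim_{z\to 0^+} g_u(z) = p(x_{\min})$. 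In the main interior case $x_{\min} < u < \Expected\{X\}$, continuity of $g_u$ on $(0,1]$, together with $g_u(0^+) = +\infty$ and $g_u'(1) > 0$, locates a minimizer $z^{\ast} \in (0,1)$ characterized by $\sum_{x \in \Set}(x - u)p(x)(z^{\ast})^{x} = 0$; equivalently, the tilted law $\tilde p(x) = p(x)(z^{\ast})^{x}/\Expected\{(z^{\ast})^X\}$ has mean $u$. For $\epsilon > 0$, I would then bound $\Prob\{S_{\ell}/\ell \le u\} \ge \Prob\{u - \epsilon < S_{\ell}/\ell \le u\}$, re-express this probability under $\tilde p^{\otimes \ell}$ via the Radon--Nikodym factor $(\Expected\{(z^{\ast})^X\})^{\ell}(z^{\ast})^{-S_{\ell}}$, and use $(z^{\ast})^{-S_{\ell}} \ge (z^{\ast})^{-\ell(u - \epsilon)}$ on the event (valid since $z^{\ast} < 1$ and $S_{\ell} > \ell(u - \epsilon)$) to obtain
\[
\Prob\{S_{\ell}/\ell \le u\}
\ge \bigl((z^{\ast})^{\epsilon} g_u(z^{\ast})\bigr)^{\ell}
\cdot \tilde\Prob\{u - \epsilon < S_{\ell}/\ell \le u\}.
\]
Since the tilted mean is $u$, the central-limit theorem (or a second Chernoff bound for the one-sided tail $\tilde\Prob\{S_{\ell}/\ell \le u - \epsilon\}$) forces the tilted probability to remain bounded away from $0$, so taking logs, dividing by $\ell$, and sending $\ell \to \infty$ and then $\epsilon \to 0^{+}$ yields $\varliminf_{\ell}\frac{1}{\ell}\log\Prob\{\cdots\} \ge \log\gamma(u)$.

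The principal obstacle is precisely the interface between the interior minimizer and the boundary regimes: one must verify that a minimizer actually lies in $(0,1)$ throughout the range $x_{\min} < u < \Expected\{X\}$, and handle the endpoints $u = x_{\min}$ and $u \ge \Expected\{X\}$ by direct means, since in those cases the infimum in~(\ref{eq:gamma}) is approached only at $z \to 0^{+}$ or at $z = 1$ and the tilting construction degenerates. Once these are settled, the change-of-measure computation above is routine.
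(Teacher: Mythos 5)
Your proof is correct, but it is worth noting that the paper itself contains no proof of \Theorem~\ref{thm:cramer}: it simply invokes the Chernoff bound and Cram\'{e}r's theorem, deferring to \cite[Sections~2.1.2 and~2.2]{DZ}. What you have written is the standard self-contained derivation of exactly those two ingredients in the finite-support case --- Markov's inequality applied to $z^{\sum_i X_i}$ with $z\in(0,1)$ for the upper bound~(\ref{eq:chernoff}), and an exponential change of measure at the minimizing tilt $z^{\ast}$, combined with the central limit theorem for the tilted law, for the matching lower bound that yields~(\ref{eq:cramer}) --- so your route buys self-containedness where the paper relies on a citation. Two small points to tighten, neither a genuine gap: at $u=\Expected\{X\}$ the probability $\Prob\{\frac{1}{\ell}\sum_i X_i\le u\}$ tends to $1/2$ by the CLT (not to $1$ by the weak law), which is still enough since it stays bounded away from $0$; and the claim that $\gamma(u)=1$ for $u\ge\Expected\{X\}$ needs slightly more than the sign of $g_u'(1)$ at the endpoint --- e.g.\ Jensen's inequality $\Expected\{z^X\}\ge z^{\Expected\{X\}}$ (this is how the paper argues in the proof of Lemma~\ref{lem:gamma}), or the log-convexity of the moment generating function in $\log z$, either of which shows $g_u(z)\ge 1=g_u(1)$ on $(0,1]$. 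Also, strictly speaking the event identity $\{\frac{1}{\ell}\sum_i X_i\le u\}=\{z^{\sum_i X_i}\ge z^{\ell u}\}$ requires $z<1$, but at $z=1$ the bound $g_u(1)^{\ell}=1$ is trivial, so the minimization over $z\in(0,1]$ goes through unchanged.
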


Some properties of $u \mapsto \gamma(u)$ are summarized
in the next
\ifPAGELIMIT
    lemma.
\else
lemma, which is proved in Appendix~\ref{sec:skippedproofs}.
\fi

\begin{lemma}
\label{lem:gamma}
The function $u \mapsto \gamma(u)$ is
\begin{itemize}
\item
identically zero when 
$u < x_{\min} = \min_{x \in \Set} x$,
\item
equal to $p(x_{\min})$ when $u = x_{\min}$,
\item
constant $1$ when
$u \ge \Expected \{ X \}$,
\item
strictly increasing when
$x_{\min} \le u < \Expected \{ X \}$,
\item
continuous when $u \in [x_{\min}^+,\infty)$, and---
\item
log-concave (i.e., $u \mapsto \log \gamma(u)$ is concave)
when $u \ge x_{\min}$.
\end{itemize}
\end{lemma}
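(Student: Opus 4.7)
The plan is to derive the six items in order of logical dependency, starting from log-concavity (which is the easiest) and then reading off the rest from it together with a handful of direct evaluations.

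The starting observation is that $\log g_u(z) = -u \log z + \log \Expected\{z^X\}$ is affine in $u$ for every fixed $z \in (0,1]$, so $\log \gamma(u) = \inf_{z \in (0,1]} \bigl( -u \log z + \log \Expected\{z^X\} \bigr)$ is a pointwise infimum of affine functions of $u$, hence concave; this establishes the log-concavity item. The three direct evaluations are then straightforward. For $u < x_{\min}$, every exponent $x - u$ in $g_u(z)$ is positive, so $g_u(z) \to 0$ as $z \to 0^+$ and $\gamma(u) = 0$. For $u = x_{\min}$, splitting off the dominant term gives $g_{x_{\min}}(z) = p(x_{\min}) + \sum_{x > x_{\min}} p(x)\, z^{x - x_{\min}}$, which is bounded below by $p(x_{\min})$ on $(0,1]$ with equality in the limit $z \to 0^+$. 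For $u \ge \Expected\{X\}$, Jensen applied to the convex function $x \mapsto z^x$ gives $\Expected\{z^X\} \ge z^{\Expected\{X\}}$, so $g_u(z) \ge z^{\Expected\{X\} - u} \ge 1$ on $(0,1]$; combined with $g_u(1) = 1$, this pins $\gamma(u)$ at $1$.

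Strict monotonicity on $[x_{\min}, \Expected\{X\})$ follows from concavity of $\log \gamma$ together with one extra fact: $\gamma(u) < 1$ whenever $u < \Expected\{X\}$. This in turn comes from $g'_u(1) = \sum_x p(x)(x - u) = \Expected\{X\} - u > 0$, so shifting $z$ slightly below $1$ drops $g_u$ strictly below $g_u(1) = 1$. Coupled with $\log \gamma(\Expected\{X\}) = 0$, concavity then yields strict increase: for $x_{\min} \le u_1 < u_2 < \Expected\{X\}$, writing $u_2$ as a strict convex combination of $u_1$ and $\Expected\{X\}$ gives
\[
\log \gamma(u_2) \; \ge \; \frac{\Expected\{X\} - u_2}{\Expected\{X\} - u_1} \log \gamma(u_1) ,
\]
and since the coefficient lies in $(0,1)$ while $\log \gamma(u_1) < 0$, the right side strictly exceeds $\log \gamma(u_1)$.

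For continuity, I would first observe that $\gamma$ is upper semi-continuous on $\Realfield$ as an infimum of functions smooth in $u$: for any fixed $z_0 \in (0,1]$, $\gamma(u) \le g_u(z_0)$, so $\limsup_{u \to u^*} \gamma(u) \le g_{u^*}(z_0)$, and infimizing over $z_0$ yields $\limsup \le \gamma(u^*)$. On $(x_{\min}, \infty)$, continuity then follows from the standard fact that a concave function is continuous on the interior of its effective domain. At $u = x_{\min}$, upper semi-continuity from the right reads $\limsup_{u \to x_{\min}^+} \gamma(u) \le p(x_{\min})$, while a concavity bound on $[x_{\min}, u_1]$ for any $u_1 > x_{\min}$ yields $\log \gamma(u) \ge \tfrac{u_1 - u}{u_1 - x_{\min}} \log p(x_{\min}) + \tfrac{u - x_{\min}}{u_1 - x_{\min}} \log \gamma(u_1) \to \log p(x_{\min})$ as $u \to x_{\min}^+$, delivering right-continuity. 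The main technical wrinkle I anticipate is precisely this right-continuity at $x_{\min}$, since concavity alone is insufficient at a boundary point of the effective domain (concave functions can jump downward there); the $\inf/\limsup$ interchange through a fixed test point $z_0$ is the extra ingredient that closes the gap.
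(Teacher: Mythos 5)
Your proof is correct, but it reaches three of the six items by a genuinely different route than the paper. The three direct evaluations (for $u < x_{\min}$, $u = x_{\min}$, and $u \ge \Expected\{X\}$) coincide with the paper's. For log-concavity, however, the paper invokes the large-deviations limit in Eq.~(\ref{eq:cramer}) together with a superadditivity inequality for the probabilities $\Prob\bigl\{\frac{1}{\ell}\sum_i X_i \le u\bigr\}$, whereas you simply note that $\log g_u(z)$ is affine in $u$ for each fixed $z$, so $\log\gamma$ is a pointwise infimum of affine functions; this is more elementary and does not lean on \Theorem~\ref{thm:cramer} at all. You then make strict monotonicity and continuity (including right-continuity at $x_{\min}$) corollaries of concavity, using the anchor values $\gamma(\Expected\{X\})=1$ and $\gamma(u)<1$ for $u<\Expected\{X\}$ (via $g_u'(1)=\Expected\{X\}-u>0$) together with upper semicontinuity of an infimum of functions continuous in $u$. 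The paper instead analyzes the minimizer directly: it shows that for $x_{\min}<u<\Expected\{X\}$ the infimum is attained at an interior point $z_u\in(0,1)$ with $z_u>p(x_{\min})^{1/(u-x_{\min})}$, compares $\gamma(u)$ and $\gamma(v)$ through $g_u(z_v)$ and $g_v(z_u)$, and handles $u\rightarrow x_{\min}^+$ with the $u$-dependent test point $z=\varepsilon$. What the paper's route buys is precisely this interior-minimizer information, which is reused later (in the proof of Lemma~\ref{lem:unique} in Appendix~\ref{sec:R0properties}); what your route buys is brevity and independence from the probabilistic machinery. One small point to patch: in both the strict-increase step and the chord bound at $x_{\min}$ you need $\log\gamma(u_1)$ to be \emph{finite}, not merely negative (otherwise $\lambda\log\gamma(u_1)>\log\gamma(u_1)$ fails); this is immediate from $g_u(z)\ge p(x_{\min})\,z^{x_{\min}-u}\ge p(x_{\min})$ for all $z\in(0,1]$ and $u\ge x_{\min}$, so $\gamma(u)\ge p(x_{\min})>0$ there, but it should be stated.
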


Let $\code$ be a linear $[n,k]$ code over $F$ and denote by
$W_\code(z) = \sum_{i=0}^n W_i z^i$ the weight enumerator polynomial 
of $\code$, namely,
\[
W_i = \left| \{ \bldc \in \code \,:\, \weight(\bldc) = i \} \right| ,
\]
where $\weight(\cdot)$ denotes Hamming weight.
For $\omega \in \Realfield$, define
\[
\alpha(\omega) = \alpha_\code(\omega)
= \inf_{z \in (0,1]} \left\{ z^{-n \omega} \cdot W_\code(z) \right\} .
\]
We have $\alpha(\omega) = q^k \cdot \gamma_X(n \omega)$,
where $X$ is a random variable which equals the weight of
a codeword selected uniformly from $\code$.
In particular, $\omega \mapsto \alpha(\omega)$ is
continuous on $\Realfield_{\ge 0}$.
Moreover, when $\code$ has no trivial coordinates
(i.e., none of the coordinates is identically zero
across all codewords) then $\Expected \{ X \} = n (q{-}1)/q$;
so, $\omega \mapsto \alpha(\omega)$ is strictly increasing
(and therefore also positive) at any positive $\omega < (q{-}1)/q$.

For a code $\varcode$ of length $N$ over $F$ and
$\omega \in \Realfield_{\ge 0}$,
we denote by $\varcode(\omega)$ the set of all codewords
in $\varcode$ that are contained in a Hamming ball of radius $\omega N$:
\[
\varcode(\omega) =
\left\{ \bldc \in \varcode \,:\, \weight(\bldc) \le \omega N \right\} .
\]

\begin{lemma}
\label{lem:cramer}
Let $\code$ be a linear code $[n,k]$ over $F$ 
and for $\ell \in \Integers^+$,
let $\Code^{(\ell)}$ be the linear $[\ell n, \ell k]$ code over $F$
defined by the Cartesian product
\[
\Code^{(\ell)} = \code^{\times \ell} 
= \underbrace{\code \times \code \cdots \times \code}_%
                                              {\ell \; \mathrm{times}} .
\]
Then the following holds.
\begin{list}{}{\settowidth{\labelwidth}{\textit{(ii)}}}
\item[(i)]
For any $\omega \in \Realfield_{\ge 0}$ and $\ell \in \Integers^+$:
\[
\frac{1}{\ell} \log |\Code^{(\ell)}(\omega)| \le \log \alpha(\omega) .
\]
\item[(ii)]
For any $\omega \in \Realfield_{\ge 0}$:
\[
\lim_{\ell \rightarrow \infty}
\frac{1}{\ell} \log |\Code^{(\ell)}(\omega)| = \log \alpha(\omega) .
\]
\end{list}
\end{lemma}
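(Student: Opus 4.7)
The plan is to recognize $\Code^{(\ell)}(\omega)$ as the event, in an i.i.d.\ experiment, that an empirical mean lies below a threshold, and then invoke \Theorem~\ref{thm:cramer} directly. The crucial bridge is the identity
\[
\alpha(\omega) \;=\; q^k \cdot \gamma_X(n\omega),
\]
where $X$ is the weight of a codeword chosen uniformly from $\code$; this was already observed right after the definition of $\alpha$. Since $\code$ is linear (and so contains the all-zero word), $x_{\min} = 0$ and therefore $n\omega \ge x_{\min}$ automatically for every $\omega \in \Realfield_{\ge 0}$, so \Theorem~\ref{thm:cramer} is applicable throughout the claimed range.

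First I would set up the probabilistic interpretation. A codeword $\bldc \in \Code^{(\ell)} = \code^{\times \ell}$ has the block form $\bldc = (\bldc_1, \bldc_2, \ldots, \bldc_\ell)$ with each $\bldc_i \in \code$, and its Hamming weight over the ambient space $F^{\ell n}$ equals $\sum_{i=1}^\ell \weight(\bldc_i)$. Choosing $\bldc$ uniformly from $\Code^{(\ell)}$ is the same as choosing the $\bldc_i$'s independently and uniformly from $\code$, so $X_i := \weight(\bldc_i)$, $i \in [\ell]$, are i.i.d.\ copies of $X$, taking values in the finite set $\Set = \{i \in \{0,1,\ldots,n\} : W_i > 0\} \subseteq \Realfield$.

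Next I would translate the counting into a probability. Since $\Code^{(\ell)}$ has length $\ell n$, the set $\Code^{(\ell)}(\omega)$ consists of those $\bldc$ with $\weight(\bldc) \le \omega \cdot \ell n$, i.e., with $\frac{1}{\ell}\sum_{i=1}^\ell X_i \le n\omega$. Therefore
\[
|\Code^{(\ell)}(\omega)|
\;=\; q^{\ell k} \cdot
\Prob\!\left\{ \frac{1}{\ell}\sum_{i=1}^\ell X_i \le n\omega \right\},
\]
and consequently
\[
\frac{1}{\ell} \log |\Code^{(\ell)}(\omega)|
\;=\; k \log q \;+\; \frac{1}{\ell}\log
\Prob\!\left\{ \frac{1}{\ell}\sum_{i=1}^\ell X_i \le n\omega \right\}.
\]

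Finally I would apply \Theorem~\ref{thm:cramer} with $u = n\omega$. The Chernoff-type inequality~(\ref{eq:chernoff}) yields the bound $\tfrac{1}{\ell}\log|\Code^{(\ell)}(\omega)| \le k\log q + \log \gamma_X(n\omega) = \log\alpha(\omega)$, which is part~(i); the Cram\'er-type limit~(\ref{eq:cramer}) then upgrades this to the equality $\lim_{\ell\to\infty}\tfrac{1}{\ell}\log|\Code^{(\ell)}(\omega)| = \log\alpha(\omega)$, which is part~(ii). There is no serious obstacle here: the only thing worth checking carefully is the identification $\alpha(\omega) = q^k\gamma_X(n\omega)$ (immediate from $p(j) = W_j/q^k$ and the definitions of $\alpha$ and $\gamma$) and the hypothesis $u \ge x_{\min}$ for \Theorem~\ref{thm:cramer}, which holds with room to spare because $x_{\min} = 0 \le n\omega$.
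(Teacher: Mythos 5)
Your proposal is correct and is exactly the argument the paper uses: its one-line proof ("apply \Theorem~\ref{thm:cramer} with $X_i$ the weight of a codeword drawn uniformly from $\code$") is precisely your construction, with the identity $\alpha(\omega) = q^k \gamma_X(n\omega)$ and the translation $|\Code^{(\ell)}(\omega)| = q^{\ell k}\,\Prob\{\frac{1}{\ell}\sum_i X_i \le n\omega\}$ left implicit. Your added checks (independence of the blocks, $x_{\min}=0$ by linearity so $u = n\omega \ge x_{\min}$) are exactly the details the paper omits.
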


\begin{proof}
Apply \Theorem~\ref{thm:cramer}
with the random variable $X_i$ taken as
the weight of a codeword selected uniformly from $\code$.
\end{proof}

\ifPAGELIMIT
\else
\begin{remark}
\label{rem:variant}
A variant of Lemma~\ref{lem:cramer} holds also for
$\Code^{(\ell)}(\omega \pm \varepsilon)$,
defined as the set of all codewords
in $\Code^{(\ell)}$ of weight within $(\omega \pm \varepsilon) \ell n$.
Assuming that $\code$ has no trivial coordinates,
it follows from~\cite[\Theorem~2.1.24]{DZ}
that for any $\omega \in (0,(q{-}1)/q]$:
\[
\lim_{\varepsilon \rightarrow 0}
\lim_{\ell \rightarrow \infty}
\frac{1}{\ell} \log |\Code^{(\ell)}(\omega \pm \varepsilon)|
= \log \alpha(\omega) .
\]
For $(q{-}1)/q < \omega < \max_{\bldc \in \code} \weight(\bldc)/n$,
this equality holds if we replace $\alpha(\omega)$ by:
\[
\alpha^*(\omega)
= \inf_{z \in (0,1]} z^{n \omega} \cdot W_\code(z^{-1})
\]
(this can be shown by stating \Theorem~\ref{thm:cramer}
with $X_i$ and $u$ replaced by $-X_i$ and $-u$, respectively).\qed
\end{remark}
\fi

\begin{example}
\label{ex:lambda}
Let $\code$ be the $[n,n{-}1,2]$ parity code over $F$.
The weight enumerator polynomial of the $[n,1,n]$ repetition code,
which is the dual code of $\code$, is given by
\[
W_{\code^\perp}(z) = 1 + (q{-}1) z^n .
\]
Hence, by MacWilliams' identities (see~\cite[\Theorem~4.6]{Roth}):
\[
W_\code(z)
= \frac{1}{q} \Bigl( (1 + (q{-}1)z)^n + (q{-}1)(1{-}z)^n \Bigr) .
\]
Thus, for the code $\code$ we have:
\begin{eqnarray*}
\frac{1}{n} \cdot \log_q \alpha(\omega)
& = &
\frac{1}{n}
\inf_{z \in (0,1]}
\log_q \left( z^{-n \omega} \cdot W_\code(z) \right) \\
& = &
\lambda(\omega,n) ,
\end{eqnarray*}
where $\lambda(\omega,n)$ is defined in~(\ref{eq:lambda}). From this
we can conclude that the mapping $\omega \mapsto \lambda(\omega,n)$
takes the value $0$ at $\omega = 0$ and $(n{-}1)/n$
at $\omega = (q{-}1)/q$, and is strictly increasing in between
for any $n > 1$.
For $\omega \ge (q{-}1)/q$ it remains
\ifPAGELIMIT
a constant $(n{-}1)/n$.\qed
\else
a constant $(n{-}1)/n$.

In Figure~\ref{fig:lambda}, we have also depicted the following
function, which is defined
for $(q{-}1)/q < \omega < \max_{\bldc \in \code} \weight(\bldc)/n$:
\begin{eqnarray*}
\lambda^*(\omega,n) = \lambda^*_q(\omega,n)
& = &
\frac{1}{n} \cdot \log_q \alpha^*(\omega) \\
& = &
\frac{1}{n}
\inf_{z \in (0,1]}
\log_q \left( z^{n \omega} \cdot W_\code(z^{-1}) \right) .
\end{eqnarray*}
Note that $\max_{\bldc \in \code} \weight(\bldc)/n$ equals $1$,
except when $q = 2$ and $n$ is odd, in which case it equals $(n{-}1)/n$.
It is also fairly easy to see that when $q = 2$ and $n$ is even
we have
$\lambda^*_2(\omega,n) = \lambda_2(1{-}\omega,n)$.\qed
\fi
\end{example}
\fi

\section{Asymptotic sphere-packing bound}
\label{sec:spherepacking}

\ifPAGELIMIT
    We present in this section several tools and definitions.
\else
The purpose of this section is to present additional
tools that will be used in this work.
\fi
Along the way, we present
an asymptotic formulation of the sphere-packing bound of~\cite{WZL}
for the all-disjoint linearly recoverable case.
\ifPAGELIMIT
    We start with the next proposition, which will follow from
    Lemma~\ref{lem:mu} below.
\else
The following proposition will be useful for this purpose,
as well as for other proofs in the sequel.
\fi

\begin{proposition}
\label{prop:R0}
Given $n \in \Integers^+$,
let $(\Ambient_i)_{i=1}^\infty$ be
an infinite sequence of codes over $F$
where each $\Ambient_i$ is
a linear code of length $N_i$ over $F$ of the form
\[
\Ambient_i = \code_{i,1} \times \code_{i,2}
\times \cdots \times \code_{i,\ell_i} ,
\]
with each constituent code $\code_{i,j}$ being
a linear $[n_{i,j},n_{i,j}{-}1]$ code of length $n_{i,j} \le n$
over $F$.
Then for any nonnegative real sequence $(\omega_i)_{i=1}^\infty$
such that $\varliminf_{i \rightarrow \infty} \omega_i = \omega$:
\ifPAGELIMIT
    \[
    \varlimsup_{i \rightarrow \infty}
    \frac{1}{N_i}
    \log_q
    \frac{|\Ambient_i|}{|\Ambient_i(\omega_i)|}
    \le \Rate_0(\omega,n) ,
    \]
\else
\begin{equation}
\label{eq:limit}
\varlimsup_{i \rightarrow \infty}
\frac{1}{N_i}
\log_q
\frac{|\Ambient_i|}{|\Ambient_i(\omega_i)|}
\le \Rate_0(\omega,n) ,
\end{equation}
\fi
where $\Rate_0(\omega,n)$ is defined
in~(\ref{eq:R0})--(\ref{eq:lambda}).
\end{proposition}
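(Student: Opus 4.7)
The plan is to reduce the problem to the iid asymptotics of Lemma~\ref{lem:cramer}(ii) by grouping the constituents of $\Ambient_i$ according to their (finitely many) weight-enumerator classes, and then to invoke the auxiliary Lemma~\ref{lem:mu} referenced in the paper for the per-class comparison. A preliminary continuity reduction first replaces the sequence $\omega_i$ by a fixed $\omega - \epsilon$: since $\varliminf_i \omega_i = \omega$, we have $\omega_i \ge \omega - \epsilon$ for all sufficiently large $i$, so monotonicity of $\omega \mapsto |\Ambient_i(\omega)|$ yields $|\Ambient_i|/|\Ambient_i(\omega_i)| \le |\Ambient_i|/|\Ambient_i(\omega-\epsilon)|$, and continuity of $\Rate_0(\cdot,n)$ then lets one send $\epsilon \downarrow 0$ at the end.

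Next, note that up to a diagonal rescaling there are only finitely many linear $[m, m-1]$ codes over $F$ with $m \le n$ -- each such code's weight enumerator is determined by $m$ and by the Hamming weight $w$ of a dual generator. Enumerate representatives as $\code_1, \ldots, \code_T$, with lengths $n_1, \ldots, n_T$, and let $\ell_{i,t}$ count the constituents of $\Ambient_i$ in class $t$. After re-ordering coordinates, $\Ambient_i = \prod_{t=1}^T \code_t^{\times \ell_{i,t}}$, and since picking a codeword from the relative $(\omega-\epsilon)$-ball within each type block independently gives a codeword of $\Ambient_i$ of weight at most $(\omega-\epsilon) N_i$,
\[
|\Ambient_i(\omega-\epsilon)| \;\ge\; \prod_{t=1}^T \bigl|(\code_t^{\times \ell_{i,t}})(\omega-\epsilon)\bigr| .
\]
Applying Lemma~\ref{lem:cramer}(ii) to each type with $\ell_{i,t} \to \infty$ (blocks whose multiplicity stays bounded contribute only $O(1)$ to $\log_q|\Ambient_i|$ and are absorbed into the error) gives
\[
\log_q \frac{|\Ambient_i|}{|\Ambient_i(\omega-\epsilon)|}
\;\le\; \sum_{t=1}^T \ell_{i,t}\bigl((n_t-1) - \log_q \alpha_{\code_t}(\omega-\epsilon)\bigr) + o(N_i) .
\]
Lemma~\ref{lem:mu} is then expected to supply the per-type bound: for every linear $[m, m-1]$ code $\code$ over $F$ with $m \le n$,
\[
(m-1) - \log_q \alpha_\code(\omega) \;\le\; m\,\Rate_0(\omega,n) .
\]
Weighting this by $\ell_{i,t} n_t / N_i$ and summing (using $\sum_t \ell_{i,t} n_t = N_i$) gives $\frac{1}{N_i}\log_q(|\Ambient_i|/|\Ambient_i(\omega-\epsilon)|) \le \Rate_0(\omega-\epsilon,n) + o(1)$, which after $\varlimsup$ and $\epsilon \downarrow 0$ yields the claim.

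The main obstacle is Lemma~\ref{lem:mu}, which I would establish in two steps. First, among linear $[m, m-1]$ codes the $[m, m-1, 2]$ parity code minimizes the weight enumerator pointwise on $z \in [0,1]$: if a dual generator of $\code$ has Hamming weight $w$, then by a diagonal rescaling $W_\code(z) = (1+(q-1)z)^{m-w} \cdot q^{-1}\bigl((1+(q-1)z)^w + (q-1)(1-z)^w\bigr)$, and a short calculation reduces $W_\code(z) \ge W_{\mathrm{par}}(z)$ to the elementary inequality $(1+(q-1)z)^{m-w}(1-z)^w \ge (1-z)^m$ on $[0,1]$ (immediate from $1+(q-1)z \ge 1 \ge 1-z$ and $w \le m$); this yields $\alpha_\code(\omega) \ge q^{m\lambda(\omega,m)}$. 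Second, the map $m \mapsto \Rate_0(\omega, m) = (m-1)/m - \lambda(\omega, m)$ is non-decreasing in $m$; via formula~(\ref{eq:lambda}), this reduces to the non-increase of $m \mapsto ((1+(q-1)z)^m + (q-1)(1-z)^m)^{1/m}$ at each fixed $z \in (0,1]$, a power-mean-type statement provable by a direct logarithmic-derivative computation. Combining the two steps for $m \le n$ gives $(m-1)/m - (1/m)\log_q\alpha_\code(\omega) \le \Rate_0(\omega, m) \le \Rate_0(\omega, n)$, which is Lemma~\ref{lem:mu}.
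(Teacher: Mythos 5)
Your argument is correct, but it reaches Proposition~\ref{prop:R0} by a genuinely different route in the one place where the paper does real work beyond Lemma~\ref{lem:cramer}. The core of your proof coincides with the paper's Lemma~\ref{lem:lambda}: normalize the constituents to finitely many representatives, lower-bound $|\Ambient_i(\omega-\epsilon)|$ by the product of per-block balls of the same relative radius, apply Lemma~\ref{lem:cramer}(ii) blockwise (your ``bounded multiplicities are absorbed into the error'' should be implemented per index $i$ with a fixed threshold, which is exactly the role of the paper's $\sqrt{\ell_i}$ device; with finitely many types this is routine), and use the pointwise comparison $W_\code(z)\ge W_{\mathrm{parity}}(z)$ on $(0,1]$ to pass to $\lambda(\cdot,m)$---that comparison is the same computation as in the paper. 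Where you diverge is the handling of constituent lengths $m<n$: the paper proves the more general Lemma~\ref{lem:mu}, splitting $\Ambient_i$ by length, allowing a \emph{different} relative radius $\vartheta_s$ per length class, and obtaining the max--min bound $\overline{\Rate}_0(\omega,n,\mu)$, so that Proposition~\ref{prop:R0} also relies on the appendix characterization (\Proposition~\ref{prop:WZL}) to identify $\overline{\Rate}_0(\omega,n,n)$ with $\Rate_0(\omega,n)$. You instead keep one radius for all blocks and prove the elementary monotonicity $\Rate_0(\omega,m)\le\Rate_0(\omega,n)$ for $m\le n$, which, after the $1/m$ terms cancel, reduces to the non-increase of $Q_m(z)^{1/m}$ in $m$ for fixed $z\in(0,1]$; this is indeed true (for $a\ge b\ge 0$ and $c\ge 1$, $(a^m+cb^m)^{1/m}$ is non-increasing in $m$), so your per-type inequality $(m-1)-\log_q\alpha_\code(\omega)\le m\,\Rate_0(\omega,n)$ holds and the weighted sum gives exactly $\Rate_0(\omega-\epsilon,n)$. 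Your route is more self-contained for Proposition~\ref{prop:R0} itself, avoiding the optimization over $(\bldpi,\bldvartheta)$ and its Lagrange/KKT analysis; what the paper's heavier Lemma~\ref{lem:mu} buys is the finer average-length bound $\overline{\Rate}_0(\omega,n,\mu)$ with $\mu<n$, which is reused later (Remark~\ref{rem:averagelength1} and \Proposition~\ref{prop:R0-nondisjoint}) and which your argument does not yield.

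Two small points. First, the statement you attribute to ``Lemma~\ref{lem:mu}'' is not what that lemma asserts in the paper (it is the asymptotic average-length bound just described); since you prove your per-type inequality from scratch, this is only a labelling issue, not a gap. Second, your $\epsilon$-reduction implicitly assumes $\omega>0$; for $\omega=0$ the claim is immediate because $\log_q|\Ambient_i|\le \frac{n-1}{n}N_i$ and $|\Ambient_i(\omega_i)|\ge 1$, and the continuity of $\omega\mapsto\Rate_0(\omega,n)$ needed to let $\epsilon\downarrow 0$ is available from the paper's properties of $\lambda$.
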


\ifPAGELIMIT
\else
\subsection{Proof of \Proposition~\protect\ref{prop:R0}}
\label{sec:R0}

We will prove \Proposition~\ref{prop:R0} in two steps.
We first prove it under an additional assumption
on the sequence $(\Ambient_i)_{i=1}^\infty$
(Lemma~\ref{lem:lambda}).
We then prove a more general claim (Lemma~\ref{lem:mu}),
which will also be needed in Section~\ref{sec:nondisjoint}
where we remove the all-disjoint assumption.

\begin{lemma}
\label{lem:lambda}
\Proposition~\ref{prop:R0} holds when
all the constituent codes have the same length $n_{i,j} = n$.
\end{lemma}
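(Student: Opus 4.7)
The plan is to reduce the product structure to finitely many code types and apply Lemma~\ref{lem:cramer}(ii). First, since there are at most $(q^n{-}1)/(q{-}1)$ distinct linear $[n, n{-}1]$ codes over $F$, I enumerate them as $\code^{(1)}, \ldots, \code^{(T)}$, and for each $i$ let $\ell_i^{(\tau)} = |\{j : \code_{i,j} = \code^{(\tau)}\}|$, so that after reordering coordinates, $\Ambient_i = \prod_{\tau} (\code^{(\tau)})^{\times \ell_i^{(\tau)}}$. Since $(1/N_i)\log_q|\Ambient_i| = (n{-}1)/n$, it suffices to show $\varliminf_i (1/N_i)\log_q |\Ambient_i(\omega_i)| \ge \lambda(\omega, n)$. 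A standard compactness-and-subsequence argument reduces this to verifying the inequality along any sub-subsequence on which $\ell_i^{(\tau)}/\ell_i \to \pi_\tau$ for every $\tau$ and $\omega_i \to \omega^*$, where necessarily $\omega^* \ge \omega$ by the definition of $\varliminf$.

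On such a sub-subsequence, I would apply the trivial product bound
\[
|\Ambient_i(\omega_i)| \ge \prod_{\tau} \bigl|(\code^{(\tau)})^{\times \ell_i^{(\tau)}}(\omega_i)\bigr| ,
\]
valid because allowing each type block to carry relative weight at most $\omega_i$ produces a codeword of relative weight at most $\omega_i$. For each $\tau$ with $\pi_\tau > 0$ we have $\ell_i^{(\tau)} \to \infty$, so Lemma~\ref{lem:cramer}(ii)---together with the continuity of $\omega \mapsto \alpha_{\code^{(\tau)}}(\omega)$ from Lemma~\ref{lem:gamma}---gives $(1/\ell_i^{(\tau)}) \log_q |(\code^{(\tau)})^{\times \ell_i^{(\tau)}}(\omega_i)| \to \log_q \alpha_{\code^{(\tau)}}(\omega^*)$. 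Summing with weights $\ell_i^{(\tau)}/N_i \to \pi_\tau/n$, and discarding the nonnegative terms with $\pi_\tau = 0$, yields
\[
\liminf_i \frac{1}{N_i}\log_q|\Ambient_i(\omega_i)| \ge \frac{1}{n}\sum_{\tau} \pi_\tau \log_q \alpha_{\code^{(\tau)}}(\omega^*) .
\]

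The main obstacle---and the only step exploiting the prescribed dimension $n{-}1$---is the pointwise comparison $\alpha_\code(\omega) \ge \alpha_{\text{parity}}(\omega) = q^{n \lambda(\omega, n)}$ for every linear $[n, n{-}1]$ code $\code$ over $F$. I would derive this directly from MacWilliams' identity: an $[n, n{-}1]$ code whose parity check has support $s \le n$ has one-dimensional dual with weight enumerator $1 + (q{-}1)z^s$, so
\[
W_\code(z) = \frac{1}{q}\bigl((1{+}(q{-}1)z)^n + (q{-}1)(1{-}z)^s(1{+}(q{-}1)z)^{n-s}\bigr) ,
\]
which exceeds $W_{\text{parity}}(z)$ on $[0, 1]$ by the manifestly nonnegative factor $\frac{q-1}{q}(1{-}z)^s\bigl[(1{+}(q{-}1)z)^{n-s} - (1{-}z)^{n-s}\bigr]$; taking $\inf_{z \in (0,1]}$ of $z^{-n\omega} W_\code(z)$ then gives the claim. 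Combined with $\sum_\tau \pi_\tau = 1$ and the monotonicity of $\lambda(\cdot, n)$ (Example~\ref{ex:lambda}), the sum in the preceding display is at least $\lambda(\omega^*, n) \ge \lambda(\omega, n)$, finishing the argument.
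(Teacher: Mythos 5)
Your proof is correct and follows essentially the same route as the paper's: decompose $\Ambient_i$ into blocks by code type, lower-bound $|\Ambient_i(\omega_i)|$ by the product of per-block ball sizes, apply Lemma~\ref{lem:cramer}(ii) blockwise, and use MacWilliams' identities to show that the weight enumerator of every linear $[n,n{-}1]$ code dominates that of the parity code on $(0,1]$, whence $\alpha_\code(\omega) \ge q^{n\lambda(\omega,n)}$. The remaining differences are only bookkeeping: the paper reduces to the canonical codes $\code_s$ and handles negligible blocks via the threshold $\ell_{i,s} \ge \sqrt{\ell_i}$ together with a fixed $\varepsilon$, whereas you pass to a sub-subsequence with convergent type fractions and discard the zero-fraction types (using nonnegativity of each block's contribution), which is equally valid.
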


\begin{proof}
The claim is immediate when $\omega = 0$, so we assume hereafter
in the proof that $\omega > 0$.

For $s \in [n]$, let $\code_s$
be the linear $[n,n{-}1]$ code over $F$ with the parity-check matrix
\[
\bigl( \underbrace{1 \, 1 \, \ldots \, 1}_{s \; \mathrm{times}}
0 \, 0 \, \ldots \, 0 \bigr) .
\]
Without real loss of generality we can assume that
$\code_{i,j} \in \left\{ \code_s \right\}_{s=1}^n$
for each $j \in [\ell_i]$. Therefore, we can write
\begin{equation}
\label{eq:product}
\Ambient_i = \Code_1^{(\ell_{i,1})} \times \Code_2^{(\ell_{i,2})}
\times \cdots \times \Code_n^{(\ell_{i,n})} ,
\end{equation}
where $\Code_s^{(\ell)} = (\code_s)^{\times \ell}$
and $\ell_{i,1}, \ell_{i,2}, \ldots, \ell_{i,n}$ are nonnegative
integers that sum to $\ell_i$.
(The code $\Code_s^{(0)}$ is taken to have zero length,
which will practically mean that we remove
$\Code_s^{(\ell_{i,s})}$ from the product~(\ref{eq:product})
in case $\ell_{i,s} = 0$.)

For each $s \in [n]$, the weight enumerator polynomial of
the dual code of $\code_s$ is given by
\[
W_{\code_s^\perp}(z) = 1 + (q{-}1) z^s
\]
and, so, by MacWilliams' identities:
\[
W_{\code_s}(z) = \frac{1}{q}
(1 + (q{-}1)z)^{n-s}
\Bigl( (1 + (q{-}1)z)^s + (q{-}1)(1{-}z)^s \Bigr) .
\]
In particular, for every $z \in (0,1]$:
\[
W_{\code_s}(z) \ge W_{\code_n}(z) = \frac{1}{q}
\Bigl( (1 + (q{-}1)z)^n + (q{-}1)(1{-}z)^n \Bigr) .
\]
Hence, for every $i \in \Integers^+$:
\[
\log_q \alpha_{\code_s}(\omega_i) 
\ge \log_q \alpha_{\code_n}(\omega_i) 
= n \cdot \lambda(\omega_i,n) .
\]
Fixing some $\varepsilon \in (0,\omega)$,
by Lemma~\ref{lem:cramer}(ii) we then get that,
whenever $i$ and $\ell_{i,s}$ are sufficiently large,
\begin{eqnarray}
\frac{1}{\ell_{i,s} n} \log_q |\Code_s^{(\ell_{i,s})}(\omega_i)|
& \ge &
\frac{1}{\ell_{i,s} n}
\log_q |\Code_s^{(\ell_{i,s})}(\omega{-}\varepsilon)| \nonumber \\
\label{eq:limit-s}
& \ge & \lambda(\omega{-}\varepsilon,n) - \varepsilon .
\end{eqnarray}

For $i \in \Integers^+$ define
\[
\AltSet(i) =
\left\{ s \in [n] \,:\, \ell_{i,s} \ge \sqrt{\ell_i} \right\} .
\]
For sufficiently large $i$ (and, therefore,
sufficiently large $\ell_i$) we have:
\begin{eqnarray*}
\lefteqn{
\frac{1}{\ell_i n} \log_q |\Ambient_i(\omega_i)|
} \makebox[5ex]{} \\
& \ge &
\frac{1}{\ell_i n}
\sum_{s=1}^n \log_q |\Code^{(\ell_{i,s})}(\omega_i)| \\
& \ge &
\sum_{s \in \AltSet(i)}
\frac{\ell_{i,s}}{\ell_i}
\cdot\frac{1}{\ell_{i,s} n}
\log_q |\Code^{(\ell_{i,s})}(\omega_i)| \\
& \stackrel{\textrm{(\ref{eq:limit-s})}}{\ge} &
(\lambda(\omega{-}\varepsilon,n) - \varepsilon)
\sum_{s \in \AltSet(i)}
\frac{\ell_{i,s}}{\ell_i} .
\end{eqnarray*}
Noting that
\[
1 - \frac{n}{\sqrt{\ell_i}}
< \sum_{s \in \AltSet(i)}
\frac{\ell_{i,s}}{\ell_i} \le 1 ,
\]
we conclude that
\[
\varliminf_{i \rightarrow \infty}
\frac{1}{\ell_i n} \log_q |\Ambient_i(\omega_i)|
\ge \lambda(\omega{-}\varepsilon,n) - \varepsilon.
\]
Finally, we get~(\ref{eq:limit}) by taking
the limit $\varepsilon \rightarrow 0$
and recalling that $N_i = \ell_i n$ and that
\[
\frac{1}{\ell_i n} \log_q |\Ambient_i| = \frac{n{-}1}{n} .
\]
\end{proof}

\begin{remark}
\label{rem:lambda}
It follows from the proof that equality is
attained in~(\ref{eq:limit})
when $\Ambient_i = (\code_n)^{\times \ell_i}$, namely,
the growth rate of $\Ambient_i(\omega)$ is minimized
when each $\code_{i,j}$ is the $[n,n{-}1,2]$ parity code.
This result, however, holds only asymptotically
(namely, when $i \rightarrow \infty$)
and not necessarily for individual values of $i$.
For example, for $q = 2$, $n = 5$, and $\ell_i = \ell_1 = 1$
we have $|\code_5(0.4)| = 11$
yet $|\code_2(0.4)| = 8$ and $|\code_3(0.4)| = 7$.\qed
\end{remark}

The next lemma involves the following definition.
\fi
For $n \in \Integers^+$, $\omega \in \Realfield_{\ge 0}$, and
a real $\mu \in [1,n]$, define:
\begin{equation}
\label{eq:WZL1}
\overline{\Rate}_0(\omega,n,\mu) = \max_\bldpi \; \min_\bldvartheta
\sum_{s \in [n]} \pi_s \cdot \Rate_0(\vartheta_s,s) ,
\end{equation}
where $\Rate_0(\cdot,\cdot)$ is as in~(\ref{eq:R0}),
the maximum is taken over all vectors
$\bldpi = (\pi_s)_{s \in [n]} \in \Realfield_{\ge 0}^n$
that satisfy
\ifPAGELIMIT
    \[
    \sum_{s \in [n]} \pi_s = 1
    \quad \textrm{and} \quad
    \sum_{s \in [n]} \frac{\pi_s}{s} \ge \frac{1}{\mu} ,
    \]
\else
\begin{list}{}{\settowidth{\labelwidth}{\textrm{P3)}}}
\item[P1)]
$\displaystyle \sum_{s \in [n]} \pi_s = 1$ and
\item[P2)]
$\displaystyle \sum_{s \in [n]} \frac{\pi_s}{s} \ge \frac{1}{\mu}$,
\end{list}
\fi
and the minimum is taken over all vectors
$\bldvartheta = (\vartheta_s)_{s \in [n]} \in \Realfield_{\ge 0}^n$
that satisfy
\ifPAGELIMIT
    \[
    \sum_{s \in [n]} \pi_s \cdot \vartheta_s = \omega .
    \]
\else
\begin{list}{}{\settowidth{\labelwidth}{\textrm{P3)}}}
\item[P3)]
$\displaystyle \sum_{s \in [n]} \pi_s \cdot \vartheta_s = \omega$.
\end{list}
\fi

\begin{remark}
\label{rem:computations1}
The expression in~(\ref{eq:WZL1}) is fairly easy to compute,
and we provide a formula for it
\ifPAGELIMIT
    in~\cite{RothFull}.
    In particular, we show there that
    $\overline{\Rate}_0(\omega,n,\mu) = \Rate_0(\omega,\mu)$
    when $\mu$ is an integer.\qed
\else
in \Proposition~\ref{prop:WZL} in Appendix~\ref{sec:R0properties}.
In particular, we show there that the support of
the maximizing vector $\bldpi$ in~(\ref{eq:WZL1}) 
is $\left\{ \lfloor\mu\rfloor, \lceil\mu\rceil \right\}$; thus,
when $\mu$ is an integer then
\[
\overline{\Rate}_0(\omega,n,\mu) = \Rate_0(\omega,\mu).
\]
It will also follow from our analysis that
for given $\omega \in \Realfield_{\ge 0}$ and $\mu \in [1,\infty)$,
the value $\overline{\Rate}_0(\omega,n,\mu)$ is the \emph{same}
for all $n \ge \mu$.\qed
\fi
\end{remark}

\Proposition~\ref{prop:R0} follows by substituting $\mu = n$
in the next lemma.

\begin{lemma}
\label{lem:mu}
Let $(\Ambient_i)_{i=1}^\infty$ be as in
\Proposition~\ref{prop:R0}
and assume in addition that for a prescribed $\mu \in [1,n]$,
the \underline{average length} of the constituent codes satisfies
\ifPAGELIMIT
    \[
    \varlimsup_{i \rightarrow \infty} N_i/\ell_i \le \mu .
    \]
\else
\begin{equation}
\label{eq:averagelength}
\varlimsup_{i \rightarrow \infty} \frac{N_i}{\ell_i} \le \mu .
\end{equation}
\fi
Then for any nonnegative real sequence $(\omega_i)_{i=1}^\infty$
such that $\varliminf_{i \rightarrow \infty} \omega_i = \omega$:
\[
\varlimsup_{i \rightarrow \infty}
\frac{1}{N_i}
\log_q
\frac{|\Ambient_i|}{|\Ambient_i(\omega_i)|} \le
\overline{\Rate}_0(\omega,n,\mu) ,
\]
where $\overline{\Rate}_0(\omega,n,\mu)$
is defined in~(\ref{eq:WZL1}).
\end{lemma}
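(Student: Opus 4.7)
The plan is to generalize the proof of Lemma~\ref{lem:lambda} by partitioning the constituent codes of $\Ambient_i$ according to their lengths. First, MacWilliams' identities show that among all linear $[s,s{-}1]$ codes, the parity code minimizes $W_\code(z)$ pointwise on $z \in (0,1]$ and hence minimizes $\alpha_\code(\omega)$ for every $\omega$; so I lose no generality in assuming each $\code_{i,j}$ is the $[n_{i,j}, n_{i,j}{-}1, 2]$ parity code, since replacing other codes by parity codes only shrinks $|\Ambient_i(\omega_i)|$. For $s \in [n]$ let $\ell_{i,s}$ denote the number of constituents of length $s$ in $\Ambient_i$, and set $\pi_{i,s} = s\ell_{i,s}/N_i$. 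Then $\sum_s \pi_{i,s} = 1$, whereas $\sum_s \pi_{i,s}/s = \ell_i/N_i \ge 1/\mu - o(1)$ by the hypothesis $N_i/\ell_i \le \mu + o(1)$, so the empirical vector $\bldpi_i = (\pi_{i,s})_{s \in [n]}$ asymptotically satisfies constraints P1--P2 in the definition of $\overline{\Rate}_0$.

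Next, for any vector $\bldvartheta = (\vartheta_s)_{s \in [n]} \in [0,(q{-}1)/q]^n$ with $\sum_s \pi_{i,s} \vartheta_s \le \omega_i$, the Cartesian structure of $\Ambient_i$ yields $|\Ambient_i(\omega_i)| \ge \prod_{s \in [n]} |\Code_s^{(\ell_{i,s})}(\vartheta_s)|$. By Lemma~\ref{lem:cramer}(ii), $\frac{1}{s\ell_{i,s}} \log_q |\Code_s^{(\ell_{i,s})}(\vartheta_s)| \to \lambda(\vartheta_s, s)$ whenever $\ell_{i,s} \to \infty$; and since $s \le n$ and $N_i \to \infty$, this is automatic at every index $s$ where $\pi_{i,s}$ stays bounded away from $0$, while indices with $\pi_{i,s} \to 0$ contribute negligibly. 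Combining with the identity $\Rate_0(\vartheta, s) = (s{-}1)/s - \lambda(\vartheta, s)$ from Example~\ref{ex:lambda} and the computation $\frac{1}{N_i} \log_q |\Ambient_i| = \sum_s \pi_{i,s} (s{-}1)/s$ then gives, after subtraction,
\[
\frac{1}{N_i} \log_q \frac{|\Ambient_i|}{|\Ambient_i(\omega_i)|}
\le \sum_{s \in [n]} \pi_{i,s} \cdot \Rate_0(\vartheta_s, s) + o(1) .
\]

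Finally, I pass to a subsequence along which $\bldpi_i$ converges in the compact simplex to some limit $\bldpi$ (which then satisfies P1--P2 for parameter $\mu$) and $\omega_i \to \omega$, and choose $\bldvartheta$ to be a small perturbation of the minimizer of the inner problem in~(\ref{eq:WZL1}) at $(\bldpi,\omega)$; continuity of $\Rate_0(\cdot,s)$ ensures that this $\bldvartheta$ is feasible for the finite-$i$ constraint $\sum_s \pi_{i,s}\vartheta_s \le \omega_i$ up to $o(1)$ and approximates the inner minimum to $o(1)$. The resulting upper bound converges to $\min_\bldvartheta \sum_s \pi_s \Rate_0(\vartheta_s, s)$ at $(\bldpi, \omega)$, which is at most $\overline{\Rate}_0(\omega, n, \mu)$ by the outer maximum in~(\ref{eq:WZL1}). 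The main technical obstacle is the two-layer limiting argument: controlling the indices where $\pi_{i,s}$ tends to $0$ (so Lemma~\ref{lem:cramer}(ii) is not directly applicable), and transferring a near-optimal $\bldvartheta$ for the limit pair $(\bldpi,\omega)$ into a feasible choice for each finite $i$; both are handled by an $\varepsilon$-slack argument, compactness, and continuity of $\Rate_0(\cdot, s)$.
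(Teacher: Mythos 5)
Your overall architecture is the same as the paper's proof: decompose $\Ambient_i$ by constituent length, form the empirical vector $\pi_{i,s}=s\ell_{i,s}/N_i$, verify (P1)--(P2) in the limit from the average-length hypothesis, pass to a convergent subsequence, pick a (near-)minimizing $\bldvartheta$ for the limit pair, discard length classes with $\pi_{i,s}\to 0$ (each such term is at most $\pi_{i,s}$), and apply the single-length result to each surviving class. The paper does exactly this, except that it uses the exact rescaling $\vartheta_{i,s}=(\pi_s/\pi_{i,s})(\omega_i/\omega)\vartheta_s$ to get feasibility at every finite $i$, where you use an $\varepsilon$-slack plus continuity of $\Rate_0(\cdot,s)$; that difference is inessential.

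The one step that fails is your opening reduction. From the pointwise inequality between weight enumerators (the parity code minimizes $W(z)$ on $(0,1]$ among linear $[s,s{-}1]$ codes) you conclude that replacing every constituent code by the length-$s$ parity code ``only shrinks $|\Ambient_i(\omega_i)|$.'' Pointwise domination of weight enumerators yields domination of the exponents $\alpha_\code(\omega)$, not of finite-length ball sizes, and the finite-length claim is false: Remark~\ref{rem:lambda} of the paper notes that for $q=2$ and length $5$ at radius $2$, the parity code contains $11$ codewords in the ball while other $[5,4]$ codes contain $8$ or $7$, so the substitution can \emph{enlarge} the ball and your WLOG is not available at finite $i$. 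This matters because you then invoke Lemma~\ref{lem:cramer}(ii), which applies to a Cartesian power of a \emph{single} fixed code, whereas within a length class the constituents may be different $[s,s{-}1]$ codes in proportions that vary with $i$. The repair is to carry out the reduction at the level of growth rates: within each length class, group the constituents by type, apply Lemma~\ref{lem:cramer}(ii) to each type whose multiplicity grows (the $\sqrt{\ell_i}$-threshold device), and use $\log_q\alpha_\code(\omega)\ge s\,\lambda(\omega,s)$. This is precisely Lemma~\ref{lem:lambda} applied with $n$ replaced by $s$, which is how the paper's proof treats each class; with that substitution in place of your WLOG, the rest of your argument goes through.
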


\ifPAGELIMIT
Refer to~\cite{RothFull} for a proof of the lemma.
\else
\begin{proof}
By possibly permuting the coordinates of $\Ambient_i$,
we can write
\[
\Ambient_i =
\Ambient_{i,1} \times \Ambient_{i,2} \times \cdots \times
\Ambient_{i,n} ,
\]
where each $\Ambient_{i,s}$
is a linear $[s \cdot \ell_{i,s}, (s{-}1) \cdot \ell_{i,s}]$ code
over $F$ of the form
\[
\Ambient_{i,s} = \code'_{i,1} \times \code'_{i,2}
\times \cdots \times \code'_{i,\ell_{i,s}} ,
\]
with each $\code'_{i,j}$ being
a linear $[s,s{-}1]$ code over $F$.

Define $\bldpi_i = (\pi_{i,s})_{s \in [n]}$ by
\begin{equation}
\label{eq:piis}
\pi_{i,s} = \frac{s \cdot \ell_{i,s}}{N_i}
\end{equation}
(i.e., $\pi_{i,s}$ is the fraction of coordinates of $\Ambient_i$
that correspond to constituent codes of length $s$).
We have
\begin{equation}
\label{eq:piis1}
\sum_{s \in [n]} \pi_{i,s}
= \frac{1}{N_i} \sum_{s \in [n]} s \cdot \ell_{i,s}
= \frac{1}{N_i} \sum_{j \in [\ell_i]} n_{i,j} = 1
\end{equation}
and
\begin{equation}
\label{eq:piis2}
\varliminf_{i \rightarrow \infty}
\sum_{s \in [n]} \frac{\pi_{i,s}}{s}
=
\varliminf_{i \rightarrow \infty}
\frac{1}{N_i} \sum_{s \in [n]} \ell_{i,s}
=
\varliminf_{i \rightarrow \infty}
\frac{\ell_i}{N_i}
\stackrel{\textrm{(\ref{eq:averagelength})}}{\ge}
\frac{1}{\mu} .
\end{equation}
By possibly restricting to a subsequence of
$(\Ambient_i)_{i=1}^\infty$,
we can assume that the sequence $(\bldpi_i)_{i=1}^\infty$ converges to
a limit $\bldpi = (\pi_s)_{s \in [n]}$;
by~(\ref{eq:piis1})--(\ref{eq:piis2}),
this limit satisfies conditions~(P1) and~(P2).
 
Let $(\omega_i)_{i=1}^\infty$
be a nonnegative real sequence
such that $\varliminf_{i \rightarrow \infty}\omega_i = \omega$
and let $\bldvartheta = (\vartheta_s)_{s \in [n]}$
be a vector in $\Realfield_{\ge 0}^n$ that satisfies condition~(P3).
For any $i \in \Integers^+$ and $s \in [n]$, define:
\[
\vartheta_{i,s} =
\left\{
\begin{array}{ccl}
\rule[0ex]{0ex}{3ex}
\displaystyle
\frac{\pi_s}{\pi_{i,s}} \cdot
\frac{\omega_i}{\omega} \cdot \vartheta_s
&& \textrm{if $\omega > 0$ and $s \in \Support(\bldpi_i)$} \\
0
&& \textrm{otherwise}
\end{array}
\right.
,
\]
where $\Support(\cdot)$ denotes the support of a vector.
It is easily seen that
\begin{equation}
\label{eq:vartheta}
\sum_{s \in [n]} \pi_{i,s} \cdot \vartheta_{i,s} \le \omega_i
\end{equation}
and that for any $s \in \Support(\bldpi)$:
\[
\lim_{i \rightarrow \infty} \vartheta_{i,s} = \vartheta_s .
\]
We then have:
\[
\frac{1}{N_i}
\log_q |\Ambient_i|
\stackrel{\textrm{(\ref{eq:piis})}}{=}
\sum_{s \in \Support(\bldpi_i)}
\pi_{i,s} \cdot
\frac{1}{s \cdot \ell_{i,s}}
\log_q |\Ambient_{i,s}|
\]
and
\[
\frac{1}{N_i}
\log_q |\Ambient_i(\omega_i)|
\stackrel{\textrm{(\ref{eq:piis})+(\ref{eq:vartheta})}}{\ge}
\!
\sum_{s \in \Support(\bldpi_i)}
\!
\pi_{i,s} \cdot
\frac{1}{s \cdot \ell_{i,s}}
\log_q |\Ambient_{i,s}(\vartheta_{i,s})| .
\]
Hence,
\begin{eqnarray*}
\lefteqn{
\varlimsup_{i \rightarrow \infty}
\frac{1}{N_i}
\log_q \frac{|\Ambient_i|}{|\Ambient_i(\omega_i)|}
} \makebox[0ex]{} \\
& \le &
\varlimsup_{i \rightarrow \infty}
\sum_{s \in \Support(\bldpi_i)}
\pi_{i,s} \cdot
\frac{1}{s \cdot \ell_{i,s}}
\log_q \frac{|\Ambient_{i,s}|}{|\Ambient_{i,s}(\vartheta_{i,s})|} \\
& = &
\sum_{s \in \Support(\bldpi)}
\varlimsup_{i \rightarrow \infty}
\pi_{i,s} \cdot
\frac{1}{s \cdot \ell_{i,s}}
\log_q \frac{|\Ambient_{i,s}|}{|\Ambient_{i,s}(\vartheta_{i,s})|} \\
& \stackrel{\textrm{Lemma~\ref{lem:lambda}}}{\le} &
\sum_{s \in [n]} \pi_s \cdot \Rate_0(\vartheta_s,s) .
\end{eqnarray*}
Finally, selecting $\bldvartheta \in \Realfield_{\ge 0}^n$
that minimizes the inner expression in~(\ref{eq:WZL1})
(subject to condition~(P3)), we get:
\begin{eqnarray*}
\varlimsup_{i \rightarrow \infty}
\frac{1}{N_i}
\log_q \frac{|\Ambient_i|}{|\Ambient_i(\omega_i)|}
& \le &
\min_\bldvartheta
\sum_{s \in [n]} \pi_s \cdot \Rate_0(\vartheta_s,s) \\
& \le &
\overline{\Rate}_0(\omega,n,\mu) ,
\end{eqnarray*}
for any $\bldpi \in \Realfield_{\ge 0}^n$
that satisfies conditions (P1)--(P2).
\end{proof}

\subsection{Ambient spaces of LRCs}
\label{sec:ambient}
\fi

Let $\varcode$ be a linearly recoverable LRC 
(which is not necessarily all-disjoint)
of length $N$ and locality $r = n-1$ over $F$,
and let $\{ \Local_j \}_j$ be
a set of distinct repair groups of $\varcode$.
Recall that for each repair group $\Local_j$,
the constituent code $(\varcode)_{\Local_j}$ is a subcode
of a linear $[|\Local_j|,|\Local_j|{-}1]$ code, $\code_j$, over $F$.
Let $\Ambient$ be a largest subset of $F^N$
which satisfies the following linear constraints:
\begin{equation}
\label{eq:ambient}
(\Ambient)_{\Local_j} \subseteq \code_j ,
\quad j \in [\ell] .
\end{equation}
Clearly, $\Ambient$ is a linear subspace of $F^N$
(which is defined uniquely by~(\ref{eq:ambient}))
and $\varcode \subseteq \Ambient$.
We will refer hereafter to $\Ambient$ as an \emph{ambient space}
of $\varcode$ (this term is similar to the notion of
an $\mathcal{L}$-space defined in~\cite{WZL}).

In the all-disjoint case, we can assume that
$\{ \Local_j \}_j$ forms a partition of $[N]$.
By possibly permuting the coordinates of both
$\varcode$ and $\Ambient$, we can further assume that
$\Ambient$ takes the form
\[
\Ambient = \code_1 \times \code_2 \times \cdots \times \code_\ell .
\]

\ifPAGELIMIT
\else
\subsection{Sphere-packing bound}
\label{sec:spherepackingbound}
\fi

Ambient spaces were used in~\cite{WZL} as an ingredient
in obtaining a sphere-packing bound on LRCs.
The asymptotic formulation of this bound is presented in
the next theorem.

\begin{theorem}
\label{thm:spherepacking}
The rate of any all-disjoint linearly recoverable
$(\delta,n)$-LRC sequence over $F$ is bounded from above by
\[
\Rate_\SP(\delta,n) = 
\ifPAGELIMIT
    \Rate_0(\delta/2,n)
\else
\Rate_0 \left( \frac{\delta}{2} , n \right)
\fi
.
\]
\end{theorem}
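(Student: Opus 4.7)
The plan is to combine the classical sphere-packing idea with Proposition~\ref{prop:R0}. Fix an all-disjoint linearly recoverable $(\delta,n)$-LRC sequence $(\varcode_i)_{i=1}^\infty$ over $F$, with each $\varcode_i$ having length $N_i$, minimum distance $d_i$, and an ambient space $\Ambient_i$ as constructed in Section~\ref{sec:ambient}. Because the sequence is all-disjoint and linearly recoverable, after permuting coordinates we may assume that each ambient space has the product form
\[
\Ambient_i = \code_{i,1} \times \code_{i,2} \times \cdots \times \code_{i,\ell_i},
\]
where each constituent $\code_{i,j}$ is a linear $[n_{i,j}, n_{i,j}{-}1]$ code with $n_{i,j} \le n$ over $F$. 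This is precisely the setting in which Proposition~\ref{prop:R0} applies.

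First, I would run a sphere-packing argument \emph{inside} the ambient space, rather than inside all of $F^{N_i}$. Let $t_i = \lfloor (d_i{-}1)/2 \rfloor$ and pick any codeword $\bldc \in \varcode_i \subseteq \Ambient_i$. Because $\Ambient_i$ is a linear subspace containing $\bldc$, the set $\{ \bldy \in \Ambient_i \,:\, \distance(\bldy, \bldc) \le t_i \}$ coincides with the translate $\bldc + \Ambient_i(t_i/N_i)$, hence has cardinality $|\Ambient_i(t_i/N_i)|$. Since distinct codewords of $\varcode_i$ are at Hamming distance $\ge d_i \ge 2t_i + 1$, these translates are pairwise disjoint subsets of $\Ambient_i$, yielding
\[
|\varcode_i| \cdot |\Ambient_i(t_i/N_i)| \le |\Ambient_i|.
\]

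Next, I would pass to the asymptotic regime. Setting $\omega_i = t_i/N_i$, taking $\log_q$, and dividing by $N_i$, the displayed inequality becomes
\[
R_i \le \frac{1}{N_i} \log_q \frac{|\Ambient_i|}{|\Ambient_i(\omega_i)|}.
\]
Since the sequence has r.m.d.$\,\ge \delta$, we have $\omega := \varliminf_{i} \omega_i \ge \delta/2$. Applying Proposition~\ref{prop:R0} to $(\omega_i)_{i=1}^\infty$ then gives
\[
R = \varlimsup_{i} R_i \le \Rate_0(\omega, n).
\]
Finally, I would invoke that $\omega \mapsto \Rate_0(\omega, n) = (n{-}1)/n - \lambda(\omega, n)$ is non-increasing on $[0, (q{-}1)/q]$, since $\lambda(\cdot, n)$ is increasing there by the properties noted after~(\ref{eq:lambda}); this yields $\Rate_0(\omega, n) \le \Rate_0(\delta/2, n)$, which is the claim.

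The conceptually crucial step is the first one: recognizing that the sphere-packing denominator can be tightened from the trivial $q^{N_i}$ to $|\Ambient_i|$, and that the corresponding ball volumes $|\Ambient_i(t_i/N_i)|$ are exactly the quantities for which the large-deviations estimate of Proposition~\ref{prop:R0} has been developed. Once this is noticed, the rest of the argument is a routine passage to the limit together with the monotonicity of $\Rate_0(\cdot, n)$.
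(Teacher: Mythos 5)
Your proposal is correct and follows essentially the same route as the paper: a sphere-packing argument carried out inside the ambient space $\Ambient_i$, followed by \Proposition~\ref{prop:R0} applied to $\omega_i \approx (d_i{-}1)/(2N_i)$. You merely spell out the details the paper leaves implicit, namely that the balls are translates of $\Ambient_i(\omega_i)$ by linearity of $\Ambient_i$, and that the final comparison $\Rate_0(\omega,n) \le \Rate_0(\delta/2,n)$ uses the monotonicity of $\omega \mapsto \Rate_0(\omega,n)$ when $\varliminf_i \omega_i$ exceeds $\delta/2$.
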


\begin{proof}
Let $(\varcode_i)_{i=1}^\infty$ be
an all-disjoint linearly recoverable $(\delta,n)$-LRC sequence,
with $N_i$ and $d_i$ being the length and the minimum distance
of $\varcode_i$.
Letting $\Ambient_i$ be an ambient space of $\varcode_i$,
by a sphere-packing argument we get:
\[
|\varcode_i| \le \frac{|\Ambient_i|}{\Ambient_i(d_i{-}1/(2N_i))} .
\]
The result follows from \Proposition~\ref{prop:R0}
by taking $\omega_i = (d_i{-}1)/(2N_i)$.
\end{proof}

\begin{remark}
\label{rem:averagelength1}
\ifPAGELIMIT
    More generally, we get from Lemma~\ref{lem:mu} that
    $\overline{\Rate}_0(\delta/2,n,\mu)$ is a sphere-packing
    upper bound in a setting where
    the \emph{average size} of the distinct (and disjoint)
    repair groups of each code in
    the LRC sequence is at most $\mu$.\qed
\else
For general $\mu \in [1,n]$, we get from Lemma~\ref{lem:mu} that
\[
\overline{\Rate}_0 \left( \frac{\delta}{2} , n, \mu \right)
\]
is an upper bound on the rate of any all-disjoint
linearly recoverable $(\delta,n)$-LRC sequence
$(\varcode_i)_{i=1}^\infty$ over $F$,
with $\mu$ bounding from above
the supremum (as $i \rightarrow \infty$) over
the \emph{average sizes}, $N_i/\ell_i$,
of the distinct (and disjoint) repair groups of $\varcode_i$.
A similar bound can be stated when the average size
is computed per coordinate, so that each repair group
is counted a number of times equaling its size.
See Remark~\ref{rem:averagelength2}
in Appendix~\ref{sec:R0properties}.\qed
\fi
\end{remark}

\section{Proofs of Theorems~\ref{thm:bound1} and~\ref{thm:bound2}}
\label{sec:mainresults}

In this section, we prove
\Theorems~\ref{thm:bound1} and~\ref{thm:bound2}.

We start with the next simple lemma, usually attributed to
\ifPAGELIMIT
    Bassalygo or Elias (see a short proof in~\cite{RothFull}).
\else
Bassalygo or Elias; it provides an effective tool
in obtaining upper bounds.
\fi

\begin{lemma}
\label{lem:bassalygo}
Let $\Ambient$ be a subspace of $F^N$
and let $\Subset$ and $\varcode$ be subsets of $\Ambient$.
Then there exists $\bldy \in \Ambient$ for which
\[
\left| (\bldy + \varcode) \cap \Subset \right| \ge
\frac{|\Subset|}{|\Ambient|} \cdot |\varcode| .
\]
\end{lemma}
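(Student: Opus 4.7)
The plan is a standard averaging (first-moment) argument over the subspace $\Ambient$.

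First, I would consider the quantity $\bigl|(\bldy+\varcode)\cap\Subset\bigr|$ as $\bldy$ ranges over $\Ambient$, and rewrite it as a sum of indicators:
\[
\bigl|(\bldy+\varcode)\cap\Subset\bigr|
= \sum_{\bldc \in \varcode} \bldone\bigl[\bldy+\bldc \in \Subset\bigr].
\]
Then I would sum this expression over all $\bldy \in \Ambient$ and swap the order of summation:
\[
\sum_{\bldy \in \Ambient} \bigl|(\bldy+\varcode)\cap\Subset\bigr|
= \sum_{\bldc \in \varcode} \bigl|\{\bldy \in \Ambient \,:\, \bldy+\bldc \in \Subset\}\bigr|.
\]

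The key step is to evaluate the inner cardinality. Since $\Ambient$ is a subspace of $F^N$ and $\bldc \in \varcode \subseteq \Ambient$, the map $\bldy \mapsto \bldy+\bldc$ is a bijection of $\Ambient$ onto itself that sends $\Subset \subseteq \Ambient$ onto a translate of $\Subset$ still contained in $\Ambient$. Hence
\[
\bigl|\{\bldy \in \Ambient \,:\, \bldy+\bldc \in \Subset\}\bigr|
= |\Subset|,
\]
so the double sum equals $|\varcode| \cdot |\Subset|$. Dividing by $|\Ambient|$ gives that the \emph{average} of $\bigl|(\bldy+\varcode)\cap\Subset\bigr|$ over $\bldy \in \Ambient$ is exactly $|\varcode|\cdot|\Subset|/|\Ambient|$, so by a pigeonhole/averaging argument some particular $\bldy \in \Ambient$ attains at least this value, which is the desired bound.

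There is no real obstacle here: the only point that needs care is the use of $\varcode \subseteq \Ambient$ (so that $\bldy + \bldc$ stays in $\Ambient$ as $\bldy$ varies over $\Ambient$), which is exactly the hypothesis that allows $\Ambient$ to act on itself by translations and makes the probability $\Pr[\bldy+\bldc \in \Subset] = |\Subset|/|\Ambient|$ independent of $\bldc$.
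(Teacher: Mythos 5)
Your proof is correct and follows essentially the same route as the paper's: a double-counting/averaging argument over all translates $\bldy \in \Ambient$, using the fact that for each fixed $\bldc \in \varcode \subseteq \Ambient$ the translation $\bldy \mapsto \bldy + \bldc$ is a bijection of $\Ambient$, so the inner count is exactly $|\Subset|$ and the average of $\left| (\bldy + \varcode) \cap \Subset \right|$ is $|\Subset|\,|\varcode|/|\Ambient|$. The paper phrases the same computation with the characteristic function $\chi_\Subset$ and an exchange of summation order, so there is no substantive difference.
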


\ifPAGELIMIT
\else
\begin{proof}
Let $\chi_\Subset : \Ambient \rightarrow \{ 0, 1 \}$
be the characteristic function of $\Subset$, namely,
$\chi(\bldy) = 1$ when $\bldy \in \Subset$
and $\chi(\bldy) = 0$ otherwise.
Then,
\begin{eqnarray*}
\frac{1}{|\Ambient|} \cdot |\varcode| \cdot |\Subset|
& = &
\frac{1}{|\Ambient|}
\sum_{\bldc \in \varcode} \sum_{\bldy \in \Ambient}
\chi_\Subset(\bldc + \bldy) \\
& = &
\frac{1}{|\Ambient|}
\sum_{\bldy \in \Ambient} \sum_{\bldc \in \varcode}
\chi_\Subset(\bldy + \bldc) \\
& = &
\frac{1}{|\Ambient|}
\sum_{\bldy \in \Ambient}
\left| (\bldy + \varcode) \cap \Subset \right| .
\end{eqnarray*}
Hence,
\[
\max_{\bldy \in \Ambient}
\left| (\bldy + \varcode) \cap \Subset \right|
\ge \frac{|\Subset|}{|\Ambient|} \cdot |\varcode| .
\]
\end{proof}
\fi

\begin{proof}[Proof of \Theorem~\ref{thm:bound1}]
We tailor the generalization of
the shortening method of~\cite[\Theorems~1 and~2]{LL} to our setting.
Let $(\varcode_i)_{i=1}^\infty$ be an all-disjoint linearly recoverable
$(\delta,n)$-LRC sequence over $F$
and fix $\tau$ in the interval $(0,1)$
and a pair $(\theta,\theta') \in [0,(q{-}1)/q]^2$
that satisfies~(\ref{eq:theta}).
For each $i \in \Integers^+$, let $N_i$ and $d_i$ be
the length and the minimum distance of $\varcode_i$, respectively,
where
\begin{equation}
\label{eq:di}
\varliminf_{i \rightarrow \infty}
\ifPAGELIMIT
d_i/N_i
\else
\frac{d_i}{N_i}
\fi
\ge \delta .
\end{equation}

Let $\{ \Local_{i,j} \}_{j \in [\ell_i]}$
be a set of distinct (and disjoint) repair groups of $\varcode_i$,
and denote
\[
t_i = \left|
{\textstyle\bigcup_{j \in [\ell'_i]}} \Local_{i,j} \right| ,
\]
where $\ell'_i$ is the smallest in $[\ell_i]$ so that
$t_i \ge \tau N_i$; in particular,
\begin{equation}
\label{eq:ti}
\lim_{i \rightarrow \infty}
\ifPAGELIMIT
t_i/N_i
\else
\frac{t_i}{N_i}
\fi
= \tau .
\end{equation}
By possibly permuting the coordinates
of $\varcode_i$, we assume hereafter in the proof that
$\bigcup_{j \in [\ell'_i]} \Local_{i,j}$
indexes the first $t_i$
\ifPAGELIMIT
    coordinates
\else
coordinates\footnote{%
Namely, we deviate here from the notational convention that we set
in Footnote~\ref{footnote:locality}.}
\fi
of $\varcode_i$.
Letting $\Ambient_i$ be an ambient space of $\varcode_i$,
we denote by $\Ambient'_i$ the set of the distinct $t_i$-prefixes of
the vectors in $\Ambient_i$;
the sequence $(\Ambient'_i)_{i=1}^\infty$
satisfies the conditions of \Proposition~\ref{prop:R0}.

Define
\[
w_i =
\min
\left\{
\ifPAGELIMIT
\lfloor \theta \cdot t_i/2 \rfloor ,
\lfloor (d_i{-}1)/2 \rfloor 
\else
\Bigl\lfloor \frac{\theta}{2} \cdot t_i \Bigr\rfloor ,
\Bigl\lfloor \frac{d_i-1}{2} \Bigr\rfloor 
\fi
\right\} .
\]
By~(\ref{eq:di})--(\ref{eq:ti}) (and~(\ref{eq:theta})) we have:
\begin{equation}
\label{eq:mi}
\lim_{i \rightarrow \infty}
\ifPAGELIMIT
\omega_i/t_i = \theta/2
\else
\frac{\omega_i}{t_i} = \frac{\theta}{2}
\fi
.
\end{equation}
Next, we apply Lemma~\ref{lem:bassalygo}
with $\varcode \leftarrow \varcode_i$
and $\Ambient \leftarrow \Ambient_i$, and with
$\Subset$ taken as the set of all vectors in $\Ambient_i$
whose $t_i$-prefixes are in
\ifPAGELIMIT
    $\Ambient'_i(w_i/t_i)$.
\else
$\Ambient'_i(w_i/t_i)$; namely,
\begin{equation}
\label{eq:Subset}
\Subset = \Ambient'_i(w_i/t_i) \times \Ambient''_i ,
\end{equation}
where $\Ambient''_i$ is the set of
$(N_i - t_i)$-suffixes of the vectors of $\Ambient_i$.
\fi
It follows that
there exists $\bldy = (\bldy' \; \bldy'') \in \Ambient_i$
(where $\bldy' \in \Ambient'_i$)
for which the intersection
\[
\varcode^*_i = (\bldy + \varcode_i) \cap \Subset
\]
satisfies
\begin{equation}
\label{eq:bound1-1}
|\varcode_i|
\le
\frac{|\Ambient_i| \cdot |\varcode^*_i|}{|\Subset|}
= \frac{|\Ambient'_i|}{|\Ambient'_i(w_i/t_i)|} \cdot |\varcode^*_i| .
\end{equation}
Now, on the one hand,
the code $-\bldy + \varcode^*_i$ is
an all-disjoint linearly recoverable LRC
with locality $r = n-1$ and minimum distance${} \ge d_i$;
yet, on the other hand, all the $t_i$-prefixes of its codewords
are at distance at most $w_i$ from $-\bldy'$,
which means that any two prefixes are at most $2w_i$ apart.
It follows that all
the $(N_i - t_i)$-suffixes of the codewords of
$-\bldy + \varcode^*_i$
form an all-disjoint linearly recoverable LRC, $\varcode''_i$,
\ifPAGELIMIT
    of minimum distance${} \ge d_i - 2w_i \; (> 0)$.
\else
of minimum distance${} \ge d_i - 2w_i \; (> 0)$
(in particular, these suffixes are all distinct).
\fi
Hence,
\begin{equation}
\label{eq:bound1-2}
|\varcode^*_i| = |\varcode''_i| \le M_\LRC(N_i - t_i, d_i - 2w_i, n) ,
\end{equation}
where $M_\LRC(N,D,n)$ denotes the largest size
of any all-disjoint linearly recoverable LRC over $F$
with length $N$, minimum distance $D$, and locality $n-1$.
Combining~(\ref{eq:bound1-1}) and~(\ref{eq:bound1-2}) leads to
\[
|\varcode_i|
\le
\frac{|\Ambient'_i|}{|\Ambient'_i(w_i/t_i)|}
\cdot M_\LRC(N_i - t_i, d_i - 2w_i, n) ,
\]
and taking logarithms and dividing by $N_i$ yield
the following upper bound on the rate,
$R_i =  (\log_q |\varcode_i|)/N_i$, of $\varcode_i$:
\begin{eqnarray}
R_i
& \le &
\frac{1}{N_i}
\Bigl(
\log_q \frac{|\Ambient'_i|}{|\Ambient'_i(w_i/t_i)|} \nonumber \\
&&
\quad
{} + \log_q M_\LRC(N_i - t_i, d_i - 2w_i, n) \Bigr) \nonumber \\
& = &
\frac{t_i}{N_i} \cdot
\left(
\frac{1}{t_i}
\log_q \frac{|\Ambient'_i|}{|\Ambient'_i(w_i/t_i)|} \right) \nonumber \\
&&
\;\;
{} +
\left( 1 - \frac{t_i}{N_i} \right)
\cdot
\frac{\log_q M_\LRC(N_i - t_i, d_i - 2w_i, n)}{N_i - t_i} . \nonumber \\
\label{eq:Ri1}
&&
\end{eqnarray}

We now take the limit as $i \rightarrow \infty$
of each of the terms in~(\ref{eq:Ri1}).
By~(\ref{eq:mi}) and \Proposition~\ref{prop:R0} we get
\ifPAGELIMIT
    \[
    \varlimsup_{i \rightarrow \infty}
    \frac{1}{t_i}
    \log_q \frac{|\Ambient'_i|}{|\Ambient'_i(w_i/t_i)|}
    \le \Rate_0(\theta/2,n) .
\]
\else
\begin{equation}
\label{eq:ratio}
\varlimsup_{i \rightarrow \infty}
\frac{1}{t_i}
\log_q \frac{|\Ambient'_i|}{|\Ambient'_i(w_i/t_i)|}
\le \Rate_0 \left( \frac{\theta}{2} , n \right) .
\end{equation}
\fi
In addition,
\[
\varliminf_{i \rightarrow \infty}
\frac{d_i - 2w_i}{N_i - t_i}
\stackrel{\textrm{(\ref{eq:di})--(\ref{eq:mi})}}{\ge}
\frac{\delta - \tau \cdot \theta}{1-\tau}
\stackrel{\textrm{(\ref{eq:theta})}}{=}
\theta'
\]
and, so,
\[
\varlimsup_{i \rightarrow \infty}
\frac{\log_q M_\LRC(N_i - t_i, d_i - 2w_i, n)}{N_i - t_i}
\le \Rate_\LRC(\theta',n) .
\]
We conclude that the rate $R$
of the code sequence $(\varcode_i)_{i=1}^\infty$
satisfies
\[
R = \varlimsup_{i \rightarrow \infty} R_i \le \tau \cdot
\ifPAGELIMIT
\Rate_0(\theta/2)
\else
\Rate_0 \left( \frac{\theta}{2} \right)
\fi
+ (1-\tau) \cdot \Rate_\LRC(\theta', n) .
\]
The sought result is reached by
minimizing over $\theta$ and $\tau$.
\end{proof}

We now turn to the proof of \Theorem~\ref{thm:bound2}.
Recall that the expression $\Rate_2(\delta,n)$ therein
involves the value $\Rate_\opt(\delta,\omega)$, being
the supremum over all rates of sequences $(\varcode_i)_{i=1}^\infty$
with r.m.d.${} \ge \delta$
such that the codewords in each $\varcode_i$ all have
the \emph{same} weight${} \approx \omega N_i$.
For our purposes, it would be convenient if this definition
were relaxed so that the codeword weights only need to be
\emph{bounded from above} by
\ifPAGELIMIT
    $\omega N_i + o(N_i)$.
    It turns out that $\Rate_\opt(\delta,\omega)$ is
    the rate supremum also under this relaxed setting, provided that
    $\omega \in [0,(q{-}1)/q]$ (see~\cite{BCCST} and~\cite{RothFull}).
\else
(approximately) $\omega N_i$.
It turns out that $\Rate_\opt(\delta,\omega)$ is
the rate supremum also under this relaxed setting, provided that
$\omega \in [0,(q{-}1)/q]$.
We make this statement precise in Lemma~\ref{lem:CW} below.
\fi

For $N, d, w \in \Integers^+$,
let $M_\opt(N,d,w)$ denote the largest size of
any code in $F^N$ with minimum distance $d$ and codeword
weights that are all bounded from above by $w$.
\ifPAGELIMIT
\else
Define
\[
\Rate_\opt(\delta,{\le}\omega)
= \sup_{
\genfrac{}{}{0ex}{}{%
                  \bldd \in \InfSet(\delta)}{\bldw \in \SupSet(\omega)}}
\varlimsup_{N \rightarrow \infty}
\frac{1}{N}
\log_q M_\opt(N,d_N,w_N) ,
\]
where $\InfSet(\delta)$
(respectively, $\SupSet(\omega)$) is the set of
all sequences $(a_N)_{N=1}^\infty$ over $\Integers^+$
satisfying $\varliminf_{N \rightarrow \infty} a_N/N \ge \delta$
(respectively, $\varlimsup_{N \rightarrow \infty} a_N/N \le \omega$).

We have the following technical lemma, which is proved
in Appendix~\ref{sec:skippedproofs}.

\begin{lemma}
\label{lem:CW}
For every $\delta, \omega \in [0,(q{-}1)/q]$,
\[
\Rate_\opt(\delta,{\le}\omega) = \Rate_\opt(\delta,\omega) .
\]
\end{lemma}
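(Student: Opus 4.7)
The plan is to prove both inequalities separately. The direction $\Rate_\opt(\delta,\omega) \le \Rate_\opt(\delta,{\le}\omega)$ is immediate from the definitions: any sequence of constant-weight codes with weights $w_i$ satisfying $\lim_{i\to\infty} w_i/N_i = \omega$ also satisfies $\varlimsup_{i\to\infty} w_i/N_i \le \omega$, and so it is admissible in the definition of $\Rate_\opt(\delta,{\le}\omega)$.

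For the reverse inequality, I would start from an arbitrary sequence $(\varcode_i)_{i=1}^\infty$ with $\varliminf_{i \to \infty} d_i/N_i \ge \delta$, whose codewords all have weight at most $w_i$ with $\varlimsup_{i \to \infty} w_i/N_i \le \omega$, and whose rate approaches $\Rate_\opt(\delta,{\le}\omega)$. At each length $N_i$ there are only $N_i{+}1$ possible codeword weights, so the pigeonhole principle produces a constant-weight subcode $\varcode_i^\star \subseteq \varcode_i$, of some weight $w_i^\star \le w_i$, with $|\varcode_i^\star| \ge |\varcode_i|/(N_i{+}1)$. The subcodes inherit the minimum-distance hypothesis and carry the full asymptotic rate (the loss is at most $(\log_q(N_i{+}1))/N_i = o(1)$). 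Passing to a subsequence along which $w_i^\star/N_i$ converges to some $\omega^\star \in [0,\omega]$ yields a sequence admissible for $\Rate_\opt(\delta,\omega^\star)$, whence
\[
\Rate_\opt(\delta,{\le}\omega) \;\le\; \sup_{\omega^\star \in [0,\omega]} \Rate_\opt(\delta,\omega^\star) .
\]

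Closing the gap amounts to the monotonicity
\[
\omega^\star \le \omega \le (q{-}1)/q
\;\;\Longrightarrow\;\;
\Rate_\opt(\delta,\omega^\star) \le \Rate_\opt(\delta,\omega) ,
\]
and this I expect to be the main obstacle. The intended approach is a weight-lifting construction: given a near-optimal constant-weight sequence at ratio $\omega^\star$, append to each codeword a common suffix (or otherwise augment the code) so that the new length-normalized weight becomes exactly $\omega$. The hypothesis $\omega \le (q{-}1)/q$ is precisely what makes such an augmentation feasible in terms of available symbols. The subtlety is that a naive appending shrinks the r.m.d.\ from $\delta$ to $\delta \cdot N/(N{+}L)$, and the minimum $L$ forced by the weight constraint is proportional to $N \cdot (\omega - \omega^\star)/(1-\omega)$; a direct compensation by starting from codes with r.m.d.\ slightly above $\delta$ does not suffice, because the available slack vanishes in the limit. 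A workable route is therefore to combine the appending step with a complementary \emph{shortening} step in the spirit of~\cite{LL}, so that the extra coordinates are injected without inflating the effective length; alternatively, one can take convex combinations of the $(\varcode_i^\star)$ at different $\omega^\star$ via a Cartesian-product construction and match the target ratio $\omega$ in the limit. The delicate balancing between length, rate, weight, and distance is the technical crux, and I would draw on the detailed analysis in~\cite{BCCST} to carry it through.
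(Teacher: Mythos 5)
Your reduction is the same as the paper's up to the crux: the easy inclusion, the pigeonhole passage to a largest constant-weight subcode (a polynomial size loss), and a subsequence with $w_i^\star/N_i \rightarrow \omega^\star \le \omega$, leaving exactly the monotonicity claim that $\omega^\star \mapsto \Rate_\opt(\delta,\omega^\star)$ is non-decreasing on $[0,(q{-}1)/q]$. That claim is \Theorem~\ref{thm:BCCST} in the paper, and it is where your proposal has a genuine gap: none of the mechanisms you sketch can raise the codeword weight without diluting the relative minimum distance. Appending a common suffix keeps the distance $d$ but forces a length increase $L \ge N(\omega-\omega^\star)/(1-\omega)$, so the r.m.d.\ shrinks by the factor $N/(N{+}L)$, which stays bounded away from $1$; a Cartesian-product mixture of blocks at different weights has minimum distance equal to the \emph{smaller} of the block distances, so its r.m.d.\ relative to the total length is again strictly below $\delta$; and ``injecting coordinates without inflating the effective length'' via an LL-style shortening is not a concrete constructive operation (restricting to codewords that agree on $L$ prescribed coordinates can cost a factor $q^L$ in size when $L$ is linear in $N$). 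Deferring to~\cite{BCCST} does not close the gap either: that paper treats only the binary alphabet, and the $q$-ary statement is precisely what must be (and in the paper is) proved.

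The mechanism that works, and that the paper uses, is a \emph{translation} argument: $\bldc \mapsto \bldy + \bldc$ is an isometry of $F^N$, so it changes weights while leaving the length and the minimum distance untouched. For each codeword $\bldc$ of weight $\omega N$ one counts the set $\Typical(\bldc)$ of translating words $\bldy$ of weight $\theta N$ whose overlap with $\bldc$ is ``typical,'' so that $\bldy + \bldc$ has weight exactly $\omega^* N$ with $\omega^* = \omega + \theta\bigl(1 - q\omega/(q{-}1)\bigr)$ sweeping $[\omega,(q{-}1)/q]$ as $\theta$ sweeps $[0,(q{-}1)/q]$; an averaging over $\bldy$ (the same counting trick as in Lemma~\ref{lem:bassalygo}) then produces a single $\bldy$ for which $\bldy + \code$ contains a constant-weight subcode of weight $\omega^* N$ and size at least $|\code|/(N{+}1)^3$. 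This yields the monotonicity with no loss in $\delta$ at all---exactly what your length-padding constructions cannot deliver---and with it the lemma follows along the route you outlined.
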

\fi

\begin{proof}[Proof of \Theorem~\ref{thm:bound2}]
Let $(\varcode_i)_{i=1}^\infty$
be an all-disjoint linearly recoverable
$(\delta,n)$-LRC sequence over $F$.
Letting $\Ambient_i$ be an ambient space
of $\varcode_i$, the sequence $(\Ambient_i)_{i=1}^\infty$
satisfies the conditions of \Proposition~\ref{prop:R0}.

Fixing any
\ifPAGELIMIT
    $\omega$
\else
$\omega \in [\delta/2,(q{-}1)/q]$
\fi
and substituting $\varcode \leftarrow \varcode_i$,
$\Ambient \leftarrow \Ambient_i$, and
$\Subset \leftarrow \Ambient_i(\omega)$
in Lemma~\ref{lem:bassalygo} yield
that there exists $\bldy \in \Ambient_i$ for which
\begin{eqnarray*}
|\varcode_i| \cdot 
|\Ambient_i(\omega)|
& \le &
|\Ambient_i| \cdot 
\left| (\bldy + \varcode_i)
\cap \Ambient_i(\omega) \right| \\
& \le &
|\Ambient_i| \cdot M_\opt(N_i,d_i,\omega) ,
\end{eqnarray*}
where $N_i$ and $d_i$ are the length and
the minimum distance of $\varcode_i$, respectively.
Taking logarithms and dividing by $N_i$ yield
the following upper bound on the rate $R_i$ of $\varcode_i$:
\begin{equation}
\label{eq:Ri2}
R_i
\le
\frac{1}{N_i} \log_q \frac{|\Ambient_i|}{|\Ambient_i(\omega)|}
+ \frac{1}{N_i} \log_q M_\opt(N_i,d_i,w_i) ,
\end{equation}
where $w_i = \lfloor \omega N_i \rfloor$.
The result is obtained by taking
the limit as
\ifPAGELIMIT
    $i \rightarrow \infty$:
    the first term in the right-hand side of~(\ref{eq:Ri2})
    will then be at most $\Rate_0(\omega,n)$
    (by \Proposition~\ref{prop:R0}),
    and the second term will be at most $\Rate_\opt(\delta,\omega)$.
\else
$i \rightarrow \infty$.
Specifically, on the left-hand side of~(\ref{eq:Ri2}) we get
the rate $R = \varlimsup_{i \rightarrow \infty} R_i$
of $(\varcode_i)_{i=1}^\infty$,
and on the right-hand side the respective terms become
\[
\varlimsup_{i \rightarrow \infty}
\frac{1}{N_i} \log_q \frac{|\Ambient_i|}{|\Ambient_i(\omega)|}
\stackrel{\textrm{\Proposition~\ref{prop:R0}}}{\le}
\Rate_0(\omega,n)
\]
and
\[
\varlimsup_{i \rightarrow \infty}
\frac{1}{N_i}
\log_q M_\opt(N_i,d_i,w_i)
\stackrel{\textrm{Lemma}~\ref{lem:CW}}{\le}
\Rate_\opt(\delta,\omega) .
\]
\fi
\end{proof}

\ifPAGELIMIT
\else
\section{The case $q = 2$ and $n = 3$}
\label{sec:q=2,n=3}

We consider in this section
the case $n = 3$ over the binary field.
We assume that the LRCs are all-disjoint, but not
necessarily linearly recoverable.

As a warm-up, we start with the case $n = 2$ over any finite field.
It is straightforward to see that
if an all-disjoint $(N,M,d)$ LRC $\varcode$ over $F$
with locality $n-1 = 1$ has no fixed (in particular, trivial)
coordinates, then all the repair groups have to be of size
(exactly) $2$, and all the constituent codes have minimum distance
(exactly) $2$.\footnote{%
If $\varcode$ does contain trivial coordinates,
we can shorten it on these coordinates,
thereby only increasing the r.m.d.\ and the rate
and, thus, obtaining an upper bound on that larger rate.}
Thus, both $N$ and $d$ are even and,
without loss of generality,
each constituent code is the $[2,1,2]$ repetition code over $F$.
We can therefore view $\varcode$ as a concatenated code over $F$
comprising an outer $(N/2,M,d/2)$ code over $F$
and an inner $[2,1,2]$ repetition code over $F$.
We conclude that the rate of any
all-disjoint $(\delta,2)$-LRC sequence over $F$ is
bounded from above by $(1/2) \cdot \Rate_\opt (\delta)$;
moreover, we can get arbitrarily close to this bound
by a concatenated code construction.\footnote{%
This upper bound holds in fact also when the repair groups
are not necessarily disjoint, as overlapping repair groups force
all the entries that are indexed by their union to be equal.
This corresponds to having inner repetition codes
of rate smaller than $1/2$.}

We now turn to the case $n = 3$ when $F = \Finitefield_2$.
Let $\varcode$ be an all-disjoint $(N,M,d)$ LRC over $F$
with locality $n-1 = 2$ and without fixed coordinates,
and assume first that all the repair groups have size exactly $3$.
Then each constituent code $(\varcode)_{\Local_j}$,
being of length $3$ and minimum distance${} \ge 2$
and having no fixed coordinates,
has size $2$, $3$, or $4$. It is rather easy to see
that such codes are essentially unique (up to a replacement
$0 \leftrightarrow 1$ at any given coordinate across all codewords).
Moreover, any $(3,3,2)$ code over $F$ is necessarily equi-distant
(i.e., the distance between any two distinct codewords is exactly $2$)
and can be augmented by a fourth codeword while still remaining
equi-distant. Hence, we assume that each constituent code
is either the $[3,1,3]$ repetition code
or the $[3,2,2]$ parity code (which is equi-distant).
We can therefore view $\varcode$ as a concatenated code over $F$
comprising an outer code $\Code$ of length $\ell = N/3$
and size $M$ over a \emph{mixed alphabet}, namely:
\[
\Code \subseteq F^t \times (F^2)^{\ell-t} ,
\]
for some $t \le \ell$. The first $t$ (binary) coordinates
are mapped to an inner code which is the $[3,1,3]$ repetition code,
and the remaining $\ell-t$ (quaternary) coordinates 
are mapped to the $[3,2,2]$ parity code.
Writing each codeword of $\Code$ as $(\bldc \; \bldc')$,
where $\bldc \in F^t$ and $\bldc' \in (F^2)^{\ell-t}$,
the following inequality must hold for any two distinct codewords
$(\bldc_1 \; \bldc'_1), (\bldc_2 \; \bldc'_2)\in \Code$
to maintain the minimum distance $d$ of $\varcode$:
\begin{equation}
\label{eq:distance}
3 \distance_F(\bldc_1,\bldc_2)
+ 2 \distance_{F^2}(\bldc'_1,\bldc'_2) \ge d ,
\end{equation}
where $\distance_Q(\cdot,\cdot)$ denotes Hamming distance
over the alphabet $Q$.
Regarding now $\varcode$ as an element in
a $(\delta,3)$-LRC sequence
where $t/\ell$ converges to $\tau \in [0,1]$,
the condition~(\ref{eq:distance}) becomes
\begin{equation}
\label{eq:distance-asymptotic}
\tau \cdot \frac{\distance_F(\bldc_1,\bldc_2)}{t} 
+ \frac{2}{3}
(1-\tau) \cdot \frac{\distance_{F^2}(\bldc'_1,\bldc'_2)}{\ell-t} 
\ge \delta .
\end{equation}
We are interested in finding the value of $\tau$ for which
the rate of the LRC sequence is maximized,
subject to satisfying~(\ref{eq:distance-asymptotic}).
We do this using the following lemma;
the function $\delta \mapsto \Rate_{\LP;Q}(\delta)$
stands for the (second) linear-programming bound,
due to Aaltonen~\cite[p.~141]{Aaltonen},
on the rate of code sequences over a finite alphabet $Q$
(see also~\cite[\Theorem~1]{BHL1}).

\begin{lemma}
\label{lem:BHL}
Let $Q$ and $Q'$ be finite alphabets
and fix $\beta, \beta' \in \Realfield^+$
and $\tau \in [0,1]$.
Let $(\Code_i)_{i=1}^\infty$ be a code sequence such that
\[
\Code_i \subseteq Q^{t_i} \times (Q')^{\ell_i-t_i},
\]
where $\lim_{i \rightarrow \infty} t_i/\ell_i = \tau$.
Suppose also that
the (weighted) r.m.d.\ of the sequence is at least $\delta$,
where, for the purpose of computing distances,
each coordinate of $Q$ (respectively, $Q'$) contributes
$\beta$ (respectively, $\beta'$) to the distance.
Then
\begin{eqnarray}
\varlimsup_{i \rightarrow \infty}
\frac{\log |\Code_i|}{\ell_i}
\!\!\! & \le & \!\!\!
\min_{(\theta,\theta')}
\Bigl\{
\tau \cdot \Rate_{\LP;Q}(\theta) \log |Q| \nonumber \\
\label{eq:BHL}
&&
{}
+ (1-\tau) \cdot \Rate_{\LP;Q'}(\theta') \log |Q'| \Bigr\} ,
\end{eqnarray}
where the minimum is taken over
$(\theta,\theta') \in [0,1]^2$ such that
\[
\tau \cdot \beta \cdot \theta
+ (1-\tau) \cdot \beta' \cdot \theta' = \delta .
\]
\end{lemma}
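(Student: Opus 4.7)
The plan is to invoke the mixed-alphabet linear-programming bound of Aaltonen \cite{Aaltonen} and Ben-Haim--Litsyn \cite{BHL1}, treating $\Code_i$ as a code in the product association scheme $Q^{t_i} \times (Q')^{\ell_i-t_i}$. In this scheme the ``inner distance'' between two codewords is a pair $(k,k')$ counting disagreements in the two blocks, and the weighted minimum-distance hypothesis reads $\beta k + \beta' k' \ge d_i$ for every such pair with $(k,k') \ne (0,0)$.

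I would first fix an admissible pair $(\theta,\theta')$ with $\tau \beta \theta + (1{-}\tau)\beta' \theta' = \delta$, and set $d^{(Q)}_i = \lceil \theta t_i \rceil$ and $d^{(Q')}_i = \lceil \theta'(\ell_i{-}t_i) \rceil$. The hypotheses $t_i/\ell_i \to \tau$ and $d_i/\ell_i \to \delta$ then give $\beta d^{(Q)}_i + \beta' d^{(Q')}_i \ge d_i - o(\ell_i)$, so (asymptotically) every forbidden pair $(k,k')$ satisfies $k \ge d^{(Q)}_i$ \emph{or} $k' \ge d^{(Q')}_i$. Next, I would build a separable Delsarte test polynomial $F(x,x') = F_1(x) F_2(x')$, with $F_1$ an optimal (second) LP certificate over $Q$ whose forbidden region is $\{k : k \ge d^{(Q)}_i\}$ and which witnesses the exponent $\Rate_{\LP;Q}(\theta)$, and with $F_2$ the analogous certificate over $Q'$ for $\theta'$. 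Multiplicativity of characters on the product scheme preserves dual-feasibility of $F$, and the dichotomy above ensures $F$ vanishes on the joint forbidden region.

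The Delsarte bound is then multiplicative, $|\Code_i| \le \bigl(F_1(0)/\hat{F_1}(0)\bigr) \cdot \bigl(F_2(0)/\hat{F_2}(0)\bigr)$, and taking logarithms, dividing by $\ell_i$, and sending $i\to\infty$ yields
\[
\varlimsup_{i\to\infty} \frac{\log |\Code_i|}{\ell_i} \le \tau \cdot \Rate_{\LP;Q}(\theta) \log|Q| + (1{-}\tau) \cdot \Rate_{\LP;Q'}(\theta') \log|Q'| .
\]
Minimizing over admissible pairs $(\theta,\theta')$ then gives~(\ref{eq:BHL}).

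The main obstacle is the separable-certificate step: keeping track of the $o(\ell_i)$ slack in the forbidden-region coverage and confirming that the LP optimum is continuous in the threshold parameter, so that the ceiling roundings in $d^{(Q)}_i$ and $d^{(Q')}_i$ do not perturb the asymptotic exponents. This is a standard continuity argument, carried out in \cite{Aaltonen} and \cite{BHL1}, which specializes directly to the weighted-distance setting used here.
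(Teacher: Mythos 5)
Your reduction to the dichotomy ``$k \ge d^{(Q)}_i$ or $k' \ge d^{(Q')}_i$'' is fine, but the separable-certificate step is where the argument breaks. If $F_1$ and $F_2$ are Delsarte dual certificates for the thresholds $d^{(Q)}_i$ and $d^{(Q')}_i$, the product $F(k,k') = F_1(k)\,F_2(k')$ is indeed positive-semidefinite on the product scheme, but it does \emph{not} satisfy the required sign condition on the forbidden region: on pairs with $k \ge d^{(Q)}_i$ \emph{and} $k' \ge d^{(Q')}_i$ one has $F_1(k) \le 0$ and $F_2(k') \le 0$, so their product is nonnegative (and generically strictly positive), not zero as you assert; and on pairs where only one threshold is exceeded, nothing constrains the sign of the other factor below its threshold, so $F_1(k)F_2(k')$ need not be nonpositive there either. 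Hence the multiplicative Delsarte bound $|\Code_i| \le \bigl(F_1(0)/\hat{F}_1(0)\bigr)\bigl(F_2(0)/\hat{F}_2(0)\bigr)$ does not follow, and the core of the proof collapses. A sanity check that the easy product route cannot work: in the special case $Q = Q'$, $\beta = \beta'$ the lemma says the LP bound can be replaced by its lower convex envelope, which is exactly the nontrivial content of Ben-Haim--Litsyn; if a separable certificate sufficed, that improvement would be essentially free.

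The actual derivation used in the paper is different: Lemma~\ref{lem:BHL} is obtained by adapting \Theorems~5 and~7 of~\cite{BHL1}, whose proof is a counting/averaging argument over pairs of codewords split into a prefix and a suffix (not a product of LP certificates); the only modification needed for mixed alphabets and weighted distance is to replace the first multiplicand in the function $f$ of that theorem by $\beta (a_1 + b_1 - x_1 - y_1) + \beta' (a_2 + b_2 - x_2 - y_2)$. Alternatively, a Laihonen--Litsyn-type shortening argument of the kind used for \Theorem~\ref{thm:bound1} (Bassalygo--Elias translation onto a small ball in the prefix, residual distance in the suffix) is the standard way to obtain two-part bounds of the form $\tau \cdot \Rate_{\LP;Q}(\theta)\log|Q| + (1-\tau) \cdot \Rate_{\LP;Q'}(\theta')\log|Q'|$. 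If you want to keep an LP flavor, you would need to repair the certificate construction along such lines rather than by taking $F_1 \otimes F_2$.
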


A special case of this lemma, for the case $Q = Q'$
and $\beta = \beta'$, was proved by Ben-Haim and Litsyn
in~\cite{BHL1}.\footnote{%
This is not stated so explicitly in~\cite{BHL1},
but this is what \Theorem~7 in~\cite{BHL1} reduces to
when Eq.~(36) in that paper is plugged into Eq.~(39)
and the result is then plugged into Eq.~(40).}
For this case, Lemma~\ref{lem:BHL} states that
the bound $\delta \mapsto \Rate_{\LP;Q}(\delta)$
can be improved by taking its lower convex envelope
(it turns out that this bound is not convex
for general $Q$). With little effort,
\Theorem~5 in~\cite{BHL1} can be adapted
to the case of mixed alphabets and
weighted distance~\cite{BHL2}.\footnote{%
Specifically, one only needs to modify the definition of
the function $f$ in that theorem so that
the first multiplicand therein is
$\beta (a_1 + b_1 - x_1 - y_1) + \beta' (a_2 + b_2 - x_2 - y_2)$.
The proof, as is, holds also when
the prefixes and suffixes---of lengths $n_1$ and $n_2$---of
the codewords are over different alphabets.}

Applying Lemma~\ref{lem:BHL}
with $Q = \Finitefield_2$,
$Q' = \Finitefield_4$, $\beta = 1$, and $\beta' = 2/3$,
we have verified numerically
that the expression~(\ref{eq:BHL}) is maximized when $\tau = 0$.
Hence, we get the following upper
bound on the rate $R$ of any all-disjoint $(\delta,3)$-LRC sequence
over $F$ in which the repair groups are all of size $3$:
\begin{equation}
\label{eq:n=3}
R \le \frac{2}{3}
\cdot \Rate_{\LP;\Finitefield_4}\left( \frac{3 \delta}{2} \right) .
\end{equation}

It remains to consider the case where the $(N,M,d)$ LRC $\varcode$
has repair groups of size $2$, namely,
some $s$ out of the $t$ coordinates of the outer code
$\Code \; (\subseteq F^t \times (F^2)^{\ell-t})$
map to the $[2,1,2]$ repetition code.
By adding $s$ information bits to $\Code$
(thereby increasing its size by a factor of $2^s$)
we can obtain a new concatenated code $\varcode^*$
in which we replace all
the inner instances of the $[2,1,2]$ repetition code
by instances of the $[3,2,2]$ parity code. 
Doing so, the overall code length becomes $N + s$,
the minimum distance remains unchanged
(and, so, the r.m.d.\ reduces by a factor of $N/(N+s)$),
and the rate becomes
\begin{equation}
\label{eq:newrate}
\frac{(\log_2 M) + s} {N + s}
= \frac{(\log_2 M)/N + (s/N)} {1 + (s/N)} .
\end{equation}
Switching to a code sequence $(\varcode_i)_{i=1}^\infty$
where $\lim_{i \rightarrow \infty} s_i/N_i = \sigma$,
we get from~(\ref{eq:newrate}) the following relationship
between the rate $R$ of the sequence and the rate $R^*$
of $(\varcode^*_i)_{i=1}^\infty$:
\[
R^* = \frac{R + \sigma}{1 + \sigma} .
\]
On the other hand,
the bound~(\ref{eq:n=3}) applies to $(\varcode^*_i)_{i=1}^\infty$,
namely:
\[
R^* \le
\frac{2}{3} \cdot \Rate_{\LP;\Finitefield_4}
\left( \frac{3 \delta}{2 (1 + \sigma)} \right) .
\]
Combining the last two equations yields:
\[
R \le
\frac{2(1 + \sigma)}{3} \cdot \Rate_{\LP;\Finitefield_4}
\left( \frac{3 \delta}{2 (1 + \sigma)} \right) - \sigma.
\]
We have verified numerically
that this expression is maximized when $\sigma = 0$.
We therefore conclude that~(\ref{eq:n=3}) holds for
any all-disjoint $(\delta,3)$-LRC sequence over $F$.
The bound~(\ref{eq:n=3}) is depicted
in Figure~\ref{fig:q=2,n=3} (curve (g)).
\fi

\section{The general linearly recoverable case}
\label{sec:nondisjoint}

In this section, we present some (weaker) bounds for
$(\delta,n)$-LRC sequences that are not necessarily
all-disjoint (but they are still linearly recoverable).
Our results are based on deriving the asymptotic version
of the sphere-packing bound of~\cite{WZL}
(\Theorems~5 and~12 therein).

Let $\varcode$ be a linearly recoverable LRC of length $N$,
minimum distance $d$, and locality $r = n-1$ over $F$
and let $(\Local_j)_{j \in [N]}$ be
a list of repair groups of $\varcode$.
We assume hereafter in this section (without loss of generality)
that this list satisfies the following conditions.
\begin{list}{}{\settowidth{\labelwidth}{\textrm{P2)}}}
\item[R1)]
$|\Local_j| \le n$ for each index $j \in [N]$
(this condition follows directly from the locality).
\item[R2)]
For each $j \in [N]$, the repair group $\Local_j$ is
\ifPAGELIMIT
    minimal, i.e., no proper subset of it is a repair group for $j$.
\else
minimal in the sense that no proper subset of it
is a repair group for $j$; this, in turn, implies
that $\Local_j$ is a repair group for all $j' \in \Local_j$.
\fi
\item[R3)]
Each repair group $\Local_j$ contains at least one index $j'$
that is not contained in any repair group $\Local_i \ne \Local_j$
(otherwise, $\Local_j$ can be spared).
\end{list}

\ifPAGELIMIT
    For each repair group $\Local_j$ among
    the distinct repair groups of $\varcode$,
\else
Considering now only the set of distinct repair groups
$\{ \Local_j \}_{j \in [\ell]}$ of $\varcode$,
for each repair group $\Local_j$,
\fi
the constituent code $(\varcode)_{\Local_j}$ is contained in
a linear $[|\Local_j|,|\Local_j|{-}1]$ code over $F$,
which we denote by $\code_j$; moreover, by condition~(R2),
the code $\code_j$ has minimum distance~$2$.
An ambient space $\Ambient$ of $\varcode$ is
defined as in~(\ref{eq:ambient});
by condition~(R3),
the containment in~(\ref{eq:ambient}) holds in fact with equality
\ifPAGELIMIT
    for all $j$.
\else
for all $j \in [\ell]$.
\fi

The method of~\cite{WZL} for obtaining
a (non-asymptotic) sphere-packing-type
bound on the rate of $\varcode$ is based on shortening $\varcode$ on
the set of coordinates, $\AltSet$, on which repair groups intersect,
thereby reducing to the all-disjoint case.
Denoting $\nu = |\AltSet|/N$,
the resulting shortened code, $\widehat{\varcode}$,
is an all-disjoint linearly recoverable LRC
of length $(1-\nu)N$ and minimum distance${} \ge d$,
with the following $\ell$ distinct repair groups:
\[
\widehat{\Local}_j = \Local_j \setminus \AltSet ,
\quad j \in [\ell] .
\]
Now, on the one hand, we have:
\ifPAGELIMIT
    \begin{equation}
    \label{eq:AltSet1}
    \ell \cdot n \ge 2 |\AltSet| + (N - |\AltSet|) .
    \end{equation}
\else
\begin{eqnarray}
\ell \cdot n
& \ge &
\sum_{j \in [\ell]} |\Local_j|
= \sum_{t \in [N]}
\left| \bigl\{  j \in [\ell] \,:\, t \in \Local_j \bigr\} \right|
\nonumber \\
& = &
\sum_{t \in \AltSet}
\left| \bigl\{  j \,:\, t \in \Local_j \bigr\} \right|
+ \sum_{t \in [N] \setminus \AltSet}
\left| \bigl\{  j \,:\, t \in \Local_j \bigr\} \right|
\nonumber \\
\label{eq:AltSet1}
& \ge &
2 |\AltSet| + (N - |\AltSet|) .
\end{eqnarray}
\fi
On the other hand, by condition~(R3), we also have:
\begin{equation}
\label{eq:AltSet2}
N - |\AltSet| \ge \ell .
\end{equation}
   From~(\ref{eq:AltSet1}) and~(\ref{eq:AltSet2}) we get:
\begin{equation}
\label{eq:ell}
\frac{1+\nu}{n} \le \frac{\ell}{N} \le 1-\nu
\end{equation}
and, in particular,
\ifPAGELIMIT
    $0 \le \nu \le (n{-}1)/(n{+}1)$.
\else
\[
0 \le \nu \le \frac{n-1}{n+1} .
\]
\fi
Moreover, from~(\ref{eq:ell}) we get
that the average size of the distinct (and disjoint) repair groups of
$\widehat{\varcode}$ satisfies
\begin{equation}
\label{eq:average}
\frac{1}{\ell}
\sum_{j \in [\ell]} |\widehat{\Local}_j|
= \frac{(1-\nu) N}{\ell}
\le \frac{1{-}\nu}{1{+}\nu} \cdot n = \mu .
\end{equation}

The rate $R = (\log_q |\varcode|)/N$
of $\varcode$ is related to that of $\widehat{\varcode}$ by:
\begin{equation}
\label{eq:hatD}
R = \frac{1}{N} \cdot \log_q |\varcode|
\le
\frac{1}{N}
\left( |\AltSet| + \log_q |\widehat{\varcode}| \right)
=
\nu + (1-\nu) \cdot \widehat{R} .
\end{equation}
Thus, given $\nu$, any upper bound on $\widehat{R}$
implies an upper bound on $R$, and the dependence on $\nu$ can then be
eliminated by maximizing the latter bound over
$\nu \in [0,(n{-}1)/(n{+}1)]$.

Using the above strategy,
we next turn to adapting our previous bounds
to linearly recoverable $(\delta,n)$-LRC sequences
that are not necessarily all-disjoint.
\ifPAGELIMIT
\else
We will make use of the following notation.
\fi
For $n \in \Realfield^+$ and
$\omega \in \Realfield_{\ge 0}$, define:
\begin{equation}
\label{eq:WZL2}
\widehat{\Rate}_0(\omega,n) =
\max_\nu
\left\{
\nu + (1-\nu) \cdot
\overline{\Rate}_0 \left( \frac{\omega}{1{-}\nu}, n, 
\frac{1{-}\nu}{1{+}\nu} \cdot n \right) \right\} ,
\end{equation}
where $\overline{\Rate}_0(\cdot,\cdot,\cdot)$ is
as defined in~(\ref{eq:WZL1})
and the maximum is taken over all $\nu \in [0,(n{-}1)/(n{+}1)]$.

\ifPAGELIMIT
\else
\begin{remark}
\label{rem:computations2}
Substituting $\nu = 0$ and $\nu = (n{-}1)/(n{+}1)$
in the objective function in~(\ref{eq:WZL2}),
we get the following \emph{lower} bound on
$\widehat{\Rate}_0(\omega,n)$:
\[
\widehat{\Rate}_0(\omega,n)
\ge
\max \left\{
\Rate_0(\omega,n),  \frac{n{-}1}{n{+}1} \right\} .
\]
Based on our numerical evidence,
we conjecture that this lower bound is tight
(see also Remark~\ref{rem:WZL2} in Appendix~\ref{sec:R0properties}).\qed
\end{remark}
\fi

The following proposition is a (weaker) counterpart
of \Proposition~\ref{prop:R0} for the case where repair groups
can intersect.

\begin{proposition}
\label{prop:R0-nondisjoint}
Given $n \in \Integers^+$,
let $(\Ambient_i)_{i=1}^\infty$ be
an infinite sequence of codes over $F$
where each $\Ambient_i$ is
a linear code of length $N_i$ over $F$ defined by
\[
(\Ambient_i)_{\Local_{i,j}} = \code_{i,j} ,
\quad j \in [N_i] ,
\]
with the list $(\Local_{i,j})_{j \in [N_i]}$ satisfying
conditions (R1)--(R3)
and each constituent code $\code_{i,j}$ being
a linear $[|\Local_{i,j}|,|\Local_{i,j}|{-}1,2]$ code over $F$.
Then for any nonnegative real sequence $(\omega_i)_{i=1}^\infty$
such that $\varliminf_{i \rightarrow \infty} \omega_i = \omega$:
\[
\varlimsup_{i \rightarrow \infty}
\frac{1}{N_i}
\log_q
\frac{|\Ambient_i|}{|\Ambient_i(\omega_i)|} \le
\widehat{\Rate}_0(\omega,n) .
\]
\end{proposition}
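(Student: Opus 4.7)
The plan is to mimic, at the level of the ambient spaces, the shortening reduction that the paper applied to $\varcode$ in the lead-up to~(\ref{eq:hatD}), and then feed the output into Lemma~\ref{lem:mu}. For each $i$, let $\AltSet_i \subseteq [N_i]$ be the set of coordinates lying in more than one of the distinct repair groups $\{\Local_{i,j}\}$, and set $\nu_i = |\AltSet_i|/N_i$. The same double-count used in~(\ref{eq:AltSet1})--(\ref{eq:AltSet2}) confines $\nu_i$ to $[0,(n{-}1)/(n{+}1)]$, so by passing to a subsequence I may assume that the $\varlimsup$ in the statement is attained and that $\nu_i \to \nu^* \in [0,(n{-}1)/(n{+}1)]$.

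Now shorten: let $\widehat{\Ambient}_i \subseteq F^{[N_i] \setminus \AltSet_i}$ consist of all $u$ whose zero-extension to $F^{N_i}$ lies in $\Ambient_i$, and let $\widehat{\Local}_{i,j} = \Local_{i,j} \setminus \AltSet_i$. The sets $\widehat{\Local}_{i,j}$ are pairwise disjoint by construction, so $\widehat{\Ambient}_i$ factors as a product $\prod_j \widehat{\code}_{i,j}$, where $\widehat{\code}_{i,j}$ is the shortening of $\code_{i,j}$ on $\Local_{i,j} \cap \AltSet_i$. Since $\code_{i,j}$ has minimum distance $2$ and dimension $|\Local_{i,j}|-1$, it is the kernel of a single linear form on $F^{\Local_{i,j}}$ whose coefficients are \emph{all} nonzero; hence its restriction to $\widehat{\Local}_{i,j}$ is still a nontrivial linear form, and $\widehat{\code}_{i,j}$ is a linear $[|\widehat{\Local}_{i,j}|,|\widehat{\Local}_{i,j}|{-}1]$ code with $1 \le |\widehat{\Local}_{i,j}| \le n$ (non-emptiness following from~(R3)). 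Two elementary inequalities then compare the original and shortened pictures. Considering the linear map $\Ambient_i \to F^{\AltSet_i}$, $\bldc \mapsto \bldc_{\AltSet_i}$, whose kernel is isomorphic to $\widehat{\Ambient}_i$, yields $|\Ambient_i| \le q^{|\AltSet_i|} \cdot |\widehat{\Ambient}_i|$; zero-extending a vector of $\widehat{\Ambient}_i$ preserves Hamming weight, so with $\widehat{N}_i = (1{-}\nu_i) N_i$ and $\widehat{\omega}_i = \omega_i/(1{-}\nu_i)$ (so that $\widehat{\omega}_i \widehat{N}_i = \omega_i N_i$) one gets $|\widehat{\Ambient}_i(\widehat{\omega}_i)| \le |\Ambient_i(\omega_i)|$. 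Together,
\[
\frac{1}{N_i} \log_q \frac{|\Ambient_i|}{|\Ambient_i(\omega_i)|}
\;\le\; \nu_i + (1{-}\nu_i) \cdot \frac{1}{\widehat{N}_i} \log_q \frac{|\widehat{\Ambient}_i|}{|\widehat{\Ambient}_i(\widehat{\omega}_i)|}.
\]

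Finally, $(\widehat{\Ambient}_i)_{i=1}^\infty$ is an all-disjoint product sequence that fits the hypotheses of Lemma~\ref{lem:mu}: each constituent has length $\le n$, and the inequality~(\ref{eq:average}) gives $\widehat{N}_i/\ell_i \le \mu_i = ((1{-}\nu_i)/(1{+}\nu_i)) \, n \to \mu^* := ((1{-}\nu^*)/(1{+}\nu^*)) \, n$. Since $\widehat{\omega}_i \to \omega/(1{-}\nu^*)$, Lemma~\ref{lem:mu} yields
\[
\varlimsup_i \frac{1}{\widehat{N}_i} \log_q \frac{|\widehat{\Ambient}_i|}{|\widehat{\Ambient}_i(\widehat{\omega}_i)|}
\;\le\; \overline{\Rate}_0\bigl(\omega/(1{-}\nu^*),\, n,\, \mu^*\bigr).
\]
Plugging this into the previous display, and noting that $\widehat{\Rate}_0(\omega,n)$ in~(\ref{eq:WZL2}) is defined as the maximum over $\nu$ of precisely this bound, completes the proof. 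The only step where the argument uses structural rather than counting information is the claim that each shortened constituent $\widehat{\code}_{i,j}$ remains of codimension exactly one, and this is what I expect to be the main obstacle: it rests on the minimum-distance-$2$ assumption forcing all-nonzero coefficients in the parity check of $\code_{i,j}$, so that no subset of coordinates can trivialize the defining constraint.
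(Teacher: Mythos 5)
Your proposal is correct and follows essentially the same route as the paper's own proof: shorten the ambient spaces on the coordinates where repair groups intersect, use the counting bounds behind~(\ref{eq:ell})--(\ref{eq:average}) to control $\nu_i$ and the average constituent length, relate $|\Ambient_i|$ and $|\Ambient_i(\omega_i)|$ to their shortened counterparts exactly as in~(\ref{eq:rate})--(\ref{eq:logvolume}), and invoke Lemma~\ref{lem:mu} together with the definition~(\ref{eq:WZL2}). The step you flag as a possible obstacle is not one: the minimum-distance-$2$ hypothesis forces an all-nonzero parity check for each $\code_{i,j}$, so the shortened constituents indeed keep redundancy $1$, which is precisely the observation the paper makes at the same point.
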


\begin{proof}
For each $i \in \Integers^+$, let $\widehat{\Ambient}_i$
be obtained by shortening $\Ambient_i$
on the set of coordinates on which
repair groups (i.e., subsets) $\Local_{i,j}$
intersect. Denoting by $\nu_i \; (\in [0,(n{-}1)/(n{+}1)])$
the fraction of removed coordinates, by possibly restricting
to a subsequence of the codes,
we can assume that $(\nu_i)_{i=1}^\infty$ converges to a limit $\nu$
and that the length, $m_i = (1{-}\nu_i)N_i$,
of $\widehat{\Ambient}_i$ strictly increases with $i$.
We then get that the code sequence
$(\widehat{\Ambient}_i)_{i=1}^\infty$
satisfies the conditions of
\Proposition~\ref{prop:R0} (and, hence, of Lemma~\ref{lem:mu});
\ifPAGELIMIT
    moreover, the average length of
    the constituent codes of each $\widehat{\Ambient}_i$
\else
in particular, since each $\code_{i,j}$ has minimum distance $2$,
the respective (shortened) constituent codes of each
$\widehat{\Ambient}_i$ all (still) have redundancy $1$.
Moreover, the average length of those constituent codes
\fi
is bounded from above by
\[
\frac{1{-}\nu_i}{1{-}\nu_i} \cdot n
\]
(as in~(\ref{eq:average})).
In addition, 
\begin{equation}
\label{eq:rate}
\frac{1}{N_i}
\log_q |\Ambient_i|
\le
\nu_i + (1-\nu_i) \cdot
\frac{1}{m_i} \log_q |\widehat{\Ambient}_i|
\end{equation}
(as in~(\ref{eq:hatD})) and,
for any nonnegative real sequence $(\omega_i)_{i=1}^\infty$:
\begin{eqnarray}
\lefteqn{
\frac{1}{N_i}
\log_q |\Ambient_i(\omega_i)|
\ge
\frac{1}{N_i}
\log_q |\widehat{\Ambient}_i(\omega_i/(1{-}\nu_i))|
} \makebox[5ex]{} 
\nonumber \\
\label{eq:logvolume}
& = &
(1-\nu_i) \cdot
\frac{1}{m_i}
\log_q |\widehat{\Ambient}_i(\omega_i/(1{-}\nu_i))| .
\end{eqnarray}
Therefore,
\begin{eqnarray*}
\lefteqn{
\varlimsup_{i \rightarrow \infty}
\frac{1}{N_i}
\log_q \frac{|\Ambient_i|}{|\Ambient_i(\omega_i)|}
} \makebox[5ex]{} \\
& \stackrel{\textrm{(\ref{eq:rate})+(\ref{eq:logvolume})}}{\le} &
\nu +
(1-\nu) \cdot
\varlimsup_{i \rightarrow \infty}
\frac{1}{m_i}
\log_q
\frac{|\widehat{\Ambient}_i|}%
                          {|\widehat{\Ambient}_i(\omega_i/(1{-}\nu_i))|}
\\
& \stackrel{\textrm{Lemma~\ref{lem:mu}}}{\le} &
\nu + (1-\nu) \cdot
\overline{\Rate}_0 \left( \frac{\omega}{1{-}\nu},n,
\frac{1{-}\nu}{1{+}\nu} n \right) \\
& \stackrel{\textrm{(\ref{eq:WZL2})}}{\le} &
\widehat{\Rate}_0(\omega,n) ,
\end{eqnarray*}
where $\omega = \varliminf_{i \rightarrow \infty}\omega_i$.
\end{proof}

The following sphere-packing bound is proved similarly
to \Theorem~\ref{thm:spherepacking},
except that we use \Proposition~\ref{prop:R0-nondisjoint}
instead of \Proposition~\ref{prop:R0}.

\begin{theorem}
\label{thm:spherepacking-nondisjoint}
The rate of any linearly recoverable $(\delta,n)$-LRC sequence over $F$
is bounded from above by
\[
\widehat{\Rate}_\SP(\delta,n) =
\ifPAGELIMIT
    \widehat{\Rate}_0(\delta/2,n)
\else
\widehat{\Rate}_0 \left( \frac{\delta}{2} , n \right)
\fi
.
\]
\end{theorem}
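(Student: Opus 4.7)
The plan is to mimic the proof of \Theorem~\ref{thm:spherepacking} almost verbatim, replacing the appeal to \Proposition~\ref{prop:R0} with an appeal to \Proposition~\ref{prop:R0-nondisjoint}. Concretely, I start from an arbitrary linearly recoverable $(\delta,n)$-LRC sequence $(\varcode_i)_{i=1}^\infty$ over $F$, with $\varcode_i$ of length $N_i$ and minimum distance $d_i$; by definition, $\varliminf_{i \rightarrow \infty} d_i/N_i \ge \delta$. For each $i$, I fix a list of repair groups $(\Local_{i,j})_{j \in [N_i]}$ of $\varcode_i$ and, without loss of generality, assume that this list satisfies conditions (R1)--(R3) of Section~\ref{sec:nondisjoint}; this is exactly the assumption needed to invoke \Proposition~\ref{prop:R0-nondisjoint}.

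Next, I attach to each $\varcode_i$ an ambient space $\Ambient_i$ defined as the largest subspace of $F^{N_i}$ satisfying $(\Ambient_i)_{\Local_{i,j}} \subseteq \code_{i,j}$ for every $j$, where $\code_{i,j}$ is the smallest linear $[|\Local_{i,j}|,|\Local_{i,j}|{-}1]$ code containing $(\varcode_i)_{\Local_{i,j}}$. By (R2) each $\code_{i,j}$ has minimum distance $2$, so the codes $\Ambient_i$ meet the hypotheses of \Proposition~\ref{prop:R0-nondisjoint}. Since $\varcode_i \subseteq \Ambient_i$ and $\varcode_i$ has minimum distance $d_i$, a standard sphere-packing argument in $\Ambient_i$ (the balls of radius $\lfloor (d_i{-}1)/2 \rfloor$ around codewords of $\varcode_i$ are disjoint and each contains at least $|\Ambient_i(\omega_i)|$ points, with $\omega_i = (d_i{-}1)/(2 N_i)$) yields
\[
|\varcode_i| \le \frac{|\Ambient_i|}{|\Ambient_i(\omega_i)|} .
\]

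Taking $\log_q$, dividing by $N_i$, and passing to $\varlimsup_{i \rightarrow \infty}$, the left-hand side gives the rate $R$ of the sequence, while $\varliminf_{i \rightarrow \infty} \omega_i \ge \delta/2$. Applying \Proposition~\ref{prop:R0-nondisjoint} with this sequence of thresholds, the right-hand side is bounded above by $\widehat{\Rate}_0(\delta/2,n)$, which is $\widehat{\Rate}_\SP(\delta,n)$ by definition. This completes the argument.

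There is no real obstacle here, since all the heavy lifting---handling the possibly overlapping repair groups, the shortening on the intersection set, the averaging over constituent codes of varying lengths, and the large-deviations analysis---has been packaged into \Proposition~\ref{prop:R0-nondisjoint}. The only care needed is to verify that conditions (R1)--(R3) may indeed be imposed on the repair-group list without loss of generality (minimal repair groups still support recovery and only removing redundant ones can weaken the bound on the code size), and that the constituent codes $\code_{i,j}$ in the ambient construction have minimum distance $2$; both facts are already established in Section~\ref{sec:nondisjoint}.
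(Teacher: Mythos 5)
Your proposal is correct and is precisely the paper's argument: the paper proves this theorem by repeating the sphere-packing derivation of \Theorem~\ref{thm:spherepacking} verbatim, with \Proposition~\ref{prop:R0-nondisjoint} substituted for \Proposition~\ref{prop:R0}, exactly as you do (including the choice $\omega_i = (d_i{-}1)/(2N_i)$ and the reduction to a repair-group list satisfying (R1)--(R3)). Nothing further is needed.
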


Next are our (weaker) versions
of \Theorems~\ref{thm:bound1} and~\ref{thm:bound2}
for general linearly recoverable LRC sequences.

\begin{theorem}
\label{thm:bound1-nondisjoint}
The rate of any linearly recoverable $(\delta,n)$-LRC sequence over $F$
is bounded from above by
\ifPAGELIMIT
    \begin{eqnarray*}
    \widehat{\Rate}_1(\delta,n)
    & = &
    \inf_{\tau \in (0,1)} \min_{(\theta,\theta')}
    \Bigl\{ 
    \tau \cdot \widehat{\Rate}_0(\theta/2,n)
    \\
    &&
    \quad \quad \quad \quad \quad \quad
    {}
    + 
    (1-\tau) \cdot \Rate_\opt(\theta',n)
    \Bigr\} ,
    \end{eqnarray*}
    where $(\theta,\theta')$ ranges over all pairs
    that satisfy~(\ref{eq:theta}).
\else
\begin{eqnarray*}
\widehat{\Rate}_1(\delta,n)
& = &
\inf_{\tau \in (0,1)} \min_{(\theta,\theta')}
\Biggl\{ 
\tau \cdot \widehat{\Rate}_0 \left( \frac{\theta}{2}, n \right)
\\
&&
\quad \quad \quad \quad \quad \quad
{}
+ 
(1-\tau) \cdot \Rate_\opt(\theta',n)
\Biggr\} ,
\end{eqnarray*}
where the (inner) minimum is taken over
all pairs $(\theta,\theta')$ in $[0,(q{-}1)/q]^2$
that satisfy~(\ref{eq:theta}).
\fi
\end{theorem}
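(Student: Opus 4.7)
The plan is to adapt the shortening argument of \Theorem~\ref{thm:bound1} to the non-disjoint setting, substituting \Proposition~\ref{prop:R0-nondisjoint} for \Proposition~\ref{prop:R0} on the prefix block and using the generic bound $\Rate_\opt$ on the suffix block (since the suffix will no longer inherit a clean locality structure). I would fix $\tau \in (0,1)$ and $(\theta,\theta') \in [0,(q{-}1)/q]^2$ satisfying~(\ref{eq:theta}), take a linearly recoverable $(\delta,n)$-LRC sequence $(\varcode_i)_{i=1}^\infty$ with ambient space $\Ambient_i$ and a repair-group list satisfying conditions (R1)--(R3), and choose a minimal collection $T_i$ of distinct repair groups whose union $S_i$ has cardinality $t_i \ge \tau N_i$. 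Since every repair group has at most $n$ elements, $t_i/N_i \rightarrow \tau$, and after permuting coordinates we may assume $S_i = [t_i]$.

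The central technical difficulty---and the point at which the all-disjoint proof breaks---is that the projection of $\Ambient_i$ onto $[t_i]$ need not equal the ambient space determined by the $T_i$-constraints: without disjointness, $\Ambient_i$ does not factor as a product of a ``prefix'' and ``suffix'' subspace. My remedy is to pass to the enlarged space $\tilde{\Ambient}_i \supseteq \Ambient_i$ obtained by retaining only the constraints coming from repair groups in $T_i$; because all of these groups are confined to $S_i$, this enlargement \emph{does} factor, as $\tilde{\Ambient}_i = \Ambient'_i \times F^{N_i - t_i}$, where $\Ambient'_i \subseteq F^{t_i}$ inherits a linearly recoverable LRC structure on $[t_i]$ satisfying the hypotheses of \Proposition~\ref{prop:R0-nondisjoint}. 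The enlargement is harmless because Lemma~\ref{lem:bassalygo} requires only $\varcode_i \subseteq \tilde{\Ambient}_i$.

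Setting $w_i = \min\{\lfloor \theta t_i/2 \rfloor, \lfloor (d_i{-}1)/2 \rfloor\}$ and applying Lemma~\ref{lem:bassalygo} with $\varcode \leftarrow \varcode_i$, $\Ambient \leftarrow \tilde{\Ambient}_i$, and $\Subset \leftarrow \Ambient'_i(w_i/t_i) \times F^{N_i - t_i}$ (so that $|\Subset|/|\tilde{\Ambient}_i| = |\Ambient'_i(w_i/t_i)|/|\Ambient'_i|$), one obtains a translate $\varcode^*_i = (\bldy + \varcode_i) \cap \Subset$ satisfying $|\varcode_i| \le (|\Ambient'_i|/|\Ambient'_i(w_i/t_i)|) \cdot |\varcode^*_i|$. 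As in the proof of \Theorem~\ref{thm:bound1}, the $t_i$-prefixes of all codewords of $\varcode^*_i$ lie in a Hamming ball of radius $w_i$ and so differ pairwise in at most $2 w_i$ positions, which forces their $(N_i{-}t_i)$-suffixes to form a code of minimum distance at least $d_i - 2 w_i > 0$; hence $|\varcode^*_i| \le M_\opt(N_i - t_i, d_i - 2 w_i)$, where $M_\opt(\cdot,\cdot)$ denotes the maximum size of a code over $F$ with the given length and minimum distance. No LRC property is asserted for this suffix code---this is exactly why its contribution is bounded by $\Rate_\opt$ rather than by a locality-aware bound.

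Taking $\log_q$, dividing by $N_i$, and letting $i \rightarrow \infty$ using $t_i/N_i \rightarrow \tau$, $w_i/t_i \rightarrow \theta/2$ (which holds because~(\ref{eq:theta}) forces $\theta \le \delta/\tau$), and $(d_i - 2 w_i)/(N_i - t_i) \rightarrow (\delta - \tau \theta)/(1-\tau) = \theta'$, together with \Proposition~\ref{prop:R0-nondisjoint} applied to $(\Ambient'_i)_i$, yield $R \le \tau \cdot \widehat{\Rate}_0(\theta/2, n) + (1-\tau) \cdot \Rate_\opt(\theta')$. Taking the minimum over admissible $(\theta,\theta')$ and the infimum over $\tau \in (0,1)$ delivers $\widehat{\Rate}_1(\delta,n)$. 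The principal obstacle, as noted, is the loss of the product factorization of $\Ambient_i$; the enlargement device $\Ambient_i \subseteq \tilde{\Ambient}_i$ restores it at the cost of discarding constraints from repair groups that straddle the prefix--suffix boundary, and it is precisely this discarded structure that compels the weaker $\Rate_\opt$ bound on the suffix.
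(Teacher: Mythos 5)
Your proposal is correct and takes essentially the same route as the paper's proof: the shortening argument of \Theorem~\ref{thm:bound1} via Lemma~\ref{lem:bassalygo}, the prefix ratio $|\Ambient'_i|/|\Ambient'_i(w_i/t_i)|$ controlled by \Proposition~\ref{prop:R0-nondisjoint} applied to the space cut out by the selected repair groups, and the suffix code bounded by $M_\opt(N_i{-}t_i,d_i{-}2w_i)$ (hence by $\Rate_\opt(\theta')$) since it need no longer be an LRC. The only deviation is local: where you restore the product structure by enlarging the ambient space to $\tilde{\Ambient}_i = \Ambient'_i \times F^{N_i-t_i}$ (legitimate, since Lemma~\ref{lem:bassalygo} only needs $\varcode_i \subseteq \tilde{\Ambient}_i$), the paper keeps $\Ambient_i$ itself and observes---via condition~(R3), though it already follows from the linearity of $\Ambient_i$---that every prefix in $\Ambient'_i$ has the same number of completions, so that $|\Ambient_i|/|\Subset| = |\Ambient'_i|/|\Ambient'_i(w_i/t_i)|$; both devices produce the same key inequality, and the rest of the argument coincides.
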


\ifPAGELIMIT
    The proof resembles that of \Theorem~\ref{thm:bound1}
    (details can be found in~\cite{RothFull}).
\else
\begin{proof}
The proof resembles that of \Theorem~\ref{thm:bound1},
yet requires several modifications which are described below.
Given a linearly recoverable $(\delta,n)$-LRC sequence
$(\varcode_i)_{i=1}^\infty$ over $F$,
we let $(\Local_{i,j})_{j \in [N_i]}$
be a list of repair groups of $\varcode_i$
that satisfies conditions (R1)--(R3).
Letting $\{ \Local_{i,j} \}_{j \in [\ell_i]}$
be the set of distinct repair groups in the list,
we define (as in the proof of \Theorem~\ref{thm:bound1})
\[
t_i = \left|
{\textstyle\bigcup_{j \in [\ell'_i]}} \Local_{i,j} \right| ,
\]
where $\ell'_i$ is the smallest in $[\ell_i]$ so that
$t_i \ge \tau N_i$.
And by possibly permuting the coordinates
of $\varcode_i$, we assume that
$\bigcup_{j \in [\ell'_i]} \Local_{i,j} = [t_i]$.

We now specify the changes to the proof of \Theorem~\ref{thm:bound1}.
First, we need to modify
the argument that leads to Eq.~(\ref{eq:bound1-1}),
since the decomposition~(\ref{eq:Subset}) may no longer hold.
Still, due to condition~(R3), the size of the following set is
the same for all $\bldy' \in \Ambient'_i$:
\[
\left\{
\bldy'' \in F^{N_i-t_i} \,:\,
(\bldy' \; \bldy'') \in \Ambient_i \right\} .
\]
Thus,
\[
\frac{|\Ambient_i|}{|\Subset|}
= \frac{|\Ambient'_i|}{|\Ambient'_i(w_i/t_i)|} ,
\]
which justifies~(\ref{eq:bound1-1}).

Secondly, we need to weaken Eq.~(\ref{eq:bound1-2}),
since the code $\varcode''_i$ is now not necessarily an LRC.
Specifically, (\ref{eq:bound1-2}) now becomes
\[
|\varcode^*_i| = |\varcode''_i| \le M_\opt(N_i - t_i, d_i - 2w_i) ,
\]
where $M_\opt(N,D)$ denotes the largest size
of any code over $F$ with length $N$ and minimum distance $D$.
Accordingly, from this point in the proof of \Theorem~\ref{thm:bound1},
we change the instances of
$\Rate_\LRC(\cdot,n)$ into $\Rate_\opt(\cdot)$.

Finally, using \Proposition~\ref{prop:R0-nondisjoint},
we change~(\ref{eq:ratio}) into
\[
\varlimsup_{i \rightarrow \infty}
\frac{1}{t_i}
\log_q \frac{|\Ambient'_i|}{|\Ambient'_i(w_i/t_i)|}
\le \widehat{\Rate}_0 \left( \frac{\theta}{2} , n \right) .
\]
\end{proof}

\fi
Below is our variant of \Theorem~\ref{thm:bound2}, which
is proved using \Proposition~\ref{prop:R0-nondisjoint}
instead of \Proposition~\ref{prop:R0}.

\begin{theorem}
\label{thm:bound2-nondisjoint}
The rate of any linearly recoverable $(\delta,n)$-LRC sequence over $F$
is bounded from above by
\[
\widehat{\Rate}_2(\delta,n) = \min_{\omega \in [\delta/2,(q{-}1)/q]}
\Bigl\{ \widehat{\Rate}_0(\omega,n) + \Rate_\opt(\delta,\omega)\Bigr\} .
\]
\end{theorem}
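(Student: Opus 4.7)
The plan is to mirror the proof of \Theorem~\ref{thm:bound2} essentially verbatim, substituting \Proposition~\ref{prop:R0-nondisjoint} for \Proposition~\ref{prop:R0}. The Bassalygo--Elias-type argument underlying \Theorem~\ref{thm:bound2} relies only on the facts that an ambient space $\Ambient_i$ is a linear subspace of $F^{N_i}$ containing $\varcode_i$ and that the growth rate of $|\Ambient_i|/|\Ambient_i(\omega)|$ admits an asymptotic upper bound; the all-disjoint assumption entered solely at the latter point, as the source of the bound $\Rate_0(\omega,n)$. Since the ambient space defined by (\ref{eq:ambient}) is still linear in the general linearly recoverable setting, and since \Proposition~\ref{prop:R0-nondisjoint} furnishes the corresponding ratio bound $\widehat{\Rate}_0(\omega,n)$, essentially no new ideas are required.

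Concretely, I would take a linearly recoverable $(\delta,n)$-LRC sequence $(\varcode_i)_{i=1}^\infty$ over $F$, with each $\varcode_i$ having length $N_i$ and minimum distance $d_i$, and associate to each $\varcode_i$ a list of repair groups $(\Local_{i,j})_{j\in[N_i]}$ obeying conditions (R1)--(R3). Letting $\Ambient_i$ be the ambient space defined via (\ref{eq:ambient}), the sequence $(\Ambient_i)_{i=1}^\infty$ satisfies the hypotheses of \Proposition~\ref{prop:R0-nondisjoint}. Fix $\omega \in [\delta/2,(q{-}1)/q]$ and set $w_i = \lfloor \omega N_i \rfloor$. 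Applying Lemma~\ref{lem:bassalygo} with $\varcode \leftarrow \varcode_i$, $\Ambient \leftarrow \Ambient_i$, and $\Subset \leftarrow \Ambient_i(\omega)$ yields some $\bldy \in \Ambient_i$ for which $|(\bldy + \varcode_i) \cap \Ambient_i(\omega)| \ge |\varcode_i| \cdot |\Ambient_i(\omega)|/|\Ambient_i|$; translating by $-\bldy$, this intersection is a code of minimum distance at least $d_i$ whose codeword weights are all at most $w_i$, and is therefore of size at most $M_\opt(N_i,d_i,w_i)$.

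Taking logarithms base $q$, dividing by $N_i$, and passing to the limit $i \rightarrow \infty$ then gives
\[
R \le \varlimsup_{i \rightarrow \infty} \frac{1}{N_i}\log_q\frac{|\Ambient_i|}{|\Ambient_i(\omega)|} \; + \; \varlimsup_{i \rightarrow \infty} \frac{1}{N_i}\log_q M_\opt(N_i,d_i,w_i),
\]
where the first summand is bounded by $\widehat{\Rate}_0(\omega,n)$ via \Proposition~\ref{prop:R0-nondisjoint} and the second by $\Rate_\opt(\delta,\omega)$ via Lemma~\ref{lem:CW}. Minimizing over $\omega \in [\delta/2,(q{-}1)/q]$ produces the claimed bound $\widehat{\Rate}_2(\delta,n)$. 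The substantive geometric work specific to intersecting repair groups has already been absorbed into \Proposition~\ref{prop:R0-nondisjoint}, so no real obstacle remains here; the only point meriting care is verifying that neither the ambient-space construction nor the Bassalygo--Elias step tacitly uses disjointness, and neither does.
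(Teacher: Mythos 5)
Your proposal is correct and matches the paper's own argument: the paper proves \Theorem~\ref{thm:bound2-nondisjoint} exactly by repeating the Bassalygo--Elias proof of \Theorem~\ref{thm:bound2} with \Proposition~\ref{prop:R0-nondisjoint} substituted for \Proposition~\ref{prop:R0}, which is precisely what you do. Your closing observation---that neither the ambient-space construction nor Lemma~\ref{lem:bassalygo} uses disjointness, so the only change is the ratio bound $\widehat{\Rate}_0(\omega,n)$---is the same point the paper relies on.
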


Notice that unlike the all-disjoint case,
we \emph{cannot} substitute $\widehat{\Rate}_2(\theta',n)$
for $\Rate_\opt(\theta',n)$ in \Theorem~\ref{thm:bound1-nondisjoint}.

\ifPAGELIMIT
\else
The various bounds are plotted
in Figure~\ref{fig:q=2,n=4-nondisjoint} for $q = 2$ and $n = 4$.
Curve~(a) is the sphere-packing bound
of \Theorem~\ref{thm:spherepacking-nondisjoint}.
Curve~(b) is identical to its counterpart in Figure~\ref{fig:q=2,n=4}
and, as it turns out, curve~(c) is the same as in that figure too
(namely, $\widehat{\Rate}_1(\delta,n) = \Rate_1(\delta,n)$
for the examined parameters); this is due to the fact
that the minimum in \Theorem~\ref{thm:bound2-nondisjoint}
is attained at values $\theta$
where $\widehat{\Rate}_0(\theta,n) = \Rate_0(\theta,n)$.
When plotting curve~(d), we have taken
the minimum of $\widehat{\Rate}_2(\delta,n)$ and $\Rate_\LP(\delta,n)$.
\fi
Some values of the bounds are listed in
Table~\ref{tab:q=2,n=4-nondisjoint}, where
entries that differ from those in Table~\ref{tab:q=2,n=4} 
are marked in
\ifPAGELIMIT
    italics (full plots can be found in~\cite{RothFull}).
\else
italics.
\fi
There is still a range where
\Theorem~\ref{thm:bound2-nondisjoint} yields the best upper bound,
yet this range is smaller compared to the all-disjoint case.

\ifPAGELIMIT
\else
\begin{remark}
\label{rem:bound2-nondisjoint}
A second variant of \Theorem~\ref{thm:bound2} can be obtained
by first shortening the codes in a given LRC sequence
on the intersections of repair groups, and then applying
\Theorem~\ref{thm:bound2} to the resulting (all-disjoint) LRC sequence.
This yields the upper bound
\begin{eqnarray*}
\lefteqn{
\widehat{\Rate}_3(\delta,n)
= \max_\nu \Biggl\{ \nu + (1 - \nu) \cdot
\min_\omega
\biggl\{
\overline{\Rate}_0
\Bigl(\omega,n, \frac{1{+}\nu}{1{-}\nu} \cdot n \Bigr)
} \makebox[27ex]{} \\
\\
&&
\quad {}
+ \Rate_\opt \Bigl( \frac{\delta}{1{-}\nu} , \omega \Bigr)
\biggr\}
\Biggr\} ,
\end{eqnarray*}
where the outer maximum is over $\nu \in [0,(n{-}1)/(n{+}1)]$
and the inner minimum is over
$\omega \in [\delta/(2{-}2\nu),(q{-}1)/q]$.
Yet at least for the parameters that we have tested,
we have observed no difference between the values of
$\widehat{\Rate}_2(\delta,n)$ and $\widehat{\Rate}_3(\delta,n)$.\qed
\end{remark}
\fi

\begin{table}[hbt]
\caption{Values of the bounds for $q = 2$ and $n = 4$
without the all-disjoint constraint.}
\label{tab:q=2,n=4-nondisjoint}
\ifPAGELIMIT
    \vspace{-3ex}
\fi
\[
\renewcommand{\arraystretch}{1.1}
\begin{array}{ccccc}
\hline\hline
\quad\quad \delta \quad\quad &
\ifPAGELIMIT
    \widehat{\Rate}_\SP & \Rate_\CM & \widehat{\Rate}_1 &
    \widehat{\Rate}_2\rule{0ex}{2.5ex} \\
\else
\mathrm{(a)}&\mathrm{(b)}&\mathrm{(c)}&\mathrm{(d)} \\
\fi
\hline
0.07 & 0.6133 & 0.6317 & 0.6131 & 0.6079 \\
0.10 & \mathit{0.6000} & 0.5809 & 0.5643 & \mathit{0.6000} \\
0.15 & \mathit{0.6000} & 0.4964 & 0.4830 & \mathit{0.6000} \\
0.30 & \mathit{0.6000} & 0.2427 & 0.2391 & \mathit{0.6000} \\
\hline\hline
\end{array}
\]
\end{table}

\section*{Acknowledgment}

I would like to thank Yael Ben-Haim and Simon Litsyn
for helpful
\ifPAGELIMIT
    discussions.
\else
discussions and for making me aware of
the results of~\cite{BCCST}.

\ifIEEE
   \appendices
\else
   \section*{$\,$\hfill Appendices\hfill$\,$}
   \appendix
\fi

\section{Skipped proofs}
\label{sec:skippedproofs}

\begin{proof}[Proof of Lemma~\ref{lem:gamma}]
By convexity, for all $z \in (0,1]$ we have
$\Expected \{ z^X \} \ge z^{\Expected \{ X \}}$,
with equality holding when $z = 1$; hence,
$\gamma(u) = 1$ when $u \ge \Expected \{ X \}$.
When $u \le x_{\min}$,
the infimum in~(\ref{eq:gamma}) is attained at $z \rightarrow 0$
and, so, $\gamma(u) = 0$ when $u < x_{\min}$
and $\gamma(u) = p(x_{\min})$ when $u = x_{\min}$.
For $x_{\min} < u < \Expected \{ X \}$,
we differentiate $g_u(z)$ with respect to $z$ to obtain
\[
g'_u(z) = z^{x_{\min} - u - 1} \cdot f_u(z) ,
\]
where
\[
f_u(z) = \sum_{x \in \Set}
(x - u) \cdot p(x) \cdot z^{x - x_{\min}} .
\]
Thus, $f_u(1) = \Expected \{ X \} - u > 0$
and $f_u(0) = (x_{\min} - u) \cdot p(x_{\min}) < 0$,
which implies that the infimum in~(\ref{eq:gamma}) is
a proper minimum attained at an (interior) point $z_u \in (0,1)$;
the point $z_u$ satisfies
\[
1 \ge g_u(z_u)
> z_u^{x_{\min} - u} \cdot p(x_{\min}) ,
\]
i.e.,
\[
z_u > p(x_{\min})^{1/(u - x_{\min})} .
\]
This means that $z_u$ is bounded away from zero whenever
$u$ is bounded away from $x_{\min}$.
Defining $z_u = 1$ for $u \ge \Expected \{ X \}$,
for any $v > u \ge x_{\min}$ we have
\[
\gamma(u) \le g_u(z_v)
\le z_v^{u-v} \cdot g_u(z_v) = g_v(z_v) = \gamma(v),
\]
with the second inequality being strict when $v < \Expected \{ X \}$.
Hence, $u \mapsto \gamma(u)$
is strictly increasing when $x_{\min} \le u < \Expected \{ X \}$.
On the other hand, we also have
\[
\gamma(u) = g_u(z_u) 
= z_u^{v-u} g_v(z_u)
\ge z_u^{v-u} \gamma(v),
\]
which, combined with $\gamma(u) \le \gamma(v)$,
means that $u \mapsto \gamma(u)$ is continuous
when $u > x_{\min}$. Moreover, it is right-continuous
at $u = x_{\min}$, since
\[
\lim_{\varepsilon \rightarrow 0^+}
\gamma(x_{\min}+\varepsilon)
\le \lim_{\varepsilon \rightarrow 0^+}
g_{x_{\min}+\varepsilon}(\varepsilon) = p(x_{\min})
= \gamma(x_{\min}) .
\]
Finally, the concavity of $u \mapsto \log \gamma(u)$
follows from~(\ref{eq:cramer}) and
\begin{eqnarray*}
\lefteqn{
\Prob \left\{ \frac{1}{\ell_1{+}\ell_2} \sum_{i=1}^{\ell_1 + \ell_2} X_i
\le
\frac{\ell_1}{\ell_1{+}\ell_2} u_1
+
\frac{\ell_2}{\ell_1{+}\ell_2} u_2 \right\}
}\\
& \ge &
\!\!
\Prob \left\{
\frac{1}{\ell_1} \sum_{i=1}^{\ell_1} X_i \le u_1 \right\}
\cdot
\Prob \left\{
\frac{1}{\ell_2} \sum_{i=\ell_1+1}^{\ell_1 + \ell_2} X_i
\le u_2 \right\} .
\end{eqnarray*}
\end{proof}

Turning to the proof of Lemma~\ref{lem:CW}, it makes use of
the following theorem, which was proved in~\cite[\Theorem~1]{BCCST}
for the special case of the binary alphabet.
For completeness, we provide a proof of the theorem for general $q$
right after the proof of Lemma~\ref{lem:CW}.

\begin{theorem}
\label{thm:BCCST}
For any $\delta \in [0,(q{-}1)/q]$,
the mapping $\omega \mapsto \Rate_\opt(\delta,\omega)$
is non-decreasing on $\omega \in [0,(q{-}1)/q]$.
\end{theorem}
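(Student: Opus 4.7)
My plan is a shift-and-expurgate argument. Start with a sequence $(\varcode_i)_{i=1}^\infty$ of constant-weight codes over $F$ with weights $w_{1,i}/N_i \to \omega_1$, r.m.d.${} \ge \delta$, and rate tending to $\Rate_\opt(\delta,\omega_1)$. For each $\varcode_i$, I would construct a constant-weight subcode of an additive shift $\bldy^*_i + \varcode_i$ whose codewords all have weight $w_{2,i} \approx \omega_2 N_i$, thereby preserving the minimum distance while losing only a polynomial factor in cardinality. The resulting sequence will witness $\Rate_\opt(\delta,\omega_2) \ge \Rate_\opt(\delta,\omega_1)$. The edge cases $\omega_1 = 0$ (trivial, rate zero) and $\omega_1 = \omega_2$ (nothing to prove) aside, I may assume $0 < \omega_1 < \omega_2 \le (q{-}1)/q$.

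The shift $\bldy^*_i$ will be chosen by averaging. Drawing $\bldy \in F^N$ uniformly among vectors of weight $w_0$ (so that the support is a uniform $w_0$-subset and the nonzero entries are independently uniform in $F \setminus \{0\}$), a direct case analysis coordinate by coordinate gives
\[
\Expected \left\{ \weight(\bldy + \bldc) \right\}
= w_0 + w_1 - \frac{q \cdot w_0 w_1}{(q{-}1) N}
\]
for every $\bldc$ of weight $w_1$; setting this equal to the target $w_2$ uniquely determines an integer $w_0$, and the constraint $\omega_2 \le (q{-}1)/q$ is precisely what keeps the solution in $[0,N]$. Writing $\weight(\bldy + \bldc) = w_0 + w_1 - K - B$, where $K = |\Support(\bldy) \cap \Support(\bldc)|$ is hypergeometric and $B \mid K$ is binomial with parameters $(K, 1/(q{-}1))$, a local central limit theorem will yield
\[
\Prob \left\{ \weight(\bldy + \bldc) = w_2 \right\} = \Theta(1/\sqrt{N})
\]
uniformly in $\bldc$---uniformity is automatic since the distribution depends on $\bldc$ only through its weight. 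Averaging over $\bldy$ and applying pigeonhole then produces $\bldy^* \in F^N$ for which
\[
\varcode^* = \left\{ \bldy^* + \bldc \,:\, \bldc \in \varcode, \;
\weight(\bldy^* + \bldc) = w_2 \right\}
\]
has size at least $|\varcode| \cdot \Omega(1/\sqrt{N})$; by construction, $\varcode^*$ is constant of weight $w_2$, has length $N$, and inherits minimum distance${} \ge d$, so its rate differs from that of $\varcode$ by $O((\log N)/N)$.

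The main technical step will be the local CLT estimate for the distribution of $K+B$, which requires confirming that the variance is of order $N$; this is automatic in our regime since $w_0$ and $w_1$ are both $\Theta(N)$ and the hypergeometric and conditional-binomial contributions do not cancel. Applying the construction code by code along a rate-optimal $(\delta,\omega_1)$-sequence and passing to the limit will then give the monotonicity claim. I expect no obstacle beyond this local CLT estimate, which is standard.
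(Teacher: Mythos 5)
Your proposal is correct and is essentially the paper's own proof of \Theorem~\ref{thm:BCCST}: there, too, one translates by a word of weight $\theta N$ chosen so that the typical weight of $\bldy+\bldc$ equals the target (the identity $\omega^*=\omega+\theta\bigl(1-\tfrac{q\omega}{q-1}\bigr)$ is exactly your expected-weight formula), and an averaging/pigeonhole step over shifts extracts a constant-weight subcode of the same minimum distance with only a polynomial loss in size. The sole difference is technical: the paper counts the ``typical'' shifts explicitly (fixing the overlap and cancellation counts at their means and using entropy estimates for binomial coefficients, which yields a $(N{+}1)^{-3}$ fraction and hits the target weight exactly), whereas you invoke a local CLT giving $\Theta(1/\sqrt{N})$---equally adequate since only $w_{2,i}/N_i\rightarrow\omega_2$ is required, though you should mind the lattice/parity issue (e.g., for $q=2$ the weight of $\bldy+\bldc$ moves in steps of $2$, so the exact target may need an $O(1)$ adjustment).
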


\begin{proof}[Proof of Lemma~\ref{lem:CW}]
Obviously,
$\Rate_\opt(\delta,\omega) \le \Rate_\opt(\delta,{\le}\omega)$.
To prove the inequality in the other direction,
Let $(\varcode_i)_{i=1}^\infty$
be a code sequence with r.m.d.${} \ge \delta$,
with the (length-$N_i$) codewords of each $\varcode_i$
all having weight at most $w_i$, such that
$\varlimsup_{i \rightarrow \infty} w_i/N_i \le \omega$.
Letting $\varcode^*_i$ be a largest constant-weight subcode
of $\varcode_i$ (of codeword weight $w^*_i \le w_i$),
we have $|\varcode^*_i| \ge |\varcode_i|/(w_i + 1)$ and
$\varlimsup_{i \rightarrow \infty} w^*_i/N_i = \omega^* \le \omega$
(and by possibly restricting to a subsequence of
$(\varcode^*_i)_{i=1}^\infty$ we can assume
that $\omega^*$ is a proper limit of $(w_i/N_i)_{i=1}^\infty$).
Thus,
\[
\Rate_\opt(\delta,{\le}\omega) \le
\sup_{\omega^* \in [0,\omega]} \Rate_\opt(\delta,\omega^*) .
\]
By \Theorem~\ref{thm:BCCST} we then get that when
$\omega \in [0,(q{-}1)/q]$,
the supremum is attained at $\omega^* = \omega$.
\end{proof}

\begin{proof}[Proof of \Theorem~\ref{thm:BCCST}]
Given an Abelian group $F$ of size $q$,
fix $\omega, \theta \in [0,(q{-}1)/q]$
and let $\code$ be a constant-weight code of length $N$
and minimum distance $d$ over $F$ with codeword weight
$\lfloor \omega N \rfloor$. We show that there exists
$\bldy \in F^N$ such that the translation $\bldy + \code$ 
contains a constant-weight subcode $\code^*$ of size
\begin{equation}
\label{eq:sizeratio}
|\code^*| \ge \frac{|\code|}{(N{+}1)^3}
\end{equation}
and of codeword weight $\lfloor \omega^* N \rfloor$, where
\begin{equation}
\label{eq:omegastar}
\omega^* = \omega + \theta \left( 1 - \frac{q \, w}{q{-}1} \right) .
\end{equation}
The result will follow by observing
that $\omega^*$ ranges over $[\omega,(q{-}1)/q]$
as $\theta$ ranges over $[0,(q{-}1)/q]$.
Hereafter in the proof, we assume that $\omega$ and $\theta$
are rational numbers and $N$ is such that $\omega N$, $\theta N$,
and $\theta \omega N/(q{-}1)$ are integers
(it is easy to see that those assumptions are allowed
in order to obtain the asymptotic result that we seek).

For any codeword $\bldc \in \code$,
let $\Typical(\bldc)$ denote the set of all words $\bldy \in F^N$
that satisfy the following conditions.
\begin{list}{}{\settowidth{\labelwidth}{\textit{(Y3)}}}
\item[Y1)]
The subword $\bldy' \in F^{\omega N}$ of $\bldy$ that is indexed by
the support of $\bldc$ has weight $\theta \omega N$,
\item[Y2)]
a fraction $1/(q{-}1)$ of the nonzero entries of $-\bldy'$ agree 
with the respective entries in $\bldc$, and---
\item[Y3)]
the subword $\bldy'' \in F^{(1-\omega) N}$ of $\bldy$
that is indexed by the zero entries of $\bldc$
has weight $\theta (1{-}\omega) N$.
\end{list}

It follows from these conditions that
the weight of each $\bldy \in \Typical(\bldc)$ is $\theta N$
and that for each $\bldy \in \Typical(\bldc)$,
the weight of $\bldy + \bldc$ is
\[
\left(
\frac{q{-}2}{q{-}1} \cdot \theta \omega
+ (1{-}\theta) \omega + \theta(1{-}\omega) \right) N
\stackrel{\textrm{(\ref{eq:omegastar})}}{=} \omega^* N .
\]
We also have:
\begin{eqnarray*}
|\Typical(\bldc)|
& = &
\binom{\omega N}{\theta \omega N} \\
&&
\quad {}
\cdot
\binom{\theta \omega N}{\theta \omega N/(q{-}1)} 
\cdot \left( q{-}2 \right)^{\theta \omega N (q-2)/(q-1)} \\
&&
\quad {} \cdot \binom{(1{-}\omega)N}{\theta (1{-}\omega)N}
\cdot (q{-}1)^{\theta (1{-}\omega) N} ,
\end{eqnarray*}
where the first term is the number of possible supports 
of a subword $\bldy'$ that satisfies condition~(Y1)
and, for each such support, the second term counts
the number of subwords $\bldy'$ that satisfy condition~(Y2).
The third term counts the number of subwords $\bldy''$
that satisfy condition~(Y3).

We now apply the following well known approximation
of the binomial coefficients:
\[
\entropy \left( \frac{k}{m}\right) - \frac{\log \, (m{+}1)}{m}
\le
\frac{1}{m}
\log \binom{m}{k} \le \entropy \left( \frac{k}{m}\right) ,
\]
where $x \mapsto \entropy(x)$
is the (binary) entropy function
$- x \log x - (1{-}x) \log \, (1{-}x)$
(see~\cite[pp.~105--106]{Roth}).
We get:
\begin{eqnarray}
\frac{\log |\Typical(\bldc)|}{N}
& \ge &
\omega \cdot \entropy(\theta) \nonumber \\
&&
\quad {}
+ \theta \omega \left( \entropy \left( \frac{1}{q{-}1} \right)
+ \frac{q{-}2}{q{-}1} \log \, (q{-}2) \right) \nonumber \\
&&
\quad {}
+ (1{-}\omega)
\cdot \entropy(\theta) + \theta (1{-}\omega) \log \, (q{-}1)
\nonumber \\
&&
\quad {}
- \frac{3 \log \, (N{+}1)}{N} \nonumber \\
\label{eq:Typical}
& = &
\entropy(\theta) + \theta \log \, (q{-}1)
- \frac{3 \log \, (N{+}1)}{N} .
\end{eqnarray}
On the other hand, the set, $\YY$, of all words
$\bldy \in F^N$ of weight $\theta N$ has size
\[
|\YY| = \binom{N}{\theta N} (q{-}1)^{\theta N}
\]
and, so,
\begin{equation}
\label{eq:Y}
\frac{\log |\YY|}{N} 
\le \entropy(\theta) + \theta \log \, (q{-}1) .
\end{equation}
Combining~(\ref{eq:Typical}) and~(\ref{eq:Y})
we conclude that
\begin{equation}
\label{eq:typical}
\frac{\Typical(\bldc)}{|\YY|} \ge \frac{1}{(N{+}1)^3}.
\end{equation}
For $\bldy \in \YY$, let
\[
\code(\bldy)
= \Bigl\{ \bldc \in \code \,:\, \bldy \in \Typical(\bldc) \Bigr\} .
\]
The set $\code^*(\bldy) = \bldy + \code(\bldy)$
forms a constant-weight code of codeword weight $\omega^* N$
(and of the same minimum distance $d$ as $\code$).
Summing now on the size of $\code^*(\bldy)$ over
all $\bldy \in \YY$ yields:
\begin{eqnarray*}
\sum_{\bldy \in \YY} |\code^*(\bldy)|
& = &
\sum_{\bldy \in \YY} |\code(\bldy)| \\
& = & 
\left|
\bigl\{
(\bldy,\bldc) \in \YY \times \code \,:\, \bldy \in \Typical(\bldc)
\bigr\}
\right| \\
& = &
\sum_{\bldc \in \code} |\Typical(\bldc)|
\end{eqnarray*}
and, so,
\[
\frac{1}{|\YY|}
\sum_{\bldy \in \YY} |\code^*(\bldy)|
= 
\frac{1}{|\YY|}
\sum_{\bldc \in \code} |\Typical(\bldc)|
\stackrel{\textrm{(\ref{eq:typical})}}{\ge}
\frac{|\code|}{(N{+}1)^3} .
\]
Hence, there must be at least one word $\bldy \in \YY$
for which $\code^* = \code^*(\bldy)$ satisfies~(\ref{eq:sizeratio}).
\end{proof}

\section{Characterization of $\overline{\Rate}_0(\omega,n,\mu)$}
\label{sec:R0properties}

We show here how to compute the expression~(\ref{eq:WZL1}).
It is easy to see that this expression
equals $(n{-}1)/n$ when $\omega = 0$
(since $\bldvartheta$ is forced then to be all-zero
on the support of $\bldpi$) and it vanishes
when $\omega \ge (q{-}1)/q$ (by taking $\vartheta_s = \omega$
for all $s \in [n]$). Hence, we can assume from now on that
$\omega \in (0,(q{-}1)/q)$.

We introduce the following notation.
For $n \in \Integers^+$ and $\omega \in [0,(q{-}1)/q]$,
let $\zeta_n(\omega)$ denote a particular minimizing $z$ of
the expression for $\lambda(\omega,n)$ in~(\ref{eq:lambda}).
Also, for $n \in \Integers^+$, let the polynomials $P_n(z)$ 
and $Q_n(z)$ be defined by
\begin{eqnarray*}
P_n(z) & = &
- (q{-}1) \cdot
\bigl( (1 + (q{-}1)z)^{n-1} - (1{-}z)^{n-1} \bigr) \\
Q_n(z) & = & (1 + (q{-}1)z)^n + (q{-}1)(1{-}z)^n
\end{eqnarray*}
(note that $n \cdot P_n(z)$ is the derivative of $Q_n(z)$).
Then~(\ref{eq:lambda}) can be written as
\begin{eqnarray}
\label{eq:lambdaalt}
\lambda(\omega,n)
\!
& = & \!\!\! \inf_{z \in (0,1]} 
\Bigl\{ -\omega \log_q z - \frac{1}{n} + \frac{1}{n} \log_q Q_n(z)
\Bigr\} \\
\nonumber
& = &
-\omega \log_q \zeta_n(\omega)
- \frac{1}{n} + \frac{1}{n} \log_q Q_n \left( \zeta_n(\omega) \right) .
\end{eqnarray}
We have the following lemma.

\begin{lemma}
\label{lem:unique}
For $n > 1$ and $\omega \in [0,(q{-}1)/q]$,
the value $\zeta_n(\omega)$ is
the unique real root in $[0,1]$ of the polynomial
\[
U_{\omega,n}(z) = \omega \cdot Q_n(z) - z \cdot P_n(z) .
\]
Moreover,
the mapping $\omega \mapsto \zeta_n(\omega)$ is strictly increasing. 
\end{lemma}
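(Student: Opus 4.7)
The approach is to reduce the problem to locating the critical points of the smooth objective
\[
h(z) \;=\; -\omega \log_q z + \frac{1}{n} \log_q Q_n(z) - \frac{1}{n}
\]
appearing inside the infimum in (\ref{eq:lambdaalt}), and then to exploit an exponential-family interpretation of the parity-check weight generating function to pin down uniqueness and monotonicity. A direct differentiation, using $Q_n'(z)=\pm n P_n(z)$, shows that $h'(z)$ equals $U_{\omega,n}(z)$ times a factor of the form $\pm 1/(z Q_n(z) \ln q)$; since $Q_n(z)>0$ on $[0,1]$, the critical points of $h$ on $(0,1]$ coincide with the roots of $U_{\omega,n}$ there.

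For existence, I would evaluate $U_{\omega,n}$ at the endpoints of $[0,1]$: $U_{\omega,n}(0)=\omega\cdot q\ge 0$ (since $Q_n(0)=q$ and $P_n(0)=0$), while $U_{\omega,n}(1)=\omega q^{n}-q^{n-1}(q-1)=q^{n-1}\bigl(\omega q-(q-1)\bigr)\le 0$ whenever $\omega\le (q{-}1)/q$. By the intermediate value theorem $U_{\omega,n}$ has at least one root in $[0,1]$, and it lies in the open interval $(0,1)$ when $\omega\in(0,(q{-}1)/q)$.

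The key step is uniqueness, which I would prove by showing that the rational function
\[
\varphi_n(z) \;=\; \frac{z\,P_n(z)}{Q_n(z)}
\]
is \emph{strictly increasing} on $(0,1)$, since the equation $U_{\omega,n}(z)=0$ is exactly $\varphi_n(z)=\omega$. The cleanest route uses Example~\ref{ex:lambda}: $W_\code(z)=Q_n(z)/q$ is, after division by $q^{n-1}$, the probability generating function of the Hamming weight $X$ of a uniformly chosen codeword in the $[n,n{-}1,2]$ parity code $\code$. Exponentially tilting by $z>0$ yields a probability distribution with mean $z\,W_\code'(z)/W_\code(z) = n\,\varphi_n(z)$. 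Since $X$ takes at least two distinct values for $n>1$, every tilt has strictly positive variance, and the standard exponential-family identity
\[
\frac{d}{dz}\bigl[\,n\,\varphi_n(z)\,\bigr] \;=\; \frac{1}{z}\,\mathrm{Var}_z(X) \;>\; 0
\]
gives the desired strict monotonicity. Hence $\varphi_n$ is a continuous strictly increasing bijection from $[0,1]$ onto $[0,(q{-}1)/q]$, so $U_{\omega,n}$ admits a \emph{unique} real root in $[0,1]$.

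To conclude, I would verify that the unique root is indeed the minimizer $\zeta_n(\omega)$: for $\omega>0$ the function $h$ satisfies $h(z)\to+\infty$ as $z\to 0^+$ and is continuous on $(0,1]$, so the infimum is attained in the interior at a critical point, which must coincide with this unique root. The strict monotonicity of $\omega\mapsto\zeta_n(\omega)$ is then immediate, since $\zeta_n$ is the inverse of the strictly increasing $\varphi_n$. I expect Step~3---the strict monotonicity of $\varphi_n$---to be the main technical obstacle, because a brute-force computation of $\varphi_n'(z)$ produces an unwieldy expression whose sign is not transparent; the exponential-family interpretation makes it a one-line consequence of the positivity of variance.
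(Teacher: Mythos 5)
Your proof is correct, but it reaches the key uniqueness/monotonicity step by a genuinely different route than the paper. The paper also reduces to the stationarity condition $\omega/z = P_n(z)/Q_n(z)$ (your $\varphi_n(z)=\omega$), but then rearranges it into the explicit equation~(\ref{eq:implicit}), whose left-hand side is strictly increasing in $z$ (and strictly decreasing in $\omega$) while the right-hand side is strictly decreasing in $z$; uniqueness of the root and the strict monotonicity of $\omega\mapsto\zeta_n(\omega)$ are both read off from that single display, and interiority of the minimizer for $\omega\in(0,(q{-}1)/q)$ is imported from the proof of Lemma~\ref{lem:gamma} rather than re-derived. You instead prove that $\varphi_n(z)=zP_n(z)/Q_n(z)$ is strictly increasing via exponential tilting of the weight distribution of the parity code and positivity of the tilted variance, so that $\zeta_n=\varphi_n^{-1}$ on $[0,(q{-}1)/q]$. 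Your argument is more conceptual and more general---it works verbatim for any weight enumerator supported on at least two weights (it is essentially the strict-monotonicity mechanism underlying Lemma~\ref{lem:gamma}), whereas the paper's computation is elementary, self-contained, and tailored to the parity code, and it delivers monotonicity in $\omega$ with no extra work. Two small points: the paper's displayed definition of $P_n$ carries a spurious minus sign (its own parenthetical ``$nP_n=Q_n'$'' and Eqs.~(\ref{eq:zetainverse})--(\ref{eq:implicit}) force the positive convention, which is the one you used), so your hedged ``$\pm$'' resolves correctly; and your claim that the infimum is attained ``in the interior'' needs the one-line observation that $z=1$ cannot be the minimizer when $\omega<(q{-}1)/q$---which follows at once from your own monotonicity of $\varphi_n$, since $h'(z)\propto(\varphi_n(z)-\omega)/z$ is positive near $z=1$---so this is a gloss, not a gap.
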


\begin{proof}
It follows from the proof of Lemma~\ref{lem:gamma}
that $\zeta_n(0) = 0$ and $\zeta_n((q{-}1)/q) = 1$.
Assuming hereafter that $\omega \in (0,(q{-}1)/q)$,
it also follows from that proof that
$\zeta_n(\omega)$ is an (interior) point in $(0,1)$;
as such, it is a local minimum
of the objective function in~(\ref{eq:lambdaalt})
and, so, it equals a value $z$ at which
the derivative of that function (with respect to $z$) vanishes:
\begin{equation}
\label{eq:zetainverse}
\frac{\omega}{z} - \frac{P_n(z)}{Q_n(z)} = 0 .
\end{equation}
This equation, which is equivalent to
requiring that $U_{\omega,n}(z) = 0$,
can be rearranged into
\begin{equation}
\label{eq:implicit}
\frac{(1{-}\omega)(q{-}1) z - \omega}{(1{-}\omega)z + \omega}
= (q{-}1) \cdot \left(
\frac{1-z}{1 + (q{-}1)z}
\right)^{n-1} .
\end{equation}
In the range $z \in [0,1]$,
the left-hand side of~(\ref{eq:implicit}) is strictly increasing in $z$
(from the value $-1$ at $z = 0$
to $q{-}1 - q \omega$ at $z = 1$)
while the right-hand side of~(\ref{eq:implicit})
is strictly decreasing in $z$. Hence, for any
$\omega \in (0,(q{-}1)/q)$, there is (at most) one $z \in [0,1]$
that satisfies~(\ref{eq:implicit}),
and $\zeta_n(\omega)$ must then be that $z$.
Moreover, since the left-hand side of~(\ref{eq:implicit})
is a strictly decreasing expression in $\omega$,
the mapping $\omega \mapsto \zeta_n(\omega)$ is strictly increasing.
\end{proof}

\begin{remark}
\label{rem:unique}
By~(\ref{eq:zetainverse}), the inverse mapping
$z \mapsto \omega = \zeta_n^{-1}(z)$ is given by
\[
\zeta_n^{-1}(z) = \frac{z \cdot P_n(z)}{Q_n(z)} .
\]
Note also that Lemma~\ref{lem:unique} is false when $n = 1$.
In this case $\zeta_1(0)$ is arbitrary
(and $U_{0,1}(z)$ is identically zero) while
$\zeta_1(\omega) = 1$ when $\omega > 0$
(it is then a global---rather than local---minimum of
the objective function in~(\ref{eq:lambdaalt})).\qed
\end{remark}

We proceed to the characterization of the minimum in the inner
expression in~(\ref{eq:WZL1}).
We will use the notation $[n]^*$ for the set $[n] \setminus \{ 1 \}$.

\begin{lemma}
\label{lem:inner}
Given $n \in \Integers^+$, $\mu \in [n]$,
and $\omega \in (0,(q{-}1)/q)$,
let $\bldpi = (\pi_s)_{s \in [n]} \in \Realfield_{\ge 0}^n$
be a vector that satisfies conditions (P1)--(P2)
with support $\AltSet = \Support(\bldpi)$.
A minimizer of the inner expression in~(\ref{eq:WZL1})
under the constraint~(P3) is any vector
$\bldvartheta = (\vartheta_s)_{s \in [n]} \in \Realfield_{\ge 0}^n$
whose subvector $(\vartheta_s)_{s \in \AltSet}$
is uniquely determined as follows:
$\vartheta_s = \zeta_s^{-1}(z^*)$,
where $z^*$ is the unique real in $[0,1]$ that satisfies
\[
\sum_{s \in \AltSet} \pi_s \cdot \zeta_s^{-1}(z^*) = \omega
\]
(taking $\zeta_1^{-1}(z^*) \equiv 0$
unless $\AltSet = \{ 1 \}$, in which case
$\vartheta_1 = \omega$).
\end{lemma}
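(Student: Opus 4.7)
The plan is to recast the problem as a concave maximization amenable to standard Lagrangian methods. By the identity $\Rate_0(\vartheta,s) = (s{-}1)/s - \lambda(\vartheta,s)$ from~(\ref{eq:R0}), minimizing $\sum_{s \in \AltSet} \pi_s \Rate_0(\vartheta_s,s)$ subject to (P3) is equivalent, up to a $\bldvartheta$-independent constant, to maximizing
\[
\Lambda(\bldvartheta) = \sum_{s \in \AltSet} \pi_s \cdot \lambda(\vartheta_s,s) .
\]
From Example~\ref{ex:lambda}, $\omega \mapsto \lambda(\omega,s)$ is non-decreasing on $\Realfield_{\ge 0}$ (strictly so on $[0,(q{-}1)/q]$ when $s \ge 2$) and concave there, since $\lambda(\omega,s) = (s{-}1)/s + (1/s) \log_q \gamma_X(s\omega)$ with $\log \gamma_X$ concave by Lemma~\ref{lem:gamma}. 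Also $\lambda(\vartheta,1) \equiv 0$. Hence if $\AltSet = \{1\}$, then $\Lambda$ vanishes identically and (P1)+(P3) force $\vartheta_1 = \omega$; otherwise the $s = 1$ term contributes nothing to $\Lambda$, and the monotonicity of $\lambda(\cdot,s)$ for $s \geq 2$ allows one to choose the maximizer with $\vartheta_1 = 0$, reducing the problem to maximizing $\sum_{s \in \AltSet \cap [n]^*} \pi_s \lambda(\vartheta_s,s)$ subject to $\sum_{s \in \AltSet \cap [n]^*} \pi_s \vartheta_s = \omega$.

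This is a concave maximization over a convex polyhedron, so KKT stationarity is necessary and sufficient. Writing the Lagrangian with multiplier $\eta$ for the equality constraint, the interior stationarity condition reads $\partial_{\vartheta_s} \lambda(\vartheta_s,s) = \eta$ for every $s \in \AltSet \cap [n]^*$. Because Lemma~\ref{lem:unique} guarantees that the infimum defining $\lambda(\vartheta_s,s)$ in~(\ref{eq:lambdaalt}) is uniquely attained at $z = \zeta_s(\vartheta_s)$, and because that infimum is over a family of functions whose dependence on $\vartheta$ is linear with slope $-\log_q z$, Danskin's envelope theorem delivers
\[
\frac{\partial \lambda(\vartheta,s)}{\partial \vartheta} = -\log_q \zeta_s(\vartheta) .
\]
The stationarity conditions then collapse to $\zeta_s(\vartheta_s) = q^{-\eta} =: z^*$, a common value across $s$, so $\vartheta_s = \zeta_s^{-1}(z^*)$. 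Substituting into (P3) yields the scalar equation $\sum_{s \in \AltSet} \pi_s \zeta_s^{-1}(z^*) = \omega$ (with the convention $\zeta_1^{-1} \equiv 0$), and uniqueness of $z^* \in [0,1]$ follows from Lemma~\ref{lem:unique}: each $\zeta_s^{-1}$ is strictly increasing, so the left-hand side is strictly monotone in $z^*$. A short argument using $\lim_{\vartheta \to 0^+} \partial_\vartheta \lambda(\vartheta,s) = +\infty$ rules out corner solutions with some $\vartheta_s = 0$ whenever $\omega > 0$.

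The main obstacle I expect is the boundary regime in which the target value of $z^*$ would push up against $1$, i.e., some $\vartheta_s$ wants to leave the interval $(0,(q{-}1)/q)$ on which $\zeta_s$ is a diffeomorphism. There the envelope-theorem calculation is not directly valid and the FOC breaks down. One remedies this by noting that $\lambda(\vartheta,s)$ saturates at $(s{-}1)/s$ for $\vartheta \ge (q{-}1)/q$, so setting $z^* = 1$ (hence $\vartheta_s = (q{-}1)/q$ for every $s \in \AltSet \cap [n]^*$) and absorbing any leftover mass into $\vartheta_1$---which leaves $\Lambda$ unchanged---still conforms to the lemma's formula after a short case analysis.
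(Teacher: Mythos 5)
Your proof is correct and follows essentially the same route as the paper's: after handling $\AltSet=\{1\}$ and setting $\vartheta_1=0$, both arguments use convexity/concavity, apply Lagrange--KKT stationarity to the equality constraint~(P3), compute the derivative of $\lambda(\cdot,s)$ (equivalently of $\Rate_0(\cdot,s)$) by the envelope argument to obtain $-\log_q \zeta_s(\vartheta_s)$, and conclude that all $\zeta_s(\vartheta_s)$ share a common value $z^*$ that is pinned down through the strict monotonicity of $\zeta_s^{-1}$ from Lemma~\ref{lem:unique}. The differences are cosmetic---maximizing $\sum_s \pi_s \lambda(\vartheta_s,s)$ rather than minimizing $\sum_s \pi_s \Rate_0(\vartheta_s,s)$, and citing Danskin's theorem in place of the paper's explicit chain-rule computation---plus your somewhat more careful treatment of corner solutions and of the boundary regime $z^* \rightarrow 1$, which the paper's proof leaves implicit.
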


\begin{proof}
Since $\omega \mapsto \Rate_0(\omega,1)$ is identically zero,
the lemma holds when $\AltSet = \{ 1 \}$,
so we assume hereafter that $\AltSet \ne \{ 1 \}$
and denote $\AltSet^* = \AltSet \setminus \{ 1 \}$.
By condition~(P3), we can further assume that $\vartheta_1 = 0$,
since otherwise we can reduce $\vartheta_1$ and increase
$\vartheta_s$ for $s \in [n]^*$, thereby only decreasing
the inner expression in~(\ref{eq:WZL1}).

Define the function
$\bldvartheta = (\vartheta_s)_{s \in [n]^*} \mapsto f(\bldvartheta)$
for every $\bldvartheta \in \Realfield_{\ge 0}^{n-1}$
to be the minimand in~(\ref{eq:WZL1}):
\[
f(\bldvartheta) = f_\bldpi(\bldvartheta)
= \sum_{s \in \AltSet^*} \pi_s \cdot \Rate_0(\vartheta_s,s) .
\]
Since $\vartheta \mapsto \Rate_0(\vartheta,s)$ is convex, so is 
$\bldvartheta \mapsto f(\bldvartheta)$.

We minimize $f$ subject to condition~(P3)
using the method of Lagrange multipliers~\cite[\S 10.3]{Luenberger}:
we introduce a variable $\xi$ and require that
the partial derivatives of the Lagrangian
\[
L(\bldvartheta,\xi) = f(\bldvartheta) + \xi
\cdot \Bigl( \omega - \sum_{s \in [n]^*} \pi_s \cdot \vartheta_s \Bigr)
\]
be zero with respect to $\xi$ and
the entries of $(\vartheta_s)_{s \in \AltSet^*}$, namely,
\begin{equation}
\label{eq:lagrange1}
\sum_{s \in \AltSet^*} \pi_s \cdot \vartheta_s = \omega
\end{equation}
and
\begin{equation}
\label{eq:lagrange2}
\frac{\partial}{\partial \vartheta_s} L(\bldvartheta,\xi)
= 0 , \quad s \in \AltSet^* .
\end{equation}
Denoting
$\bldzeta(\bldvartheta)
=\left(\zeta_s(\vartheta_s)\right)_{s \in \AltSet^*}$,
we can write $f(\bldvartheta)$ as:
\[
f(\bldvartheta) = f(\bldvartheta,\bldz)
\bigm|_{\bldz = \bldzeta(\bldvartheta)} ,
\]
where $\bldz = (z_s)_{s \in \AltSet^*}$
is a real vector of variables and
\[
f(\bldvartheta,\bldz)
= 1 + \sum_{s \in \AltSet^*} \pi_s 
\Bigl(
\vartheta_s \cdot \log_q z_s
- \frac{1}{s} \log_q Q_s(z_s) \Bigr) .
\]
Recalling that
\[
\frac{\partial}{\partial z_i} f(\bldvartheta,\bldz)
\Bigm|_{\bldz = \bldzeta(\bldvartheta)} = 0 ,
\quad
\textrm{for every $i \in \AltSet^*$} ,
\]
we get for every $s \in \AltSet^*$:
\begin{eqnarray*}
\frac{\partial}{\partial \vartheta_s} f(\bldvartheta)
& = &
\frac{\partial}{\partial \vartheta_s} f(\bldvartheta,\bldz)
\Bigm|_{\bldz = \bldzeta(\bldvartheta)} \\
&&
\quad {}
+ 
\sum_{i \in \AltSet^*}
{\underbrace{
\frac{\partial}{\partial z_i} f(\bldvartheta,\bldz)
\Bigm|_{\bldz = \bldzeta(\bldvartheta)} 
}_0}
\cdot
\frac{\partial}{\partial \vartheta_s}
\zeta_i(\vartheta_i) \\
& = &
\pi_s \cdot \log_q \zeta_s(\vartheta_s) .
\end{eqnarray*}
Hence, by~(\ref{eq:lagrange2}),
\[
\pi_s \cdot \left( \log_q \zeta_s(\vartheta_s) - \xi \right) = 0 ,
\]
namely, the values $\zeta_s(\vartheta_s)$
are equal to (the same value) $z^* = q^\xi$, for all $s \in \AltSet^*$.
Finally, by~(\ref{eq:lagrange1}), the value $z^*$ must be such that
\[
\sum_{s \in \AltSet^*} \pi_s \cdot \zeta_s^{-1}(z^*) = \omega .
\]
This equality determines $z^*$ uniquely,
since $z \mapsto \zeta_s^{-1}(z)$ is strictly increasing
for any $s \in [n]^*$.
\end{proof}

We next turn to the characterization of the outer maximum
in~(\ref{eq:WZL1}).

\begin{lemma}
\label{lem:outer}
Given $n \in \Integers^+$, $\mu \in [n]$,
and $\omega \in (0,(q{-}1)/q)$,
let $k = \lfloor \mu \rfloor$ and
\begin{equation}
\label{eq:pik}
\pi =
\frac{k(k{+}1)}{\mu} - k .
\end{equation}
The entries of the maximizing vector
$\bldpi = (\pi_s)_{s \in [n]} \in \Realfield_{\ge 0}^n$
in~(\ref{eq:WZL1}) under the constraints (P1)--(P2)
are all zero, except for
the entries that are indexed by $k$
and (possibly) $k+1$, where
\[
\pi_k = \pi
\quad \textrm{and} \quad
\pi_{k+1} = 1 - \pi .
\]
\end{lemma}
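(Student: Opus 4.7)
The plan is to apply KKT analysis to the outer maximization, after using Lemma~\ref{lem:inner} to eliminate the inner minimum. The boundary cases ($\omega = 0$, attained by a point-mass at $s = n$ with value $(n{-}1)/n$; $\omega \ge (q{-}1)/q$ with value $0$) are handled directly, so I will assume $\omega \in (0,(q{-}1)/q)$, which yields $z^* \in (0,1)$ at any optimum. By Lemma~\ref{lem:inner}, $\vartheta_s^* = \zeta_s^{-1}(z^*)$ for $s$ in the support (with $\zeta_1^{-1}(z^*) \equiv 0$), and substituting into~(\ref{eq:lambdaalt}) yields the unified identity
\[
\Rate_0(\vartheta_s^*, s) = 1 + \vartheta_s^* \log_q z^* - (1/s) \log_q Q_s(z^*), \quad s \in [n] .
\]
The envelope theorem applied to the inner Lagrangian then gives $\partial F/\partial \pi_s = \Rate_0(\vartheta_s^*,s) - \xi^* \vartheta_s^* = 1 - (1/s)\log_q Q_s(z^*)$, with $\xi^* = \log_q z^*$, where $F(\bldpi)$ denotes the inner minimum value. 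The outer KKT stationarity, with multipliers $\alpha$ for (P1) and $\beta \ge 0$ for (P2), reduces to
\[
\psi(s) := \log_q Q_s(z^*) + s(\alpha - 1) \ge \beta, \quad s \in [n] ,
\]
with equality on $\Support(\bldpi^*)$, plus complementary slackness $\beta \cdot (\sum_s \pi_s^*/s - 1/\mu) = 0$.

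The pivotal step is that $\psi$ is strictly convex in real $s \ge 1$: writing $Q_s(z^*) = a^s + (q{-}1) b^s$ with $a = 1 + (q{-}1)z^* > 1$ and $b = 1 - z^* \in (0,1)$, the function $s \mapsto \log Q_s(z^*)$ is a log-sum-exp in $s$ and hence strictly convex, and adding a linear term preserves strict convexity. Consequently the equation $\psi(s) = \beta$ has at most two real solutions, so $|\Support(\bldpi^*)| \le 2$; and if the support contains two integers $s_1 < s_2$, strict convexity forces $\psi < \beta$ on $(s_1, s_2)$, which combined with $\psi(s) \ge \beta$ at every integer in $[n]$ yields $s_2 = s_1 + 1$. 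Hence $\Support(\bldpi^*) \subseteq \{k_0, k_0 + 1\}$ for some integer $k_0 \ge 1$.

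Finally, I will pin down $k_0 = \lfloor \mu \rfloor$ by arguing that (P2) is tight at the optimum. If (P2) were slack then $\beta = 0$, and a perturbation transferring mass from $\pi_{k_0}^*$ to $\pi_{k_0+1}^*$ (or to a larger integer index) remains feasible while increasing $F$ via the derivative structure, contradicting optimality. With (P2) tight and support $\{k_0, k_0 + 1\}$, the system $\pi_{k_0} + \pi_{k_0+1} = 1$ and $\pi_{k_0}/k_0 + \pi_{k_0+1}/(k_0{+}1) = 1/\mu$ admits the unique solution $\pi_{k_0} = k_0(k_0{+}1)/\mu - k_0$, matching~(\ref{eq:pik}); and nonnegativity $\pi_{k_0}, \pi_{k_0+1} \in [0, 1]$ forces $k_0 \le \mu \le k_0 + 1$, i.e.\ $k_0 = \lfloor \mu \rfloor = k$. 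The degenerate case $\mu \in \Integers$ collapses to $\pi_{k+1} = 0$ (singleton support $\{k\}$), consistent with the formula. The main obstacle is the final step of showing (P2) binds, for which a careful perturbation argument leveraging the specific derivative structure of $F$ is needed.
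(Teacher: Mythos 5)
You take the same overall route as the paper's proof of Lemma~\ref{lem:outer}: eliminate the inner minimum with Lemma~\ref{lem:inner}, differentiate the reduced objective by an envelope argument, and run KKT on the outer maximization; after multiplying by $s$, your stationarity condition ``$\log_q Q_s(z^*)+s(\alpha-1)\ge\beta$ for all $s\in[n]$, with equality on the support'' is exactly the paper's condition $\eta_s=\psi(s)\ge 0$ with equality on the support (the constant offset in your derivative $1-\frac{1}{s}\log_q Q_s(z^*)$ is absorbed by the free multiplier of (P1)). Where you genuinely depart is the step bounding the support: the paper keeps the $1/s$ scaling and shows, via $\psi'(s)=\frac{1}{s^2}\bigl(\beta-\entropy_q(p_s)\bigr)$ and monotonicity of $s\mapsto p_s$, that $\psi$ is monotone or unimodal with a single minimum; you instead observe that $s\mapsto\log_q Q_s(z^*)+s(\alpha-1)$ is a log-sum-exp of two exponentials with distinct rates plus a linear term, hence strictly convex, which yields ``at most two support points, and adjacent'' in one stroke. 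This is correct (it needs $z^*\in(0,1)$, which holds once $\omega\in(0,(q{-}1)/q)$ and the support is not $\{1\}$) and arguably cleaner than the paper's derivative-sign analysis.

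The price of the convexified form is the tightness of (P2): the paper gets it from complementary slackness, since $\beta=0$ forces $\psi$ to be strictly decreasing, hence support $\{n\}$, hence $\mu=n$ and (P2) is tight after all; in your formulation the case $\beta=0$ is not transparent, so the perturbation argument you flag is genuinely needed. It does close, but rests on one fact you must state and prove: $s\mapsto\frac{1}{s}\log_q Q_s(z^*)$ is strictly decreasing (write it as $\log_q(1+(q{-}1)z^*)+\frac{1}{s}\log_q\bigl(1+(q{-}1)r^s\bigr)$ with $r=(1{-}z^*)/(1+(q{-}1)z^*)<1$, the second term being a product of positive decreasing factors). Then, if (P2) is slack, the support cannot be $\{n\}$ (that would give $1/n>1/\mu$, i.e.\ $\mu>n$), and shifting mass from a support index $a$ to any $b>a$ keeps (P1), nonnegativity and the slack (P2) feasible while increasing the objective at first order, a contradiction. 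Finally, as in the paper, dispose separately of $\mu=1$ (where (P1)--(P2) force $\bldpi=(1,0,\ldots,0)$) and of the candidate support $\{1\}$ (objective value $0$, never maximal when $\mu>1$), since in those cases Lemma~\ref{lem:inner} provides no interior $z^*$. With these small additions your proof is complete.
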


\begin{proof}
When $\mu = 1$, conditions (P1)--(P2) force $\bldpi$
to be $(1 \, 0 \, 0 \, \ldots \, 0)$
(i.e., $\Support(\bldpi) = \{ 1 \}$),
in which case $\omega \mapsto \overline{\Rate}_0(\omega,n,1)$
is identically zero.
Hence, we assume hereafter that $\mu > 1$, in which case
a maximizing $\bldpi$ must have support${} \ne \{ 1 \}$
to achieve $\overline{\Rate}_0(\omega,n,\mu) > 0$.

Define the function $\bldpi = (\pi_s)_{s \in [n]} \mapsto g(\bldpi)$
for every $\bldpi \in \Realfield^n_{\ge 0}$ to be the maximand
in~(\ref{eq:WZL1}), namely,
\begin{eqnarray*}
g(\bldpi)
& = &
\sum_{s \in [n]} \pi_s \cdot \Rate_0(\vartheta^*_s,s) \\
& = &
1 - \omega \cdot \log_q z^*
+ \sum_{s \in [n]}
\frac{\pi_s}{s} \log_q Q_s(z^*) ,
\end{eqnarray*}
where $\vartheta^*_s = \zeta_s^{-1}(z^*)$ for each $s \in [n]$
and $z^*$ is as in Lemma~\ref{lem:inner};
namely, $z^*$ is determined uniquely by $\bldpi$ (and $\omega$)
and, therefore, so is each $\vartheta^*_s$
(in particular, $\vartheta^*_1 = 0$).
We do the maximization subject to the constraints (P1)--(P2)
using the Kuhn--Tucker conditions~\cite[\S 10.8]{Luenberger}:
we introduce a real variable $\xi$
and require that the partial derivatives of
\begin{eqnarray}
K(\bldpi,\xi)
\!\! & = & \!\!
g(\bldpi) 
+ \xi \cdot \Bigl( 1 - \sum_{s \in [n]} \pi_s \Bigr) \nonumber \\
\label{eq:KKT0}
&& {}
- \beta
\cdot \Bigl( \frac{1}{\mu} - \sum_{s \in [n]} \frac{\pi_s}{s} \Bigr)
+ \sum_{s \in [n]} \eta_s \cdot \pi_s
\end{eqnarray}
be zero with respect to $\xi$
and the entries of $\bldpi$, for some nonnegative
$\beta$ and $\bldeta = (\eta_s)_{s \in [n]}$ that satisfy
\begin{equation}
\label{eq:KKT1}
\beta
\cdot \Bigl( \frac{1}{\mu} - \sum_{s \in [n]} \frac{\pi_s}{s} \Bigr)
= 0
\end{equation}
and
\begin{equation}
\label{eq:KKT2}
\eta_s \cdot \pi_s = 0 , \quad \textrm{for each $s \in [n]$} .
\end{equation}
The second term in the right-hand side of~(\ref{eq:KKT0})
corresponds to condition~(P1);
the third term and~(\ref{eq:KKT1}) correspond to condition~(P2);
and the last term and~(\ref{eq:KKT2})
correspond to requiring that $\bldpi$ be nonnegative.

Similarly to what we have done in the proof
of Lemma~\ref{lem:inner}, we can write $g(\bldpi)$ as
\[
g(\bldpi) = g(\bldpi,\bldz)
\bigm|_{\bldz = z^* \cdot \bldone} ,
\]
where $\bldz = (z_s)_{s \in[n]^*}$
is a real vector of variables
and $\bldone$ stands for the all-one vector
in $\Realfield_{\ge 0}^{n-1}$.
Recalling that
\[
\frac{\partial}{\partial z_i} g(\bldpi,\bldz)
\Bigm|_{z_i = z^*} = 0 ,
\quad \textrm{for every $i \in [n]^*$} ,
\]
we get:
\begin{eqnarray*}
\frac{\partial}{\partial \pi_s} g(\bldpi)
& = &
\frac{\partial}{\partial \pi_s} g(\bldpi,\bldz)
\Bigm|_{\bldz = z^* \cdot \bldone} \\
&&
\;\;\;
{}
+
\sum_{i \in [n]^*}
{\underbrace{
\frac{\partial}{\partial z_i} g(\bldpi,\bldz) \Bigm|_{z_i = z^*}
}_0}
\cdot
\frac{\partial}{\partial \pi_s}
\zeta_i(\vartheta^*_i(\bldpi)) \\
& = &
\frac{1}{s} \log_q Q_s(z^*) .
\end{eqnarray*}
Returning to the expression for $K(\bldpi,\xi)$ in~(\ref{eq:KKT0}),
we conclude that $(\partial K(\bldpi,\xi))/\partial \pi_s = 0$
translates into
\begin{equation}
\label{eq:KKT3}
\eta_s = \psi(s) ,
\end{equation}
where
\[
\psi(s) = \psi(s,\beta)
= \frac{1}{s} \Bigl(
\log_q Q_s(z^*) - \beta \Bigr) + \xi .
\]

Next, we analyze the function $s \mapsto \psi(s)$
in the \emph{real} variable $s \in [1,\infty)$.
Taking the derivative of this function, we get:
\begin{eqnarray}
\psi'(s)
& = &
\frac{1}{s} \cdot \frac{1}{Q_s(z^*)}
\biggl(
(1+(q{-}1)z^*)^s \log_q (1+(q{-}1)z^*) \nonumber \\
&& \quad\quad\quad {}
+ (q{-}1)(1{-}z^*)^s \log_q (1{-}z^*) \biggr) \nonumber \\
&& \quad {}
- \frac{1}{s^2}
\left( \log_q Q_s(z^*) - \beta \right) \nonumber \\
\label{eq:psider}
& = &
\frac{1}{s^2} \bigl( \beta - \entropy_q(p_s) \bigr) ,
\end{eqnarray}
where
\[
p_s = \frac{(q{-}1) (1{-}z^*)^s}{Q_s(z^*)}
\]
and $x \mapsto \entropy_q(x)$ is the $q$-ary entropy function
$x \cdot \log_q (q{-}1) - x \log_q x - (1{-}x) \log_q (1{-}x)$.
This function is strictly increasing on $[0,(q{-}1)/q]$
and $s \mapsto p_s$ is strictly decreasing on $[1,\infty)$,
with the following maximum value attained at $s = 1$:
\[
p_1 = \frac{(q{-}1)(1{-}z^*)}{Q_1(z^*)} < \frac{q{-}1}{q} .
\]
We conclude from~(\ref{eq:psider}) that, on the interval $[1,n]$,
the function $s \mapsto \psi(s)$ is either
(i)~strictly increasing,
or
(ii)~strictly decreasing,
or
(iii)~unimodal with one local minimum.
Note from~(\ref{eq:psider}) that $\beta = 0$ implies case~(ii).

As our next step, we turn back to~(\ref{eq:KKT3}).
Since $\eta_s \ge 0$ for all $s \in [n]$, in case~(i)
we can have $\eta_s = 0$ only when $s = 1$. Similarly, in case~(ii)
we can have $\eta_s = 0$ only when $s = n$. And in case~(iii)
we can have $\eta_s = 0$ only for the (one or)
two consecutive indexes $s$ that are adjacent to
the minimum of $\psi(\cdot)$.
We conclude that $\eta_s > 0$ for all $s$ except for those two indexes;
this, together with~(\ref{eq:KKT2}), implies that the support
of a maximizing $\bldpi$ contains (only) those two indexes.

Observe that by~(\ref{eq:KKT1}),
a maximizing $\bldpi$ must satisfy condition~(P2) with equality,
unless $\beta = 0$, in which case
$\bldpi = (0 \, 0 \, 0 \, \ldots \, 1)$ (case~(ii) above).
Condition~(P2) would then read
$1/n = \pi_n/n \ge 1/\mu$, i.e., $\mu \ge n$.
Yet we assume that $\mu \in [n]$, thereby forcing $\mu = n$,
so condition~(P2) holds with equality in this case too.

Finally, let $k$ and (possibly) $k+1$ be the elements
in the support of a maximizing $\bldpi$.
We then get from conditions (P1) (and the equality version of) (P2)
that the entries $\pi_k$ and $\pi_{k+1}$
satisfy the following equations:
\[
\begin{array}{ccc}
\displaystyle
\pi_k + \pi_{k+1} & = & 1 \phantom{.}\rule[-2ex]{0ex}{1ex} \\
\displaystyle
\frac{\pi_k}{k} + \frac{\pi_{k+1}}{k{+}1}
& = &
\displaystyle
\frac{1}{\mu} .
\end{array}
\]
Solving these equations yields $\pi_k = \pi$ and
$\pi_{k+1} = 1-\pi$, where $\pi$ is given by~(\ref{eq:pik}),
and the value of $k$ is determined by
\[
\renewcommand{\arraystretch}{1.1}
\begin{array}{rcl}
\pi \ge 0 & \Rightarrow & \mu \le k+1 \\
\pi \le 1 & \Rightarrow & \mu \ge k .
\end{array}
\]
\end{proof}

Combining Lemmas~\ref{lem:inner} and~\ref{lem:outer}
leads to the following recipe for computing
the value of $\overline{\Rate}_0(\omega,n,\mu)$.

\begin{proposition}
\label{prop:WZL}
Given $n \in \Integers^+$, $\mu \in [n]$,
and $\omega \in [0,(q{-}1)/q]$,
let $k = \lfloor \mu \rfloor$
and let $\pi$ be as in (\ref{eq:pik}).
Also, let $z^*$ be the unique real in $[0,1]$ that satisfies
\begin{equation}
\label{eq:zstar}
\pi \cdot \zeta_k^{-1}(z^*)
+ (1-\pi) \cdot \zeta_{k+1}^{-1}(z^*) = \omega
\end{equation}
(taking $\zeta_1^{-1}(z^*) \equiv 0$ when $\mu \in (1,2)$
and $\zeta_1^{-1}(z^*) = \omega$ when $\mu = 1$).
Then
\[
\overline{\Rate}_0(\omega,n,\mu) =
\pi \cdot \Rate_0(\vartheta, k)
+ (1-\pi) \cdot \Rate_0(\vartheta',k{+}1) ,
\]
where
\[
\vartheta = \zeta_k^{-1}(z^*)
\quad \textrm{and} \quad
\vartheta' = \zeta_{k+1}^{-1}(z^*) .
\]
In particular, when $\mu$ is an integer then
\[
\overline{\Rate}_0(\omega,n,\mu) = \Rate_0(\omega,\mu) .
\]
\end{proposition}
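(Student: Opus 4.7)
The proposition is essentially an assembly of Lemmas~\ref{lem:inner} and~\ref{lem:outer}, so the plan is to combine their conclusions and then handle the boundary values of $\omega$ and the degenerate case $\mu = 1$ separately. First, I would dispose of the boundary cases: when $\omega = 0$, condition~(P3) forces $\vartheta_s = 0$ on $\Support(\bldpi)$, so the inner sum equals $\sum_s \pi_s \cdot (s{-}1)/s$, which by condition~(P1) telescopes; similarly, for $\omega \ge (q{-}1)/q$, taking $\vartheta_s = \omega$ uniformly makes each $\Rate_0(\vartheta_s,s) = 0$, showing the value vanishes. Both agree with~(\ref{eq:zstar}) upon taking limits $z^* \to 0$ and $z^* \to 1$, respectively. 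I would then restrict attention to $\omega \in (0,(q{-}1)/q)$.

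Next, I would invoke Lemma~\ref{lem:outer} to pin down the maximizing $\bldpi$: its support lies in $\{k,k{+}1\}$, with the weights dictated by conditions~(P1)--(P2) yielding $\pi_k = \pi$ and $\pi_{k+1} = 1 - \pi$ from the formula in~(\ref{eq:pik}). With this $\bldpi$ fixed, I would then apply Lemma~\ref{lem:inner} to obtain the minimizing $\bldvartheta$ within the maximand: on the support $\AltSet = \{k, k{+}1\}$, the optimal entries take the form $\vartheta_s = \zeta_s^{-1}(z^*)$ for a \emph{common} value $z^* \in [0,1]$ which is determined uniquely by imposing constraint~(P3), i.e., $\pi \cdot \zeta_k^{-1}(z^*) + (1{-}\pi)\cdot \zeta_{k+1}^{-1}(z^*) = \omega$, which is exactly~(\ref{eq:zstar}). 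Substituting $\vartheta = \zeta_k^{-1}(z^*)$ and $\vartheta' = \zeta_{k+1}^{-1}(z^*)$ back into the objective yields the closed-form expression in the proposition.

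The last statement (the integer case $\mu \in \Integers^+$) then follows immediately: in that event $k = \mu$, and~(\ref{eq:pik}) gives $\pi = k(k{+}1)/k - k = 1$, so the support collapses to $\{\mu\}$, equation~(\ref{eq:zstar}) reduces to $\zeta_\mu^{-1}(z^*) = \omega$ (equivalently $z^* = \zeta_\mu(\omega)$), and the objective becomes $\Rate_0(\omega,\mu)$ as claimed. I would also note briefly that the degenerate case $\mu = 1$ forces $k = 1$, $\pi = 1$, $\AltSet = \{1\}$, and $\vartheta_1 = \omega$, making the expression identically zero, which matches the convention $\zeta_1^{-1}(z^*) \equiv 0$ flagged in Lemma~\ref{lem:inner} and Remark~\ref{rem:unique}.

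The only genuine subtlety---hardly an obstacle, but worth stating---is verifying that the order of optimization (inner minimum first, then outer maximum) used in Lemma~\ref{lem:outer} is consistent with the definition in~(\ref{eq:WZL1}): the proof of Lemma~\ref{lem:outer} already treats $z^*$ (and hence each $\vartheta^*_s$) as a function of $\bldpi$ when differentiating $g(\bldpi)$, so the composed extremum coincides with $\overline{\Rate}_0(\omega,n,\mu)$, and no additional min--max exchange argument is needed.
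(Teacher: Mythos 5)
Your proposal is correct and follows essentially the same route as the paper: the paper obtains \Proposition~\ref{prop:WZL} precisely by disposing of the boundary values $\omega = 0$ and $\omega \ge (q{-}1)/q$ at the outset and then combining Lemma~\ref{lem:inner} (inner minimizer with a common $z^*$ determined by condition~(P3), i.e.\ Eq.~(\ref{eq:zstar})) with Lemma~\ref{lem:outer} (outer maximizer supported on $\{k,k{+}1\}$ with weights given by~(\ref{eq:pik})), the integer-$\mu$ and $\mu=1$ cases collapsing exactly as you describe. Your closing remark that no min--max exchange is needed, since Lemma~\ref{lem:outer} already differentiates $g(\bldpi)$ with $z^*$ viewed as a function of $\bldpi$, matches the paper's treatment as well.
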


\begin{remark}
\label{rem:WZL}
By Remark~\ref{rem:unique},
solving~(\ref{eq:zstar}) for $z^*$ amounts
to finding the (unique) real root in $[0,1]$ of the polynomial
\begin{eqnarray*}
\lefteqn{
\omega \cdot Q_k(z) \cdot Q_{k+1}(z) 
} \makebox[0ex]{} \\
&& 
\!\! {}
- z \cdot \bigl( \pi \cdot P_k(z) \cdot Q_{k+1}(z)
+ (1-\pi) \cdot P_{k+1}(z) \cdot Q_k(z) \bigr) .
\end{eqnarray*}\qed
\end{remark}

\begin{remark}
\label{rem:WZL2}
Using \Proposition~\ref{prop:WZL}, it is fairly easy
to compute the maximization in~(\ref{eq:WZL2}) numerically.
Given $n \in \Integers^+$,
we next argue that for sufficiently small positive $\omega$
(within an interval whose length depends on $n$),
the maximum in~(\ref{eq:WZL2}) is attained at $\nu = 0$.
For any $\nu \in [0,(n{-}1)/(n{+}1)]$,
let $\mu(\nu)$ denote the value $((1{-}\nu)/(1{+}\nu)) n \; (\in [n])$.
Given $\omega \in [0,2(q{-}1)/(q (n{+}1))]$,
the objective function in~(\ref{eq:WZL2})
can be verified to be (continuous and)
piecewise differentiable in $\nu$.
Specifically, at any $\nu \in [0,(n{-}1)/(n{+}1)]$
for which $\mu(\nu)$ is not an integer, the derivative is given by
\begin{equation}
\label{eq:derivative}
\left( 1 + \frac{k}{n} \right) \log_q Q_{k+1}(z^*)
- \left( 1 + \frac{k{+}1}{n} \right) \log_q Q_k(z^*) ,
\end{equation}
where $k$ and $z^*$ are as in \Proposition~\ref{prop:WZL},
with $\mu$ and $\omega$ therein taken 
as $\mu = \mu(\nu)$
and $\omega/(1{-}\nu) \; (\in [0,(q{-}1)/q])$, respectively.
Now, by~(\ref{eq:zetainverse})
and~(\ref{eq:zstar}) it follows that $z^* \rightarrow 0$
as $\omega \rightarrow 0$ (uniformly in $\nu$).
And since the expression~(\ref{eq:derivative}) is negative
at $z^* = 0$ for any $k \in [n{-}1]$, we conclude that
for sufficiently small $\omega > 0$, the objective function
in~(\ref{eq:WZL2}) is decreasing in $\nu$.
Thus, the maximum therein is attained at $\nu = 0$, in which case
\[
\widehat{\Rate}_0(\omega,n) = \Rate_0(\omega,n) .
\]\qed
\end{remark}

\begin{remark}
\label{rem:averagelength2}
As we pointed out in Remark~\ref{rem:averagelength1},
the value $\overline{\Rate}_0(\delta/2,n,\mu)$
bounds from above the rate of any all-disjoint
linearly recoverable $(\delta,n)$-LRC sequence
$(\varcode_i)_{i=1}^\infty$ under the additional
condition that the average size of the repair groups
of each $\varcode_i$ is at most $\mu$
(in the limit when $i \rightarrow \infty$).
A related relevant problem is obtaining such a bound
when the average is computed per coordinate, i.e.,
the average is taken over the whole list of the $N_i$ repair groups
of $\varcode_i$; thus, each of the distinct repair groups
is counted a number of times which equals
the number of coordinates that it covers
(namely, its size). 
A counterpart of the bound~(\ref{eq:Singleton})
for this setting was presented in~\cite{SKA}.

A sphere-packing upper bound for this setting
is obtained by substituting $\omega = \delta/2$
in~(\ref{eq:WZL1}), except that condition~(P2) is replaced by
\begin{list}{}{\settowidth{\labelwidth}{\textrm{P2)}}}
\item[P2)]
$\displaystyle \sum_{s \in [n]} s \cdot \pi_s \le \mu$.
\end{list}
\Proposition~\ref{prop:WZL} still holds, except that
the value of $\pi$ in~(\ref{eq:pik}) is changed into
\[
\pi = k + 1 - \mu .
\]
Specifically, Lemma~\ref{lem:inner} holds as is;
as for Lemma~\ref{lem:outer},
the third term in~(\ref{eq:KKT0})
(and, accordingly, the left-hand side of~(\ref{eq:KKT1})) is replaced by
\[
\beta \cdot \Bigl( \mu - \sum_{s \in [n]} s \cdot \pi_s \Bigr)
\]
(with a plus sign). Consequently, (\ref{eq:psider}) becomes
\[
\beta - \frac{1}{s^2} \cdot \entropy_q(p_s)
\]
which, in turn, leads to the same conclusions about the function
$s \mapsto \psi(s)$ (cases (i)--(iii)).
Thus, the support of a maximizing $\bldpi$ contains up to
two indexes, $k$ and $k+1$, and, by (P1)--(P2), the values of
$\pi_k$ and $\pi_{k+1}$ are the solutions of
\[
\begin{array}{c@{\;}c@{\;}ccc}
\displaystyle
\pi_k & + & \pi_{k+1} & = & 1 \phantom{.}\rule[-2ex]{0ex}{1ex} \\
\displaystyle
k \cdot \pi_k & + & (k{+}1) \cdot \pi_{k+1}
& = &
\mu .
\end{array}
\]\qed
\end{remark}
\fi

%
%
\newcommand{\Bfull}{\circle*{0.80}}
\newcommand{\Bbox}{{\put(-.05,-.05){\framebox(0.1,0.1){$\cdot$}}}}
\newcommand{\Bdiam}{{\multiput(-.07,-.33)(-.13,.13){3}{\line(1,1){.4}}}}
\newcommand{\Bplus}{{\thinlines%
            \put(-.38,0){\line(1,0){.76}}\put(0,-.38){\line(0,1){.76}}}}
\newcommand{\Bprod}{{\thinlines%
            \put(-.4,-.4){\line(1,1){.8}}\put(-.4,.4){\line(1,-1){.8}}}}
\newcommand{\Bempty}{{\color{gray}\circle{0.6}}}
\newcommand{\Plot}[1]{%
    \begin{center}
    \small
    \thicklines
    \ifIEEE
        \setlength{\unitlength}{1.3mm}
    \else
        \setlength{\unitlength}{1.2mm}
    \fi
    \begin{picture}(135,135)(-10,-10)
    \put(-10,000){\vector(1,0){120}}
    \put(114,000){\makebox(0,0)[l]{$\delta$}}
    \put(000,-10){\vector(0,1){120}}
    \put(-04,114){\makebox(0,0)[r]{$R$}}
    \put(-05,-04){\makebox(0,0)[t]{$0$}}
    \multiput(020,000)(020,000){5}{\line(0,-1){2}}
    \put(020,-04){\makebox(0,0)[t]{$0.1$}}
    \put(040,-04){\makebox(0,0)[t]{$0.2$}}
    \put(060,-04){\makebox(0,0)[t]{$0.3$}}
    \put(080,-04){\makebox(0,0)[t]{$0.4$}}
    \put(100,-04){\makebox(0,0)[t]{$0.5$}}
    \multiput(000,010)(000,010){10}{\line(-1,0){2}}
    \put(-04,050){\makebox(0,0)[r]{$0.5$}}
    \put(-04,100){\makebox(0,0)[r]{$1.0$}}
    \put(000,000){#1}
    \end{picture}
    \thinlines
    \setlength{\unitlength}{1pt}
    \end{center}
}
\newcommand{\Legend}{%
        \put(000,000){\multiput(.2,0)(1.92,000){6}{\color{brown}\Bbox}}
        \put(013,000){\makebox(0,0){(a)}}
        \put(016,000){\makebox(0,0)[l]{%
                    $\delta \mapsto \Rate_\SP(\delta,n)$}}
        \put(000,-04){\color{green}\line(1,0){10}}
        \put(000,-04){\multiput(.28,0)(1.89,00){6}{\color{green}\Bdiam}}
        \put(013,-04){\makebox(0,0){(b)}}
        \put(016,-04){\makebox(0,0)[l]{%
                    $\delta \mapsto \Rate_\CM(\delta,n)$
                    with $\Rate_\opt(\delta) = \Rate_\LP(\delta)$}}
        \put(000,-08){\color{blue}\line(1,0){10}}
        \put(000,-08){\multiput(.4,0)(1.84,000){6}{\color{blue}\Bprod}}
        \put(013,-08){\makebox(0,0){(c)}}
        \put(016,-08){\makebox(0,0)[l]{%
                    $\delta \mapsto \Rate_1(\delta,n)$
                    with $\Rate_\LRC(\delta,n) = \Rate_\LP(\delta)$}}
        \put(000,-12){\color{red}\line(1,0){10}}
        \put(013,-12){\makebox(0,0){(d)}}
        \put(016,-12){\makebox(0,0)[l]{%
                    $\delta \mapsto \Rate_2(\delta,n)$}}
        \put(000,-16){\color{cyan}\line(1,0){10}}
        \put(013,-16){\makebox(0,0){(e)}}
        \put(016,-16){\makebox(0,0)[l]{%
                    $\delta \mapsto \Rate_1(\delta,n)$
                    with $\Rate_\LRC(\delta,n) = \Rate_2(\delta,n)$}}
        \put(000,-16){\multiput(.2,0)(1.92,000){6}{\color{cyan}\Bfull}}
        \put(013,-20){\makebox(0,0){(f)}}
        \put(016,-20){\makebox(0,0)[l]{%
                    $\delta \mapsto \Rate_0(\delta,n)$ (lower bound)}}
        \put(000,-20){\multiput(.2,0)(1.92,000){6}{\color{black}\Bfull}}
}
\newcommand{\LegendAlt}{%
        \put(000,000){\multiput(.2,0)(1.92,000){6}{\color{brown}\Bbox}}
        \put(013,000){\makebox(0,0){(a)}}
        \put(016,000){\makebox(0,0)[l]{%
                    $\delta \mapsto \widehat{\Rate}_0(\delta/2,n)$
                    (sphere-packing bound)}}
        \put(000,-04){\color{green}\line(1,0){10}}
        \put(000,-04){\multiput(.3,0)(1.88,000){6}{\color{green}\Bdiam}}
        \put(013,-04){\makebox(0,0){(b)}}
        \put(016,-04){\makebox(0,0)[l]{%
                    $\delta \mapsto \Rate_\CM(\delta,n)$
                    with $\Rate_\opt(\delta) = \Rate_\LP(\delta)$}}
        \put(000,-08){\color{blue}\line(1,0){10}}
        \put(000,-08){\multiput(.4,0)(1.84,000){6}{\color{blue}\Bprod}}
        \put(013,-08){\makebox(0,0){(c)}}
        \put(016,-08){\makebox(0,0)[l]{%
                    $\delta \mapsto \widehat{\Rate}_1(\delta,n)$
                    with $\Rate_\opt(\delta,n) = \Rate_\LP(\delta)$}}
        \put(000,-12){\color{red}\line(1,0){10}}
        \put(013,-12){\makebox(0,0){(d)}}
        \put(016,-12){\makebox(0,0)[l]{%
                    $\delta \mapsto \min \bigl\{
                    \widehat{\Rate}_2(\delta,n),
                    \Rate_\LP(\delta,n) \bigr\}$}}
        \put(013,-16){\makebox(0,0){(e)}}
        \put(016,-16){\makebox(0,0)[l]{%
                    $\delta \mapsto \Rate_0(\delta,n)$ (lower bound)}}
        \put(000,-16){\multiput(.2,0)(1.92,000){6}{\color{black}\Bfull}}
}
\ifPAGELIMIT
\else
\begin{figure*}[hbt]
\Plot{%
    \put(035,095){
	\put(000,000){\Legend}
        \put(000,-24){\multiput(0.3,000)(1.88,000){6}{\Bempty}}
        \put(013,-24){\makebox(0,0){(g)}}
        \put(016,-24){\makebox(0,0)[l]{Eq.~(\ref{eq:n=3})}}
    }
    \put(0.000,66.667){\Bempty}\put(2.000,64.129){\Bempty}
    \put(4.000,62.065){\Bempty}\put(6.000,60.169){\Bempty}
    \put(8.000,58.375){\Bempty}\put(10.000,56.654){\Bempty}
    \put(12.000,54.988){\Bempty}\put(14.000,53.366){\Bempty}
    \put(16.000,51.779){\Bempty}\put(18.000,50.221){\Bempty}
    \put(20.000,48.688){\Bempty}\put(22.000,47.175){\Bempty}
    \put(24.000,45.680){\Bempty}\put(26.000,44.200){\Bempty}
    \put(28.000,42.734){\Bempty}\put(30.000,41.278){\Bempty}
    \put(32.000,39.832){\Bempty}\put(34.000,38.394){\Bempty}
    \put(36.000,36.962){\Bempty}\put(38.000,35.537){\Bempty}
    \put(40.000,34.116){\Bempty}\put(42.000,32.700){\Bempty}
    \put(44.000,31.286){\Bempty}\put(46.000,29.874){\Bempty}
    \put(48.000,28.465){\Bempty}\put(50.000,27.056){\Bempty}
    \put(52.000,25.647){\Bempty}\put(54.000,24.236){\Bempty}
    \put(56.000,22.825){\Bempty}\put(58.000,21.408){\Bempty}
    \put(60.000,19.981){\Bempty}\put(62.000,18.568){\Bempty}
    \put(64.000,17.176){\Bempty}\put(66.000,15.809){\Bempty}
    \put(68.000,14.467){\Bempty}\put(70.000,13.154){\Bempty}
    \put(72.000,11.871){\Bempty}\put(74.000,10.622){\Bempty}
    \put(76.000,9.410){\Bempty}\put(78.000,8.238){\Bempty}
    \put(80.000,7.111){\Bempty}\put(82.000,6.033){\Bempty}
    \put(84.000,5.010){\Bempty}\put(86.000,4.048){\Bempty}
    \put(88.000,3.155){\Bempty}\put(90.000,2.340){\Bempty}
    \put(92.000,1.614){\Bempty}\put(94.000,0.992){\Bempty}
    \put(96.000,0.493){\Bempty}\put(98.000,0.145){\Bempty}
    \put(100.000,0.000){\Bempty}
    \put(000,000){\color{brown}
        \put(0.000,66.667){\Bbox}\put(2.000,64.146){\Bbox}
        \put(4.000,62.129){\Bbox}\put(6.000,60.303){\Bbox}
        \put(8.000,58.602){\Bbox}\put(10.000,56.995){\Bbox}
        \put(12.000,55.464){\Bbox}\put(14.000,53.996){\Bbox}
        \put(16.000,52.582){\Bbox}\put(18.000,51.216){\Bbox}
        \put(20.000,49.894){\Bbox}\put(22.000,48.610){\Bbox}
        \put(24.000,47.363){\Bbox}\put(26.000,46.148){\Bbox}
        \put(28.000,44.964){\Bbox}\put(30.000,43.809){\Bbox}
        \put(32.000,42.681){\Bbox}\put(34.000,41.579){\Bbox}
        \put(36.000,40.501){\Bbox}\put(38.000,39.446){\Bbox}
        \put(40.000,38.414){\Bbox}\put(42.000,37.402){\Bbox}
        \put(44.000,36.411){\Bbox}\put(46.000,35.440){\Bbox}
        \put(48.000,34.488){\Bbox}\put(50.000,33.554){\Bbox}
        \put(52.000,32.637){\Bbox}\put(54.000,31.738){\Bbox}
        \put(56.000,30.856){\Bbox}\put(58.000,29.990){\Bbox}
        \put(60.000,29.140){\Bbox}\put(62.000,28.305){\Bbox}
        \put(64.000,27.486){\Bbox}\put(66.000,26.681){\Bbox}
        \put(68.000,25.891){\Bbox}\put(70.000,25.115){\Bbox}
        \put(72.000,24.353){\Bbox}\put(74.000,23.605){\Bbox}
        \put(76.000,22.870){\Bbox}\put(78.000,22.149){\Bbox}
        \put(80.000,21.441){\Bbox}\put(82.000,20.745){\Bbox}
        \put(84.000,20.063){\Bbox}\put(86.000,19.392){\Bbox}
        \put(88.000,18.735){\Bbox}\put(90.000,18.089){\Bbox}
        \put(92.000,17.455){\Bbox}\put(94.000,16.834){\Bbox}
        \put(96.000,16.224){\Bbox}\put(98.000,15.626){\Bbox}
        \put(100.000,15.040){\Bbox}
    \color{black}
        \put(0.000,66.667){\Bfull}\put(2.000,62.129){\Bfull}
        \put(4.000,58.602){\Bfull}\put(6.000,55.464){\Bfull}
        \put(8.000,52.582){\Bfull}\put(10.000,49.894){\Bfull}
        \put(12.000,47.363){\Bfull}\put(14.000,44.964){\Bfull}
        \put(16.000,42.681){\Bfull}\put(18.000,40.501){\Bfull}
        \put(20.000,38.414){\Bfull}\put(22.000,36.411){\Bfull}
        \put(24.000,34.488){\Bfull}\put(26.000,32.637){\Bfull}
        \put(28.000,30.856){\Bfull}\put(30.000,29.140){\Bfull}
        \put(32.000,27.486){\Bfull}\put(34.000,25.891){\Bfull}
        \put(36.000,24.353){\Bfull}\put(38.000,22.870){\Bfull}
        \put(40.000,21.441){\Bfull}\put(42.000,20.063){\Bfull}
        \put(44.000,18.735){\Bfull}\put(46.000,17.455){\Bfull}
        \put(48.000,16.224){\Bfull}\put(50.000,15.040){\Bfull}
        \put(52.000,13.902){\Bfull}\put(54.000,12.810){\Bfull}
        \put(56.000,11.762){\Bfull}\put(58.000,10.759){\Bfull}
        \put(60.000,9.800){\Bfull}\put(62.000,8.884){\Bfull}
        \put(64.000,8.012){\Bfull}\put(66.000,7.184){\Bfull}
        \put(68.000,6.399){\Bfull}\put(70.000,5.657){\Bfull}
        \put(72.000,4.958){\Bfull}\put(74.000,4.303){\Bfull}
        \put(76.000,3.692){\Bfull}\put(78.000,3.125){\Bfull}
        \put(80.000,2.602){\Bfull}\put(82.000,2.125){\Bfull}
        \put(84.000,1.693){\Bfull}\put(86.000,1.308){\Bfull}
        \put(88.000,0.970){\Bfull}\put(90.000,0.680){\Bfull}
        \put(92.000,0.440){\Bfull}\put(94.000,0.250){\Bfull}
        \put(96.000,0.113){\Bfull}\put(98.000,0.028){\Bfull}
        \put(100.000,0.000){\Bfull}
    \color{green}
        \qbezier(0.000,66.667)(1.000,65.934)(2.000,65.202)
        \qbezier(2.000,65.202)(2.966,64.494)(4.000,63.736)
        \qbezier(4.000,63.736)(4.316,63.505)(6.000,62.271)
        \qbezier(6.000,62.271)(7.921,60.864)(8.000,60.806)
        \qbezier(8.000,60.806)(9.000,60.074)(10.000,59.341)
        \qbezier(10.000,59.341)(11.146,58.502)(12.000,57.876)
        \qbezier(12.000,57.876)(13.000,57.143)(14.000,56.411)
        \qbezier(14.000,56.411)(14.809,55.819)(16.000,54.946)
        \qbezier(16.000,54.946)(17.001,54.212)(18.000,53.481)
        \qbezier(18.000,53.481)(19.000,52.748)(20.000,52.016)
        \qbezier(20.000,52.016)(20.423,51.705)(22.000,50.551)
        \qbezier(22.000,50.551)(22.004,50.547)(24.000,49.085)
        \qbezier(24.000,49.085)(25.000,48.353)(26.000,47.620)
        \qbezier(26.000,47.620)(27.917,46.216)(28.000,46.155)
        \qbezier(28.000,46.155)(29.000,45.423)(30.000,44.690)
        \qbezier(30.000,44.690)(30.883,44.043)(32.000,43.225)
        \qbezier(32.000,43.225)(32.721,42.697)(34.000,41.760)
        \qbezier(34.000,41.760)(35.000,41.027)(36.000,40.295)
        \qbezier(36.000,40.295)(37.000,39.562)(38.000,38.830)
        \qbezier(38.000,38.830)(39.000,38.097)(40.000,37.365)
        \qbezier(40.000,37.365)(41.000,36.632)(42.000,35.899)
        \qbezier(42.000,35.899)(43.179,35.036)(44.000,34.434)
        \qbezier(44.000,34.434)(44.474,34.087)(46.000,32.969)
        \qbezier(46.000,32.969)(47.093,32.169)(48.000,31.504)
        \qbezier(48.000,31.504)(48.568,31.088)(50.000,30.039)
        \qbezier(50.000,30.039)(50.088,29.974)(52.000,28.574)
        \qbezier(52.000,28.574)(53.965,27.134)(54.000,27.109)
        \qbezier(54.000,27.109)(55.000,26.376)(56.000,25.644)
        \qbezier(56.000,25.644)(57.208,24.759)(58.000,24.179)
        \qbezier(58.000,24.179)(58.582,23.752)(60.000,22.713)
        \qbezier(60.000,22.713)(61.000,21.981)(62.000,21.248)
        \qbezier(62.000,21.248)(63.000,20.516)(64.000,19.783)
        \qbezier(64.000,19.783)(64.399,19.491)(66.000,18.318)
        \qbezier(66.000,18.318)(67.798,17.001)(68.000,16.853)
        \qbezier(68.000,16.853)(68.054,16.814)(70.000,15.388)
        \qbezier(70.000,15.388)(71.000,14.655)(72.000,13.923)
        \qbezier(72.000,13.923)(73.000,13.190)(74.000,12.458)
        \qbezier(74.000,12.458)(75.048,11.690)(76.000,10.993)
        \qbezier(76.000,10.993)(77.510,9.887)(78.000,9.537)
        \qbezier(78.000,9.537)(79.013,8.813)(80.000,8.147)
        \qbezier(80.000,8.147)(81.022,7.457)(82.000,6.838)
        \qbezier(82.000,6.838)(83.014,6.195)(84.000,5.615)
        \qbezier(84.000,5.615)(85.006,5.022)(86.000,4.484)
        \qbezier(86.000,4.484)(87.007,3.938)(88.000,3.451)
        \qbezier(88.000,3.451)(89.027,2.948)(90.000,2.527)
        \qbezier(90.000,2.527)(91.028,2.081)(92.000,1.719)
        \qbezier(92.000,1.719)(93.013,1.342)(94.000,1.041)
        \qbezier(94.000,1.041)(95.018,0.732)(96.000,0.509)
        \qbezier(96.000,0.509)(97.029,0.277)(98.000,0.147)
        \qbezier(98.000,0.147)(99.094,0.002)(100.000,0.000)
        \put(0.000,66.667){\Bbox}\put(2.000,65.202){\Bbox}
        \put(4.000,63.736){\Bbox}\put(6.000,62.271){\Bbox}
        \put(8.000,60.806){\Bbox}\put(10.000,59.341){\Bbox}
        \put(12.000,57.876){\Bbox}\put(14.000,56.411){\Bbox}
        \put(16.000,54.946){\Bbox}\put(18.000,53.481){\Bbox}
        \put(20.000,52.016){\Bbox}\put(22.000,50.550){\Bbox}
        \put(24.000,49.085){\Bbox}\put(26.000,47.620){\Bbox}
        \put(28.000,46.155){\Bbox}\put(30.000,44.690){\Bbox}
        \put(32.000,43.225){\Bbox}\put(34.000,41.760){\Bbox}
        \put(36.000,40.295){\Bbox}\put(38.000,38.830){\Bbox}
        \put(40.000,37.365){\Bbox}\put(42.000,35.899){\Bbox}
        \put(44.000,34.434){\Bbox}\put(46.000,32.969){\Bbox}
        \put(48.000,31.504){\Bbox}\put(50.000,30.039){\Bbox}
        \put(52.000,28.574){\Bbox}\put(54.000,27.109){\Bbox}
        \put(56.000,25.644){\Bbox}\put(58.000,24.179){\Bbox}
        \put(60.000,22.713){\Bbox}\put(62.000,21.248){\Bbox}
        \put(64.000,19.783){\Bbox}\put(66.000,18.318){\Bbox}
        \put(68.000,16.853){\Bbox}\put(70.000,15.388){\Bbox}
        \put(72.000,13.923){\Bbox}\put(74.000,12.458){\Bbox}
        \put(76.000,10.993){\Bbox}\put(78.000,9.537){\Bbox}
        \put(80.000,8.147){\Bbox}\put(82.000,6.838){\Bbox}
        \put(84.000,5.615){\Bbox}\put(86.000,4.484){\Bbox}
        \put(88.000,3.451){\Bbox}\put(90.000,2.527){\Bbox}
        \put(92.000,1.719){\Bbox}\put(94.000,1.041){\Bbox}
        \put(96.000,0.509){\Bbox}\put(98.000,0.147){\Bbox}
        \put(100.000,0.000){\Bbox}
    \color{red}
        \qbezier(0.000,66.667)(0.351,65.933)(2.000,64.133)
        \qbezier(2.000,64.133)(2.879,63.173)(4.000,62.080)
        \qbezier(4.000,62.080)(4.927,61.176)(6.000,60.200)
        \qbezier(6.000,60.200)(6.947,59.339)(8.000,58.429)
        \qbezier(8.000,58.429)(8.958,57.601)(10.000,56.736)
        \qbezier(10.000,56.736)(10.965,55.935)(12.000,55.103)
        \qbezier(12.000,55.103)(12.969,54.325)(14.000,53.519)
        \qbezier(14.000,53.519)(14.972,52.759)(16.000,51.976)
        \qbezier(16.000,51.976)(16.975,51.232)(18.000,50.466)
        \qbezier(18.000,50.466)(18.977,49.737)(20.000,48.987)
        \qbezier(20.000,48.987)(20.978,48.269)(22.000,47.532)
        \qbezier(22.000,47.532)(22.979,46.826)(24.000,46.101)
        \qbezier(24.000,46.101)(24.980,45.404)(26.000,44.688)
        \qbezier(26.000,44.688)(26.980,44.001)(28.000,43.294)
        \qbezier(28.000,43.294)(28.980,42.614)(30.000,41.915)
        \qbezier(30.000,41.915)(30.980,41.243)(32.000,40.550)
        \qbezier(32.000,40.550)(32.980,39.884)(34.000,39.197)
        \qbezier(34.000,39.197)(34.979,38.538)(36.000,37.856)
        \qbezier(36.000,37.856)(36.978,37.202)(38.000,36.524)
        \qbezier(38.000,36.524)(38.976,35.876)(40.000,35.200)
        \qbezier(40.000,35.200)(40.973,34.558)(42.000,33.883)
        \qbezier(42.000,33.883)(42.969,33.247)(44.000,32.573)
        \qbezier(44.000,32.573)(44.963,31.944)(46.000,31.268)
        \qbezier(46.000,31.268)(46.951,30.649)(48.000,29.967)
        \qbezier(48.000,29.967)(48.929,29.364)(50.000,28.670)
        \qbezier(50.000,28.670)(50.868,28.107)(52.000,27.374)
        \qbezier(52.000,27.374)(52.225,27.228)(54.000,26.079)
        \qbezier(54.000,26.079)(55.220,25.288)(56.000,24.783)
        \qbezier(56.000,24.783)(57.104,24.068)(58.000,23.486)
        \qbezier(58.000,23.486)(59.073,22.790)(60.000,22.186)
        \qbezier(60.000,22.186)(61.060,21.496)(62.000,20.881)
        \qbezier(62.000,20.881)(63.055,20.190)(64.000,19.568)
        \qbezier(64.000,19.568)(65.055,18.873)(66.000,18.245)
        \qbezier(66.000,18.245)(67.061,17.541)(68.000,16.909)
        \qbezier(68.000,16.909)(69.077,16.184)(70.000,15.552)
        \qbezier(70.000,15.552)(71.138,14.771)(72.000,14.161)
        \qbezier(72.000,14.161)(73.000,13.350)(74.000,12.539)
        \qbezier(74.000,12.539)(75.009,11.746)(76.000,11.002)
        \qbezier(76.000,11.002)(77.010,10.244)(78.000,9.537)
        \qbezier(78.000,9.537)(79.010,8.815)(80.000,8.147)
        \qbezier(80.000,8.147)(81.011,7.465)(82.000,6.838)
        \qbezier(82.000,6.838)(83.011,6.197)(84.000,5.615)
        \qbezier(84.000,5.615)(85.012,5.018)(86.000,4.484)
        \qbezier(86.000,4.484)(87.013,3.935)(88.000,3.451)
        \qbezier(88.000,3.451)(89.015,2.954)(90.000,2.527)
        \qbezier(90.000,2.527)(91.017,2.085)(92.000,1.719)
        \qbezier(92.000,1.719)(93.020,1.339)(94.000,1.041)
        \qbezier(94.000,1.041)(95.024,0.730)(96.000,0.509)
        \qbezier(96.000,0.509)(97.035,0.276)(98.000,0.147)
        \qbezier(98.000,0.147)(99.099,0.001)(100.000,0.000)
    \color{blue}
        \qbezier(0.000,66.667)(0.362,65.915)(2.000,64.147)
        \qbezier(2.000,64.147)(2.887,63.189)(4.000,62.129)
        \qbezier(4.000,62.129)(4.934,61.239)(6.000,60.303)
        \qbezier(6.000,60.303)(6.951,59.467)(8.000,58.602)
        \qbezier(8.000,58.602)(8.964,57.807)(10.000,56.995)
        \qbezier(10.000,56.995)(10.971,56.235)(12.000,55.464)
        \qbezier(12.000,55.464)(12.976,54.733)(14.000,53.996)
        \qbezier(14.000,53.996)(14.971,53.297)(16.000,52.582)
        \qbezier(16.000,52.582)(16.043,52.552)(18.000,51.193)
        \qbezier(18.000,51.193)(19.752,49.977)(20.000,49.804)
        \qbezier(20.000,49.804)(21.000,49.110)(22.000,48.415)
        \qbezier(22.000,48.415)(23.000,47.721)(24.000,47.026)
        \qbezier(24.000,47.026)(25.790,45.784)(26.000,45.637)
        \qbezier(26.000,45.637)(26.082,45.581)(28.000,44.249)
        \qbezier(28.000,44.249)(29.644,43.107)(30.000,42.860)
        \qbezier(30.000,42.860)(31.000,42.165)(32.000,41.471)
        \qbezier(32.000,41.471)(33.000,40.776)(34.000,40.082)
        \qbezier(34.000,40.082)(35.000,39.387)(36.000,38.693)
        \qbezier(36.000,38.693)(37.023,37.982)(38.000,37.304)
        \qbezier(38.000,37.304)(38.481,36.970)(40.000,35.915)
        \qbezier(40.000,35.915)(41.000,35.221)(42.000,34.526)
        \qbezier(42.000,34.526)(43.530,33.464)(44.000,33.137)
        \qbezier(44.000,33.137)(45.000,32.443)(46.000,31.749)
        \qbezier(46.000,31.749)(46.241,31.581)(48.000,30.360)
        \qbezier(48.000,30.360)(49.000,29.665)(50.000,28.971)
        \qbezier(50.000,28.971)(50.060,28.929)(52.000,27.582)
        \qbezier(52.000,27.582)(53.000,26.887)(54.000,26.193)
        \qbezier(54.000,26.193)(55.000,25.499)(56.000,24.804)
        \qbezier(56.000,24.804)(57.000,24.110)(58.000,23.415)
        \qbezier(58.000,23.415)(59.408,22.438)(60.000,22.026)
        \qbezier(60.000,22.026)(61.000,21.332)(62.000,20.637)
        \qbezier(62.000,20.637)(63.000,19.943)(64.000,19.249)
        \qbezier(64.000,19.249)(65.050,18.520)(66.000,17.860)
        \qbezier(66.000,17.860)(66.346,17.619)(68.000,16.471)
        \qbezier(68.000,16.471)(69.847,15.188)(70.000,15.082)
        \qbezier(70.000,15.082)(71.000,14.388)(72.000,13.693)
        \qbezier(72.000,13.693)(73.000,12.999)(74.000,12.304)
        \qbezier(74.000,12.304)(75.000,11.610)(76.000,10.915)
        \qbezier(76.000,10.915)(77.000,10.221)(78.000,9.526)
        \qbezier(78.000,9.526)(79.501,8.484)(80.000,8.147)
        \qbezier(80.000,8.147)(81.005,7.468)(82.000,6.838)
        \qbezier(82.000,6.838)(83.019,6.192)(84.000,5.615)
        \qbezier(84.000,5.615)(85.025,5.011)(86.000,4.484)
        \qbezier(86.000,4.484)(87.014,3.935)(88.000,3.451)
        \qbezier(88.000,3.451)(89.008,2.957)(90.000,2.527)
        \qbezier(90.000,2.527)(91.010,2.088)(92.000,1.719)
        \qbezier(92.000,1.719)(93.037,1.333)(94.000,1.041)
        \qbezier(94.000,1.041)(95.041,0.726)(96.000,0.509)
        \qbezier(96.000,0.509)(97.029,0.277)(98.000,0.147)
        \qbezier(98.000,0.147)(99.078,0.004)(100.000,0.000)
        \put(0.000,66.667){\Bplus}\put(2.000,64.147){\Bplus}
        \put(4.000,62.129){\Bplus}\put(6.000,60.303){\Bplus}
        \put(8.000,58.602){\Bplus}\put(10.000,56.995){\Bplus}
        \put(12.000,55.464){\Bplus}\put(14.000,53.996){\Bplus}
        \put(16.000,52.582){\Bplus}\put(18.000,51.193){\Bplus}
        \put(20.000,49.804){\Bplus}\put(22.000,48.415){\Bplus}
        \put(24.000,47.026){\Bplus}\put(26.000,45.637){\Bplus}
        \put(28.000,44.249){\Bplus}\put(30.000,42.860){\Bplus}
        \put(32.000,41.471){\Bplus}\put(34.000,40.082){\Bplus}
        \put(36.000,38.693){\Bplus}\put(38.000,37.304){\Bplus}
        \put(40.000,35.915){\Bplus}\put(42.000,34.526){\Bplus}
        \put(44.000,33.137){\Bplus}\put(46.000,31.749){\Bplus}
        \put(48.000,30.360){\Bplus}\put(50.000,28.971){\Bplus}
        \put(52.000,27.582){\Bplus}\put(54.000,26.193){\Bplus}
        \put(56.000,24.804){\Bplus}\put(58.000,23.415){\Bplus}
        \put(60.000,22.026){\Bplus}\put(62.000,20.637){\Bplus}
        \put(64.000,19.249){\Bplus}\put(66.000,17.860){\Bplus}
        \put(68.000,16.471){\Bplus}\put(70.000,15.082){\Bplus}
        \put(72.000,13.693){\Bplus}\put(74.000,12.304){\Bplus}
        \put(76.000,10.915){\Bplus}\put(78.000,9.526){\Bplus}
        \put(80.000,8.147){\Bplus}\put(82.000,6.838){\Bplus}
        \put(84.000,5.615){\Bplus}\put(86.000,4.484){\Bplus}
        \put(88.000,3.451){\Bplus}\put(90.000,2.527){\Bplus}
        \put(92.000,1.719){\Bplus}\put(94.000,1.041){\Bplus}
        \put(96.000,0.509){\Bplus}\put(98.000,0.147){\Bplus}
        \put(100.000,0.000){\Bplus}
    \color{cyan}
        \qbezier(0.000,66.667)(0.351,65.933)(2.000,64.133)
        \qbezier(2.000,64.133)(2.879,63.173)(4.000,62.080)
        \qbezier(4.000,62.080)(4.927,61.176)(6.000,60.200)
        \qbezier(6.000,60.200)(6.947,59.339)(8.000,58.429)
        \qbezier(8.000,58.429)(8.958,57.601)(10.000,56.736)
        \qbezier(10.000,56.736)(10.965,55.935)(12.000,55.103)
        \qbezier(12.000,55.103)(12.969,54.325)(14.000,53.519)
        \qbezier(14.000,53.519)(14.972,52.759)(16.000,51.976)
        \qbezier(16.000,51.976)(16.975,51.232)(18.000,50.466)
        \qbezier(18.000,50.466)(18.977,49.737)(20.000,48.987)
        \qbezier(20.000,48.987)(20.978,48.269)(22.000,47.532)
        \qbezier(22.000,47.532)(22.979,46.826)(24.000,46.101)
        \qbezier(24.000,46.101)(24.980,45.404)(26.000,44.689)
        \qbezier(26.000,44.689)(26.980,44.001)(28.000,43.294)
        \qbezier(28.000,43.294)(28.980,42.614)(30.000,41.915)
        \qbezier(30.000,41.915)(30.980,41.243)(32.000,40.550)
        \qbezier(32.000,40.550)(32.980,39.885)(34.000,39.197)
        \qbezier(34.000,39.197)(34.979,38.538)(36.000,37.856)
        \qbezier(36.000,37.856)(36.978,37.202)(38.000,36.524)
        \qbezier(38.000,36.524)(38.976,35.876)(40.000,35.200)
        \qbezier(40.000,35.200)(40.973,34.558)(42.000,33.883)
        \qbezier(42.000,33.883)(42.969,33.247)(44.000,32.573)
        \qbezier(44.000,32.573)(44.963,31.944)(46.000,31.268)
        \qbezier(46.000,31.268)(46.951,30.649)(48.000,29.967)
        \qbezier(48.000,29.967)(48.929,29.364)(50.000,28.670)
        \qbezier(50.000,28.670)(50.868,28.107)(52.000,27.374)
        \qbezier(52.000,27.374)(52.225,27.228)(54.000,26.079)
        \qbezier(54.000,26.079)(55.220,25.288)(56.000,24.783)
        \qbezier(56.000,24.783)(56.449,24.492)(58.000,23.415)
        \qbezier(58.000,23.415)(59.000,22.721)(60.000,22.026)
        \qbezier(60.000,22.026)(61.000,21.332)(62.000,20.637)
        \qbezier(62.000,20.637)(63.000,19.943)(64.000,19.249)
        \qbezier(64.000,19.249)(65.000,18.554)(66.000,17.860)
        \qbezier(66.000,17.860)(67.000,17.165)(68.000,16.471)
        \qbezier(68.000,16.471)(69.000,15.776)(70.000,15.082)
        \qbezier(70.000,15.082)(71.000,14.387)(72.000,13.693)
        \qbezier(72.000,13.693)(73.000,12.999)(74.000,12.304)
        \qbezier(74.000,12.304)(75.000,11.610)(76.000,10.915)
        \qbezier(76.000,10.915)(77.000,10.221)(78.000,9.526)
        \qbezier(78.000,9.526)(79.514,8.475)(80.000,8.147)
        \qbezier(80.000,8.147)(81.011,7.465)(82.000,6.838)
        \qbezier(82.000,6.838)(83.011,6.197)(84.000,5.615)
        \qbezier(84.000,5.615)(85.012,5.018)(86.000,4.484)
        \qbezier(86.000,4.484)(87.013,3.935)(88.000,3.451)
        \qbezier(88.000,3.451)(89.015,2.954)(90.000,2.527)
        \qbezier(90.000,2.527)(91.017,2.085)(92.000,1.719)
        \qbezier(92.000,1.719)(93.020,1.339)(94.000,1.041)
        \qbezier(94.000,1.041)(95.024,0.730)(96.000,0.509)
        \qbezier(96.000,0.509)(97.035,0.276)(98.000,0.147)
        \qbezier(98.000,0.147)(99.097,0.002)(100.000,0.000)
        \put(0.000,66.667){\Bfull}\put(2.000,64.133){\Bfull}
        \put(4.000,62.080){\Bfull}\put(6.000,60.200){\Bfull}
        \put(8.000,58.429){\Bfull}\put(10.000,56.736){\Bfull}
        \put(12.000,55.103){\Bfull}\put(14.000,53.519){\Bfull}
        \put(16.000,51.976){\Bfull}\put(18.000,50.466){\Bfull}
        \put(20.000,48.987){\Bfull}\put(22.000,47.532){\Bfull}
        \put(24.000,46.101){\Bfull}\put(26.000,44.689){\Bfull}
        \put(28.000,43.294){\Bfull}\put(30.000,41.915){\Bfull}
        \put(32.000,40.550){\Bfull}\put(34.000,39.197){\Bfull}
        \put(36.000,37.856){\Bfull}\put(38.000,36.524){\Bfull}
        \put(40.000,35.200){\Bfull}\put(42.000,33.883){\Bfull}
        \put(44.000,32.573){\Bfull}\put(46.000,31.268){\Bfull}
        \put(48.000,29.967){\Bfull}\put(50.000,28.670){\Bfull}
        \put(52.000,27.374){\Bfull}\put(54.000,26.079){\Bfull}
        \put(56.000,24.783){\Bfull}\put(58.000,23.415){\Bfull}
        \put(60.000,22.026){\Bfull}\put(62.000,20.637){\Bfull}
        \put(64.000,19.249){\Bfull}\put(66.000,17.860){\Bfull}
        \put(68.000,16.471){\Bfull}\put(70.000,15.082){\Bfull}
        \put(72.000,13.693){\Bfull}\put(74.000,12.304){\Bfull}
        \put(76.000,10.915){\Bfull}\put(78.000,9.526){\Bfull}
        \put(80.000,8.147){\Bfull}\put(82.000,6.838){\Bfull}
        \put(84.000,5.615){\Bfull}\put(86.000,4.484){\Bfull}
        \put(88.000,3.451){\Bfull}\put(90.000,2.527){\Bfull}
        \put(92.000,1.719){\Bfull}\put(94.000,1.041){\Bfull}
        \put(96.000,0.509){\Bfull}\put(98.000,0.147){\Bfull}
        \put(100.000,0.000){\Bfull}
    }
}
\caption{Bounds for $q = 2$ and $n = 3$.}
\label{fig:q=2,n=3}
\end{figure*}

\begin{figure*}[hbt]
\Plot{%
    \put(035,095){
	\put(000,000){\Legend}
    }
    \put(000,000){\color{brown}
        \put(0.000,75.000){\Bbox}\put(2.000,72.334){\Bbox}
        \put(4.000,70.171){\Bbox}\put(6.000,68.201){\Bbox}
        \put(8.000,66.356){\Bbox}\put(10.000,64.606){\Bbox}
        \put(12.000,62.933){\Bbox}\put(14.000,61.323){\Bbox}
        \put(16.000,59.769){\Bbox}\put(18.000,58.263){\Bbox}
        \put(20.000,56.802){\Bbox}\put(22.000,55.380){\Bbox}
        \put(24.000,53.995){\Bbox}\put(26.000,52.644){\Bbox}
        \put(28.000,51.324){\Bbox}\put(30.000,50.034){\Bbox}
        \put(32.000,48.772){\Bbox}\put(34.000,47.537){\Bbox}
        \put(36.000,46.327){\Bbox}\put(38.000,45.141){\Bbox}
        \put(40.000,43.978){\Bbox}\put(42.000,42.837){\Bbox}
        \put(44.000,41.717){\Bbox}\put(46.000,40.619){\Bbox}
        \put(48.000,39.540){\Bbox}\put(50.000,38.480){\Bbox}
        \put(52.000,37.439){\Bbox}\put(54.000,36.417){\Bbox}
        \put(56.000,35.412){\Bbox}\put(58.000,34.425){\Bbox}
        \put(60.000,33.455){\Bbox}\put(62.000,32.501){\Bbox}
        \put(64.000,31.564){\Bbox}\put(66.000,30.643){\Bbox}
        \put(68.000,29.737){\Bbox}\put(70.000,28.847){\Bbox}
        \put(72.000,27.972){\Bbox}\put(74.000,27.112){\Bbox}
        \put(76.000,26.267){\Bbox}\put(78.000,25.436){\Bbox}
        \put(80.000,24.620){\Bbox}\put(82.000,23.818){\Bbox}
        \put(84.000,23.030){\Bbox}\put(86.000,22.256){\Bbox}
        \put(88.000,21.496){\Bbox}\put(90.000,20.749){\Bbox}
        \put(92.000,20.016){\Bbox}\put(94.000,19.297){\Bbox}
        \put(96.000,18.591){\Bbox}\put(98.000,17.898){\Bbox}
        \put(100.000,17.218){\Bbox}
    \color{black}
        \put(0.000,75.000){\Bfull}\put(2.000,70.171){\Bfull}
        \put(4.000,66.356){\Bfull}\put(6.000,62.933){\Bfull}
        \put(8.000,59.769){\Bfull}\put(10.000,56.802){\Bfull}
        \put(12.000,53.995){\Bfull}\put(14.000,51.324){\Bfull}
        \put(16.000,48.772){\Bfull}\put(18.000,46.327){\Bfull}
        \put(20.000,43.978){\Bfull}\put(22.000,41.717){\Bfull}
        \put(24.000,39.540){\Bfull}\put(26.000,37.439){\Bfull}
        \put(28.000,35.412){\Bfull}\put(30.000,33.455){\Bfull}
        \put(32.000,31.564){\Bfull}\put(34.000,29.737){\Bfull}
        \put(36.000,27.972){\Bfull}\put(38.000,26.267){\Bfull}
        \put(40.000,24.620){\Bfull}\put(42.000,23.030){\Bfull}
        \put(44.000,21.496){\Bfull}\put(46.000,20.016){\Bfull}
        \put(48.000,18.591){\Bfull}\put(50.000,17.218){\Bfull}
        \put(52.000,15.898){\Bfull}\put(54.000,14.630){\Bfull}
        \put(56.000,13.414){\Bfull}\put(58.000,12.249){\Bfull}
        \put(60.000,11.136){\Bfull}\put(62.000,10.074){\Bfull}
        \put(64.000,9.063){\Bfull}\put(66.000,8.104){\Bfull}
        \put(68.000,7.197){\Bfull}\put(70.000,6.341){\Bfull}
        \put(72.000,5.537){\Bfull}\put(74.000,4.786){\Bfull}
        \put(76.000,4.088){\Bfull}\put(78.000,3.443){\Bfull}
        \put(80.000,2.851){\Bfull}\put(82.000,2.314){\Bfull}
        \put(84.000,1.832){\Bfull}\put(86.000,1.405){\Bfull}
        \put(88.000,1.034){\Bfull}\put(90.000,0.719){\Bfull}
        \put(92.000,0.461){\Bfull}\put(94.000,0.259){\Bfull}
        \put(96.000,0.115){\Bfull}\put(98.000,0.029){\Bfull}
        \put(100.000,0.000){\Bfull}
    \color{green}
        \qbezier(0.000,75.000)(0.889,74.248)(2.000,73.309)
        \qbezier(2.000,73.309)(3.480,72.057)(4.000,71.618)
        \qbezier(4.000,71.618)(4.317,71.350)(6.000,69.927)
        \qbezier(6.000,69.927)(7.539,68.626)(8.000,68.236)
        \qbezier(8.000,68.236)(9.000,67.390)(10.000,66.545)
        \qbezier(10.000,66.545)(11.000,65.699)(12.000,64.854)
        \qbezier(12.000,64.854)(13.000,64.008)(14.000,63.163)
        \qbezier(14.000,63.163)(15.000,62.317)(16.000,61.471)
        \qbezier(16.000,61.471)(17.000,60.626)(18.000,59.780)
        \qbezier(18.000,59.780)(19.000,58.935)(20.000,58.089)
        \qbezier(20.000,58.089)(21.058,57.195)(22.000,56.398)
        \qbezier(22.000,56.398)(23.386,55.226)(24.000,54.707)
        \qbezier(24.000,54.707)(24.858,53.982)(26.000,53.016)
        \qbezier(26.000,53.016)(27.000,52.171)(28.000,51.325)
        \qbezier(28.000,51.325)(29.000,50.480)(30.000,49.634)
        \qbezier(30.000,49.634)(31.000,48.788)(32.000,47.943)
        \qbezier(32.000,47.943)(33.000,47.097)(34.000,46.252)
        \qbezier(34.000,46.252)(34.785,45.588)(36.000,44.561)
        \qbezier(36.000,44.561)(37.000,43.715)(38.000,42.870)
        \qbezier(38.000,42.870)(39.094,41.945)(40.000,41.179)
        \qbezier(40.000,41.179)(41.000,40.333)(42.000,39.488)
        \qbezier(42.000,39.488)(43.259,38.423)(44.000,37.796)
        \qbezier(44.000,37.796)(45.000,36.951)(46.000,36.105)
        \qbezier(46.000,36.105)(47.000,35.260)(48.000,34.414)
        \qbezier(48.000,34.414)(49.000,33.569)(50.000,32.723)
        \qbezier(50.000,32.723)(51.531,31.429)(52.000,31.032)
        \qbezier(52.000,31.032)(53.000,30.187)(54.000,29.341)
        \qbezier(54.000,29.341)(55.000,28.496)(56.000,27.650)
        \qbezier(56.000,27.650)(57.000,26.805)(58.000,25.959)
        \qbezier(58.000,25.959)(59.000,25.113)(60.000,24.268)
        \qbezier(60.000,24.268)(61.000,23.422)(62.000,22.577)
        \qbezier(62.000,22.577)(62.911,21.807)(64.000,20.886)
        \qbezier(64.000,20.886)(65.543,19.581)(66.000,19.195)
        \qbezier(66.000,19.195)(67.000,18.349)(68.000,17.504)
        \qbezier(68.000,17.504)(69.445,16.282)(70.000,15.813)
        \qbezier(70.000,15.813)(71.153,14.838)(72.000,14.144)
        \qbezier(72.000,14.144)(73.005,13.321)(74.000,12.539)
        \qbezier(74.000,12.539)(75.005,11.749)(76.000,11.002)
        \qbezier(76.000,11.002)(77.005,10.247)(78.000,9.537)
        \qbezier(78.000,9.537)(79.021,8.808)(80.000,8.147)
        \qbezier(80.000,8.147)(81.022,7.457)(82.000,6.838)
        \qbezier(82.000,6.838)(83.007,6.200)(84.000,5.615)
        \qbezier(84.000,5.615)(85.006,5.022)(86.000,4.484)
        \qbezier(86.000,4.484)(87.007,3.938)(88.000,3.451)
        \qbezier(88.000,3.451)(89.031,2.946)(90.000,2.527)
        \qbezier(90.000,2.527)(91.033,2.080)(92.000,1.719)
        \qbezier(92.000,1.719)(93.013,1.342)(94.000,1.041)
        \qbezier(94.000,1.041)(95.018,0.732)(96.000,0.509)
        \qbezier(96.000,0.509)(97.029,0.277)(98.000,0.147)
        \qbezier(98.000,0.147)(99.094,0.002)(100.000,0.000)
        \put(0.000,75.000){\Bbox}\put(2.000,73.309){\Bbox}
        \put(4.000,71.618){\Bbox}\put(6.000,69.927){\Bbox}
        \put(8.000,68.236){\Bbox}\put(10.000,66.545){\Bbox}
        \put(12.000,64.854){\Bbox}\put(14.000,63.163){\Bbox}
        \put(16.000,61.471){\Bbox}\put(18.000,59.780){\Bbox}
        \put(20.000,58.089){\Bbox}\put(22.000,56.398){\Bbox}
        \put(24.000,54.707){\Bbox}\put(26.000,53.016){\Bbox}
        \put(28.000,51.325){\Bbox}\put(30.000,49.634){\Bbox}
        \put(32.000,47.943){\Bbox}\put(34.000,46.252){\Bbox}
        \put(36.000,44.561){\Bbox}\put(38.000,42.870){\Bbox}
        \put(40.000,41.179){\Bbox}\put(42.000,39.488){\Bbox}
        \put(44.000,37.796){\Bbox}\put(46.000,36.105){\Bbox}
        \put(48.000,34.414){\Bbox}\put(50.000,32.723){\Bbox}
        \put(52.000,31.032){\Bbox}\put(54.000,29.341){\Bbox}
        \put(56.000,27.650){\Bbox}\put(58.000,25.959){\Bbox}
        \put(60.000,24.268){\Bbox}\put(62.000,22.577){\Bbox}
        \put(64.000,20.886){\Bbox}\put(66.000,19.195){\Bbox}
        \put(68.000,17.504){\Bbox}\put(70.000,15.813){\Bbox}
        \put(72.000,14.144){\Bbox}\put(74.000,12.539){\Bbox}
        \put(76.000,11.002){\Bbox}\put(78.000,9.537){\Bbox}
        \put(80.000,8.147){\Bbox}\put(82.000,6.838){\Bbox}
        \put(84.000,5.615){\Bbox}\put(86.000,4.484){\Bbox}
        \put(88.000,3.451){\Bbox}\put(90.000,2.527){\Bbox}
        \put(92.000,1.719){\Bbox}\put(94.000,1.041){\Bbox}
        \put(96.000,0.509){\Bbox}\put(98.000,0.147){\Bbox}
        \put(100.000,0.000){\Bbox}
    \color{red}
        \qbezier(0.000,75.000)(0.350,74.242)(2.000,72.319)
        \qbezier(2.000,72.319)(2.878,71.295)(4.000,70.117)
        \qbezier(4.000,70.117)(4.926,69.144)(6.000,68.086)
        \qbezier(6.000,68.086)(6.945,67.154)(8.000,66.162)
        \qbezier(8.000,66.162)(8.956,65.262)(10.000,64.314)
        \qbezier(10.000,64.314)(10.963,63.439)(12.000,62.524)
        \qbezier(12.000,62.524)(12.967,61.670)(14.000,60.780)
        \qbezier(14.000,60.780)(14.970,59.945)(16.000,59.075)
        \qbezier(16.000,59.075)(16.972,58.254)(18.000,57.402)
        \qbezier(18.000,57.402)(18.974,56.594)(20.000,55.755)
        \qbezier(20.000,55.755)(20.974,54.959)(22.000,54.131)
        \qbezier(22.000,54.131)(22.975,53.345)(24.000,52.527)
        \qbezier(24.000,52.527)(24.974,51.750)(26.000,50.940)
        \qbezier(26.000,50.940)(26.973,50.171)(28.000,49.367)
        \qbezier(28.000,49.367)(28.972,48.606)(30.000,47.806)
        \qbezier(30.000,47.806)(30.969,47.053)(32.000,46.256)
        \qbezier(32.000,46.256)(32.965,45.511)(34.000,44.715)
        \qbezier(34.000,44.715)(34.958,43.979)(36.000,43.180)
        \qbezier(36.000,43.180)(36.945,42.457)(38.000,41.651)
        \qbezier(38.000,41.651)(38.919,40.950)(40.000,40.126)
        \qbezier(40.000,40.126)(40.839,39.487)(42.000,38.603)
        \qbezier(42.000,38.603)(43.000,37.842)(44.000,37.081)
        \qbezier(44.000,37.081)(45.175,36.187)(46.000,35.558)
        \qbezier(46.000,35.558)(47.092,34.726)(48.000,34.033)
        \qbezier(48.000,34.033)(49.066,33.219)(50.000,32.503)
        \qbezier(50.000,32.503)(51.055,31.695)(52.000,30.967)
        \qbezier(52.000,30.967)(53.050,30.158)(54.000,29.421)
        \qbezier(54.000,29.421)(55.050,28.607)(56.000,27.864)
        \qbezier(56.000,27.864)(57.054,27.039)(58.000,26.289)
        \qbezier(58.000,26.289)(59.067,25.444)(60.000,24.691)
        \qbezier(60.000,24.691)(61.111,23.796)(62.000,23.058)
        \qbezier(62.000,23.058)(63.000,22.117)(64.000,21.177)
        \qbezier(64.000,21.177)(65.008,20.234)(66.000,19.332)
        \qbezier(66.000,19.332)(67.008,18.417)(68.000,17.544)
        \qbezier(68.000,17.544)(69.008,16.657)(70.000,15.813)
        \qbezier(70.000,15.813)(71.008,14.956)(72.000,14.144)
        \qbezier(72.000,14.144)(73.009,13.318)(74.000,12.539)
        \qbezier(74.000,12.539)(75.009,11.746)(76.000,11.002)
        \qbezier(76.000,11.002)(77.010,10.244)(78.000,9.537)
        \qbezier(78.000,9.537)(79.010,8.815)(80.000,8.147)
        \qbezier(80.000,8.147)(81.011,7.465)(82.000,6.838)
        \qbezier(82.000,6.838)(83.011,6.197)(84.000,5.615)
        \qbezier(84.000,5.615)(85.012,5.018)(86.000,4.484)
        \qbezier(86.000,4.484)(87.013,3.935)(88.000,3.451)
        \qbezier(88.000,3.451)(89.015,2.954)(90.000,2.527)
        \qbezier(90.000,2.527)(91.017,2.085)(92.000,1.719)
        \qbezier(92.000,1.719)(93.020,1.339)(94.000,1.041)
        \qbezier(94.000,1.041)(95.024,0.730)(96.000,0.509)
        \qbezier(96.000,0.509)(97.035,0.276)(98.000,0.147)
        \qbezier(98.000,0.147)(99.099,0.001)(100.000,0.000)
    \color{blue}
        \qbezier(0.000,75.000)(0.363,74.219)(2.000,72.334)
        \qbezier(2.000,72.334)(2.887,71.312)(4.000,70.171)
        \qbezier(4.000,70.171)(4.934,69.214)(6.000,68.201)
        \qbezier(6.000,68.201)(6.954,67.294)(8.000,66.356)
        \qbezier(8.000,66.356)(8.963,65.493)(10.000,64.606)
        \qbezier(10.000,64.606)(10.971,63.776)(12.000,62.933)
        \qbezier(12.000,62.933)(12.217,62.754)(14.000,61.305)
        \qbezier(14.000,61.305)(15.000,60.492)(16.000,59.679)
        \qbezier(16.000,59.679)(16.419,59.338)(18.000,58.052)
        \qbezier(18.000,58.052)(19.000,57.239)(20.000,56.426)
        \qbezier(20.000,56.426)(21.000,55.613)(22.000,54.800)
        \qbezier(22.000,54.800)(23.701,53.417)(24.000,53.174)
        \qbezier(24.000,53.174)(24.129,53.069)(26.000,51.547)
        \qbezier(26.000,51.547)(27.000,50.734)(28.000,49.921)
        \qbezier(28.000,49.921)(29.000,49.108)(30.000,48.295)
        \qbezier(30.000,48.295)(31.000,47.482)(32.000,46.668)
        \qbezier(32.000,46.668)(33.000,45.855)(34.000,45.042)
        \qbezier(34.000,45.042)(35.000,44.229)(36.000,43.416)
        \qbezier(36.000,43.416)(37.000,42.603)(38.000,41.790)
        \qbezier(38.000,41.790)(39.000,40.976)(40.000,40.163)
        \qbezier(40.000,40.163)(41.000,39.350)(42.000,38.537)
        \qbezier(42.000,38.537)(42.696,37.971)(44.000,36.911)
        \qbezier(44.000,36.911)(45.000,36.098)(46.000,35.285)
        \qbezier(46.000,35.285)(47.000,34.471)(48.000,33.658)
        \qbezier(48.000,33.658)(49.000,32.845)(50.000,32.032)
        \qbezier(50.000,32.032)(51.471,30.836)(52.000,30.406)
        \qbezier(52.000,30.406)(53.137,29.482)(54.000,28.779)
        \qbezier(54.000,28.779)(55.000,27.966)(56.000,27.153)
        \qbezier(56.000,27.153)(57.000,26.340)(58.000,25.527)
        \qbezier(58.000,25.527)(58.610,25.031)(60.000,23.901)
        \qbezier(60.000,23.901)(61.000,23.088)(62.000,22.274)
        \qbezier(62.000,22.274)(63.000,21.461)(64.000,20.648)
        \qbezier(64.000,20.648)(65.000,19.835)(66.000,19.022)
        \qbezier(66.000,19.022)(67.566,17.749)(68.000,17.396)
        \qbezier(68.000,17.396)(68.705,16.822)(70.000,15.769)
        \qbezier(70.000,15.769)(71.569,14.494)(72.000,14.143)
        \qbezier(72.000,14.143)(73.176,13.186)(74.000,12.539)
        \qbezier(74.000,12.539)(75.005,11.750)(76.000,11.002)
        \qbezier(76.000,11.002)(77.005,10.247)(78.000,9.537)
        \qbezier(78.000,9.537)(79.005,8.819)(80.000,8.147)
        \qbezier(80.000,8.147)(81.005,7.468)(82.000,6.838)
        \qbezier(82.000,6.838)(83.026,6.188)(84.000,5.615)
        \qbezier(84.000,5.615)(85.025,5.012)(86.000,4.484)
        \qbezier(86.000,4.484)(87.007,3.938)(88.000,3.451)
        \qbezier(88.000,3.451)(89.008,2.957)(90.000,2.527)
        \qbezier(90.000,2.527)(91.010,2.088)(92.000,1.719)
        \qbezier(92.000,1.719)(93.037,1.333)(94.000,1.041)
        \qbezier(94.000,1.041)(95.041,0.726)(96.000,0.509)
        \qbezier(96.000,0.509)(97.029,0.277)(98.000,0.147)
        \qbezier(98.000,0.147)(99.078,0.004)(100.000,0.000)
        \put(0.000,75.000){\Bplus}\put(2.000,72.334){\Bplus}
        \put(4.000,70.171){\Bplus}\put(6.000,68.201){\Bplus}
        \put(8.000,66.356){\Bplus}\put(10.000,64.606){\Bplus}
        \put(12.000,62.933){\Bplus}\put(14.000,61.305){\Bplus}
        \put(16.000,59.679){\Bplus}\put(18.000,58.052){\Bplus}
        \put(20.000,56.426){\Bplus}\put(22.000,54.800){\Bplus}
        \put(24.000,53.174){\Bplus}\put(26.000,51.547){\Bplus}
        \put(28.000,49.921){\Bplus}\put(30.000,48.295){\Bplus}
        \put(32.000,46.668){\Bplus}\put(34.000,45.042){\Bplus}
        \put(36.000,43.416){\Bplus}\put(38.000,41.790){\Bplus}
        \put(40.000,40.163){\Bplus}\put(42.000,38.537){\Bplus}
        \put(44.000,36.911){\Bplus}\put(46.000,35.285){\Bplus}
        \put(48.000,33.658){\Bplus}\put(50.000,32.032){\Bplus}
        \put(52.000,30.406){\Bplus}\put(54.000,28.779){\Bplus}
        \put(56.000,27.153){\Bplus}\put(58.000,25.527){\Bplus}
        \put(60.000,23.901){\Bplus}\put(62.000,22.274){\Bplus}
        \put(64.000,20.648){\Bplus}\put(66.000,19.022){\Bplus}
        \put(68.000,17.396){\Bplus}\put(70.000,15.769){\Bplus}
        \put(72.000,14.143){\Bplus}\put(74.000,12.539){\Bplus}
        \put(76.000,11.002){\Bplus}\put(78.000,9.537){\Bplus}
        \put(80.000,8.147){\Bplus}\put(82.000,6.838){\Bplus}
        \put(84.000,5.615){\Bplus}\put(86.000,4.484){\Bplus}
        \put(88.000,3.451){\Bplus}\put(90.000,2.527){\Bplus}
        \put(92.000,1.719){\Bplus}\put(94.000,1.041){\Bplus}
        \put(96.000,0.509){\Bplus}\put(98.000,0.147){\Bplus}
        \put(100.000,0.000){\Bplus}
    \color{cyan}
        \qbezier(0.000,75.000)(0.350,74.242)(2.000,72.319)
        \qbezier(2.000,72.319)(2.878,71.295)(4.000,70.117)
        \qbezier(4.000,70.117)(4.926,69.144)(6.000,68.086)
        \qbezier(6.000,68.086)(6.945,67.154)(8.000,66.162)
        \qbezier(8.000,66.162)(8.956,65.262)(10.000,64.314)
        \qbezier(10.000,64.314)(10.963,63.439)(12.000,62.524)
        \qbezier(12.000,62.524)(12.967,61.670)(14.000,60.780)
        \qbezier(14.000,60.780)(14.970,59.945)(16.000,59.075)
        \qbezier(16.000,59.075)(16.972,58.254)(18.000,57.402)
        \qbezier(18.000,57.402)(18.974,56.594)(20.000,55.755)
        \qbezier(20.000,55.755)(20.974,54.959)(22.000,54.131)
        \qbezier(22.000,54.131)(22.975,53.345)(24.000,52.527)
        \qbezier(24.000,52.527)(24.974,51.750)(26.000,50.940)
        \qbezier(26.000,50.940)(26.973,50.171)(28.000,49.367)
        \qbezier(28.000,49.367)(28.972,48.606)(30.000,47.806)
        \qbezier(30.000,47.806)(30.969,47.053)(32.000,46.256)
        \qbezier(32.000,46.256)(32.965,45.511)(34.000,44.715)
        \qbezier(34.000,44.715)(34.958,43.979)(36.000,43.180)
        \qbezier(36.000,43.180)(36.945,42.457)(38.000,41.651)
        \qbezier(38.000,41.651)(38.919,40.950)(40.000,40.126)
        \qbezier(40.000,40.126)(40.728,39.571)(42.000,38.537)
        \qbezier(42.000,38.537)(43.000,37.724)(44.000,36.911)
        \qbezier(44.000,36.911)(45.000,36.098)(46.000,35.285)
        \qbezier(46.000,35.285)(47.000,34.471)(48.000,33.658)
        \qbezier(48.000,33.658)(49.000,32.845)(50.000,32.032)
        \qbezier(50.000,32.032)(51.000,31.219)(52.000,30.406)
        \qbezier(52.000,30.406)(53.000,29.593)(54.000,28.779)
        \qbezier(54.000,28.779)(55.000,27.966)(56.000,27.153)
        \qbezier(56.000,27.153)(57.000,26.340)(58.000,25.527)
        \qbezier(58.000,25.527)(59.000,24.714)(60.000,23.901)
        \qbezier(60.000,23.901)(61.000,23.088)(62.000,22.274)
        \qbezier(62.000,22.274)(63.000,21.461)(64.000,20.648)
        \qbezier(64.000,20.648)(65.000,19.835)(66.000,19.022)
        \qbezier(66.000,19.022)(67.000,18.209)(68.000,17.396)
        \qbezier(68.000,17.396)(69.000,16.582)(70.000,15.769)
        \qbezier(70.000,15.769)(71.285,14.725)(72.000,14.143)
        \qbezier(72.000,14.143)(73.182,13.182)(74.000,12.539)
        \qbezier(74.000,12.539)(75.009,11.746)(76.000,11.002)
        \qbezier(76.000,11.002)(77.010,10.244)(78.000,9.537)
        \qbezier(78.000,9.537)(79.010,8.815)(80.000,8.147)
        \qbezier(80.000,8.147)(81.011,7.465)(82.000,6.838)
        \qbezier(82.000,6.838)(83.011,6.197)(84.000,5.615)
        \qbezier(84.000,5.615)(85.012,5.018)(86.000,4.484)
        \qbezier(86.000,4.484)(87.013,3.935)(88.000,3.451)
        \qbezier(88.000,3.451)(89.015,2.954)(90.000,2.527)
        \qbezier(90.000,2.527)(91.017,2.085)(92.000,1.719)
        \qbezier(92.000,1.719)(93.020,1.339)(94.000,1.041)
        \qbezier(94.000,1.041)(95.024,0.730)(96.000,0.509)
        \qbezier(96.000,0.509)(97.035,0.276)(98.000,0.147)
        \qbezier(98.000,0.147)(99.097,0.002)(100.000,0.000)
    \color{cyan}
        \put(0.000,75.000){\Bfull}\put(2.000,72.319){\Bfull}
        \put(4.000,70.117){\Bfull}\put(6.000,68.086){\Bfull}
        \put(8.000,66.162){\Bfull}\put(10.000,64.314){\Bfull}
        \put(12.000,62.524){\Bfull}\put(14.000,60.780){\Bfull}
        \put(16.000,59.075){\Bfull}\put(18.000,57.402){\Bfull}
        \put(20.000,55.755){\Bfull}\put(22.000,54.131){\Bfull}
        \put(24.000,52.527){\Bfull}\put(26.000,50.940){\Bfull}
        \put(28.000,49.367){\Bfull}\put(30.000,47.806){\Bfull}
        \put(32.000,46.256){\Bfull}\put(34.000,44.715){\Bfull}
        \put(36.000,43.180){\Bfull}\put(38.000,41.651){\Bfull}
        \put(40.000,40.126){\Bfull}\put(42.000,38.537){\Bfull}
        \put(44.000,36.911){\Bfull}\put(46.000,35.285){\Bfull}
        \put(48.000,33.658){\Bfull}\put(50.000,32.032){\Bfull}
        \put(52.000,30.406){\Bfull}\put(54.000,28.779){\Bfull}
        \put(56.000,27.153){\Bfull}\put(58.000,25.527){\Bfull}
        \put(60.000,23.901){\Bfull}\put(62.000,22.274){\Bfull}
        \put(64.000,20.648){\Bfull}\put(66.000,19.022){\Bfull}
        \put(68.000,17.396){\Bfull}\put(70.000,15.769){\Bfull}
        \put(72.000,14.143){\Bfull}\put(74.000,12.539){\Bfull}
        \put(76.000,11.002){\Bfull}\put(78.000,9.537){\Bfull}
        \put(80.000,8.147){\Bfull}\put(82.000,6.838){\Bfull}
        \put(84.000,5.615){\Bfull}\put(86.000,4.484){\Bfull}
        \put(88.000,3.451){\Bfull}\put(90.000,2.527){\Bfull}
        \put(92.000,1.719){\Bfull}\put(94.000,1.041){\Bfull}
        \put(96.000,0.509){\Bfull}\put(98.000,0.147){\Bfull}
        \put(100.000,0.000){\Bfull}
    }
}
\caption{Bounds for $q = 2$ and $n = 4$.}
\label{fig:q=2,n=4}
\end{figure*}

\begin{figure*}[hbt]
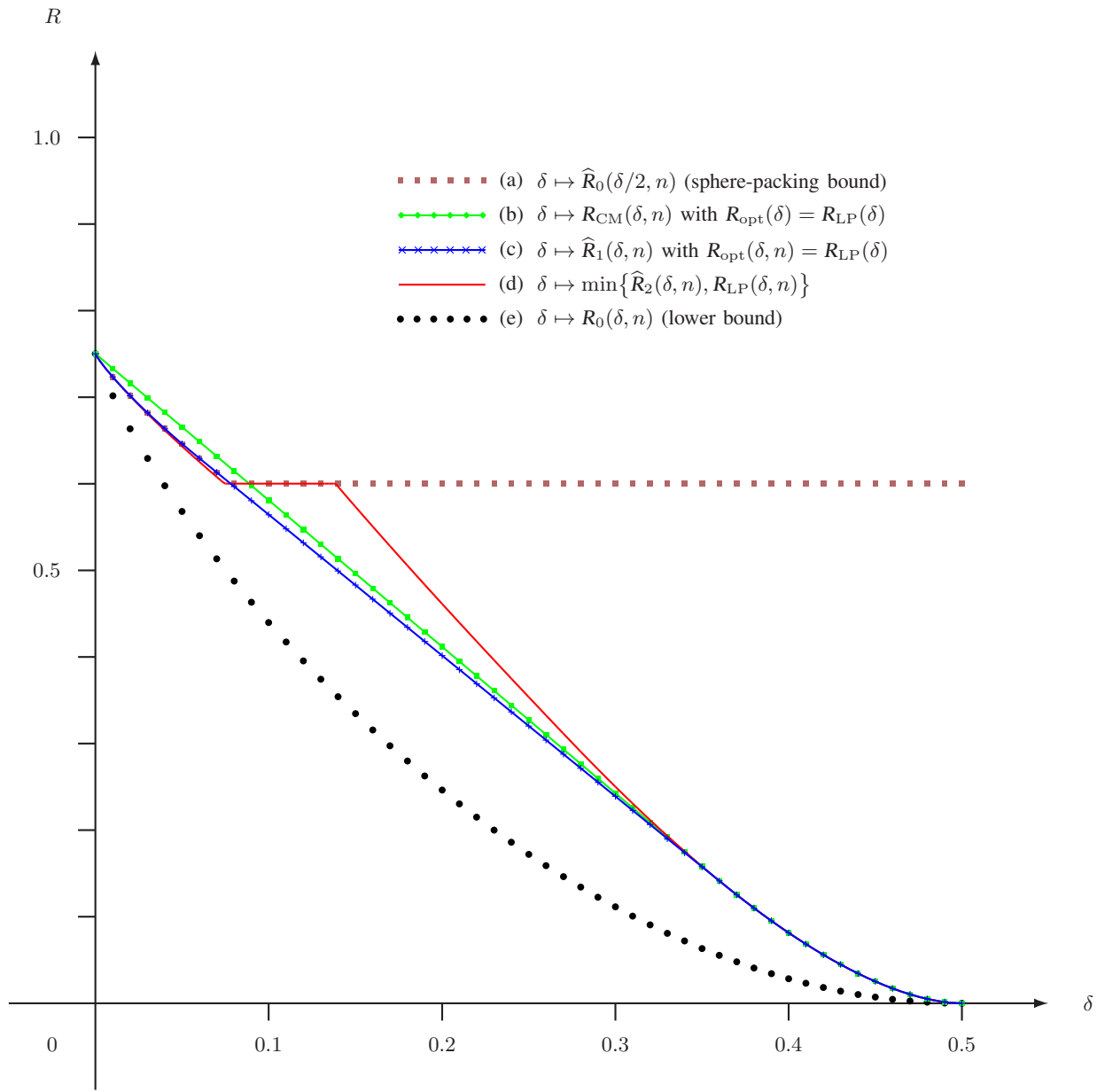

\Plot{%
    \put(035,095){
	\put(000,000){\LegendAlt}
    }
    \put(000,000){\color{brown}
        \put(0.000,75.000){\Bbox}\put(2.000,72.334){\Bbox}
        \put(4.000,70.171){\Bbox}\put(6.000,68.201){\Bbox}
        \put(8.000,66.356){\Bbox}\put(10.000,64.606){\Bbox}
        \put(12.000,62.933){\Bbox}\put(14.000,61.323){\Bbox}
        \put(16.000,60.000){\Bbox}\put(18.000,60.000){\Bbox}
        \put(20.000,60.000){\Bbox}\put(22.000,60.000){\Bbox}
        \put(24.000,60.000){\Bbox}\put(26.000,60.000){\Bbox}
        \put(28.000,60.000){\Bbox}\put(30.000,60.000){\Bbox}
        \put(32.000,60.000){\Bbox}\put(34.000,60.000){\Bbox}
        \put(36.000,60.000){\Bbox}\put(38.000,60.000){\Bbox}
        \put(40.000,60.000){\Bbox}\put(42.000,60.000){\Bbox}
        \put(44.000,60.000){\Bbox}\put(46.000,60.000){\Bbox}
        \put(48.000,60.000){\Bbox}\put(50.000,60.000){\Bbox}
        \put(52.000,60.000){\Bbox}\put(54.000,60.000){\Bbox}
        \put(56.000,60.000){\Bbox}\put(58.000,60.000){\Bbox}
        \put(60.000,60.000){\Bbox}\put(62.000,60.000){\Bbox}
        \put(64.000,60.000){\Bbox}\put(66.000,60.000){\Bbox}
        \put(68.000,60.000){\Bbox}\put(70.000,60.000){\Bbox}
        \put(72.000,60.000){\Bbox}\put(74.000,60.000){\Bbox}
        \put(76.000,60.000){\Bbox}\put(78.000,60.000){\Bbox}
        \put(80.000,60.000){\Bbox}\put(82.000,60.000){\Bbox}
        \put(84.000,60.000){\Bbox}\put(86.000,60.000){\Bbox}
        \put(88.000,60.000){\Bbox}\put(90.000,60.000){\Bbox}
        \put(92.000,60.000){\Bbox}\put(94.000,60.000){\Bbox}
        \put(96.000,60.000){\Bbox}\put(98.000,60.000){\Bbox}
        \put(100.000,60.000){\Bbox}
    \color{black}
        \put(0.000,75.000){\Bfull}\put(2.000,70.171){\Bfull}
        \put(4.000,66.356){\Bfull}\put(6.000,62.933){\Bfull}
        \put(8.000,59.769){\Bfull}\put(10.000,56.802){\Bfull}
        \put(12.000,53.995){\Bfull}\put(14.000,51.324){\Bfull}
        \put(16.000,48.772){\Bfull}\put(18.000,46.327){\Bfull}
        \put(20.000,43.978){\Bfull}\put(22.000,41.717){\Bfull}
        \put(24.000,39.540){\Bfull}\put(26.000,37.439){\Bfull}
        \put(28.000,35.412){\Bfull}\put(30.000,33.455){\Bfull}
        \put(32.000,31.564){\Bfull}\put(34.000,29.737){\Bfull}
        \put(36.000,27.972){\Bfull}\put(38.000,26.267){\Bfull}
        \put(40.000,24.620){\Bfull}\put(42.000,23.030){\Bfull}
        \put(44.000,21.496){\Bfull}\put(46.000,20.016){\Bfull}
        \put(48.000,18.591){\Bfull}\put(50.000,17.218){\Bfull}
        \put(52.000,15.898){\Bfull}\put(54.000,14.630){\Bfull}
        \put(56.000,13.414){\Bfull}\put(58.000,12.249){\Bfull}
        \put(60.000,11.136){\Bfull}\put(62.000,10.074){\Bfull}
        \put(64.000,9.063){\Bfull}\put(66.000,8.104){\Bfull}
        \put(68.000,7.197){\Bfull}\put(70.000,6.341){\Bfull}
        \put(72.000,5.537){\Bfull}\put(74.000,4.786){\Bfull}
        \put(76.000,4.088){\Bfull}\put(78.000,3.443){\Bfull}
        \put(80.000,2.851){\Bfull}\put(82.000,2.314){\Bfull}
        \put(84.000,1.832){\Bfull}\put(86.000,1.405){\Bfull}
        \put(88.000,1.034){\Bfull}\put(90.000,0.719){\Bfull}
        \put(92.000,0.461){\Bfull}\put(94.000,0.259){\Bfull}
        \put(96.000,0.115){\Bfull}\put(98.000,0.029){\Bfull}
        \put(100.000,0.000){\Bfull}
    \color{green}
        \qbezier(0.000,75.000)(0.889,74.248)(2.000,73.309)
        \qbezier(2.000,73.309)(3.480,72.057)(4.000,71.618)
        \qbezier(4.000,71.618)(4.317,71.350)(6.000,69.927)
        \qbezier(6.000,69.927)(7.539,68.626)(8.000,68.236)
        \qbezier(8.000,68.236)(9.000,67.390)(10.000,66.545)
        \qbezier(10.000,66.545)(11.000,65.699)(12.000,64.854)
        \qbezier(12.000,64.854)(13.000,64.008)(14.000,63.163)
        \qbezier(14.000,63.163)(15.000,62.317)(16.000,61.471)
        \qbezier(16.000,61.471)(17.000,60.626)(18.000,59.780)
        \qbezier(18.000,59.780)(19.000,58.935)(20.000,58.089)
        \qbezier(20.000,58.089)(21.058,57.195)(22.000,56.398)
        \qbezier(22.000,56.398)(23.386,55.226)(24.000,54.707)
        \qbezier(24.000,54.707)(24.858,53.982)(26.000,53.016)
        \qbezier(26.000,53.016)(27.000,52.171)(28.000,51.325)
        \qbezier(28.000,51.325)(29.000,50.480)(30.000,49.634)
        \qbezier(30.000,49.634)(31.000,48.788)(32.000,47.943)
        \qbezier(32.000,47.943)(33.000,47.097)(34.000,46.252)
        \qbezier(34.000,46.252)(34.785,45.588)(36.000,44.561)
        \qbezier(36.000,44.561)(37.000,43.715)(38.000,42.870)
        \qbezier(38.000,42.870)(39.094,41.945)(40.000,41.179)
        \qbezier(40.000,41.179)(41.000,40.333)(42.000,39.488)
        \qbezier(42.000,39.488)(43.259,38.423)(44.000,37.796)
        \qbezier(44.000,37.796)(45.000,36.951)(46.000,36.105)
        \qbezier(46.000,36.105)(47.000,35.260)(48.000,34.414)
        \qbezier(48.000,34.414)(49.000,33.569)(50.000,32.723)
        \qbezier(50.000,32.723)(51.531,31.429)(52.000,31.032)
        \qbezier(52.000,31.032)(53.000,30.187)(54.000,29.341)
        \qbezier(54.000,29.341)(55.000,28.496)(56.000,27.650)
        \qbezier(56.000,27.650)(57.000,26.805)(58.000,25.959)
        \qbezier(58.000,25.959)(59.000,25.113)(60.000,24.268)
        \qbezier(60.000,24.268)(61.000,23.422)(62.000,22.577)
        \qbezier(62.000,22.577)(62.911,21.807)(64.000,20.886)
        \qbezier(64.000,20.886)(65.543,19.581)(66.000,19.195)
        \qbezier(66.000,19.195)(67.000,18.349)(68.000,17.504)
        \qbezier(68.000,17.504)(69.445,16.282)(70.000,15.813)
        \qbezier(70.000,15.813)(71.153,14.838)(72.000,14.144)
        \qbezier(72.000,14.144)(73.005,13.321)(74.000,12.539)
        \qbezier(74.000,12.539)(75.005,11.749)(76.000,11.002)
        \qbezier(76.000,11.002)(77.005,10.247)(78.000,9.537)
        \qbezier(78.000,9.537)(79.021,8.808)(80.000,8.147)
        \qbezier(80.000,8.147)(81.022,7.457)(82.000,6.838)
        \qbezier(82.000,6.838)(83.007,6.200)(84.000,5.615)
        \qbezier(84.000,5.615)(85.006,5.022)(86.000,4.484)
        \qbezier(86.000,4.484)(87.007,3.938)(88.000,3.451)
        \qbezier(88.000,3.451)(89.031,2.946)(90.000,2.527)
        \qbezier(90.000,2.527)(91.033,2.080)(92.000,1.719)
        \qbezier(92.000,1.719)(93.013,1.342)(94.000,1.041)
        \qbezier(94.000,1.041)(95.018,0.732)(96.000,0.509)
        \qbezier(96.000,0.509)(97.029,0.277)(98.000,0.147)
        \qbezier(98.000,0.147)(99.094,0.002)(100.000,0.000)
        \put(0.000,75.000){\Bbox}\put(2.000,73.309){\Bbox}
        \put(4.000,71.618){\Bbox}\put(6.000,69.927){\Bbox}
        \put(8.000,68.236){\Bbox}\put(10.000,66.545){\Bbox}
        \put(12.000,64.854){\Bbox}\put(14.000,63.163){\Bbox}
        \put(16.000,61.471){\Bbox}\put(18.000,59.780){\Bbox}
        \put(20.000,58.089){\Bbox}\put(22.000,56.398){\Bbox}
        \put(24.000,54.707){\Bbox}\put(26.000,53.016){\Bbox}
        \put(28.000,51.325){\Bbox}\put(30.000,49.634){\Bbox}
        \put(32.000,47.943){\Bbox}\put(34.000,46.252){\Bbox}
        \put(36.000,44.561){\Bbox}\put(38.000,42.870){\Bbox}
        \put(40.000,41.179){\Bbox}\put(42.000,39.488){\Bbox}
        \put(44.000,37.796){\Bbox}\put(46.000,36.105){\Bbox}
        \put(48.000,34.414){\Bbox}\put(50.000,32.723){\Bbox}
        \put(52.000,31.032){\Bbox}\put(54.000,29.341){\Bbox}
        \put(56.000,27.650){\Bbox}\put(58.000,25.959){\Bbox}
        \put(60.000,24.268){\Bbox}\put(62.000,22.577){\Bbox}
        \put(64.000,20.886){\Bbox}\put(66.000,19.195){\Bbox}
        \put(68.000,17.504){\Bbox}\put(70.000,15.813){\Bbox}
        \put(72.000,14.144){\Bbox}\put(74.000,12.539){\Bbox}
        \put(76.000,11.002){\Bbox}\put(78.000,9.537){\Bbox}
        \put(80.000,8.147){\Bbox}\put(82.000,6.838){\Bbox}
        \put(84.000,5.615){\Bbox}\put(86.000,4.484){\Bbox}
        \put(88.000,3.451){\Bbox}\put(90.000,2.527){\Bbox}
        \put(92.000,1.719){\Bbox}\put(94.000,1.041){\Bbox}
        \put(96.000,0.509){\Bbox}\put(98.000,0.147){\Bbox}
        \put(100.000,0.000){\Bbox}
    \color{red}
        \qbezier(0.000,75.000)(0.350,74.242)(2.000,72.319)
        \qbezier(2.000,72.319)(2.878,71.295)(4.000,70.117)
        \qbezier(4.000,70.117)(4.926,69.144)(6.000,68.086)
        \qbezier(6.000,68.086)(6.945,67.154)(8.000,66.162)
        \qbezier(8.000,66.162)(8.956,65.262)(10.000,64.314)
        \qbezier(10.000,64.314)(10.963,63.439)(12.000,62.524)
        \qbezier(12.000,62.524)(12.967,61.670)(14.000,60.780)
        \qbezier(14.000,60.780)(14.264,60.553)(14.500,60.351)
        \qbezier(14.500,60.351)(14.842,60.057)(15.000,60.013)
        \qbezier(15.000,60.013)(15.046,60.000)(15.250,60.000)
        \qbezier(15.250,60.000)(15.375,60.000)(15.500,60.000)
        \qbezier(15.500,60.000)(15.750,60.000)(16.000,60.000)
        \qbezier(16.000,60.000)(17.000,60.000)(18.000,60.000)
        \qbezier(18.000,60.000)(19.000,60.000)(20.000,60.000)
        \qbezier(20.000,60.000)(21.000,60.000)(22.000,60.000)
        \qbezier(22.000,60.000)(23.000,60.000)(24.000,60.000)
        \qbezier(24.000,60.000)(25.000,60.000)(26.000,60.000)
        \qbezier(26.000,60.000)(26.250,60.000)(26.500,60.000)
        \qbezier(26.500,60.000)(26.750,60.000)(27.000,60.000)
        \qbezier(27.000,60.000)(27.250,60.000)(27.500,60.000)
        \qbezier(27.500,60.000)(27.710,60.000)(27.750,59.953)
        \qbezier(27.750,59.953)(27.875,59.807)(28.000,59.661)
        \qbezier(28.000,59.661)(28.983,58.516)(30.000,57.345)
        \qbezier(30.000,57.345)(30.983,56.213)(32.000,55.057)
        \qbezier(32.000,55.057)(32.984,53.938)(34.000,52.794)
        \qbezier(34.000,52.794)(34.984,51.686)(36.000,50.554)
        \qbezier(36.000,50.554)(36.985,49.457)(38.000,48.336)
        \qbezier(38.000,48.336)(38.985,47.248)(40.000,46.136)
        \qbezier(40.000,46.136)(40.985,45.057)(42.000,43.954)
        \qbezier(42.000,43.954)(42.985,42.883)(44.000,41.787)
        \qbezier(44.000,41.787)(44.985,40.724)(46.000,39.636)
        \qbezier(46.000,39.636)(46.985,38.580)(48.000,37.497)
        \qbezier(48.000,37.497)(48.985,36.448)(50.000,35.371)
        \qbezier(50.000,35.371)(50.984,34.327)(52.000,33.256)
        \qbezier(52.000,33.256)(52.984,32.218)(54.000,31.151)
        \qbezier(54.000,31.151)(55.222,29.868)(56.000,29.063)
        \qbezier(56.000,29.063)(57.007,28.023)(58.000,27.020)
        \qbezier(58.000,27.020)(59.007,26.002)(60.000,25.022)
        \qbezier(60.000,25.022)(61.007,24.029)(62.000,23.074)
        \qbezier(62.000,23.074)(63.007,22.106)(64.000,21.177)
        \qbezier(64.000,21.177)(65.008,20.234)(66.000,19.332)
        \qbezier(66.000,19.332)(67.008,18.417)(68.000,17.544)
        \qbezier(68.000,17.544)(69.008,16.657)(70.000,15.813)
        \qbezier(70.000,15.813)(71.008,14.956)(72.000,14.144)
        \qbezier(72.000,14.144)(73.009,13.318)(74.000,12.539)
        \qbezier(74.000,12.539)(75.009,11.746)(76.000,11.002)
        \qbezier(76.000,11.002)(77.010,10.244)(78.000,9.537)
        \qbezier(78.000,9.537)(79.010,8.815)(80.000,8.147)
        \qbezier(80.000,8.147)(81.011,7.465)(82.000,6.838)
        \qbezier(82.000,6.838)(83.011,6.197)(84.000,5.615)
        \qbezier(84.000,5.615)(85.012,5.018)(86.000,4.484)
        \qbezier(86.000,4.484)(87.013,3.935)(88.000,3.451)
        \qbezier(88.000,3.451)(89.015,2.954)(90.000,2.527)
        \qbezier(90.000,2.527)(91.017,2.085)(92.000,1.719)
        \qbezier(92.000,1.719)(93.020,1.339)(94.000,1.041)
        \qbezier(94.000,1.041)(95.024,0.730)(96.000,0.509)
        \qbezier(96.000,0.509)(97.035,0.276)(98.000,0.147)
        \qbezier(98.000,0.147)(99.099,0.001)(100.000,0.000)
    \color{blue}
        \qbezier(0.000,75.000)(0.363,74.219)(2.000,72.334)
        \qbezier(2.000,72.334)(2.887,71.312)(4.000,70.171)
        \qbezier(4.000,70.171)(4.934,69.214)(6.000,68.201)
        \qbezier(6.000,68.201)(6.954,67.294)(8.000,66.356)
        \qbezier(8.000,66.356)(8.963,65.493)(10.000,64.606)
        \qbezier(10.000,64.606)(10.971,63.776)(12.000,62.933)
        \qbezier(12.000,62.933)(12.219,62.753)(14.000,61.305)
        \qbezier(14.000,61.305)(15.233,60.303)(16.000,59.679)
        \qbezier(16.000,59.679)(17.000,58.865)(18.000,58.052)
        \qbezier(18.000,58.052)(19.000,57.239)(20.000,56.426)
        \qbezier(20.000,56.426)(21.000,55.613)(22.000,54.800)
        \qbezier(22.000,54.800)(23.000,53.987)(24.000,53.174)
        \qbezier(24.000,53.174)(25.000,52.360)(26.000,51.547)
        \qbezier(26.000,51.547)(27.000,50.734)(28.000,49.921)
        \qbezier(28.000,49.921)(29.000,49.108)(30.000,48.295)
        \qbezier(30.000,48.295)(31.000,47.482)(32.000,46.668)
        \qbezier(32.000,46.668)(33.000,45.855)(34.000,45.042)
        \qbezier(34.000,45.042)(35.000,44.229)(36.000,43.416)
        \qbezier(36.000,43.416)(37.000,42.603)(38.000,41.790)
        \qbezier(38.000,41.790)(39.000,40.976)(40.000,40.163)
        \qbezier(40.000,40.163)(41.000,39.350)(42.000,38.537)
        \qbezier(42.000,38.537)(43.000,37.724)(44.000,36.911)
        \qbezier(44.000,36.911)(45.000,36.098)(46.000,35.285)
        \qbezier(46.000,35.285)(47.000,34.471)(48.000,33.658)
        \qbezier(48.000,33.658)(49.000,32.845)(50.000,32.032)
        \qbezier(50.000,32.032)(51.000,31.219)(52.000,30.406)
        \qbezier(52.000,30.406)(53.000,29.593)(54.000,28.779)
        \qbezier(54.000,28.779)(55.000,27.966)(56.000,27.153)
        \qbezier(56.000,27.153)(57.000,26.340)(58.000,25.527)
        \qbezier(58.000,25.527)(59.000,24.714)(60.000,23.901)
        \qbezier(60.000,23.901)(61.000,23.088)(62.000,22.274)
        \qbezier(62.000,22.274)(63.000,21.461)(64.000,20.648)
        \qbezier(64.000,20.648)(65.000,19.835)(66.000,19.022)
        \qbezier(66.000,19.022)(67.000,18.209)(68.000,17.396)
        \qbezier(68.000,17.396)(69.000,16.582)(70.000,15.769)
        \qbezier(70.000,15.769)(71.285,14.725)(72.000,14.143)
        \qbezier(72.000,14.143)(73.182,13.182)(74.000,12.539)
        \qbezier(74.000,12.539)(75.009,11.746)(76.000,11.002)
        \qbezier(76.000,11.002)(77.010,10.244)(78.000,9.537)
        \qbezier(78.000,9.537)(79.010,8.815)(80.000,8.147)
        \qbezier(80.000,8.147)(81.011,7.465)(82.000,6.838)
        \qbezier(82.000,6.838)(83.011,6.197)(84.000,5.615)
        \qbezier(84.000,5.615)(85.012,5.018)(86.000,4.484)
        \qbezier(86.000,4.484)(87.013,3.935)(88.000,3.451)
        \qbezier(88.000,3.451)(89.015,2.954)(90.000,2.527)
        \qbezier(90.000,2.527)(91.017,2.085)(92.000,1.719)
        \qbezier(92.000,1.719)(93.020,1.339)(94.000,1.041)
        \qbezier(94.000,1.041)(95.024,0.730)(96.000,0.509)
        \qbezier(96.000,0.509)(97.035,0.276)(98.000,0.147)
        \qbezier(98.000,0.147)(99.073,0.005)(100.000,0.000)
        \put(0.000,75.000){\Bplus}\put(2.000,72.334){\Bplus}
        \put(4.000,70.171){\Bplus}\put(6.000,68.201){\Bplus}
        \put(8.000,66.356){\Bplus}\put(10.000,64.606){\Bplus}
        \put(12.000,62.933){\Bplus}\put(14.000,61.305){\Bplus}
        \put(16.000,59.679){\Bplus}\put(18.000,58.052){\Bplus}
        \put(20.000,56.426){\Bplus}\put(22.000,54.800){\Bplus}
        \put(24.000,53.174){\Bplus}\put(26.000,51.547){\Bplus}
        \put(28.000,49.921){\Bplus}\put(30.000,48.295){\Bplus}
        \put(32.000,46.668){\Bplus}\put(34.000,45.042){\Bplus}
        \put(36.000,43.416){\Bplus}\put(38.000,41.790){\Bplus}
        \put(40.000,40.163){\Bplus}\put(42.000,38.537){\Bplus}
        \put(44.000,36.911){\Bplus}\put(46.000,35.285){\Bplus}
        \put(48.000,33.658){\Bplus}\put(50.000,32.032){\Bplus}
        \put(52.000,30.406){\Bplus}\put(54.000,28.779){\Bplus}
        \put(56.000,27.153){\Bplus}\put(58.000,25.527){\Bplus}
        \put(60.000,23.901){\Bplus}\put(62.000,22.274){\Bplus}
        \put(64.000,20.648){\Bplus}\put(66.000,19.022){\Bplus}
        \put(68.000,17.396){\Bplus}\put(70.000,15.769){\Bplus}
        \put(72.000,14.143){\Bplus}\put(74.000,12.539){\Bplus}
        \put(76.000,11.002){\Bplus}\put(78.000,9.537){\Bplus}
        \put(80.000,8.147){\Bplus}\put(82.000,6.838){\Bplus}
        \put(84.000,5.615){\Bplus}\put(86.000,4.484){\Bplus}
        \put(88.000,3.451){\Bplus}\put(90.000,2.527){\Bplus}
        \put(92.000,1.719){\Bplus}\put(94.000,1.041){\Bplus}
        \put(96.000,0.509){\Bplus}\put(98.000,0.147){\Bplus}
        \put(100.000,0.000){\Bplus}
    }
}
\caption{Bounds for $q = 2$ and $n = 4$ (not necessarily all-disjoint).}
\label{fig:q=2,n=4-nondisjoint}
\end{figure*}
\fi

\begin{thebibliography}{99}
\bibitem{Aaltonen}
    \bibauthor{M. Aaltonen,}
    \bibpaper{A new upper bound on nonbinary block codes,}
    \bibperiodical{Discrete Math.,} 83 (1990) 139--160.
\bibitem{ABHMT}
    \bibauthor{A. Agarwal, A. Barg, S. Hu, A. Mazumdar, I. Tamo,}
    \bibpaper{Combinatorial alphabet-dependent bounds
    for locally recoverable codes,}
    \bibperiodical{IEEE Trans.\ Inf.\ Theory,} 64 (2018), 3481--3492.
\bibitem{BK}
    \bibauthor{S.B. Balaji, P.V. Kumar,}
    \bibpaper{Bounds on the rate and minimum distance of
    codes with availability,}
    \bibperiodical{Proc.\ 2017 IEEE Int'l Symp.\ Inf.\ Theory
    (ISIT 2017),}
    Aachen, Germany (June 2017), 3155--3159.
\bibitem{BHL1}
    \bibauthor{Y. Ben-Haim, S. Litsyn,}
    \bibpaper{A new upper bound on the rate of non-binary codes,}
    \bibperiodical{Adv.\ Math.\ Commun.,} 1 (2007), 83--92.
\ifPAGELIMIT
\else
\bibitem{BHL2}
    \bibauthor{Y. Ben-Haim, S. Litsyn,}
    private communication, 2020.
\fi
\bibitem{BCCST}
    \bibauthor{C. Bachoc, V. Chandar, G. Cohen, P. Sol\'{e},
     A. Tchamkerten,}
    \bibpaper{On balanced weight codes,}
    \bibperiodical{IEEE Trans.\ Inf.\ Theory,} 57 (2011), 6780--6787.
\bibitem{Blaum}
    \bibauthor{M. Blaum,}
    \bibpaper{Extended integrated interleaved codes over any
    field with applications to locally recoverable codes,}
    \bibperiodical{IEEE Trans.\ Inf.\ Theory,} 66 (2020), 936--956.
\bibitem{CM}
    \bibauthor{V. Cadambe, A. Mazumdar,}
    \bibpaper{Bounds on the size of locally recoverable codes,}
    \bibperiodical{IEEE Trans.\ Inf.\ Theory,} 61 (2015), 5787--5794.
\ifPAGELIMIT
\else
\bibitem{DZ}
    \bibauthor{A. Dembo, O. Zeitouni,}
    \bibbook{Large Deviations Techniques and Applications,}
    Second Edition,
    Springer, Berlin, 2010.
\fi
\bibitem{GHSY}
    \bibauthor{P. Gopalan, C. Huang, H. Simitci, S. Yekhanin,}
    \bibpaper{On the locality of codeword symbols,}
    \bibperiodical{IEEE Trans.\ Inf.\ Theory,} 58 (2012), 6925--6934.
\bibitem{GFY}
    \bibauthor{S. Gopi, V. Guruswami, S. Yekhanin,}
    \bibpaper{Maximally recoverable LRCs: A field size
    lower bound and constructions for few heavy parities,}
    \bibperiodical{IEEE Trans.\ Inf.\ Theory,} 66 (2020), 6066--6083.
\bibitem{GFWH}
    \bibauthor{M. Grezet, R. Freij-Hollanti, T. Westerb\"{a}ck,
    C. Hollanti,}
    \bibpaper{Alphabet-dependent bounds for linear locally
    repairable codes based on residual codes,}
    \bibperiodical{IEEE Trans.\ Inf.\ Theory,} 65 (2019), 6089--6100.
\bibitem{GJX}
    \bibauthor{V. Guruswami, L. Jin, C. Xing,}
    \bibpaper{Constructions of maximally recoverable local
    reconstruction codes via function fields,}
    \bibperiodical{IEEE Trans.\ Inf.\ Theory,} to appear.
\bibitem{HYS}
    \bibauthor{P. Huang, E. Yaakobi, P.H. Siegel,}
    \bibpaper{Multi-erasure locally recoverable codes over
    small fields: A tensor product approach,}
    \bibperiodical{IEEE Trans.\ Inf.\ Theory,} 66 (2020), 2609--2624.
\bibitem{HYUS}
    \bibauthor{P. Huang, E. Yaakobi, H. Uchikawa, P.H. Siegel,}
    \bibpaper{Binary linear locally repairable codes,}
    \bibperiodical{IEEE Trans.\ Inf.\ Theory,} 62 (2016), 6268--6283.
\bibitem{LL}
    \bibauthor{T. Laihonen, S. Litsyn,}
    \bibpaper{On upper bounds for minimum distance and
    covering radius of non-binary codes,}
    \bibperiodical{Des.\ Codes, Cryptogr.,}  14 (1998), 71-–80.
\ifPAGELIMIT
\else
\bibitem{Luenberger}
    \bibauthor{D.G. Luenberger}
    \bibbook{Linear and Nonlinear Programming,}
    Second Edition,
    Addison-Wesley, Reading, Massachusetts, 1984.
\fi
\bibitem{MRRW}
    \bibauthor{R.J. McEliece, E.R. Rodemich, H. Rumsey, Jr.,
    L.R. Welch,} 
    \bibpaper{New upper bounds on the rate of a code via
    the Delsarte--MacWilliams inequalities,}
    \bibperiodical{IEEE Trans.\ Inf.\ Theory,} 23 (1977), 157--166.
\bibitem{MG}
    \bibauthor{J. Ma, G. Ge,}
    \bibpaper{Optimal binary linear locally repairable codes with
    disjoint repair groups,}
    \bibperiodical{SIAM J. Discrete Math.,}  33 (2019), 2509--2529.
\bibitem{PHO}
    \bibauthor{L. Pamies-Juarez, H.D.L. Hollmann, F. Oggier,}
    \bibpaper{Locally repairable codes with multiple repair
    alternatives,}
    \bibperiodical{Proc.\ 2013 IEEE Int'l Symp.\ Inf.\ Theory
    (ISIT 2013),}
    Istanbul, Turkey (July 2013), 892--896.
\bibitem{PD}
    \bibauthor{D.S. Papailiopoulos, A.G. Dimakis,}
    \bibpaper{Locally repairable codes,}
    \bibperiodical{IEEE Trans.\ Inf.\ Theory,} 60 (2014), 5843--5855.
\bibitem{PKLK}
    \bibauthor{N. Prakash, G.M. Kamath, V. Lalitha, P.V. Kumar,}
    \bibpaper{Optimal linear codes with a local-error-correction
    property,}
    \bibperiodical{Proc.\ 2012 IEEE Int'l Symp.\ Inf.\ Theory
    (ISIT 2012),}
    Cambridge, Massachusetts (July 2012), 2776--2780.
\ifPAGELIMIT
    \bibitem{RothFull}
        \bibauthor{R.M. Roth,}
        \bibpaper{Asymptotic bounds on the rate of locally repairable 
        codes,}
        available online.
\else
\bibitem{Roth}
    \bibauthor{R.M. Roth,}
    \bibbook{Introduction to Coding Theory,}
    Cambridge University Press, Cambridge, UK, 2006.
\bibitem{SKA}
    \bibauthor{M. Shahabinejad, M. Khabbazian, M. Ardakani,}
    \bibpaper{On the average locality of locally repairable codes,}
    \bibperiodical{IEEE Trans.\ Commun.,} 66 (2018), 2773--2783.
\fi
\bibitem{TBF}
    \bibauthor{I. Tamo, A. Barg, A. Frolov,}
    \bibpaper{Bounds on the parameters of locally recoverable codes,}
    \bibperiodical{IEEE Trans.\ Inf.\ Theory,} 62 (2016), 3070--3083.
\bibitem{TB}
    \bibauthor{I. Tamo, A. Barg,}
    \bibpaper{A family of optimal locally recoverable codes,}
    \bibperiodical{IEEE Trans.\ Inf.\ Theory,} 60 (2014), 4661--4676.
\bibitem{WZ}
    \bibauthor{A. Wang, Z. Zhang,}
    \bibpaper{Repair locality with multiple erasure tolerance,}
    \bibperiodical{IEEE Trans.\ Inf.\ Theory,} 60 (2014), 6979--6987.
\bibitem{WZL}
    \bibauthor{A. Wang, Z. Zhang, D. Lin,}
    \bibpaper{Bounds for binary linear locally repairable
    codes via a sphere-packing approach,}
    \bibperiodical{IEEE Trans.\ Inf.\ Theory,} 65 (2019), 4167--4179.
\end{thebibliography}
\end{document}

\newcommand{\PlotRo}[1]{%
    \begin{center}
    \small
    \thicklines
    \ifIEEE
        \setlength{\unitlength}{1.3mm}
    \else
        \setlength{\unitlength}{1.2mm}
    \fi
    \begin{picture}(135,135)(-10,-10)
    \put(-10,000){\vector(1,0){120}}
    \put(114,-.1){\makebox(0,0)[l]{$\x$}}
    \put(000,-10){\vector(0,1){120}}
    \put(-05,-04){\makebox(0,0)[t]{$0$}}
    \multiput(010,000)(010,000){10}{\line(0,-1){2}}
    \put(010,-04){\makebox(0,0)[t]{$1$}}
    \put(020,-04){\makebox(0,0)[t]{$2$}}
    \put(030,-04){\makebox(0,0)[t]{$3$}}
    \put(040,-04){\makebox(0,0)[t]{$4$}}
    \put(050,-04){\makebox(0,0)[t]{$5$}}
    \put(060,-04){\makebox(0,0)[t]{$6$}}
    \put(070,-04){\makebox(0,0)[t]{$7$}}
    \put(080,-04){\makebox(0,0)[t]{$8$}}
    \put(090,-04){\makebox(0,0)[t]{$9$}}
    \put(100,-04){\makebox(0,0)[t]{$10$}}
    \multiput(000,010)(000,010){10}{\line(-1,0){2}}
    \put(-04,050){\makebox(0,0)[r]{$0.5$}}
    \put(-04,100){\makebox(0,0)[r]{$1.0$}}
    \put(000,000){#1}
    \end{picture}
    \thinlines
    \setlength{\unitlength}{1pt}
    \end{center}
}
\begin{figure*}[hbt]
\PlotRo{%
    \put(000,114){\makebox(0,0)[r]{$\Rate_0(\omega,\x)$}}
    \put(100,000){%
        \put(000,092){\makebox(0,0)[r]{$\omega = 0.00$}}
        \put(000,070){\makebox(0,0)[r]{$\omega = 0.05$}}
        \put(000,048.5){\makebox(0,0)[r]{$\omega = 0.12$}}
        \put(000,030){\makebox(0,0)[r]{$\omega = 0.20$}}
        \put(000,014){\makebox(0,0)[r]{$\omega = 0.30$}}
    }
    \put(000,000){\color{gray}
        \qbezier(10.000,0.000)(10.909,9.093)(12.000,16.667)
        \qbezier(12.000,16.667)(12.923,23.077)(14.000,28.571)
        \qbezier(14.000,28.571)(14.933,33.333)(16.000,37.500)
        \qbezier(16.000,37.500)(16.941,41.176)(18.000,44.444)
        \qbezier(18.000,44.444)(18.947,47.368)(20.000,50.000)
        \qbezier(20.000,50.000)(20.952,52.381)(22.000,54.545)
        \qbezier(22.000,54.545)(22.957,56.522)(24.000,58.333)
        \qbezier(24.000,58.333)(24.960,60.000)(26.000,61.538)
        \qbezier(26.000,61.538)(26.963,62.963)(28.000,64.286)
        \qbezier(28.000,64.286)(28.966,65.517)(30.000,66.667)
        \qbezier(30.000,66.667)(30.968,67.742)(32.000,68.750)
        \qbezier(32.000,68.750)(32.970,69.697)(34.000,70.588)
        \qbezier(34.000,70.588)(34.971,71.429)(36.000,72.222)
        \qbezier(36.000,72.222)(36.973,72.973)(38.000,73.684)
        \qbezier(38.000,73.684)(38.974,74.359)(40.000,75.000)
        \qbezier(40.000,75.000)(40.976,75.610)(42.000,76.190)
        \qbezier(42.000,76.190)(42.977,76.744)(44.000,77.273)
        \qbezier(44.000,77.273)(44.978,77.778)(46.000,78.261)
        \qbezier(46.000,78.261)(46.979,78.723)(48.000,79.167)
        \qbezier(48.000,79.167)(48.980,79.592)(50.000,80.000)
        \qbezier(50.000,80.000)(50.980,80.392)(52.000,80.769)
        \qbezier(52.000,80.769)(52.981,81.132)(54.000,81.481)
        \qbezier(54.000,81.481)(54.982,81.818)(56.000,82.143)
        \qbezier(56.000,82.143)(56.982,82.456)(58.000,82.759)
        \qbezier(58.000,82.759)(58.983,83.051)(60.000,83.333)
        \qbezier(60.000,83.333)(60.984,83.607)(62.000,83.871)
        \qbezier(62.000,83.871)(62.984,84.127)(64.000,84.375)
        \qbezier(64.000,84.375)(64.985,84.615)(66.000,84.848)
        \qbezier(66.000,84.848)(66.985,85.075)(68.000,85.294)
        \qbezier(68.000,85.294)(68.986,85.507)(70.000,85.714)
        \qbezier(70.000,85.714)(70.986,85.915)(72.000,86.111)
        \qbezier(72.000,86.111)(72.986,86.301)(74.000,86.486)
        \qbezier(74.000,86.486)(74.987,86.667)(76.000,86.842)
        \qbezier(76.000,86.842)(76.987,87.013)(78.000,87.179)
        \qbezier(78.000,87.179)(78.987,87.342)(80.000,87.500)
        \qbezier(80.000,87.500)(80.988,87.654)(82.000,87.805)
        \qbezier(82.000,87.805)(82.988,87.952)(84.000,88.095)
        \qbezier(84.000,88.095)(84.988,88.235)(86.000,88.372)
        \qbezier(86.000,88.372)(86.989,88.506)(88.000,88.636)
        \qbezier(88.000,88.636)(88.989,88.764)(90.000,88.889)
        \qbezier(90.000,88.889)(90.989,89.011)(92.000,89.130)
        \qbezier(92.000,89.130)(92.989,89.247)(94.000,89.362)
        \qbezier(94.000,89.362)(94.989,89.474)(96.000,89.583)
        \qbezier(96.000,89.583)(96.990,89.691)(98.000,89.796)
        \qbezier(98.000,89.796)(98.990,89.899)(100.000,90.000)
        \put(10.000,0.000){\Bfull}
        \put(20.000,50.000){\Bfull}
        \put(30.000,66.667){\Bfull}
        \put(40.000,75.000){\Bfull}
        \put(50.000,80.000){\Bfull}
        \put(60.000,83.333){\Bfull}
        \put(70.000,85.714){\Bfull}
        \put(80.000,87.500){\Bfull}
        \put(90.000,88.889){\Bfull}
        \put(100.000,90.000){\Bfull}
    }
    \put(000,000){\color{green}
        \qbezier(10.000,-0.000)(10.469,0.000)(12.000,8.000)
        \qbezier(12.000,8.000)(13.043,13.448)(14.000,17.490)
        \qbezier(14.000,17.490)(14.954,21.520)(16.000,24.989)
        \qbezier(16.000,24.989)(16.949,28.137)(18.000,30.915)
        \qbezier(18.000,30.915)(18.951,33.430)(20.000,35.680)
        \qbezier(20.000,35.680)(20.954,37.728)(22.000,39.580)
        \qbezier(22.000,39.580)(22.958,41.275)(24.000,42.821)
        \qbezier(24.000,42.821)(24.961,44.246)(26.000,45.554)
        \qbezier(26.000,45.554)(26.963,46.766)(28.000,47.885)
        \qbezier(28.000,47.885)(28.966,48.927)(30.000,49.894)
        \qbezier(30.000,49.894)(30.968,50.798)(32.000,51.641)
        \qbezier(32.000,51.641)(32.969,52.433)(34.000,53.173)
        \qbezier(34.000,53.173)(34.971,53.871)(36.000,54.526)
        \qbezier(36.000,54.526)(36.973,55.145)(38.000,55.728)
        \qbezier(38.000,55.728)(38.974,56.280)(40.000,56.802)
        \qbezier(40.000,56.802)(40.975,57.297)(42.000,57.766)
        \qbezier(42.000,57.766)(42.976,58.213)(44.000,58.637)
        \qbezier(44.000,58.637)(44.977,59.041)(46.000,59.426)
        \qbezier(46.000,59.426)(46.978,59.793)(48.000,60.143)
        \qbezier(48.000,60.143)(48.979,60.479)(50.000,60.799)
        \qbezier(50.000,60.799)(50.980,61.105)(52.000,61.399)
        \qbezier(52.000,61.399)(52.981,61.680)(54.000,61.950)
        \qbezier(54.000,61.950)(54.981,62.209)(56.000,62.458)
        \qbezier(56.000,62.458)(56.982,62.697)(58.000,62.926)
        \qbezier(58.000,62.926)(58.982,63.148)(60.000,63.361)
        \qbezier(60.000,63.361)(60.983,63.566)(62.000,63.763)
        \qbezier(62.000,63.763)(62.984,63.954)(64.000,64.138)
        \qbezier(64.000,64.138)(64.984,64.316)(66.000,64.487)
        \qbezier(66.000,64.487)(66.984,64.653)(68.000,64.813)
        \qbezier(68.000,64.813)(68.985,64.968)(70.000,65.117)
        \qbezier(70.000,65.117)(70.985,65.262)(72.000,65.403)
        \qbezier(72.000,65.403)(72.986,65.539)(74.000,65.670)
        \qbezier(74.000,65.670)(74.986,65.798)(76.000,65.921)
        \qbezier(76.000,65.921)(76.986,66.041)(78.000,66.158)
        \qbezier(78.000,66.158)(78.987,66.271)(80.000,66.380)
        \qbezier(80.000,66.380)(80.987,66.487)(82.000,66.590)
        \qbezier(82.000,66.590)(82.987,66.691)(84.000,66.788)
        \qbezier(84.000,66.788)(84.987,66.884)(86.000,66.976)
        \qbezier(86.000,66.976)(86.988,67.066)(88.000,67.153)
        \qbezier(88.000,67.153)(88.988,67.238)(90.000,67.321)
        \qbezier(90.000,67.321)(90.988,67.402)(92.000,67.480)
        \qbezier(92.000,67.480)(92.989,67.557)(94.000,67.631)
        \qbezier(94.000,67.631)(94.989,67.704)(96.000,67.774)
        \qbezier(96.000,67.774)(96.989,67.844)(98.000,67.911)
        \qbezier(98.000,67.911)(98.989,67.977)(100.000,68.040)
        \put(10.000,0.000){\Bfull}
        \put(20.000,35.680){\Bfull}
        \put(30.000,49.894){\Bfull}
        \put(40.000,56.802){\Bfull}
        \put(50.000,60.799){\Bfull}
        \put(60.000,63.361){\Bfull}
        \put(70.000,65.117){\Bfull}
        \put(80.000,66.380){\Bfull}
        \put(90.000,67.321){\Bfull}
        \put(100.000,68.040){\Bfull}
    }
    \put(000,000){\color{blue}
        \qbezier(10.000,-0.000)(10.988,0.000)(12.000,3.184)
        \qbezier(12.000,3.184)(13.000,6.407)(14.000,9.629)
        \qbezier(14.000,9.629)(15.006,12.727)(16.000,15.249)
        \qbezier(16.000,15.249)(16.964,17.695)(18.000,19.822)
        \qbezier(18.000,19.822)(18.958,21.789)(20.000,23.532)
        \qbezier(20.000,23.532)(20.958,25.134)(22.000,26.570)
        \qbezier(22.000,26.570)(22.960,27.891)(24.000,29.084)
        \qbezier(24.000,29.084)(24.962,30.187)(26.000,31.189)
        \qbezier(26.000,31.189)(26.964,32.119)(28.000,32.969)
        \qbezier(28.000,32.969)(28.966,33.761)(30.000,34.488)
        \qbezier(30.000,34.488)(30.967,35.167)(32.000,35.794)
        \qbezier(32.000,35.794)(32.969,36.382)(34.000,36.925)
        \qbezier(34.000,36.925)(34.971,37.437)(36.000,37.912)
        \qbezier(36.000,37.912)(36.972,38.360)(38.000,38.777)
        \qbezier(38.000,38.777)(38.973,39.172)(40.000,39.540)
        \qbezier(40.000,39.540)(40.974,39.889)(42.000,40.215)
        \qbezier(42.000,40.215)(42.975,40.525)(44.000,40.816)
        \qbezier(44.000,40.816)(44.976,41.093)(46.000,41.353)
        \qbezier(46.000,41.353)(46.977,41.600)(48.000,41.833)
        \qbezier(48.000,41.833)(48.978,42.056)(50.000,42.265)
        \qbezier(50.000,42.265)(50.979,42.466)(52.000,42.655)
        \qbezier(52.000,42.655)(52.979,42.836)(54.000,43.007)
        \qbezier(54.000,43.007)(54.980,43.171)(56.000,43.326)
        \qbezier(56.000,43.326)(56.980,43.475)(58.000,43.616)
        \qbezier(58.000,43.616)(58.981,43.751)(60.000,43.880)
        \qbezier(60.000,43.880)(60.982,44.003)(62.000,44.120)
        \qbezier(62.000,44.120)(62.982,44.233)(64.000,44.340)
        \qbezier(64.000,44.340)(64.982,44.443)(66.000,44.541)
        \qbezier(66.000,44.541)(66.983,44.636)(68.000,44.726)
        \qbezier(68.000,44.726)(68.983,44.812)(70.000,44.895)
        \qbezier(70.000,44.895)(70.984,44.974)(72.000,45.050)
        \qbezier(72.000,45.050)(72.984,45.124)(74.000,45.193)
        \qbezier(74.000,45.193)(74.984,45.261)(76.000,45.325)
        \qbezier(76.000,45.325)(76.984,45.388)(78.000,45.447)
        \qbezier(78.000,45.447)(78.985,45.505)(80.000,45.559)
        \qbezier(80.000,45.559)(80.985,45.613)(82.000,45.663)
        \qbezier(82.000,45.663)(82.985,45.712)(84.000,45.759)
        \qbezier(84.000,45.759)(84.986,45.805)(86.000,45.848)
        \qbezier(86.000,45.848)(86.986,45.890)(88.000,45.931)
        \qbezier(88.000,45.931)(88.986,45.970)(90.000,46.007)
        \qbezier(90.000,46.007)(90.986,46.043)(92.000,46.078)
        \qbezier(92.000,46.078)(92.986,46.112)(94.000,46.144)
        \qbezier(94.000,46.144)(94.987,46.175)(96.000,46.205)
        \qbezier(96.000,46.205)(96.987,46.234)(98.000,46.262)
        \qbezier(98.000,46.262)(98.987,46.289)(100.000,46.315)
        \put(10.000,0.000){\Bfull}
        \put(20.000,23.532){\Bfull}
        \put(30.000,34.488){\Bfull}
        \put(40.000,39.540){\Bfull}
        \put(50.000,42.265){\Bfull}
        \put(60.000,43.880){\Bfull}
        \put(70.000,44.895){\Bfull}
        \put(80.000,45.559){\Bfull}
        \put(90.000,46.007){\Bfull}
        \put(100.000,46.315){\Bfull}
    }
    \put(000,000){\color{red}
        \qbezier(10.000,-0.000)(11.345,0.000)(12.000,0.901)
        \qbezier(12.000,0.901)(12.523,1.621)(14.000,4.469)
        \qbezier(14.000,4.469)(15.186,6.754)(16.000,8.156)
        \qbezier(16.000,8.156)(16.995,9.871)(18.000,11.308)
        \qbezier(18.000,11.308)(18.970,12.696)(20.000,13.904)
        \qbezier(20.000,13.904)(20.963,15.033)(22.000,16.030)
        \qbezier(22.000,16.030)(22.962,16.955)(24.000,17.778)
        \qbezier(24.000,17.778)(24.963,18.542)(26.000,19.225)
        \qbezier(26.000,19.225)(26.964,19.860)(28.000,20.430)
        \qbezier(28.000,20.430)(28.965,20.961)(30.000,21.441)
        \qbezier(30.000,21.441)(30.967,21.888)(32.000,22.294)
        \qbezier(32.000,22.294)(32.968,22.673)(34.000,23.018)
        \qbezier(34.000,23.018)(34.969,23.341)(36.000,23.635)
        \qbezier(36.000,23.635)(36.970,23.912)(38.000,24.165)
        \qbezier(38.000,24.165)(38.971,24.403)(40.000,24.620)
        \qbezier(40.000,24.620)(40.972,24.826)(42.000,25.014)
        \qbezier(42.000,25.014)(42.973,25.191)(44.000,25.354)
        \qbezier(44.000,25.354)(44.974,25.509)(46.000,25.650)
        \qbezier(46.000,25.650)(46.974,25.785)(48.000,25.908)
        \qbezier(48.000,25.908)(48.975,26.026)(50.000,26.134)
        \qbezier(50.000,26.134)(50.976,26.236)(52.000,26.331)
        \qbezier(52.000,26.331)(52.976,26.421)(54.000,26.503)
        \qbezier(54.000,26.503)(54.977,26.582)(56.000,26.655)
        \qbezier(56.000,26.655)(56.977,26.724)(58.000,26.788)
        \qbezier(58.000,26.788)(58.977,26.849)(60.000,26.906)
        \qbezier(60.000,26.906)(60.978,26.960)(62.000,27.009)
        \qbezier(62.000,27.009)(62.978,27.057)(64.000,27.100)
        \qbezier(64.000,27.100)(64.978,27.142)(66.000,27.181)
        \qbezier(66.000,27.181)(66.979,27.218)(68.000,27.252)
        \qbezier(68.000,27.252)(68.979,27.285)(70.000,27.315)
        \qbezier(70.000,27.315)(70.979,27.344)(72.000,27.371)
        \qbezier(72.000,27.371)(72.979,27.396)(74.000,27.420)
        \qbezier(74.000,27.420)(74.979,27.443)(76.000,27.464)
        \qbezier(76.000,27.464)(76.979,27.484)(78.000,27.502)
        \qbezier(78.000,27.502)(78.979,27.520)(80.000,27.537)
        \qbezier(80.000,27.537)(80.980,27.552)(82.000,27.567)
        \qbezier(82.000,27.567)(82.980,27.581)(84.000,27.594)
        \qbezier(84.000,27.594)(84.980,27.606)(86.000,27.618)
        \qbezier(86.000,27.618)(86.980,27.629)(88.000,27.639)
        \qbezier(88.000,27.639)(88.980,27.649)(90.000,27.658)
        \qbezier(90.000,27.658)(90.980,27.666)(92.000,27.675)
        \qbezier(92.000,27.675)(92.980,27.682)(94.000,27.689)
        \qbezier(94.000,27.689)(94.980,27.696)(96.000,27.702)
        \qbezier(96.000,27.702)(96.980,27.709)(98.000,27.714)
        \qbezier(98.000,27.714)(98.980,27.720)(100.000,27.724)
        \put(10.000,0.000){\Bfull}
        \put(20.000,13.904){\Bfull}
        \put(30.000,21.441){\Bfull}
        \put(40.000,24.620){\Bfull}
        \put(50.000,26.134){\Bfull}
        \put(60.000,26.906){\Bfull}
        \put(70.000,27.315){\Bfull}
        \put(80.000,27.537){\Bfull}
        \put(90.000,27.658){\Bfull}
        \put(100.000,27.724){\Bfull}
    }
    \put(000,000){\color{cyan}
        \qbezier(10.000,-0.000)(11.616,0.000)(12.000,0.093)
        \qbezier(12.000,0.093)(12.817,0.289)(14.000,1.232)
        \qbezier(14.000,1.232)(14.165,1.364)(16.000,2.936)
        \qbezier(16.000,2.936)(17.104,3.881)(18.000,4.556)
        \qbezier(18.000,4.556)(18.998,5.309)(20.000,5.935)
        \qbezier(20.000,5.935)(20.975,6.545)(22.000,7.067)
        \qbezier(22.000,7.067)(22.967,7.559)(24.000,7.983)
        \qbezier(24.000,7.983)(24.964,8.379)(26.000,8.722)
        \qbezier(26.000,8.722)(26.963,9.041)(28.000,9.318)
        \qbezier(28.000,9.318)(28.963,9.576)(30.000,9.800)
        \qbezier(30.000,9.800)(30.963,10.008)(32.000,10.189)
        \qbezier(32.000,10.189)(32.964,10.357)(34.000,10.504)
        \qbezier(34.000,10.504)(34.964,10.641)(36.000,10.760)
        \qbezier(36.000,10.760)(36.965,10.871)(38.000,10.967)
        \qbezier(38.000,10.967)(38.965,11.057)(40.000,11.136)
        \qbezier(40.000,11.136)(40.965,11.209)(42.000,11.273)
        \qbezier(42.000,11.273)(42.965,11.333)(44.000,11.385)
        \qbezier(44.000,11.385)(44.966,11.433)(46.000,11.476)
        \qbezier(46.000,11.476)(46.966,11.515)(48.000,11.550)
        \qbezier(48.000,11.550)(48.966,11.582)(50.000,11.610)
        \qbezier(50.000,11.610)(50.966,11.636)(52.000,11.659)
        \qbezier(52.000,11.659)(52.966,11.680)(54.000,11.698)
        \qbezier(54.000,11.698)(54.966,11.716)(56.000,11.731)
        \qbezier(56.000,11.731)(56.966,11.745)(58.000,11.757)
        \qbezier(58.000,11.757)(58.966,11.768)(60.000,11.778)
        \qbezier(60.000,11.778)(60.966,11.788)(62.000,11.796)
        \qbezier(62.000,11.796)(62.965,11.803)(64.000,11.810)
        \qbezier(64.000,11.810)(64.965,11.816)(66.000,11.821)
        \qbezier(66.000,11.821)(66.965,11.826)(68.000,11.831)
        \qbezier(68.000,11.831)(68.965,11.835)(70.000,11.838)
        \qbezier(70.000,11.838)(70.965,11.841)(72.000,11.844)
        \qbezier(72.000,11.844)(72.965,11.847)(74.000,11.849)
        \qbezier(74.000,11.849)(74.965,11.851)(76.000,11.853)
        \qbezier(76.000,11.853)(76.965,11.855)(78.000,11.857)
        \qbezier(78.000,11.857)(78.965,11.858)(80.000,11.859)
        \qbezier(80.000,11.859)(80.965,11.860)(82.000,11.861)
        \qbezier(82.000,11.861)(82.965,11.862)(84.000,11.863)
        \qbezier(84.000,11.863)(84.965,11.864)(86.000,11.865)
        \qbezier(86.000,11.865)(86.965,11.865)(88.000,11.866)
        \qbezier(88.000,11.866)(88.965,11.866)(90.000,11.867)
        \qbezier(90.000,11.867)(90.965,11.867)(92.000,11.868)
        \qbezier(92.000,11.868)(92.965,11.868)(94.000,11.868)
        \qbezier(94.000,11.868)(94.966,11.868)(96.000,11.869)
        \qbezier(96.000,11.869)(96.966,11.869)(98.000,11.869)
        \qbezier(98.000,11.869)(98.965,11.869)(100.000,11.869)
        \put(10.000,0.000){\Bfull}
        \put(20.000,5.935){\Bfull}
        \put(30.000,9.800){\Bfull}
        \put(40.000,11.136){\Bfull}
        \put(50.000,11.610){\Bfull}
        \put(60.000,11.778){\Bfull}
        \put(70.000,11.838){\Bfull}
        \put(80.000,11.859){\Bfull}
        \put(90.000,11.867){\Bfull}
        \put(100.000,11.869){\Bfull}
    }
}
\caption{Mapping $\x \mapsto \Rate_0(\omega,\x)$ over
the domain $(1,\infty)$ for $q = 2$ and several values of $\omega$.}
\label{fig:q=2,omega}
\end{figure*}

\begin{figure*}[hbt]
\PlotRo{%
    \put(000,114){\makebox(0,0){$(\x/(\x{-}1)\cdot\Rate_0(\omega,\x)$}}
    \put(100,000){%
        \put(000,102){\makebox(0,0)[r]{$\omega = 0.00$}}
        \put(000,096){\makebox(0,0)[r]{$\omega = 0.01$}}
        \put(000,078){\makebox(0,0)[r]{$\omega = 0.05$}}
        \put(000,054){\makebox(0,0)[r]{$\omega = 0.12$}}
        \put(000,033){\makebox(0,0)[r]{$\omega = 0.20$}}
        \put(000,015.5){\makebox(0,0)[r]{$\omega = 0.30$}}
        \put(000,005.5){\makebox(0,0)[r]{$\omega = 0.40$}}
    }
    \put(000,000){\color{gray}
	\put(10.000,000.00){\line(0,1){100}}
	\put(10.000,100.00){\line(1,0){090}}
    }
    \put(000,000){\color{yellow}
        \qbezier(10.000,0.000)(11.005,41.346)(12.000,82.693)
        \qbezier(12.000,82.693)(12.699,86.233)(14.000,88.166)
        \qbezier(14.000,88.166)(14.815,89.377)(16.000,90.207)
        \qbezier(16.000,90.207)(16.866,90.814)(18.000,91.272)
        \qbezier(18.000,91.272)(18.895,91.633)(20.000,91.921)
        \qbezier(20.000,91.921)(20.913,92.158)(22.000,92.354)
        \qbezier(22.000,92.354)(22.925,92.520)(24.000,92.660)
        \qbezier(24.000,92.660)(24.935,92.782)(26.000,92.887)
        \qbezier(26.000,92.887)(26.942,92.979)(28.000,93.059)
        \qbezier(28.000,93.059)(28.948,93.131)(30.000,93.193)
        \qbezier(30.000,93.193)(30.952,93.250)(32.000,93.300)
        \qbezier(32.000,93.300)(32.956,93.346)(34.000,93.386)
        \qbezier(34.000,93.386)(34.959,93.423)(36.000,93.456)
        \qbezier(36.000,93.456)(36.962,93.486)(38.000,93.514)
        \qbezier(38.000,93.514)(38.964,93.539)(40.000,93.561)
        \qbezier(40.000,93.561)(40.966,93.582)(42.000,93.601)
        \qbezier(42.000,93.601)(42.968,93.619)(44.000,93.635)
        \qbezier(44.000,93.635)(44.969,93.649)(46.000,93.663)
        \qbezier(46.000,93.663)(46.971,93.675)(48.000,93.686)
        \qbezier(48.000,93.686)(48.972,93.697)(50.000,93.706)
        \qbezier(50.000,93.706)(50.973,93.715)(52.000,93.723)
        \qbezier(52.000,93.723)(52.974,93.730)(54.000,93.737)
        \qbezier(54.000,93.737)(54.975,93.743)(56.000,93.749)
        \qbezier(56.000,93.749)(56.976,93.754)(58.000,93.758)
        \qbezier(58.000,93.758)(58.977,93.763)(60.000,93.766)
        \qbezier(60.000,93.766)(60.977,93.770)(62.000,93.773)
        \qbezier(62.000,93.773)(62.978,93.776)(64.000,93.778)
        \qbezier(64.000,93.778)(64.979,93.780)(66.000,93.782)
        \qbezier(66.000,93.782)(66.980,93.784)(68.000,93.785)
        \qbezier(68.000,93.785)(68.980,93.787)(70.000,93.787)
        \qbezier(70.000,93.787)(70.980,93.788)(72.000,93.789)
        \qbezier(72.000,93.789)(72.980,93.789)(74.000,93.790)
        \qbezier(74.000,93.790)(74.980,93.790)(76.000,93.790)
        \qbezier(76.000,93.790)(76.980,93.790)(78.000,93.789)
        \qbezier(78.000,93.789)(78.980,93.789)(80.000,93.789)
        \qbezier(80.000,93.789)(80.982,93.788)(82.000,93.787)
        \qbezier(82.000,93.787)(82.981,93.787)(84.000,93.786)
        \qbezier(84.000,93.786)(84.980,93.785)(86.000,93.784)
        \qbezier(86.000,93.784)(86.979,93.783)(88.000,93.781)
        \qbezier(88.000,93.781)(88.981,93.780)(90.000,93.779)
        \qbezier(90.000,93.779)(90.981,93.777)(92.000,93.776)
        \qbezier(92.000,93.776)(92.979,93.774)(94.000,93.773)
        \qbezier(94.000,93.773)(94.980,93.771)(96.000,93.769)
        \qbezier(96.000,93.769)(96.982,93.768)(98.000,93.766)
        \qbezier(98.000,93.766)(100.000,93.763)(100.000,93.762)
    }
    \put(000,000){\color{green}
        \qbezier(10.000,0.000)(11.005,23.993)(12.000,47.998)
        \qbezier(12.000,47.998)(12.736,56.350)(14.000,61.217)
        \qbezier(14.000,61.217)(14.829,64.408)(16.000,66.638)
        \qbezier(16.000,66.638)(16.873,68.300)(18.000,69.559)
        \qbezier(18.000,69.559)(18.899,70.564)(20.000,71.360)
        \qbezier(20.000,71.360)(20.915,72.022)(22.000,72.562)
        \qbezier(22.000,72.562)(22.927,73.024)(24.000,73.408)
        \qbezier(24.000,73.408)(24.936,73.743)(26.000,74.025)
        \qbezier(26.000,74.025)(26.942,74.275)(28.000,74.488)
        \qbezier(28.000,74.488)(28.948,74.678)(30.000,74.841)
        \qbezier(30.000,74.841)(30.952,74.988)(32.000,75.114)
        \qbezier(32.000,75.114)(32.955,75.230)(34.000,75.329)
        \qbezier(34.000,75.329)(34.958,75.419)(36.000,75.497)
        \qbezier(36.000,75.497)(36.961,75.569)(38.000,75.630)
        \qbezier(38.000,75.630)(38.963,75.687)(40.000,75.736)
        \qbezier(40.000,75.736)(40.965,75.780)(42.000,75.818)
        \qbezier(42.000,75.818)(42.966,75.854)(44.000,75.883)
        \qbezier(44.000,75.883)(44.968,75.911)(46.000,75.933)
        \qbezier(46.000,75.933)(46.969,75.954)(48.000,75.971)
        \qbezier(48.000,75.971)(48.970,75.986)(50.000,75.998)
        \qbezier(50.000,75.998)(50.971,76.009)(52.000,76.017)
        \qbezier(52.000,76.017)(52.972,76.025)(54.000,76.029)
        \qbezier(54.000,76.029)(54.972,76.034)(56.000,76.035)
        \qbezier(56.000,76.035)(56.973,76.037)(58.000,76.036)
        \qbezier(58.000,76.036)(58.973,76.035)(60.000,76.033)
        \qbezier(60.000,76.033)(60.974,76.030)(62.000,76.026)
        \qbezier(62.000,76.026)(62.974,76.021)(64.000,76.015)
        \qbezier(64.000,76.015)(64.974,76.010)(66.000,76.003)
        \qbezier(66.000,76.003)(66.974,75.996)(68.000,75.987)
        \qbezier(68.000,75.987)(68.974,75.979)(70.000,75.970)
        \qbezier(70.000,75.970)(70.975,75.961)(72.000,75.951)
        \qbezier(72.000,75.951)(72.974,75.942)(74.000,75.931)
        \qbezier(74.000,75.931)(74.974,75.921)(76.000,75.909)
        \qbezier(76.000,75.909)(76.973,75.899)(78.000,75.887)
        \qbezier(78.000,75.887)(78.973,75.875)(80.000,75.863)
        \qbezier(80.000,75.863)(80.972,75.852)(82.000,75.839)
        \qbezier(82.000,75.839)(82.971,75.827)(84.000,75.814)
        \qbezier(84.000,75.814)(84.969,75.802)(86.000,75.788)
        \qbezier(86.000,75.788)(86.968,75.776)(88.000,75.762)
        \qbezier(88.000,75.762)(88.966,75.750)(90.000,75.736)
        \qbezier(90.000,75.736)(90.963,75.723)(92.000,75.709)
        \qbezier(92.000,75.709)(92.961,75.696)(94.000,75.682)
        \qbezier(94.000,75.682)(94.953,75.669)(96.000,75.655)
        \qbezier(96.000,75.655)(96.945,75.642)(98.000,75.628)
        \qbezier(98.000,75.628)(100.000,75.601)(100.000,75.601)
    }
    \put(000,000){\color{blue}
        \qbezier(10.000,0.000)(10.247,0.000)(12.000,19.106)
        \qbezier(12.000,19.106)(12.824,28.099)(14.000,33.701)
        \qbezier(14.000,33.701)(14.853,37.763)(16.000,40.663)
        \qbezier(16.000,40.663)(16.884,42.899)(18.000,44.599)
        \qbezier(18.000,44.599)(18.905,45.977)(20.000,47.064)
        \qbezier(20.000,47.064)(20.919,47.976)(22.000,48.711)
        \qbezier(22.000,48.711)(22.929,49.343)(24.000,49.859)
        \qbezier(24.000,49.859)(24.937,50.311)(26.000,50.683)
        \qbezier(26.000,50.683)(26.943,51.013)(28.000,51.285)
        \qbezier(28.000,51.285)(28.948,51.530)(30.000,51.731)
        \qbezier(30.000,51.731)(30.951,51.914)(32.000,52.064)
        \qbezier(32.000,52.064)(32.954,52.200)(34.000,52.311)
        \qbezier(34.000,52.311)(34.957,52.412)(36.000,52.493)
        \qbezier(36.000,52.493)(36.959,52.568)(38.000,52.626)
        \qbezier(38.000,52.626)(38.961,52.680)(40.000,52.720)
        \qbezier(40.000,52.720)(40.962,52.757)(42.000,52.782)
        \qbezier(42.000,52.782)(42.964,52.806)(44.000,52.821)
        \qbezier(44.000,52.821)(44.965,52.834)(46.000,52.839)
        \qbezier(46.000,52.839)(46.965,52.844)(48.000,52.842)
        \qbezier(48.000,52.842)(48.966,52.840)(50.000,52.832)
        \qbezier(50.000,52.832)(50.966,52.824)(52.000,52.811)
        \qbezier(52.000,52.811)(52.966,52.798)(54.000,52.781)
        \qbezier(54.000,52.781)(54.966,52.765)(56.000,52.745)
        \qbezier(56.000,52.745)(56.966,52.726)(58.000,52.702)
        \qbezier(58.000,52.702)(58.965,52.681)(60.000,52.656)
        \qbezier(60.000,52.656)(60.964,52.632)(62.000,52.605)
        \qbezier(62.000,52.605)(62.963,52.580)(64.000,52.551)
        \qbezier(64.000,52.551)(64.961,52.525)(66.000,52.495)
        \qbezier(66.000,52.495)(66.957,52.468)(68.000,52.437)
        \qbezier(68.000,52.437)(68.953,52.409)(70.000,52.377)
        \qbezier(70.000,52.377)(70.946,52.349)(72.000,52.316)
        \qbezier(72.000,52.316)(72.935,52.288)(74.000,52.255)
        \qbezier(74.000,52.255)(74.914,52.227)(76.000,52.193)
        \qbezier(76.000,52.193)(76.860,52.166)(78.000,52.130)
        \qbezier(78.000,52.130)(78.468,52.116)(80.000,52.068)
        \qbezier(80.000,52.068)(81.232,52.029)(82.000,52.005)
        \qbezier(82.000,52.005)(83.087,51.971)(84.000,51.943)
        \qbezier(84.000,51.943)(85.050,51.910)(86.000,51.881)
        \qbezier(86.000,51.881)(87.033,51.849)(88.000,51.819)
        \qbezier(88.000,51.819)(89.023,51.788)(90.000,51.758)
        \qbezier(90.000,51.758)(91.018,51.727)(92.000,51.697)
        \qbezier(92.000,51.697)(93.013,51.667)(94.000,51.637)
        \qbezier(94.000,51.637)(95.010,51.607)(96.000,51.578)
        \qbezier(96.000,51.578)(97.008,51.548)(98.000,51.519)
        \qbezier(98.000,51.519)(99.000,51.490)(100.000,51.461)
    }
    \put(000,000){\color{red}
        \qbezier(10.000,0.000)(11.089,0.000)(12.000,5.409)
        \qbezier(12.000,5.409)(13.134,12.215)(14.000,15.640)
        \qbezier(14.000,15.640)(14.896,19.182)(16.000,21.751)
        \qbezier(16.000,21.751)(16.901,23.846)(18.000,25.443)
        \qbezier(18.000,25.443)(18.913,26.770)(20.000,27.807)
        \qbezier(20.000,27.807)(20.924,28.689)(22.000,29.388)
        \qbezier(22.000,29.388)(22.932,29.994)(24.000,30.477)
        \qbezier(24.000,30.477)(24.938,30.901)(26.000,31.240)
        \qbezier(26.000,31.240)(26.943,31.541)(28.000,31.780)
        \qbezier(28.000,31.780)(28.947,31.994)(30.000,32.161)
        \qbezier(30.000,32.161)(30.950,32.312)(32.000,32.427)
        \qbezier(32.000,32.427)(32.952,32.532)(34.000,32.608)
        \qbezier(34.000,32.608)(34.954,32.678)(36.000,32.726)
        \qbezier(36.000,32.726)(36.956,32.769)(38.000,32.795)
        \qbezier(38.000,32.795)(38.957,32.818)(40.000,32.827)
        \qbezier(40.000,32.827)(40.957,32.835)(42.000,32.830)
        \qbezier(42.000,32.830)(42.957,32.826)(44.000,32.811)
        \qbezier(44.000,32.811)(44.957,32.798)(46.000,32.776)
        \qbezier(46.000,32.776)(46.956,32.755)(48.000,32.726)
        \qbezier(48.000,32.726)(48.955,32.700)(50.000,32.667)
        \qbezier(50.000,32.667)(50.953,32.637)(52.000,32.600)
        \qbezier(52.000,32.600)(52.949,32.567)(54.000,32.527)
        \qbezier(54.000,32.527)(54.943,32.491)(56.000,32.450)
        \qbezier(56.000,32.450)(56.934,32.413)(58.000,32.369)
        \qbezier(58.000,32.369)(58.917,32.332)(60.000,32.287)
        \qbezier(60.000,32.287)(60.878,32.250)(62.000,32.203)
        \qbezier(62.000,32.203)(62.709,32.173)(64.000,32.119)
        \qbezier(64.000,32.119)(65.410,32.060)(66.000,32.035)
        \qbezier(66.000,32.035)(67.103,31.988)(68.000,31.951)
        \qbezier(68.000,31.951)(69.053,31.907)(70.000,31.868)
        \qbezier(70.000,31.868)(71.033,31.825)(72.000,31.785)
        \qbezier(72.000,31.785)(73.022,31.744)(74.000,31.704)
        \qbezier(74.000,31.704)(75.015,31.664)(76.000,31.625)
        \qbezier(76.000,31.625)(77.011,31.585)(78.000,31.547)
        \qbezier(78.000,31.547)(79.008,31.508)(80.000,31.470)
        \qbezier(80.000,31.470)(81.005,31.432)(82.000,31.396)
        \qbezier(82.000,31.396)(83.003,31.359)(84.000,31.323)
        \qbezier(84.000,31.323)(85.002,31.287)(86.000,31.252)
        \qbezier(86.000,31.252)(87.001,31.217)(88.000,31.182)
        \qbezier(88.000,31.182)(89.000,31.148)(90.000,31.115)
        \qbezier(90.000,31.115)(90.999,31.082)(92.000,31.049)
        \qbezier(92.000,31.049)(92.998,31.017)(94.000,30.986)
        \qbezier(94.000,30.986)(94.998,30.954)(96.000,30.924)
        \qbezier(96.000,30.924)(96.997,30.893)(98.000,30.863)
        \qbezier(98.000,30.863)(99.000,30.834)(100.000,30.805)
    }
    \put(000,000){\color{cyan}
        \qbezier(10.100,0.000)(11.535,0.000)(12.000,0.555)
        \qbezier(12.000,0.555)(12.350,0.979)(14.000,4.313)
        \qbezier(14.000,4.313)(15.047,6.428)(16.000,7.828)
        \qbezier(16.000,7.828)(16.941,9.210)(18.000,10.252)
        \qbezier(18.000,10.252)(18.931,11.168)(20.000,11.871)
        \qbezier(20.000,11.871)(20.932,12.484)(22.000,12.956)
        \qbezier(22.000,12.956)(22.936,13.369)(24.000,13.685)
        \qbezier(24.000,13.685)(24.939,13.964)(26.000,14.174)
        \qbezier(26.000,14.174)(26.942,14.360)(28.000,14.495)
        \qbezier(28.000,14.495)(28.944,14.616)(30.000,14.700)
        \qbezier(30.000,14.700)(30.945,14.774)(32.000,14.820)
        \qbezier(32.000,14.820)(32.946,14.862)(34.000,14.881)
        \qbezier(34.000,14.881)(34.945,14.898)(36.000,14.898)
        \qbezier(36.000,14.898)(36.944,14.898)(38.000,14.884)
        \qbezier(38.000,14.884)(38.942,14.872)(40.000,14.848)
        \qbezier(40.000,14.848)(40.938,14.827)(42.000,14.796)
        \qbezier(42.000,14.796)(42.932,14.769)(44.000,14.733)
        \qbezier(44.000,14.733)(44.920,14.702)(46.000,14.663)
        \qbezier(46.000,14.663)(46.897,14.631)(48.000,14.589)
        \qbezier(48.000,14.589)(48.834,14.557)(50.000,14.512)
        \qbezier(50.000,14.512)(50.114,14.508)(52.000,14.434)
        \qbezier(52.000,14.434)(53.181,14.389)(54.000,14.357)
        \qbezier(54.000,14.357)(55.067,14.316)(56.000,14.281)
        \qbezier(56.000,14.281)(57.035,14.242)(58.000,14.206)
        \qbezier(58.000,14.206)(59.020,14.169)(60.000,14.134)
        \qbezier(60.000,14.134)(61.011,14.098)(62.000,14.064)
        \qbezier(62.000,14.064)(63.006,14.030)(64.000,13.997)
        \qbezier(64.000,13.997)(65.002,13.964)(66.000,13.932)
        \qbezier(66.000,13.932)(66.999,13.901)(68.000,13.870)
        \qbezier(68.000,13.870)(68.997,13.840)(70.000,13.811)
        \qbezier(70.000,13.811)(70.995,13.782)(72.000,13.755)
        \qbezier(72.000,13.755)(72.994,13.727)(74.000,13.701)
        \qbezier(74.000,13.701)(74.993,13.675)(76.000,13.649)
        \qbezier(76.000,13.649)(76.992,13.624)(78.000,13.600)
        \qbezier(78.000,13.600)(78.992,13.576)(80.000,13.553)
        \qbezier(80.000,13.553)(80.991,13.531)(82.000,13.509)
        \qbezier(82.000,13.509)(82.991,13.487)(84.000,13.466)
        \qbezier(84.000,13.466)(84.991,13.446)(86.000,13.426)
        \qbezier(86.000,13.426)(86.991,13.406)(88.000,13.387)
        \qbezier(88.000,13.387)(88.990,13.368)(90.000,13.350)
        \qbezier(90.000,13.350)(90.990,13.332)(92.000,13.315)
        \qbezier(92.000,13.315)(92.990,13.298)(94.000,13.281)
        \qbezier(94.000,13.281)(94.990,13.265)(96.000,13.249)
        \qbezier(96.000,13.249)(96.990,13.233)(98.000,13.218)
        \qbezier(98.000,13.218)(99.000,13.203)(100.000,13.188)
    }
    \put(000,000){\color{black}
        \qbezier(10.000,0.000)(11.000,0.000)(12.000,0.009)
        \qbezier(12.000,0.009)(13.008,0.046)(14.000,0.451)
        \qbezier(14.000,0.451)(14.255,0.555)(16.000,1.397)
        \qbezier(16.000,1.397)(17.092,1.924)(18.000,2.268)
        \qbezier(18.000,2.268)(18.973,2.636)(20.000,2.905)
        \qbezier(20.000,2.905)(20.949,3.154)(22.000,3.334)
        \qbezier(22.000,3.334)(22.940,3.495)(24.000,3.607)
        \qbezier(24.000,3.607)(24.937,3.707)(26.000,3.772)
        \qbezier(26.000,3.772)(26.934,3.830)(28.000,3.863)
        \qbezier(28.000,3.863)(28.930,3.892)(30.000,3.904)
        \qbezier(30.000,3.904)(30.926,3.914)(32.000,3.911)
        \qbezier(32.000,3.911)(32.919,3.909)(34.000,3.898)
        \qbezier(34.000,3.898)(34.907,3.889)(36.000,3.872)
        \qbezier(36.000,3.872)(36.885,3.859)(38.000,3.839)
        \qbezier(38.000,3.839)(38.828,3.824)(40.000,3.802)
        \qbezier(40.000,3.802)(40.346,3.795)(42.000,3.763)
        \qbezier(42.000,3.763)(43.188,3.740)(44.000,3.725)
        \qbezier(44.000,3.725)(45.057,3.705)(46.000,3.688)
        \qbezier(46.000,3.688)(47.022,3.670)(48.000,3.653)
        \qbezier(48.000,3.653)(49.007,3.636)(50.000,3.620)
        \qbezier(50.000,3.620)(50.999,3.604)(52.000,3.589)
        \qbezier(52.000,3.589)(52.993,3.574)(54.000,3.560)
        \qbezier(54.000,3.560)(54.990,3.546)(56.000,3.533)
        \qbezier(56.000,3.533)(56.988,3.520)(58.000,3.507)
        \qbezier(58.000,3.507)(58.987,3.496)(60.000,3.484)
        \qbezier(60.000,3.484)(60.986,3.473)(62.000,3.462)
        \qbezier(62.000,3.462)(62.985,3.452)(64.000,3.442)
        \qbezier(64.000,3.442)(64.985,3.432)(66.000,3.423)
        \qbezier(66.000,3.423)(66.985,3.414)(68.000,3.405)
        \qbezier(68.000,3.405)(68.985,3.397)(70.000,3.389)
        \qbezier(70.000,3.389)(70.985,3.381)(72.000,3.373)
        \qbezier(72.000,3.373)(72.985,3.366)(74.000,3.359)
        \qbezier(74.000,3.359)(74.985,3.352)(76.000,3.345)
        \qbezier(76.000,3.345)(76.986,3.338)(78.000,3.332)
        \qbezier(78.000,3.332)(78.986,3.326)(80.000,3.320)
        \qbezier(80.000,3.320)(80.986,3.314)(82.000,3.308)
        \qbezier(82.000,3.308)(82.987,3.303)(84.000,3.297)
        \qbezier(84.000,3.297)(84.987,3.292)(86.000,3.287)
        \qbezier(86.000,3.287)(86.987,3.282)(88.000,3.277)
        \qbezier(88.000,3.277)(88.987,3.273)(90.000,3.268)
        \qbezier(90.000,3.268)(90.988,3.264)(92.000,3.259)
        \qbezier(92.000,3.259)(92.988,3.255)(94.000,3.251)
        \qbezier(94.000,3.251)(94.988,3.247)(96.000,3.243)
        \qbezier(96.000,3.243)(96.989,3.239)(98.000,3.235)
        \qbezier(98.000,3.235)(99.000,3.231)(100.000,3.228)
    }
}
\caption{Mapping $\x \mapsto (\x/(\x{-}1)) \cdot \Rate_0(\omega,\x)$
over the domain $(1,\infty)$
for $q = 2$ and several values of $\omega$.}
\label{fig:q=2,omega,pi1>0}
\end{figure*}